\newcommand{\noun}[1]{\textsc{#1}}
\DeclareRobustCommand{\greektext}{%
  \fontencoding{LGR}\selectfont\def\encodingdefault{LGR}}
\DeclareRobustCommand{\textgreek}[1]{\leavevmode{\greektext #1}}
\numberwithin{equation}{section}
\numberwithin{figure}{section}
\newcommand{\lyxaddress}[1]{
\par {\raggedright #1
\vspace{1.4em}
\noindent\par}
}
  \theoremstyle{remark}
  \newtheorem*{rem*}{\protect\remarkname}
  \theoremstyle{definition}
  \newtheorem{defn}{\protect\definitionname}[section]
  \theoremstyle{plain}
  \newtheorem{prop}{\protect\propositionname}[section]
  \theoremstyle{plain}
  \newtheorem{lem}{\protect\lemmaname}[section]
  \theoremstyle{plain}
  \newtheorem{thm}{\protect\theoremname}[section]
\newenvironment{customcor}[1]
  {\innercustomcor}
  {\endinnercustomcor}
\newenvironment{customthm}[1]
  {\innercustomthm}
  {\endinnercustomthm}
\newenvironment{customprop}[1]
  {\innercustomprop}
  {\endinnercustomprop}
  \providecommand{\definitionname}{Definition}
  \providecommand{\lemmaname}{Lemma}
  \providecommand{\propositionname}{Proposition}
  \providecommand{\remarkname}{Remark}
\providecommand{\theoremname}{Theorem}
\begin{document}
\title{The Einstein--null dust system in spherical symmetry with an inner mirror: structure of the maximal development and Cauchy stability}

\author{Georgios Moschidis}

\maketitle

\lyxaddress{Princeton University, Department of Mathematics, Fine Hall, Washington
Road, Princeton, NJ 08544, United States, \tt gm6@math.princeton.edu}
\begin{abstract}
In this paper, we study the evolution of asymptotically AdS initial
data for the spherically symmetric Einstein--massless Vlasov system
for $\Lambda<0$, with reflecting boundary conditions imposed on timelike
infinity $\mathcal{I}$, in the case when the Vlasov field is supported
only on radial geodesics. This system is equivalent to the spherically
symmetric Einstein--null dust system, allowing for both ingoing and
outgoing dust. In general, solutions to this system break down in
finite time (independent of the size of the initial data); we highlight
this fact by showing that, at the first point where the ingoing dust
reaches the axis of symmetry, solutions become $C^{0}$ inextendible,
although the spacetime metric remains regular up to that point. 

One way to overcome this ``trivial'' obstacle to well-posedness
is to place an inner mirror on a timelike hypersurface of the form
$\{r=r_{0}\}$, $r_{0}>0$, and study the evolution on the exterior
domain $\{r\ge r_{0}\}$. In this setting, we prove the existence
and uniqueness of maximal developments for general smooth and asymptotically
AdS initial data sets, and study the basic geometric properties of
these developments. Furthermore, we establish the well-posedness and
Cauchy stabilty of solutions with respect to a ``rough'' initial
data norm, measuring the concentration of energy at scales proportional
to the mirror radius $r_{0}$.

The above well-posedness and Cauchy stability estimates are used in
our companion paper \cite{MoschidisNullDust} for the proof of the
AdS instability conjecture for the Einstein--null dust system. However,
the results of the present paper might also be of independent interest.
\end{abstract}
\tableofcontents{}

\section{Introduction}

In the presence of a cosmological constant $\Lambda<0$, the simplest
solution of the \emph{vacuum Einstein equations} 
\begin{equation}
Ric_{\text{\textgreek{m}\textgreek{n}}}-\frac{1}{2}Rg_{\text{\textgreek{m}\textgreek{n}}}+\Lambda g_{\text{\textgreek{m}\textgreek{n}}}=0\label{eq:VacuumEinsteinEqtns}
\end{equation}
is Anti-de Sitter spacetime $(\mathcal{M}_{AdS},g_{AdS})$ (see \cite{HawkingEllis1973}).
In the standard polar coordinate chart on $\mathcal{M}_{AdS}\simeq\mathbb{R}^{3+1}$,
$g_{AdS}$ is expressed as: 
\begin{equation}
g_{AdS}=-\big(1-\frac{1}{3}\Lambda r^{2}\big)dt^{2}+\big(1-\frac{1}{3}\Lambda r^{2}\big)^{-1}dr^{2}+r^{2}g_{\mathbb{S}^{2}}.\label{eq:AdSMetricPolarCoordinates}
\end{equation}

Solutions $(\mathcal{M},g)$ to (\ref{eq:VacuumEinsteinEqtns}) which
are \emph{asymptotically AdS}, i.\,e.~possess an asymptotic region
where $g$ is close to (\ref{eq:AdSMetricPolarCoordinates}) for $r\gg1$,
fail to be globally hyperbolic: They are conformally equivalent to
spacetimes $(\widetilde{\mathcal{M}},\tilde{g})$ with a timelike
boundary $\mathcal{I}$ (see \cite{HawkingEllis1973}). Thus, the
correct setting to study the dynamics of asymptotically AdS solutions
to (\ref{eq:VacuumEinsteinEqtns}) is that of an \emph{initial-boundary
value problem}, with suitable boundary conditions prescribed asymptotically
on the conformal boundary $\mathcal{I}$. The well-posedness of the
initial-boundary value problem for (\ref{eq:VacuumEinsteinEqtns})
for a certain class of boundary conditions on $\mathcal{I}$ was first
addressed by Friedrich in \cite{Friedrich1995}. Well-posedness for
more general boundary conditions and matter fields in spherical symmetry
was obtained in \cite{HolzSmul2012,HolzWarn2015}; see also \cite{HolzegelLukSmuleviciWarnick,Friedrich2014}
for a discussion on the issue of general boundary conditions on $\mathcal{I}$
without any symmetry reductions.

A fundamental question arising in the study of the long time dynamics
of (\ref{eq:VacuumEinsteinEqtns}) is whether $(\mathcal{M}_{AdS},g_{AdS})$
is stable under perturbations of its initial data. In 2006, Dafermos
and Holzegel \cite{DafHol,DafermosTalk} proposed the so-called \emph{AdS
instability conjecture}, stating that there exist arbitrarily small
perturbations to the initial data of $(\mathcal{M}_{AdS},g_{AdS})$
which, under evolution by (\ref{eq:VacuumEinsteinEqtns}) with a reflecting
boundary condition on $\mathcal{I}$, lead to the formation of black
hole regions (and, thus, $(\mathcal{M}_{AdS},g_{AdS})$ is non-linearly
unstable). This conjecture was also supported by results of Anderson
\cite{AndersonAdS}.

Starting from the seminal work of Bizon and Rostworowski \cite{bizon2011weakly},
a vast amount of numerical and heuristic works have been devoted to
the study of the AdS instability conjecture (see, e.\,g.,~\cite{DiasHorSantos,BuchelEtAl2012,DiasEtAl,MaliborskiEtAl,BalasubramanianEtAl,CrapsEtAl2014,CrapsEtAl2015,BizonMalib,DimitrakopoulosEtAl,GreenMailardLehnerLieb,HorowitzSantos,DimitrakopoulosEtAl2015,DimitrakopoulosEtAl2016}).
The bulk of these works is dedicated to the study of this conjecture
in the context of the \emph{spherically symmetric Einstein-scalar
field system}. This is a simpler model of (\ref{eq:VacuumEinsteinEqtns}),
being a well-posed $1+1$ hyperbolic system with non-trivial dynamics
resembling the qualitative properties of (\ref{eq:VacuumEinsteinEqtns})
(see the discussion in \cite{Christodoulou1999}).%
\footnote{In $3+1$ dimensions, (\ref{eq:VacuumEinsteinEqtns}) has trivial
dynamics in spherical symmetry, as a consequence of Birkhoff's theorem
(see \cite{Birkhoff}). In higher dimensions, however, there exist
symmetry classes compatible with AdS asymptotics, under which (\ref{eq:VacuumEinsteinEqtns})
is reduced to a $1+1$ hyperbolic system with non-trivial dynamics,
such as the biaxial Bianchi IX symmetry class in $4+1$ dimensions
introduced by Bizon--Chmaj--Schmidt \cite{BizonChmajSchmidt2005}.%
}

An even simpler model for which the AdS instability conjecture can
be addressed is the \emph{Einstein--massless Vlasov }system (see \cite{Andreasson2011,Rein1995})
\begin{equation}
\begin{cases}
Ric_{\text{\textgreek{m}\textgreek{n}}}-\frac{1}{2}Rg_{\text{\textgreek{m}\textgreek{n}}}+\Lambda g_{\text{\textgreek{m}\textgreek{n}}}=8\pi T_{\text{\textgreek{m}\textgreek{n}}}[f],\\
p^{\text{\textgreek{a}}}\partial_{x^{\text{\textgreek{a}}}}f-\text{\textgreek{G}}_{\text{\textgreek{b}\textgreek{g}}}^{\text{\textgreek{a}}}p^{\text{\textgreek{b}}}p^{\text{\textgreek{g}}}\partial_{p^{\text{\textgreek{a}}}}f=0,\\
supp(f)\subseteq\big\{ g_{\text{\textgreek{a}\textgreek{b}}}(x)p^{\text{\textgreek{a}}}p^{\text{\textgreek{b}}}=0\big\}\subset T\mathcal{M}
\end{cases}\label{eq:EinsteinMasslessVlasov}
\end{equation}
 in \emph{spherical symmetry}. This system can be further reduced
to the case when the Vlasov field $f$ is supported only on \emph{radial
}null geodesics. The resulting system, which we will call the \emph{spherically
symmetric Einstein--radial massless Vlasov }system, is a singular
reduction of (\ref{eq:EinsteinMasslessVlasov}) (see \cite{Rendall1997})
and is equivalent to the \emph{spherically symmetric Einstein--null
dust} system, allowing for both ingoing and outgoing null dust. This
system was first studied by Poisson and Israel in \cite{PoissonIsrael1990}. 

The spherically symmetric Einstein--null dust system suffers from
a severe break down occuring once the support of the ingoing null
dust reaches the axis of symmetry (see the discussion in the next
section for more details). This ``trivial'' obstacle to well-posedness
can be overcome by restricting the evolution of the system in the
exterior of an \emph{inner mirror }with spherical radius $r_{0}>0$. 

In the present paper, we will establish a number of well-posedness
results for the spherically symmetric Einstein--null dust system with
reflecting boundary conditions on $\mathcal{I}$ and an inner mirror,
which are necessary for addressing the AdS instability conjecture
in this setting. The proof of the AdS instability conjecture for this
model is then established in our companion paper \cite{MoschidisNullDust}. 

In particular, for any mirror radius $r_{0}>0$ and any smooth, asymptotically
AdS, spherically symmetric and radial initial data set $\mathcal{S}_{in}$
for (\ref{eq:EinsteinMasslessVlasov}), we will establish two types
of results, under reflecting boundary conditions on both $r=r_{0}$
and $\mathcal{I}$:

\begin{enumerate}

\item We will establish the existence and uniqueness of the maximal
development $(\mathcal{M},g;f)$ of $\mathcal{S}_{in}$ (restricted
to the domain $\{r\ge r_{0}\}$) and study the basic geometric properties
of $(\mathcal{M},g;f)$. See Theorem \ref{thm:MaxDevelopmentIntro}.

\item We will establish a Cauchy stability result for $\mathcal{S}_{in}$
in a rough initial data norm measuring the concentration of the energy
of $\mathcal{S}_{in}$ in spherical annuli of radius $\sim r_{0}$.
See Theorem \ref{thm:CauchyIntro}. In particular, this result will
provide a Cauchy stability statement for $(\mathcal{M}_{AdS},g_{AdS})$\emph{
uniformly in} $r_{0}$; this fact will be important for addressing
the AdS instability conjecture in this setting in our companion paper
\cite{MoschidisNullDust}. 

\end{enumerate}

We will now proceed to discuss these results in more detail.

\subsection{\label{sub:The-maximal-development}Theorem 1: The maximal development
for the Einstein--null dust system with reflecting boundary conditions
on $r=r_{0}$ and $\mathcal{I}$}

\textgreek{T}he system (\ref{eq:EinsteinMasslessVlasov}), reduced
to the case where $(\mathcal{M},g)$ and $f$ are spherically symmetric
and $f$ is supported only on radial null geodesics, is equivalent
to the spherically symmetric Einstein--null dust system with two dusts
(see \cite{Andreasson2011,Rendall1997,PoissonIsrael1990}). In double
null coordinates $(u,v)$, where the metric $g$ takes the form 
\begin{equation}
g=-\text{\textgreek{W}}^{2}dudv+r^{2}g_{\mathbb{S}^{2}},
\end{equation}
this system can be expressed in terms of $r,\text{\textgreek{W}}$
and the renormalised components $\bar{\text{\textgreek{t}}}=r^{2}T_{vv}$,
$\text{\textgreek{t}}=r^{2}T_{uu}$ of the energy momentum tensor
$T$ as: 
\begin{equation}
\begin{cases}
\partial_{u}\partial_{v}(r^{2}) & =-\frac{1}{2}(1-\Lambda r^{2})\text{\textgreek{W}}^{2},\\
\partial_{u}\partial_{v}\log(\text{\textgreek{W}}^{2}) & =\frac{\text{\textgreek{W}}^{2}}{2r^{2}}\big(1+4\text{\textgreek{W}}^{-2}\partial_{u}r\partial_{v}r\big),\\
\partial_{v}(\text{\textgreek{W}}^{-2}\partial_{v}r) & =-4\pi r^{-1}\text{\textgreek{W}}^{-2}\bar{\text{\textgreek{t}}},\\
\partial_{u}(\text{\textgreek{W}}^{-2}\partial_{u}r) & =-4\pi r^{-1}\text{\textgreek{W}}^{-2}\text{\textgreek{t}},\\
\partial_{u}\bar{\text{\textgreek{t}}} & =0,\\
\partial_{v}\text{\textgreek{t}} & =0.
\end{cases}\label{eq:EinsteinNullDust}
\end{equation}

It is known that solutions to (\ref{eq:EinsteinNullDust}) with $\text{\textgreek{t}}=0$
(known in the literature as \emph{Vaidya} spacetimes, see e.\,g.~\cite{Hiscock1982})
develop a curvature singularity beyond the first point when the support
of the ingoing dust reaches the axis of symmetry $\text{\textgreek{g}}$;
see, e.\,g.~\cite{Hiscock1982,Lemos1992}. We will actually show
an even stronger ill-posedness result for (\ref{eq:EinsteinNullDust}):

\begin{customprop}{1}\label{prop:C0InextendibilityIntroduction} Any
spherically symmetric solution $(\mathcal{M},g;\text{\textgreek{t}},\bar{\text{\textgreek{t}}})$
of (\ref{eq:EinsteinNullDust}) with non-empty axis $\text{\textgreek{g}}$,
arising from smooth initial data on $\{u=0\}$ with $\text{\textgreek{t}}|_{u=0}=0$,
remains smooth up to the first point when a radial geodesic in the
support of $\bar{\text{\textgreek{t}}}$ reaches $\text{\textgreek{g}}$.
However, beyond that point, $(\mathcal{M},g;\text{\textgreek{t}},\bar{\text{\textgreek{t}}})$
is $C^{0}$ inextendible as a spherically symmetric solution to (\ref{eq:EinsteinNullDust}).
In particular, (\ref{eq:EinsteinNullDust}) is ill-posed in any ``reasonable''
initial data topology.

\end{customprop}

For a more detailed statement of Proposition \ref{prop:C0InextendibilityIntroduction}
and a precise definition of a $C^{0}$ solution of (\ref{eq:EinsteinNullDust}),
see Proposition~\ref{prop:Well-Posedness-Einstein-nulldust} and
Theorem~\ref{thm:IllPosedness} in Section \ref{sub:An-ill-posedness-result}
of the Appendix.
\begin{rem*}
Notice that Proposition \ref{prop:C0InextendibilityIntroduction}
yields a uniform upper bound on the time of existence $u_{*}$ of
solutions $(\mathcal{M},g;\text{\textgreek{t}},\bar{\text{\textgreek{t}}})$
to (\ref{eq:EinsteinNullDust}) for any characteristic initial data
set at $\{u=0\}$ for which $\bar{\text{\textgreek{t}}}$ is not identically
equal to $0$, with $u_{*}$ depending only on the distance of $supp(\bar{\text{\textgreek{t}}})$
from $\text{\textgreek{g}}$ initially (and not on the proximity of
the initial data to the trivial data, in any ``reasonable'' initial
data norm). In the case when the initial data on $\{u=0\}$ are close
to the trivial data, $(\mathcal{M},g)$ is globally $C^{\infty}$
extendible as spherically symmetric Lorentzian manifold beyond $\{u=u_{*}\}$,
despite the fact that $(\mathcal{M},g;\text{\textgreek{t}},\bar{\text{\textgreek{t}}})$
is $C^{0}$ inextendible as a solution to (\ref{eq:EinsteinNullDust})
(see Proposition~\ref{prop:Well-Posedness-Einstein-nulldust}).%
\footnote{We do not examine the question of whether $(\mathcal{M},g;\text{\textgreek{t}},\bar{\text{\textgreek{t}}})$
admits a low regularity extension as a solution of (the analogue of)
(\ref{eq:EinsteinNullDust}) outside spherical symmetry. In general,
the question of $C^{0}$ extendibility outside the regime of surface
symmetry is rather intricate; see \cite{SbierskiC0}.%
} We should also remark that Proposition\inputencoding{latin1}{~}\inputencoding{latin9}\ref{prop:C0InextendibilityIntroduction}
holds independently of the value of the cosmological constant $\Lambda$.%
\footnote{In the case $\Lambda<0$, Proposition \ref{prop:C0InextendibilityIntroduction}
implies that anti-de Sitter spacetime $(\mathcal{M}_{AdS},g_{AdS})$
is \emph{not} Cauchy stable for (\ref{eq:EinsteinNullDust}) for any
``reasonable'' initial data topology. Similarly for Minkowski spacetime
$(\mathbb{R}^{3+1},\text{\textgreek{h}})$ in the case $\Lambda=0$,
or de Sitter spacetime $(\mathcal{M}_{dS},g_{dS})$ in the case $\Lambda>0$.%
}
\end{rem*}
One way to overcome the obstacle to well-posedness raised by Proposition
\ref{prop:C0InextendibilityIntroduction} is to restrict the evolution
of the system (\ref{eq:EinsteinNullDust}) in the region $\{r\ge r_{0}\}$,
for some $r_{0}>0$, and impose a reflecting boundary condition on
the portion $\text{\textgreek{g}}_{0}$ of the curve $\{r=r_{0}\}$
which is timelike. Our first result concerns the well posedness and
the structure of the maximal development of the characteristic initial-boundary
value problem for the system (\ref{eq:EinsteinNullDust}) in this
setting:

\begin{customthm}{1}[rough version]\label{thm:MaxDevelopmentIntro}
For any $r_{0}>0$ and any smooth, asymptotically AdS initial data
set $(r,\text{\textgreek{W}}^{2};\text{\textgreek{t}},\bar{\text{\textgreek{t}}})|_{u=0}$
for (\ref{eq:EinsteinNullDust}), restricted to the region $\{r\ge r_{0}\}$,
there exists a unique, smooth, maximal future developmment $(r,\text{\textgreek{W}}^{2};\text{\textgreek{t}},\bar{\text{\textgreek{t}}})$
of $(r,\text{\textgreek{W}}^{2};\text{\textgreek{t}},\bar{\text{\textgreek{t}}})|_{u=0}$
on $\{r\ge r_{0}\}$, solving (\ref{eq:EinsteinNullDust}) with reflecting
boundary conditions on $\mathcal{I}$ and $\text{\textgreek{g}}_{0}$,
where $r|_{\text{\textgreek{g}}_{0}}=r_{0}$ and $\text{\textgreek{g}}_{0}$
coincides with the portion of the curve $\{r=r_{0}\}$ which is timelike
(fixing the gauge freedom by imposing a reflecting gauge condition
on both $\mathcal{I}$ and $\text{\textgreek{g}}_{0}$). The conformal
boundary $\mathcal{I}$ is future complete, and the hypersurface $\{r=r_{0}\}$
is timelike in the past $J^{-}(\mathcal{I})$ of $\mathcal{I}$. 

In the case when $(r,\text{\textgreek{W}}^{2};\text{\textgreek{t}},\bar{\text{\textgreek{t}}})$
has a non-empty future event horizon $\mathcal{H}^{+}$, $\mathcal{H}^{+}$
is smooth and future complete and the curve $\{r=r_{0}\}$ has a spacelike
portion which lies in the future of $\mathcal{H}^{+}$ (see Figure
\ref{fig:domain_intro}). A necessary condition for $\mathcal{H}^{+}$
to be non-empty is that the total mass $\tilde{m}|_{\mathcal{I}}$
satisfies 
\begin{equation}
\frac{2\tilde{m}|_{\mathcal{I}}}{r_{0}}>1-\frac{1}{3}\Lambda r_{0}^{2}.\label{eq:BoundHorizonIntro}
\end{equation}

\end{customthm}

For a more detailed statement of Theorem \ref{thm:MaxDevelopmentIntro},
see Section \ref{sub:Well-posedness}.

\begin{figure}[h] 
\centering 
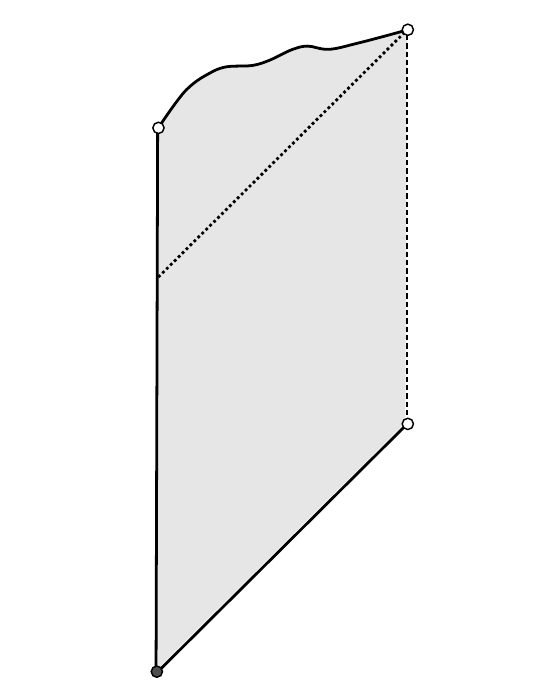 
\caption{Schematic depiction of the domain on which the maximal future development $(r,\Omega^2,\tau,\bar{\tau})$ of a smooth initial data set on $u=0$ (with reflecting boundary conditions on $\mathcal{I}$ and $\gamma_{0}$) is defined. Conformal infinity $\mathcal{I}$ is always complete in this setting. In the case when $J^{-}(\mathcal{I})$ does not cover all of the domain, the future event horizon $\mathcal{H}^{+}$ is non-empty and has infinite affine length. In this case, in addition to the mirror $\gamma_{0}$, there exists a spacelike piece of the boundary of the domain on which $r=r_{0}$. This boundary piece necessarily lies in the future of $\mathcal{H}^{+}$.}
\label{fig:domain_intro}
\end{figure}

\subsection{\label{sub:Cauchy-Intro}Theorem 2: Cauchy stability in a rough norm,
uniformly in $r_{0}$}

In view of the fact that $\mathcal{H}^{+}=\emptyset$ when (\ref{eq:BoundHorizonIntro})
holds, it can be readily deduced that, for fixed $r_{0}$, $(\mathcal{M}_{AdS},g_{AdS})$
is orbitally stable as a solution of (\ref{eq:EinsteinNullDust})
with reflecting boundary conditions on $\mathcal{I}$ and $\text{\textgreek{g}}_{0}$
under perturbations which are sufficiently small with respect to $r_{0}$.
Therefore, addressing the AdS instability conjecture in this setting
(a task that will take place in our companion paper \cite{MoschidisNullDust})
requires allowing $r_{0}$ to shrink to $0$ with the size of the
initial data. To this end, it will be necessary to obtain a Cauchy
stability statement for (\ref{eq:EinsteinNullDust}) with reflecting
boundary conditions on $\mathcal{I}$ and $\text{\textgreek{g}}_{0}$
which is \emph{uniform in $r_{0}$}.

For any $r_{0}>0$, let us define the distance function $dist_{r_{0}}(\cdot,\cdot)$
on the set of smooth and asymptotically AdS initial data for (\ref{eq:EinsteinNullDust})
so that, for any two initial data sets $\mathcal{S}=(r,\text{\textgreek{W}}^{2};\text{\textgreek{t}},\bar{\text{\textgreek{t}}})|_{u=0}$
and $\mathcal{S}'=(r',(\text{\textgreek{W}}')^{2};\text{\textgreek{t}}',\bar{\text{\textgreek{t}}}')|_{u=0}$:
\begin{align}
dist_{r_{0}}\big(\mathcal{S},\mathcal{S}'\big)\doteq\sup_{\bar{v}}\int_{u=0} & \Big|\frac{\bar{\text{\textgreek{t}}}(v)}{\big(|\text{\textgreek{r}}(v)-\text{\textgreek{r}}(\bar{v})|+\tan^{-1}(\sqrt{-\Lambda}r_{0})\big)\partial_{v}\text{\textgreek{r}}(v)}-\frac{\bar{\text{\textgreek{t}}}'(v)}{\big(|\text{\textgreek{r}}'(v)-\text{\textgreek{r}}'(\bar{v})|+\tan^{-1}(\sqrt{-\Lambda}r_{0})\big)\partial_{v}\text{\textgreek{r}}'(v)}\Big|\,(-\Lambda)dv+\label{eq:DistanceFunctionIntro}\\
 & +\sup_{u=0}\Big|\frac{2\tilde{m}}{r}-\frac{2\tilde{m}'}{r'}\Big|+\sqrt{-\Lambda}\big|\tilde{m}|_{r=+\infty}-\tilde{m}'|_{r=+\infty}\big|,\nonumber 
\end{align}
where 
\begin{equation}
\text{\textgreek{r}}(v)\doteq\tan^{-1}(\sqrt{-\Lambda}r)(v),
\end{equation}
and $\tilde{m}$ is the renormalised Hawking mass associated to $\mathcal{S}$,
defined as 
\begin{equation}
\tilde{m}\doteq\frac{r}{2}\Big(1-4\text{\textgreek{W}}^{-2}\partial_{u}r\partial_{v}r\Big)-\frac{1}{6}\Lambda r^{3}\label{eq:RebormalisedhawkingMassIntro}
\end{equation}
(with $\text{\textgreek{r}}'$, $\tilde{m}'$ defined similarly in
terms of $\mathcal{S}'$). 
\begin{rem*}
Denoting with $\mathcal{S}_{0}$ the trivial initial data set $(r_{vac},\text{\textgreek{W}}_{vac}^{2};0,0)$
in the standard gauge, where 
\begin{equation}
r_{vac}(v)\doteq\sqrt{-\frac{3}{\Lambda}}\tan\Big(\sqrt{-\frac{\Lambda}{12}}v\Big)
\end{equation}
and 
\begin{equation}
\text{\textgreek{W}}_{vac}^{2}=1-\frac{1}{3}\Lambda r_{vac}^{2},
\end{equation}
the distance $dist_{r_{0}}(\mathcal{S},\mathcal{S}_{0})$ of any initial
data set $\mathcal{S}$ from $\mathcal{S}_{0}$ measures the concentration
of the energy of $\mathcal{S}$ at scales comparable to $r_{0}$. 
\end{rem*}
Our second result is a Cauchy stability statement for (\ref{eq:EinsteinNullDust})
with reflecting boundary conditions on $r=r_{0}$ and $\mathcal{I}$,
in the initial data topology defined by (\ref{eq:DistanceFunctionIntro}),
\emph{independently of the precise value of $r_{0}$}:

\begin{customthm}{2}[rough version]\label{thm:CauchyIntro} 

Let $(r,\text{\textgreek{W}}^{2};\text{\textgreek{t}},\bar{\text{\textgreek{t}}})$
be a solution of (\ref{eq:EinsteinNullDust}) arising from a smooth
asymptotically AdS characteristic initial data set $\mathcal{S}_{in}$,
with reflecting boundary conditions on $\mathcal{I}$ and $\text{\textgreek{g}}_{0}$.
Let also $u_{*}>0$ be any value such that the domain $\mathcal{U}_{*}=\{0\le u\le u_{*}\}$
lies in the past of $\mathcal{I}$, i.\,e. 
\begin{equation}
\mathcal{U}_{*}\subset J^{-}(\mathcal{I}).
\end{equation}
 Then, for any $\text{\textgreek{e}}>0$, there exists a $\text{\textgreek{d}}>0$,
depending only on $\mathcal{U}$, $\text{\textgreek{e}}$ and $dist_{r_{0}}(\mathcal{S}_{in},\mathcal{S}_{0})$
(but \underline{not} $r_{0}$), with the following property: For
any other smooth asymptotically AdS initial data set $\mathcal{S}_{in}^{\prime}$
for (\ref{eq:EinsteinNullDust}) satisfying 
\begin{equation}
dist_{r_{0}}\big(\mathcal{S}_{in},\mathcal{S}_{in}^{\prime}\big)<\text{\textgreek{d}},\label{eq:CauchyNormIntro-1}
\end{equation}
the maximal development $(r',(\text{\textgreek{W}}')^{2};\text{\textgreek{t}}',\bar{\text{\textgreek{t}}}')$
satisfies 
\begin{equation}
\sup_{0\le\bar{u}\le u_{*}}dist_{r_{0}}\big(\mathcal{S}_{\bar{u}},\mathcal{S}_{\bar{u}}^{\prime}\big)<\text{\textgreek{e}},
\end{equation}
 where 
\begin{equation}
\mathcal{S}_{\bar{u}}\doteq(r',(\text{\textgreek{W}}')^{2};\text{\textgreek{t}}',\bar{\text{\textgreek{t}}}')|_{u=\bar{u}}
\end{equation}
(and similarly for $\mathcal{S}_{\bar{u}}^{\prime}$)

\end{customthm}

For a more precise statement of Theorem \ref{thm:CauchyIntro}, see
Section \ref{sub:Cauchy-stability}.
\begin{rem*}
Theorem \ref{thm:CauchyIntro} provides a Cauchy stability estimate
for the domain of outer communications of solutions to (\ref{eq:EinsteinNullDust}).
Restricting to the case when $\mathcal{S}_{in}=\mathcal{S}_{0}$ in
Theorem \ref{thm:CauchyIntro}, we thus readily obtain a Cauchy stability
estmate for the AdS spacetime $(\mathcal{M}_{AdS},g_{AdS})$ in the
topology defined by (\ref{eq:DistanceFunctionIntro}), indepedently
of the precise value of $r_{0}$. This fact will be important for
addressing the AdS instability conjecture in our companion paper \cite{MoschidisNullDust}.

In particular, for any $u_{*}>0$, the maximal development of any
initial data set $\mathcal{S}_{in}^{\prime}$ for (\ref{eq:EinsteinNullDust})
with reflecting boundary conditions on $\text{\textgreek{g}}_{0}$
and $\mathcal{I}$ will \underline{not} contain a trapped surface
for $0\le u\le u_{*}$, provided $dist_{r_{0}}(\mathcal{S}_{0},\mathcal{S}_{in}^{\prime})$
is small enough in terms of $u_{*}$ (independently of $r_{0}$).
\end{rem*}

\subsection{Outline of the paper}

This paper is organised as follows:

In Section \ref{sec:The-Einstein--Vlasov-system}, we will formulate
the spherically symmetric Einstein--radial massless Vlasov system
in double null coordinates, and we will introduce the notion of reflecting
boundary conditions for this system on timelike hypersurfaces.

In Section \ref{sec:CharacteristicInitialData}, we will introduce
the basic definitions related to the characteristic initial-bounary
value problem for the spherically symmetric Einstein--radial massless
Vlasov system.

In Section \ref{sec:The-main-results}, we will state in detail the
main results of this paper, namely the existence and uniqueness of
a maximal future development for the characteristic initial-bounary
value problem introduced in Section \ref{sec:CharacteristicInitialData},
as well as a Cauchy stability statement in a rough norm. The proofs
of these results will occupy sections \ref{sec:Well-posedness-for-the}
and \ref{sec:Proof-of-Cauchy}, respectively.

Finally, in Section \ref{sec:Ill-posedness} of the Appendix, we will
prove an ill-posedness result related to solutions of the spherically
symmetric Einstein--null dust system with non-empty axis of symmetry.

\subsection{Acknowledgements}

I would like to express my gratitude to my advisor Mihalis Dafermos
for many insightful discussions and crucial suggestions. I would also
like to thank Igor Rodnianski for numerous additional suggestions
and comments. Finally, I would like to thank DPMMS and King's College,
Cambridge, for their hospitality while this work was being written.

\section{\label{sec:The-Einstein--Vlasov-system}The spherically symmetric
Einstein--massless Vlasov system: the null dust reduction}

In this section, we will formulate the Einstein--massless Vlasov system,
reduced to the spherically symmetric and radial case, in a double
null coordinate gauge, following the conventions of \cite{DafermosRendall}.
We will also introduce the notion of reflecting boundary conditions
for this system on spherically symmetric timelike hypersurfaces, possibly
lying ``at infinity''.

\subsection{\label{sub:Spherically-symmetric-spacetimes}Spherically symmetric
spacetimes in a double null gauge}

In this paper, we will consider $3+1$ dimensional smooth Lorentzian
manifolds $(\mathcal{M},g)$ satisfying the following properties:

\begin{itemize}

\item{ $\mathcal{M}$ splits diffeomorphically as 
\begin{equation}
\mathcal{M}\simeq\mathcal{U}\times\mathbb{S}^{2}\label{eq:SphericallySymmetricmanifold}
\end{equation}
where $\mathcal{U}$ is an open domain of $\mathbb{R}^{2}$ with piecewise
Lipschitz boundary $\partial\mathcal{U}$.} 

\item{ In the Cartesian $(u,v)$ coordinates on $\mathcal{U}\subset\mathbb{R}^{2}$,
$g$ takes the double-null form 
\begin{equation}
g=-\text{\textgreek{W}}^{2}(u,v)dudv+r^{2}(u,v)g_{\mathbb{S}^{2}},\label{eq:SphericallySymmetricMetric}
\end{equation}
where $g_{\mathbb{S}^{2}}$ is the standard round metric on $\mathbb{S}^{2}$
and $\text{\textgreek{W}},r:\mathcal{U}\rightarrow(0,+\infty)$ are
smooth functions.} 

\item{ The function $r$ is bounded away from $0$ on $\mathcal{U}$:
\begin{equation}
\inf_{\mathcal{U}}r>0.\label{eq:NoAxisPresent}
\end{equation}
}

\end{itemize}

Note that (\ref{eq:SphericallySymmetricmanifold}) and (\ref{eq:SphericallySymmetricMetric})
imply that $(\mathcal{M},g)$ is time orientable. We will fix a time
orientation by requiring that $\partial_{u}+\partial_{v}$ is a future
directed vector field on $\mathcal{M}$.
\begin{rem*}
The form (\ref{eq:SphericallySymmetricMetric}) of the metic $g$
implies that the action of $SO(3)$ on $(\mathcal{M},g)$ through
rotations of the $\mathbb{S}^{2}$ factor of (\ref{eq:SphericallySymmetricmanifold})
is an isometric action and the function $r$ can be geometrically
defined as 
\begin{equation}
r(p)=\sqrt{\frac{Area(\mathcal{S}(p))}{4\pi}},\label{eq:DefinitionR}
\end{equation}
where $\mathcal{S}(p)$ is the $SO(3)$-orbit of $p\in\mathcal{M}$.
The condition (\ref{eq:NoAxisPresent}) implies that this $SO(3)$
action has no fixed points, i.\,e.~$(\mathcal{M},g)$ does not have
an axis of symmetry. 
\end{rem*}
For a Lorentzian manifold $(\mathcal{M},g)$ as above, we will also
define the \emph{Hawking mass} $m:\mathcal{M}\rightarrow\mathbb{R}$
as 
\begin{equation}
m=\frac{r}{2}\big(1-g(\nabla r,\nabla r)\big).
\end{equation}
 Viewed as a function on $\mathcal{U}$, $m$ can be expressed as:
\begin{equation}
m=\frac{r}{2}\big(1+4\text{\textgreek{W}}^{-2}\partial_{u}r\partial_{v}r\big).\label{eq:DefinitionHawkingMass}
\end{equation}
Note that (\ref{eq:DefinitionHawkingMass}) can be rearranged as 
\begin{equation}
\text{\textgreek{W}}^{2}=4\frac{(-\partial_{u}r)\partial_{v}r}{1-\frac{2m}{r}}.\label{eq:RelationHawkingMass}
\end{equation}

Any local coordinate chart $(y^{1},y^{2})$ on $\mathbb{S}^{2}$ yields
a local $(u,v,y^{1},y^{2})$ coordinate chart on $\mathcal{M}$. In
any such chart, the non-zero Christoffel symbols of \ref{eq:SphericallySymmetricMetric}
are computed as follows: 
\begin{gather}
\text{\textgreek{G}}_{uu}^{u}=\partial_{u}\log(\text{\textgreek{W}}^{2}),\hphantom{A}\text{\textgreek{G}}_{vv}^{v}=\partial_{u}\log(\text{\textgreek{W}}^{2}),\label{eq:ChristoffelSymbols}\\
\text{\textgreek{G}}_{AB}^{u}=\text{\textgreek{W}}^{-2}\partial_{v}(r^{2})(g_{\mathbb{S}^{2}})_{AB},\hphantom{\,}\text{\textgreek{G}}_{AB}^{v}=\text{\textgreek{W}}^{-2}\partial_{u}(r^{2})(g_{\mathbb{S}^{2}})_{AB},\nonumber \\
\text{\textgreek{G}}_{uB}^{A}=r^{-1}\partial_{u}r\text{\textgreek{d}}_{B}^{A},\hphantom{\,}\text{\textgreek{G}}_{vB}^{A}=r^{-1}\partial_{v}r\text{\textgreek{d}}_{B}^{A},\nonumber \\
\text{\textgreek{G}}_{BC}^{A}=(\text{\textgreek{G}}_{\mathbb{S}^{2}})_{BC}^{A},\nonumber 
\end{gather}
where the latin indices $A,B,C$ are associated to the spherical coordinates
$y^{1},y^{2}$, $\text{\textgreek{d}}_{B}^{A}$ is Kronecker delta
and $\text{\textgreek{G}}_{\mathbb{S}^{2}}$ are the Christoffel symbols
of $(\mathbb{S}^{2},g_{\mathbb{S}^{2}})$ in the $(y^{1},y^{2})$
chart.

For any pair of smooth maps $h_{1},h_{2}:\mathbb{R}\rightarrow\mathbb{R}$
which are strictly monotonic, we can introduce a new pair of double
null coordinates $(\bar{u},\bar{v})$ on $\mathcal{M}$ by defining
\begin{equation}
(\bar{u},\bar{v})=(h_{1}(u),h_{2}(v)).\label{eq:GeneralCoordinateTransformation}
\end{equation}
In these new coordinates, the metric $g$ takes the form 
\begin{equation}
g=-\bar{\text{\textgreek{W}}}^{2}(\bar{u},\bar{v})d\bar{u}d\bar{v}+r^{2}(\bar{u},\bar{v})g_{\mathbb{S}^{2}},\label{eq:SphericallySymmetricMetricNewGauge}
\end{equation}
where 
\begin{gather}
\bar{\text{\textgreek{W}}}^{2}(\bar{u},\bar{v})=\frac{1}{h_{1}^{\prime}h_{2}^{\prime}}\text{\textgreek{W}}^{2}\big(h_{1}^{-1}(\bar{u}),h_{2}^{-1}(\bar{v})\big),\label{eq:NewOmega}\\
r(\bar{u},\bar{v})=r\big(h_{1}^{-1}(\bar{u}),h_{2}^{-1}(\bar{v})\big).\label{eq:NewR}
\end{gather}
Throughout this paper, we will frequently make use of such coordinate
transformations, without renaming the coordinates each time. 

Note that $m$ is invariant under coordinate transformations as above;
that is to say, for $(\bar{u},\bar{v})$ defined by (\ref{eq:GeneralCoordinateTransformation}):
\begin{equation}
m(\bar{u},\bar{v})=m\big(h_{1}^{-1}(\bar{u}),h_{2}^{-1}(\bar{v})\big).
\end{equation}

\subsection{\label{sub:VlasovEquations}The massless Vlasov equation: The radial
reduction}

Let $(\mathcal{M},g)$ be as in Section \ref{sub:Spherically-symmetric-spacetimes}.
For any local coordinate chart $(x^{0},x^{1},x^{2},x^{3})$ on $\mathcal{M}$,
the associated momentum coordinate system $(p^{0},p^{1},p^{2},p^{3})$
on each fiber of $T\mathcal{M}$ is defined with respect to the coordinate
frame $\{\partial_{x^{a}}\}_{a=0}^{3}$. The geodesic flow on $T\mathcal{M}$
is then described by the following first order system: 
\begin{equation}
\begin{cases}
\frac{dx^{\text{\textgreek{a}}}}{dt} & =p^{\text{\textgreek{a}}}\\
\frac{dp^{\text{\textgreek{a}}}}{dt} & =-\text{\textgreek{G}}_{\text{\textgreek{b}\textgreek{g}}}^{\text{\textgreek{a}}}(x)p^{\text{\textgreek{a}}}p^{\text{\textgreek{b}}},
\end{cases}\label{eq:GeodesicFlow}
\end{equation}
where the Greek lowercase indices run from $0$ to $3$ and $\text{\textgreek{G}}_{\text{\textgreek{b}\textgreek{g}}}^{\text{\textgreek{a}}}$
are the Christoffel symbols associated to the $\{x^{a}\}_{a=0}^{3}$
coordinate chart.

The spherical symmetry of $(\mathcal{M},g)$ implies that the quantity
\begin{equation}
\mathcal{L}=r^{2}g_{AB}p^{A}p^{B},
\end{equation}
evaluated in the $p^{A}$-momentum coordinates associated to the $y^{A}$-spherical
coordinates in the double null coordinate chart $(u,v,y^{1},y^{2})$
on $\mathcal{M}$ (see Section \ref{sub:Spherically-symmetric-spacetimes}),
is constant along the geodesic flow. The quantity $\mathcal{L}$ associated
to any geodesic $\text{\textgreek{g}}$ of $\mathcal{M}$ is called
the \emph{angular momentum }of the geodesic. Geodesics for which $\mathcal{L}=0$
are called \emph{radial. }Geodesics which are null, future directed
and radial fall into two categories: the ingoing ones (for which $p^{v}=0$)
and the outgoing ones (for which $p^{u}=0$).

Let $f\ge0$ be a measure on $T\mathcal{M}$ which is constant along
the geodesic flow, i.\,e.~satisfies in any local coordinate chart
$(x^{0},x^{1},x^{2},x^{3})$ on $\mathcal{M}$ (with associated momentum
coordinates $(p^{0},p^{1},p^{2},p^{3})$): 
\begin{equation}
p^{\text{\textgreek{a}}}\partial_{x^{\text{\textgreek{a}}}}f-\text{\textgreek{G}}_{\text{\textgreek{b}\textgreek{g}}}^{\text{\textgreek{a}}}p^{\text{\textgreek{b}}}p^{\text{\textgreek{g}}}\partial_{p^{\text{\textgreek{a}}}}f=0,\label{eq:VlasovEquation}
\end{equation}
where $\text{\textgreek{G}}_{\text{\textgreek{b}\textgreek{g}}}^{\text{\textgreek{a}}}$
are the Christoffel symbols of $g$ in the chart $(x^{0},x^{1},x^{2},x^{3})$.
We will call $f$ a \emph{massless Vlasov field} if it is supported
on the set $P\subset T\mathcal{M}$ of null vectors, i.\,e.~on the
set 
\begin{equation}
g_{\text{\textgreek{a}\textgreek{b}}}(x)p^{\text{\textgreek{a}}}p^{\text{\textgreek{b}}}=0.
\end{equation}

The \emph{energy momentum }tensor $T_{\text{\textgreek{a}\textgreek{b}}}$
of $f$ is a symmetric $(0,2)$-tensor on $\mathcal{M}$ defined formally
by the epression 
\begin{equation}
T_{\text{\textgreek{a}\textgreek{b}}}(x)=\int_{\text{\textgreek{p}}^{-1}(x)}p_{\text{\textgreek{a}}}p_{\text{\textgreek{b}}}f,\label{eq:EnergyMomentumTensor}
\end{equation}
where $\text{\textgreek{p}}^{-1}(x)$ denotes the fiber of $T\mathcal{M}$
over $x\in\mathcal{M}$ and
\begin{equation}
p_{\text{\textgreek{g}}}=g_{\text{\textgreek{g}}\text{\textgreek{d}}}(x)p^{\text{\textgreek{d}}}.
\end{equation}

\begin{rem*}
In this paper, we will only consider distributions $f$ for which
the expression (\ref{eq:EnergyMomentumTensor}) is finite and depends
smoothly on $x\in\mathcal{M}$. 
\end{rem*}
We will consider only distributions $f$ which are spherically symmetric,
i.\,e.~invariant under the action of $SO(3)$ on $\mathcal{M}$.
In that case, in any $(u,v,y^{1},y^{2})$ local coordinate chart as
in Section \ref{sub:Spherically-symmetric-spacetimes}, the energy-momentum
tensor $T$ is of the form 
\begin{equation}
T=T_{uu}(u,v)du^{2}+2T_{uv}(u,v)dudv+T_{vv}(u,v)dv^{2}+T_{AB}(u,v)dx^{A}dx^{B}.\label{eq:SphericallySymmetricTensor}
\end{equation}

A \emph{radial }massless Vlasov field $f$ is a massless Vlasov field
$f$ supported only on radial null geodesics. In view of the separation
of radial null geodesics into ingoing and outgoing, a spherically
symmetric, radial massless Vlasov field $f$ takes the following form
in any $(u,v,y^{1},y^{2})$ local coordinate chart as in Section \ref{sub:Spherically-symmetric-spacetimes}
(with associated momentum coordinates $(p^{u},p^{v},p^{1},p^{2})$):
\begin{equation}
f(u,v,y^{1},y^{2};p^{u},p^{v},p^{1},p^{2})=\big(\bar{f}_{in}(u,v;p^{u})+\bar{f}_{out}(u,v;p^{v})\big)\text{\textgreek{d}}\big(\sqrt{(g_{\mathbb{S}^{2}})_{AB}p^{A}p^{B}}\big)\text{\textgreek{d}}(\text{\textgreek{W}}^{2}p^{u}p^{v}),\label{eq:RadialVlasovField}
\end{equation}
where $\bar{f}_{in},\bar{f}_{out}\ge0$ and $\text{\textgreek{d}}$
is the Dirac delta funcion on $\mathbb{R}$.%
\footnote{Note that the arguments of the $\text{\textgreek{d}}$ functions in
the right hand side of (\ref{eq:RadialVlasovField}) are invariannt
under transormations of the spherically symmetric double null coordinate
system.%
} In this case, we can compute the components of the energy momentum
tensor (\ref{eq:EnergyMomentumTensor}) as follows: 
\begin{gather}
T_{uu}(u,v)=\int_{0}^{+\infty}\text{\textgreek{W}}^{4}(p^{v})^{2}\bar{f}_{out}(u,v;p^{v})\, r^{2}\frac{dp^{v}}{p^{v}},\label{eq:T_uuComponent}\\
T_{vv}(u,v)=\int_{0}^{+\infty}\text{\textgreek{W}}^{4}(p^{u})^{2}\bar{f}_{in}(u,v;p^{u})\, r^{2}\frac{dp^{u}}{p^{u}}\label{eq:T_vvComponent}
\end{gather}
and 
\begin{equation}
T_{uv}=T_{AB}=0.
\end{equation}

\begin{rem*}
In this paper, we will only consider the case when $\bar{f}_{in},\bar{f}_{out}$
are smooth functions which are compactly supported in their last argument.
\end{rem*}
The expression (\ref{eq:RadialVlasovField}) implies that equation
(\ref{eq:VlasovEquation}) is equivalent to the following system for
$\bar{f}_{in}$ and $\bar{f}_{out}$:
\begin{gather}
\partial_{u}(\text{\textgreek{W}}^{4}r^{4}p^{u}\bar{f}_{in})+p^{u}\partial_{p^{u}}(\text{\textgreek{W}}^{4}r^{4}p^{u}\bar{f}_{in})=0,\label{eq:IngoingEquation}\\
\partial_{v}(\text{\textgreek{W}}^{4}r^{4}p^{v}\bar{f}_{out})+p^{v}\partial_{p^{v}}(\text{\textgreek{W}}^{4}r^{4}p^{v}\bar{f}_{out})=0.\label{eq:OutgoingEquation}
\end{gather}
The relations (\ref{eq:IngoingEquation})--(\ref{eq:OutgoingEquation})
imply that: 
\begin{gather}
\partial_{v}(r^{2}T_{uu})=0,\label{eq:EquationT_uu}\\
\partial_{u}(r^{2}T_{vv})=0.\label{eq:EquationT_vv}
\end{gather}

\begin{rem*}
Under a double null coordinate transformation of the form (\ref{eq:GeneralCoordinateTransformation}),
$\bar{f}_{in}$ and $\bar{f}_{out}$ transform as 
\begin{equation}
\bar{f}_{in}^{(new)}(h_{1}(u),h_{2}(v);h_{1}^{\prime}(u)p)=\bar{f}_{in}\big(u,v;p\big)\label{eq:NewIngoingVlasov}
\end{equation}
and 
\begin{equation}
\bar{f}_{out}^{(new)}(h_{1}(u),h_{2}(v);h_{2}^{\prime}(v)p)=\bar{f}_{out}\big(u,v;p\big),\label{eq:NewOutgoingVlasov}
\end{equation}
respectively.
\end{rem*}

\subsection{\label{sub:The-Einstein-equations}The spherically symmetric Einstein--radial
massless Vlasov and Einstein--null dust system}

The \emph{Einstein--Vlasov} system with cosmological constant $\Lambda<0$,
for a smooth Lorentzian manifold $(\mathcal{M},g)$ and a non-negative
measure $f$ on $T\mathcal{M}$, takes the form 
\begin{equation}
\begin{cases}
Ric_{\text{\textgreek{m}\textgreek{n}}}(g)-\frac{1}{2}R(g)g_{\text{\textgreek{m}\textgreek{n}}}+\text{\textgreek{L}}g_{\text{\textgreek{m}\textgreek{n}}}=8\text{\textgreek{p}}T_{\text{\textgreek{m}\textgreek{n}}},\\
p^{\text{\textgreek{a}}}\partial_{x^{\text{\textgreek{a}}}}f-\text{\textgreek{G}}_{\text{\textgreek{b}\textgreek{g}}}^{\text{\textgreek{a}}}p^{\text{\textgreek{b}}}p^{\text{\textgreek{g}}}\partial_{p^{\text{\textgreek{a}}}}f=0,
\end{cases}\label{eq:EinsteinVlasovEquations}
\end{equation}
where $T_{\text{\textgreek{m}\textgreek{n}}}$ is expressed in terms
of $f$ by (\ref{eq:EnergyMomentumTensor}) (see \cite{Andreasson2011}).

Reducing (\ref{eq:EinsteinVlasovEquations}) to the case where $(\mathcal{M},g)$
is a spherically symmetric spacetime as in Section \ref{sub:Spherically-symmetric-spacetimes}
and $f$ is a radial massless Vlasov field (i.\,e.~has the form
(\ref{eq:RadialVlasovField})), we obtain the following system for
$(r,\text{\textgreek{W}}^{2},\bar{f}_{in},\bar{f}_{out})$:
\begin{align}
\partial_{u}\partial_{v}(r^{2})= & -\frac{1}{2}(1-\Lambda r^{2})\text{\textgreek{W}}^{2},\label{eq:RequationFinal}\\
\partial_{u}\partial_{v}\log(\text{\textgreek{W}}^{2})= & \frac{\text{\textgreek{W}}^{2}}{2r^{2}}\big(1+4\text{\textgreek{W}}^{-2}\partial_{u}r\partial_{v}r\big),\label{eq:OmegaEquationFinal}\\
\partial_{v}(\text{\textgreek{W}}^{-2}\partial_{v}r)= & -4\pi rT_{vv}\text{\textgreek{W}}^{-2},\label{eq:ConstrainVFinal}\\
\partial_{u}(\text{\textgreek{W}}^{-2}\partial_{u}r)= & -4\pi rT_{uu}\text{\textgreek{W}}^{-2},\label{eq:ConstraintUFinal}\\
\partial_{u}(\text{\textgreek{W}}^{4}r^{4}p^{u}\bar{f}_{in})= & -p^{u}\partial_{p^{u}}(\text{\textgreek{W}}^{4}r^{4}p^{u}\bar{f}_{in}),\label{eq:IngoingVlasovFinal}\\
\partial_{v}(\text{\textgreek{W}}^{4}r^{4}p^{v}\bar{f}_{out})= & -p^{v}\partial_{p^{v}}(\text{\textgreek{W}}^{4}r^{4}p^{v}\bar{f}_{out}),\label{eq:OutgoingVlasovFinal}
\end{align}
where $T_{uu},T_{vv}$ are expressed in terms of $\bar{f}_{out},\bar{f}_{in}$
by (\ref{eq:T_uuComponent}), (\ref{eq:T_vvComponent}), respectively
(for the spherically symmetric reduction of the \emph{massive} Einstein--Vlasov
system in double null coordinates, see \cite{DafermosRendall}). Notice
that the system (\ref{eq:RequationFinal})--(\ref{eq:OutgoingVlasovFinal})
reduces to the following system for $(r,\text{\textgreek{W}}^{2},T_{uu},T_{vv})$:
\begin{align}
\partial_{u}\partial_{v}(r^{2})= & -\frac{1}{2}(1-\Lambda r^{2})\text{\textgreek{W}}^{2},\label{eq:RequationFinal-2}\\
\partial_{u}\partial_{v}\log(\text{\textgreek{W}}^{2})= & \frac{\text{\textgreek{W}}^{2}}{2r^{2}}\big(1+4\text{\textgreek{W}}^{-2}\partial_{u}r\partial_{v}r\big),\label{eq:OmegaEquationFinal-2}\\
\partial_{v}(\text{\textgreek{W}}^{-2}\partial_{v}r)= & -4\pi rT_{vv}\text{\textgreek{W}}^{-2},\label{eq:ConstrainVFinal-1}\\
\partial_{u}(\text{\textgreek{W}}^{-2}\partial_{u}r)= & -4\pi rT_{uu}\text{\textgreek{W}}^{-2},\label{eq:ConstraintUFinal-1}\\
\partial_{u}(r^{2}T_{vv})= & 0,\label{eq:IngoingConservationClosed}\\
\partial_{v}(r^{2}T_{uu})= & 0.\label{eq:OutgoingConservationClosed}
\end{align}

\begin{rem*}
The system (\ref{eq:RequationFinal-2})--(\ref{eq:OutgoingConservationClosed})
is the Einstein--null dust system with both ingoing and outgoing dust
(see \cite{Rendall1997}). 
\end{rem*}
Let us define the \emph{renormalised Hawking mass} by the relation
\begin{equation}
\tilde{m}\doteq m-\frac{1}{6}\Lambda r^{3}.\label{eq:RenormalisedHawkingMass}
\end{equation}
From equations (\ref{eq:RequationFinal-2})--(\ref{eq:OutgoingConservationClosed}),
we can formally obtain the following system for $(r,\tilde{m},T_{uu},T_{vv})$
(valid in the region where $\partial_{u}r<0<\partial_{v}r$ and $1-\frac{2m}{r}>0$):

\begin{align}
\partial_{u}\log\big(\frac{\partial_{v}r}{1-\frac{2m}{r}}\big)= & -4\pi r^{-1}\frac{r^{2}T_{uu}}{-\partial_{u}r},\label{eq:DerivativeInUDirectionKappa}\\
\partial_{v}\log\big(\frac{-\partial_{u}r}{1-\frac{2m}{r}}\big)= & 4\pi r^{-1}\frac{r^{2}T_{vv}}{\partial_{v}r},\label{eq:DerivativeInVDirectionKappaBar}\\
\partial_{u}\partial_{v}r= & -\frac{2\tilde{m}-\frac{2}{3}\Lambda r^{3}}{r^{2}}\frac{(-\partial_{u}r)\partial_{v}r}{1-\frac{2m}{r}},\label{eq:EquationRForProof}\\
\partial_{u}\tilde{m}= & -2\pi\frac{\big(1-\frac{2m}{r}\big)}{-\partial_{u}r}r^{2}T_{uu},\label{eq:DerivativeTildeUMass}\\
\partial_{v}\tilde{m}= & 2\pi\frac{\big(1-\frac{2m}{r}\big)}{\partial_{v}r}r^{2}T_{vv}\label{eq:DerivativeTildeVMass}\\
\partial_{u}(r^{2}T_{vv})= & 0,\label{eq:ConservationT_vv}\\
\partial_{v}(r^{2}T_{uu})= & 0.\label{eq:ConservationT_uu}
\end{align}

\subsection{\label{sub:Reflective-boundary-conditions}Reflection of radial null
geodesics and the reflecting boundary condition}

Let $(\mathcal{M},g)$ be as in Section \ref{sub:Spherically-symmetric-spacetimes}.
Recall that $\mathcal{M}\simeq\mathcal{U}\times\mathbb{S}^{2}$ (see
(\ref{eq:SphericallySymmetricmanifold})), where $\mathcal{U}\subset\mathbb{R}^{2}$
has piecewise Lipschitz boundary $\partial\mathcal{U}$. Let $\partial_{tim}\mathcal{U}$
be the subset of $\partial\mathcal{U}$ consisting of a union of connected,
timelike Lipschitz curves with respect to the comparison metric 
\begin{equation}
g_{comp}=-dudv\label{eq:ComparisonUVMetric}
\end{equation}
 on $\mathbb{R}^{2}$. Recall that a connected Lipschitz curve $\text{\textgreek{g}}$
in $\mathbb{R}^{2}$ is said to be timelike with respect to (\ref{eq:ComparisonUVMetric})
if, for every point $p=(u_{0},v_{0})\in\text{\textgreek{g}}$, we
have 
\begin{equation}
\text{\textgreek{g}}\backslash p\subset I^{-}(p)\cup I^{+}(p)\doteq\big(\{u<u_{0}\}\cap\{v<v_{0}\}\big)\cup\big(\{u>u_{0}\}\cap\{v>v_{0}\}\big).
\end{equation}

Let us also fix $w:\mathcal{U}\cup\partial_{tim}\mathcal{U}\rightarrow\mathbb{R}$
to be a smooth boundary defining function of $\partial_{tim}\mathcal{U}$,
i.\,e.~$w|_{\partial_{tim}\mathcal{U}}=0$, $dw|_{\partial_{tim}\mathcal{U}}\neq0$
and $w>0$ on $\mathcal{U}$. We can split $\partial_{tim}\mathcal{U}$
into its ``left'' and ``right'' components as 
\begin{equation}
\partial_{tim}\mathcal{U}=\partial_{tim}^{\vdash}\mathcal{U}\cup\partial_{tim}^{\dashv}\mathcal{U},
\end{equation}
where 
\begin{gather*}
\partial_{tim}^{\vdash}\mathcal{U}=\big\{(u_{0},v_{0})\in\partial_{tim}\mathcal{U}\,:\,\partial_{v}w(u_{0},v_{0})>0\big\},\\
\partial_{tim}^{\dashv}\mathcal{U}=\big\{(u_{0},v_{0})\in\partial_{tim}\mathcal{U}\,:\,\partial_{v}w(u_{0},v_{0})<0\big\}.
\end{gather*}

\begin{rem*}
Notice that any future directed radial null geodesic of $\mathcal{M}=\mathcal{U}\times\mathbb{S}^{2}$
with a future limiting point on $\partial_{tim}^{\vdash}\mathcal{U}\times\mathbb{S}^{2}$
(in the ambient $\mathbb{R}^{2}\times\mathbb{S}^{2}$ topology of
$clos(\mathcal{U})\times\mathbb{S}^{2}$) is necessarily ingoing.
Similarly, future directed radial null geodesics ``terminating''
at $\partial_{tim}^{\dashv}\mathcal{U}\times\mathbb{S}^{2}$ are necessarily
outgoing.
\end{rem*}
\begin{figure}[h] 
\centering 
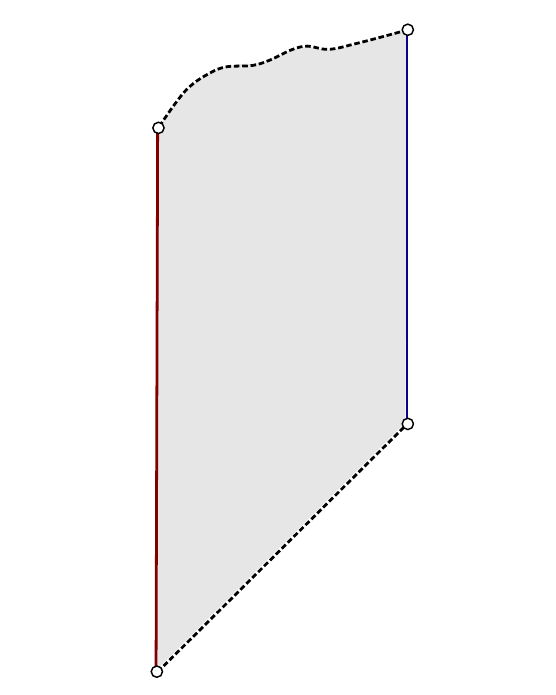 
\caption{Let $\mathcal{U}\subset \mathbb{R}^{2}$ be a domain as depicted above. The timelike part $\partial_{tim}\mathcal{U}$ of the boundary $\partial \mathcal{U}$ splits as the union of a "left" component $\partial^{\vdash}_{tim}\mathcal{U}$ and a "right" component $\partial^{\dashv}_{tim}\mathcal{U}$. While, in general, it is not necessary that $\partial^{\vdash}_{tim}\mathcal{U}$ and  $\partial^{\dashv}_{tim}\mathcal{U}$ are straight line segments, in the rest of the paper we will only consider domains $\mathcal{U}$ with this property (see Definition \ref{def:DevelopmentSets}).}
\end{figure}

We will define the reflection of radial null geodesics of $\mathcal{M}$
on $\partial_{tim}\mathcal{U}\times\mathbb{S}^{2}$ as follows:
\begin{defn}
Let $\text{\textgreek{z}}_{0}:[0,c)\rightarrow\mathcal{M}$, $c\le+\infty$,
be a future directed radial null geodesic of $\mathcal{M}=\mathcal{U}\times\mathbb{S}^{2}$
parametrised so that $\nabla_{\dot{\text{\textgreek{z}}}_{0}}\dot{\text{\textgreek{z}}}_{0}=0$,
such that, in the ambient $\mathbb{R}^{2}\times\mathbb{S}^{2}$ manifold,
the limit $\lim_{t\rightarrow c^{-}}\text{\textgreek{z}}_{0}(t)\doteq(u_{0},v_{0},y_{0}^{1},y_{0}^{2})$
exists and belongs to $\partial_{tim}\mathcal{U}\times\mathbb{S}^{2}$.
Then, the reflection of $\text{\textgreek{z}}_{0}$ on $\partial_{tim}\mathcal{U}\times\mathbb{S}^{2}$
at $(u_{0},v_{0},x_{0}^{1},x_{0}^{2})$ is defined as the unique inextendible,
future directed and radial null geodesic $\bar{\text{\textgreek{z}}}_{0}:(a,b)\rightarrow\mathcal{M}$
of $(\mathcal{M},g)$, $-\infty\le a<b\le+\infty$, parametrised so
that $\nabla_{\dot{\bar{\text{\textgreek{z}}}}_{0}}\dot{\bar{\text{\textgreek{z}}}}_{0}=0$,
satisfying the following conditions:

\begin{enumerate}

\item $\bar{\text{\textgreek{z}}}_{0}$ emanates from $(u_{0},v_{0},x_{0}^{1},x_{0}^{2})$,
i.\,e.:

\begin{equation}
\lim_{t\rightarrow a^{+}}\bar{\text{\textgreek{z}}}_{0}(t)=\lim_{t\rightarrow c^{-}}\text{\textgreek{z}}_{0}(t)=(u_{0},v_{0},x_{0}^{1},x_{0}^{2}).\label{eq:SameReflectingPoint}
\end{equation}

\item For any $0<w_{*}\ll1$, defining $t[w_{*}]\in[0,c)$ and $\bar{t}[w_{*}]\in(a,b)$
implicitly by 
\[
w\big(\text{\textgreek{z}}_{0}(t[w_{*}])\big)=w_{*}
\]
and 
\[
w\big(\bar{\text{\textgreek{z}}}_{0}(\bar{t}[w_{*}])\big)=w_{*},
\]
the following relation holds:
\begin{equation}
\lim_{w_{*}\rightarrow0^{+}}\frac{\frac{d}{dt}\big(w\circ\bar{\text{\textgreek{z}}}_{0}\big)\big|_{t=\bar{t}[w_{*}]}}{\frac{d}{dt}\big(w\circ\text{\textgreek{z}}_{0}\big)\big|_{t=t[w_{*}]}}=-1.\label{eq:ReflectionParametrisation}
\end{equation}

\end{enumerate}
\end{defn}
Note that the above definition is independent of the specific choice
of the boundary defining function $w$ and the coordinate chart on
$\mathcal{U}$, while, in view of (\ref{eq:ReflectionParametrisation}),
the parametrisation of the reflected geodesic is completely determined
in the case $c<+\infty$, and determined up to a translation $t\rightarrow t+t_{0}$
in the case $c=+\infty$. Notice also that the reflection of an ingoing
radial null geodesic is an outgoing one, and vice-versa.
\begin{rem*}
In the next sections, we will only consider the reflection of radial
null geodesics on parts of $\partial_{tim}\mathcal{U}$ for which
either $r-r_{0}$ (for some $r_{0}>0$) or $1/r$ is a boundary defining
function.\end{rem*}
\begin{defn}
A radial massless Vlasov field $f$ on $T\mathcal{M}$ will be said
to satisfy the \emph{reflecting boundary condition} on $\partial_{tim}\mathcal{U}\times\mathbb{S}^{2}$
if it is invariant under reflections of radial null geodesics on $\partial_{tim}\mathcal{U}\times\mathbb{S}^{2}$. 
\end{defn}
It can be readily verified that a radial massless Vlasov field satisfies
the reflecting boundary condition on $\partial_{tim}\mathcal{U}\times\mathbb{S}^{2}$
if and only if its ingoing and outgoing components $\bar{f}_{in},\bar{f}_{out}$
(see \ref{eq:RadialVlasovField}) satisfy the following boundary conditions
(as a consequence of (\ref{eq:SameReflectingPoint}), (\ref{eq:ReflectionParametrisation})):
\begin{itemize}
\item For any $(u_{0},v_{0})\in\partial_{tim}^{\vdash}\mathcal{U}$ and
any $p>0$:
\begin{equation}
\lim_{h\rightarrow0^{+}}\Bigg(\frac{\bar{f}_{out}\big(u_{0},v_{0}+h;\,\frac{-\partial_{u}w}{\partial_{v}w}(u_{0},v_{0})\cdot\text{\textgreek{W}}^{-2}(u_{0},v_{0}+h)\cdot p\big)}{\bar{f}_{in}\big(u_{0}-h,v_{0};\,\text{\textgreek{W}}^{-2}(u_{0}-h,v_{0})\cdot p\big)}\Bigg)=1.\label{eq:LeftBoundaryCondition}
\end{equation}

\item For any $(u_{1},v_{1})\in\partial_{tim}^{\dashv}\mathcal{U}$ and
any $p>0$:
\begin{equation}
\lim_{h\rightarrow0^{+}}\Bigg(\frac{\bar{f}_{in}\big(u_{1}+h,v_{1};\,\frac{-\partial_{v}w}{\partial_{u}w}(u_{1},v_{1})\cdot\text{\textgreek{W}}^{-2}(u_{1}+h,v_{1})\cdot p\big)}{\bar{f}_{out}\big(u_{1},v_{1}-h;\,\text{\textgreek{W}}^{-2}(u_{1},v_{1}-h)\cdot p\big)}\Bigg)=1.\label{eq:RightBoundaryCondition}
\end{equation}

\end{itemize}
Note that the relations (\ref{eq:LeftBoundaryCondition}) and (\ref{eq:RightBoundaryCondition})
for $\bar{f}_{in},\bar{f}_{out}$ imply the following boundary relations
for the components (\ref{eq:T_uuComponent})--(\ref{eq:T_vvComponent})
of the energy momentum tensor $T$: 
\begin{itemize}
\item For any $(u_{0},v_{0})\in\partial_{tim}^{\vdash}\mathcal{U}$:
\begin{equation}
\lim_{h\rightarrow0^{+}}\frac{r^{2}T_{uu}(u_{0},v_{0}+h)}{r^{2}T_{vv}(u_{0}-h,v_{0})}=\Big(\frac{-\partial_{u}w}{\partial_{v}w}(u_{0},v_{0})\Big)^{2}.\label{eq:LeftBoundaryConditionT}
\end{equation}

\item For any $(u_{1},v_{1})\in\partial_{tim}^{\dashv}\mathcal{U}$:
\begin{equation}
\lim_{h\rightarrow0^{+}}\frac{r^{2}T_{vv}(u_{1}+h,v_{1})}{r^{2}T_{uu}(u_{1},v_{1}-h)}=\Big(\frac{-\partial_{v}w}{\partial_{u}w}(u_{0},v_{0})\Big)^{2}.\label{eq:RightBoundaryConditionT}
\end{equation}

\end{itemize}

\section{\label{sec:CharacteristicInitialData}The boundary--characteristic
initial value problem}

In this Section, we will formulate the asymptotically AdS initial-bounary
value problem for the system (\ref{eq:RequationFinal})--(\ref{eq:OutgoingVlasovFinal})
with reflecting boundary conditions on $\mathcal{I}$ and the timelike
portion of $\{r=r_{0}\}$ and $\mathcal{I}$.

\subsection{Asymptotically AdS characteristic initial data}

We will define the notion of a boundary-characteristic initial data
set for (\ref{eq:RequationFinal})--(\ref{eq:OutgoingVlasovFinal})
as follows:
\begin{defn}
\label{def:TypeII}For any $v_{1}<v_{2}$ and any $r_{0}>0$, let
$r_{/}:[v_{1},v_{2})\rightarrow[r_{0},+\infty)$, $\text{\textgreek{W}}_{/}:[v_{1},v_{2})\rightarrow(0,+\infty)$
and $\bar{f}_{in/},\bar{f}_{out/}:[v_{1},v_{2})\times(0,+\infty)\rightarrow[0,+\infty)$
be smooth functions, such that 
\begin{equation}
\lim_{v\rightarrow v_{2}}r_{/}(v)=+\infty\label{eq:RGoesToInfinity}
\end{equation}
and 
\begin{equation}
r_{/}(v_{1})=r_{0}.
\end{equation}
Let also $(\partial_{u}r)_{/}:[v_{1},v_{2})\rightarrow(-\infty,0)$
be defined by 
\begin{equation}
(\partial_{u}r)_{/}(v)=\frac{1}{r_{/}(v)}\Big(-r_{0}\partial_{v}r_{/}(v_{1})-\frac{1}{4}\int_{v1}^{v}(1-\Lambda r_{/}^{2}(\bar{v}))\text{\textgreek{W}}_{/}^{2}(\bar{v})\, d\bar{v}\Big).\label{eq:TransversalDerivativeU-1}
\end{equation}
 We will call $(r_{/},\text{\textgreek{W}}_{/}^{2},\bar{f}_{in/},\bar{f}_{out/})$
an \emph{asymptotically AdS boundary-characteristic initial data set}
on $[v_{1},v_{2})$ for the system (\ref{eq:RequationFinal})--(\ref{eq:OutgoingVlasovFinal})
satisfying the reflecting gauge condition at $r=r_{0},+\infty$ if:

\begin{itemize}

\item{ The constraint equation (\ref{eq:ConstrainVFinal}) is satisfied
by $(r_{/},\text{\textgreek{W}}_{/}^{2})$, i.\,e: 
\begin{equation}
\partial_{v}(\text{\textgreek{W}}_{/}^{-2}\partial_{v}r_{/})=-4\pi r_{/}(T_{vv})_{/}\text{\textgreek{W}}_{/}^{-2},\label{eq:ConstraintVDef-1}
\end{equation}
 where 
\begin{equation}
(T_{vv})_{/}(v)\doteq\int_{0}^{+\infty}\text{\textgreek{W}}_{/}^{4}(v)(p^{u})^{2}\bar{f}_{in/}(v;p^{u})\, r_{/}^{2}(v)\frac{dp^{u}}{p^{u}}.\label{eq:EnergyMomentumIntialRight-1}
\end{equation}
}

\item{ The functions $\bar{f}_{out/},\bar{f}_{in/}$ satisfy the
following conditions at $v=v_{1},v_{2}$ for any $p>0$:
\begin{equation}
\frac{\bar{f}_{out/}\big(v_{1};\,\frac{-(\partial_{u}r)_{/}}{\partial_{v}r_{/}}(v_{1})\cdot\text{\textgreek{W}}_{/}^{-2}(v_{1})\cdot p\big)}{\bar{f}_{in/}\big(v_{1};\,\text{\textgreek{W}}_{/}^{-2}(v_{1})\cdot p\big)}=1\label{eq:LeftBoundaryConditionInitialData}
\end{equation}
and
\begin{equation}
\lim_{h\rightarrow0^{+}}\Bigg(\frac{\bar{f}_{in/}\big(v_{2}-h;\,\frac{\partial_{v}r_{/}}{-(\partial_{u}r)_{/}}(v_{2}-h)\cdot\text{\textgreek{W}}_{/}^{-2}(v_{2}-h)\cdot p\big)}{\bar{f}_{out/}\big(v_{2}-h;\,\text{\textgreek{W}}_{/}^{-2}(v_{2}-h)\cdot p\big)}\Bigg)=1.\label{eq:RightBoundaryConditionInitialData}
\end{equation}
}

\item{ The initial transversal derivative $(\partial_{u}r)_{/}$
satisfies 
\begin{equation}
\lim_{v\rightarrow v_{2}^{-}}\frac{(\partial_{u}r)_{/}}{\partial_{v}r_{/}}=1.\label{eq:GaugeInfinityInitialData}
\end{equation}
}

\item{ The function $\bar{f}_{out/}$ solves (\ref{eq:OutgoingVlasovFinal}),
i.\,e: 
\begin{gather}
\partial_{v}\big(\text{\textgreek{W}}_{/}^{4}(v)r_{/}^{4}(v)p^{v}\bar{f}_{out/}(v,p^{v})\big)+p^{v}\partial_{p^{v}}\big(\text{\textgreek{W}}_{/}^{4}(v)r_{/}^{4}(v)p^{v}\bar{f}_{out/}(v,p^{v})\big)=0.\label{eq:OutgoingEquationCombatibility-1}
\end{gather}
}

\end{itemize}\end{defn}
\begin{rem*}
The constraint equation (\ref{eq:ConstraintVDef-1}) implies that
\begin{equation}
\partial_{v}(\text{\textgreek{W}}_{/}^{-2}\partial_{v}r_{/})\le0.
\end{equation}
Therefore, in view of (\ref{eq:RGoesToInfinity}), we can bound for
all $v\in[v_{1},v_{2})$: 
\begin{equation}
\partial_{v}r_{/}(v)>0.\label{eq:NonTrappedInitialData}
\end{equation}

In Section \ref{sub:A-wider-class-Initial-data}, we will introduce
two more classes of characteristic initial data for the system (\ref{eq:RequationFinal})--(\ref{eq:OutgoingVlasovFinal}).
\end{rem*}
We will also define the initial Hawking mass $m_{/}$ and the initial
renormalised Hawking mass $\tilde{m}_{/}$, in accordance with (\ref{eq:RelationHawkingMass}),
(\ref{eq:RenormalisedHawkingMass}), as follows:
\begin{defn}
Let $(r_{/},\text{\textgreek{W}}_{/}^{2},\bar{f}_{in/},\bar{f}_{out/})$
be an asymptotically AdS boundary-characteristic initial data set
on $[v_{1},v_{2})$ with reflecting gauge conditions at $r=r_{0},+\infty$.
We will define the initial Hawking mass $m_{/}$ and initial renormalised
Hawking mass $\tilde{m}_{/}$ on $[v_{1}.v_{2})$ by the relations
\begin{equation}
m_{/}\doteq\frac{r_{/}}{2}\big(1-4\text{\textgreek{W}}_{/}^{-2}(\partial_{u}r)_{/}\partial_{v}r_{/}\big),\label{eq:DefinitionHawkingMassCharacteristic-1}
\end{equation}
and 
\begin{equation}
\tilde{m}_{/}\doteq m_{/}-\frac{1}{6}\Lambda r_{/}^{3}.\label{eq:RenormalisedHawkingMassCharacteristic}
\end{equation}

\end{defn}

\subsection{Developments with reflecting boundary conditions on $\text{\textgreek{g}}_{0}$
and $\mathcal{I}$}

In the following sections, we will only consider solutions $(r,\text{\textgreek{W}}^{2},\bar{f}_{in},\bar{f}_{out})$
to (\ref{eq:RequationFinal})--(\ref{eq:OutgoingVlasovFinal}) satisfying
a reflecting gauge condition on $\partial_{tim}\mathcal{U}$. This
condition fixes $\partial_{tim}\mathcal{U}$ to be a union of vertical
straight lines in the $(u,v)$-plane. This motivates defining the
following class of domains $\mathcal{U}\subset\mathbb{R}^{2}$:

\begin{figure}[h] 
\centering 
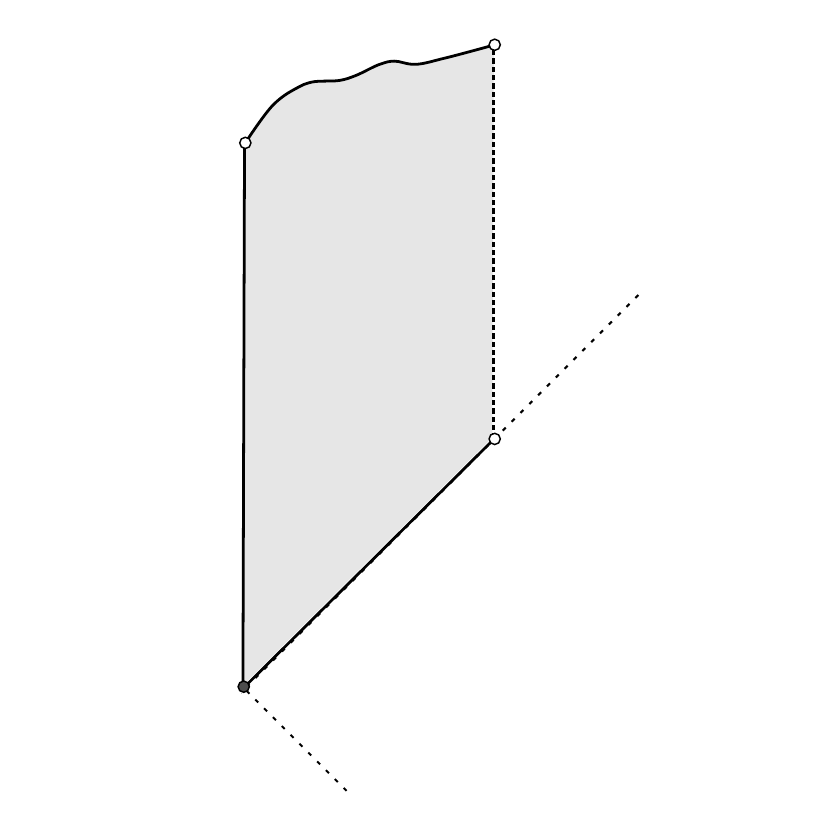 
\caption{A typical domain $\mathcal{U}\in\mathscr{U}_{v_{0}}$ would be as depicted above. In the case when the boundary set $\gamma$ is empty, it is necessary that both $\gamma_{0}$ and $\mathcal{I}$ are unbounded (i.\,e.~extend all the way to $u+v=\infty$).}
\end{figure}
\begin{defn}
\label{def:DevelopmentSets}For any $v_{0}>0$, let $\mathscr{U}_{v_{0}}$
be the set of all connected open domains $\mathcal{U}$ of the $(u,v)$-plane
with piecewise Lipschitz boundary $\partial\mathcal{U}$, such that
$\partial\mathcal{U}$ splits as the following union of Lipschitz
curves 
\begin{equation}
\partial\mathcal{U}=\mathcal{S}_{v_{0}}\cup\text{\textgreek{g}}_{0}\cup\mathcal{I}\cup clos(\text{\textgreek{g}}),\label{eq:BoundaryOfU}
\end{equation}
where 
\begin{equation}
\mathcal{S}_{v_{0}}=\{0\}\times[0,v_{0}],
\end{equation}
\begin{equation}
\text{\textgreek{g}}_{0}=\{u=v\}\cap\{0\le u<u^{(\text{\textgreek{g}}_{0})}\},\label{eq:AxisForm}
\end{equation}
\begin{equation}
\mathcal{I}=\{u=v-v_{0}\}\cap\{0\le u<u^{(\mathcal{I})}\}\label{eq:InfinityForm}
\end{equation}
(for some $0<u^{(\text{\textgreek{g}}_{0})},u^{(\mathcal{I})}\le+\infty$)
and $\text{\textgreek{g}}:(x_{1},x_{2})\rightarrow\mathbb{R}^{2}$
is an achronal (with respect to the reference Lorentzian metric (\ref{eq:ComparisonUVMetric}))
curve, which is allowed to be empty. The closure $clos(\text{\textgreek{g}})$
of $\text{\textgreek{g}}$ in (\ref{eq:BoundaryOfU}) is considered
with respect to the usual topology of $\mathbb{R}^{2}$\end{defn}
\begin{rem*}
Definition \ref{def:DevelopmentSets} implies that $\mathcal{U}$
is necessarily contained in the future domain of dependence of $\mathcal{S}_{v_{0}}\cup\text{\textgreek{g}}_{0}\cup\mathcal{I}$
(with respect to the comparison metric (\ref{eq:ComparisonUVMetric})).
In the case when $\text{\textgreek{g}}$ in (\ref{eq:BoundaryOfU})
is empty, it is necessary that both $\text{\textgreek{g}}_{0}$ and
$\mathcal{I}$ are unbounded in the future, i.\,e.~extend to $u+v=+\infty$.
\end{rem*}
We will now proceed to define the notion of a future development of
an asymptotically AdS boundary-characteristic initial data set for
the system (\ref{eq:RequationFinal})--(\ref{eq:OutgoingVlasovFinal})
with reflecting boundary conditions on $\text{\textgreek{g}}_{0}$
and $\mathcal{I}$:
\begin{defn}
\label{def:Development} For any $v_{0}>0$ and $r_{0}>0$, let $(r_{/},\text{\textgreek{W}}_{/}^{2},\bar{f}_{in/},\bar{f}_{out/})$
be a smooth, asymptotically AdS boundary-characteristic initial data
set on $[0,v_{0})$ for the system (\ref{eq:RequationFinal})--(\ref{eq:OutgoingVlasovFinal})
satisfying the reflecting gauge condition at $r=r_{0},+\infty$ (see
Definition \ref{def:TypeII}). A \emph{future development} of $(r_{/},\text{\textgreek{W}}_{/}^{2},\bar{f}_{in/},\bar{f}_{out/})$
will consist of a domain $\mathcal{U}\in\mathscr{U}_{v_{0}}$ (see
Definition \ref{def:DevelopmentSets}) and a quadruple of smooth functions
$r:\mathcal{U}\rightarrow(r_{0},+\infty)$, $\text{\textgreek{W}}^{2}:\mathcal{U}\rightarrow(0,+\infty)$
and $\bar{f}_{in},\bar{f}_{out}:\mathcal{U}\times(0,+\infty)\rightarrow[0,+\infty)$
satisfying the following conditions:

\begin{enumerate}

\item $(r,\text{\textgreek{W}}^{2},\bar{f}_{in},\bar{f}_{out})$
satisfy the given initial conditions on $\mathcal{S}_{v_{0}}=\{0\}\times[0,v_{0})$,
i.\,e.: 
\begin{equation}
(r,\text{\textgreek{W}}^{2},\bar{f}_{in},\bar{f}_{out})|_{\mathcal{S}_{v_{0}}}=(r_{/},\text{\textgreek{W}}_{/}^{2},\bar{f}_{in/},\bar{f}_{out/}).\label{eq:InitialDataRightMaximal}
\end{equation}

\item$(r,\text{\textgreek{W}}^{2},\bar{f}_{in},\bar{f}_{out})$ solve
(\ref{eq:RequationFinal})--(\ref{eq:OutgoingVlasovFinal}) on $\mathcal{U}$.

\item The following gauge conditions are satisfied on $\text{\textgreek{g}}_{0}$
and $\mathcal{I}$: 
\begin{equation}
\partial_{u}r|_{\text{\textgreek{g}}_{0}}=-\partial_{v}r|_{\text{\textgreek{g}}_{0}}\label{eq:GaugeMirrorMaximal}
\end{equation}
 and 
\begin{equation}
\partial_{u}(1/r)|_{\mathcal{I}}=-\partial_{v}(1/r)|_{\mathcal{I}}.\label{eq:GaugeInfinityMaximal}
\end{equation}

\item $(r,\bar{f}_{in},\bar{f}_{out})$ satisfy on $\mathcal{I}$
the boundary conditions 
\begin{equation}
(1/r)|_{\mathcal{I}}=0\label{eq:InfinityRMaximal}
\end{equation}
 and 
\begin{equation}
\lim_{h\rightarrow0^{+}}\Bigg(\frac{\bar{f}_{in}\big(u_{*}+h,v_{*};\,\text{\textgreek{W}}^{-2}(u_{*}+h,v_{*})\cdot p\big)}{\bar{f}_{out}\big(u_{*},v_{*}-h;\,\text{\textgreek{W}}^{-2}(u_{*},v_{*}-h)\cdot p\big)}\Bigg)=1,\label{eq:ReflectionInfinityMaximal}
\end{equation}
for all $(u_{*},v_{*})\in\mathcal{I}$ and $p>0$.

\item $(r,\bar{f}_{in},\bar{f}_{out})$ satisfy on $\text{\textgreek{g}}_{0}$
the boundary conditions 
\begin{equation}
r|_{\text{\textgreek{g}}_{0}}=r_{0}\label{eq:MirrorRMaximal}
\end{equation}
 and 
\begin{equation}
\bar{f}_{out}\big(u_{*},v_{*};\, p\big)=\bar{f}_{in}\big(u_{*},v_{*};\, p\big),\label{eq:ReflectionMirrorMaximal}
\end{equation}
for all $(u_{*},v_{*})\in\text{\textgreek{g}}_{0}$ and $p>0$.

\end{enumerate}\end{defn}
\begin{rem*}
Notice that the relations (\ref{eq:GaugeMirrorMaximal}) and (\ref{eq:GaugeInfinityMaximal})
folow from the boundary conditions (\ref{eq:MirrorRMaximal}) and
(\ref{eq:InfinityRMaximal}), combined with the form (\ref{eq:AxisForm})
and (\ref{eq:InfinityForm}) of $\text{\textgreek{g}}_{0}$ and $\mathcal{I}$,
respectively. However, the relations (\ref{eq:GaugeMirrorMaximal})
and (\ref{eq:GaugeInfinityMaximal}) should be viewed as \emph{gauge
conditions,} fixing, in conjuction with (\ref{eq:MirrorRMaximal})
and (\ref{eq:InfinityRMaximal}), the form (\ref{eq:AxisForm}) and
(\ref{eq:InfinityForm}) of $\text{\textgreek{g}}_{0}$ and $\mathcal{I}$.\end{rem*}
\begin{defn}
If $\mathscr{D}=(\mathcal{U};r,\text{\textgreek{W}}^{2},\bar{f}_{in},\bar{f}_{out})$
and $\mathscr{D}^{\prime}=(\mathcal{U}^{\prime};r^{\prime},(\text{\textgreek{W}}^{\prime})^{2},\bar{f}_{in}^{\prime},\bar{f}_{out}^{\prime})$
are two future developments of the same initial data $(r_{/},\text{\textgreek{W}}_{/}^{2},\bar{f}_{in/},\bar{f}_{out/})$,
we will say that $\mathscr{D}^{\prime}$ is an extension of $\mathscr{D}$,
writing $\mathscr{D}\subseteq\mathscr{D}^{\prime}$, if $\mathcal{U}\subseteq\mathcal{U}^{\prime}$
and the restriction $(r^{\prime},(\text{\textgreek{W}}^{\prime})^{2},\bar{f}_{in}^{\prime},\bar{f}_{out}^{\prime})|_{\mathcal{U}}$
of $(r^{\prime},(\text{\textgreek{W}}^{\prime})^{2},\bar{f}_{in}^{\prime},\bar{f}_{out}^{\prime})$
on $\mathcal{U}$ satisfies 
\begin{equation}
(r^{\prime},(\text{\textgreek{W}}^{\prime})^{2},\bar{f}_{in}^{\prime},\bar{f}_{out}^{\prime})|_{\mathcal{U}}\equiv(r,\text{\textgreek{W}}^{2},\bar{f}_{in},\bar{f}_{out}).
\end{equation}
.\end{defn}
\begin{rem*}
It can be readily deduced from Propositions \ref{Prop:LocalExistenceTypeII},
\ref{Prop:LocalExistenceTypeIII} and \ref{prop:LocalExistenceTypeI}
in Section \ref{sub:Local-well-Posedness} that, if $\mathscr{D}=(\mathcal{U};r,\text{\textgreek{W}}^{2},\bar{f}_{in},\bar{f}_{out})$
and $\mathscr{D}^{\prime}=(\mathcal{U}^{\prime};r^{\prime},(\text{\textgreek{W}}^{\prime})^{2},\bar{f}_{in}^{\prime},\bar{f}_{out}^{\prime})$
are two future developments of the same initial data $(r_{/},\text{\textgreek{W}}_{/}^{2},\bar{f}_{in/},\bar{f}_{out/})$,
then 
\begin{equation}
(r,\text{\textgreek{W}}^{2},\bar{f}_{in},\bar{f}_{out})|_{\mathcal{U}\cap\mathcal{U}^{\prime}}=(r^{\prime},(\text{\textgreek{W}}^{\prime})^{2},\bar{f}_{in}^{\prime},\bar{f}_{out}^{\prime})|_{\mathcal{U}\cap\mathcal{U}^{\prime}}.
\end{equation}

\end{rem*}

\section{\label{sec:The-main-results}Precise statement of the main results}

In this section, we will provide a detailed statement of the main
results of this paper. These results are used as a starting point
in the proof of the instability of AdS for the system (\ref{eq:RequationFinal})--(\ref{eq:OutgoingVlasovFinal})
in our companion paper \cite{MoschidisNullDust}.

\subsection{\label{sub:Well-posedness}Existence, uniqueness and the basic properties
of the maximal future development}

Our first result concerns the existence, uniqueness and the basic
properties of the maximal future development of any asymptotically
AdS boundary-characteristic initial data set for (\ref{eq:RequationFinal})--(\ref{eq:OutgoingVlasovFinal}):

\begin{customthm}{1}[precise version]\label{thm:maximalExtension}

For any $v_{0}>0$ and any $r_{0}>0$, let $(r_{/},\text{\textgreek{W}}_{/}^{2},\bar{f}_{in/},\bar{f}_{out/})$
be a smooth asymptotically AdS boundary-characteristic initial data
set on $[0,v_{0})$ for the system (\ref{eq:RequationFinal})--(\ref{eq:OutgoingVlasovFinal})
satisfying the reflecting gauge condition at $r=r_{0},+\infty$, according
to Definition \ref{def:TypeII}, such that the quantities $\frac{\text{\textgreek{W}}_{/}^{2}}{1-\frac{1}{3}\Lambda r_{/}^{2}},r_{/}^{2}(T_{vv})_{/}$
and $\tan^{-1}r_{/}$ extend smoothly on $v=v_{0}$. Then, there exists
a unique future development $(\mathcal{U};r,\text{\textgreek{W}}^{2},\bar{f}_{in},\bar{f}_{out})$
of $(r_{/},\text{\textgreek{W}}_{/}^{2},\bar{f}_{in/},\bar{f}_{out/})$
with reflecting boundary condition on $\text{\textgreek{g}}_{0},\mathcal{I}$
(see Definition \ref{def:Development}) which is \underline{maximal},
that is to say, any other future development $(\mathcal{U}^{\prime};r^{\prime},(\text{\textgreek{W}}^{\prime})^{2},\bar{f}_{in}^{\prime},\bar{f}_{out}^{\prime})$
of $(r_{/},\text{\textgreek{W}}_{/}^{2},\bar{f}_{in/},\bar{f}_{out/})$
with $r^{\prime}\ge r_{0}$ everywhere on $\mathcal{U}^{\prime}$satisfies
$\mathcal{U}^{\prime}\subseteq\mathcal{U}$ and 
\begin{equation}
(r^{\prime},(\text{\textgreek{W}}^{\prime})^{2},\bar{f}_{in}^{\prime},\bar{f}_{out}^{\prime})\equiv(r,\text{\textgreek{W}}^{2},\bar{f}_{in},\bar{f}_{out})|_{\mathcal{U}'}.
\end{equation}

The maximal future development $(\mathcal{U};r,\text{\textgreek{W}}^{2},\bar{f}_{in},\bar{f}_{out})$
satisfies the following properties (for the definition of the curves
$\text{\textgreek{g}}_{0},\mathcal{I},\text{\textgreek{g}}$, see
Definition \ref{def:DevelopmentSets}):

\begin{enumerate}

\item We have 
\begin{equation}
\partial_{u}r<0,\label{eq:NegativeDerivativeRMaximal}
\end{equation}
\begin{equation}
\big(1-\frac{2m}{r}\big)\big|_{J^{-}(\mathcal{I})\cup J^{-}(\text{\textgreek{g}}_{0})}>0\label{eq:NonTrappingMaximal}
\end{equation}
 and 
\begin{equation}
\partial_{v}r|_{J^{-}(\mathcal{I})\cup J^{-}(\text{\textgreek{g}}_{0})}>0,\label{eq:D_vRPositiveMaximal}
\end{equation}
where 
\begin{equation}
J^{-}(\mathcal{I})=\big\{0\le u<\sup_{\mathcal{I}}u\big\}\cap\mathcal{U}\label{eq:PastOfInfinity}
\end{equation}
is the causal past of $\mathcal{I}$ and 
\begin{equation}
J^{-}(\text{\textgreek{g}}_{0})=\big\{0\le v<\sup_{\text{\textgreek{g}}_{0}}v\big\}\cap\mathcal{U}
\end{equation}
is the causal past of $\text{\textgreek{g}}_{0}$ (with respect to
the reference Lorenztian metric (\ref{eq:ComparisonUVMetric})).

\item The renormalised Hawking mass $\tilde{m}$ is conserved on
$\text{\textgreek{g}}_{0}$ and $\mathcal{I}$, i.\,e.: 
\begin{equation}
\tilde{m}|_{\text{\textgreek{g}}_{0}}=\tilde{m}_{/}(0)\label{eq:ConstantMassMirror}
\end{equation}
and 
\begin{equation}
\tilde{m}|_{\mathcal{I}}=\lim_{v\rightarrow v_{0}^{-}}\tilde{m}_{/}(v).\label{eq:ConstantMassInfinity}
\end{equation}

\item The conformal infinity $\mathcal{I}$ is complete, i.\,e.
$\text{\textgreek{W}}^{2}/(1-\frac{1}{3}\Lambda r^{2})$ has a finite
limit on $\mathcal{I}$ and: 
\begin{equation}
\int_{\mathcal{I}}\sqrt{\frac{\text{\textgreek{W}}^{2}}{1-\frac{1}{3}\Lambda r^{2}}}\Big|_{\mathcal{I}}\, du=+\infty.\label{eq:CompleConformalInfinity}
\end{equation}

\item In the case $\mathcal{U}\backslash J^{-}(\mathcal{I})\neq\emptyset$,
the future event horizon 
\begin{equation}
\mathcal{H}^{+}=\mathcal{U}\cap\partial J^{-}(\mathcal{I})\label{eq:DefinitionHorizon}
\end{equation}
has the following properties:

\begin{enumerate}

\item $\mathcal{H}^{+}$ has infinite affine length, i.\,e.: 
\begin{equation}
\int_{\mathcal{H}^{+}}\text{\textgreek{W}}^{2}\, dv=+\infty.\label{eq:InfiniteLengthHorizon}
\end{equation}

\item All the matter falls inside the black hole, i.\,e. 
\begin{equation}
\sup_{\mathcal{H}^{+}}r=r_{S}\label{eq:UpperBoundRHorizon}
\end{equation}
and
\begin{equation}
\inf_{\mathcal{H}^{+}}\Big(1-\frac{2m}{r}\Big)=0,\label{eq:TrappingAsymptoticallyHorizon}
\end{equation}
where $r_{S}$ defined by the relation 
\begin{equation}
1-2\frac{\lim_{v\rightarrow v_{0}^{-}}\tilde{m}_{/}(v)}{r_{S}}-\frac{1}{3}\Lambda r_{S}^{2}=0.\label{eq:DefinitionRs}
\end{equation}

\end{enumerate}

\item In the case $\mathcal{H}^{+}\neq\emptyset$, the curve $\text{\textgreek{g}}_{0}$
is bounded and contains points lying to the future of $\mathcal{H}^{+}$,
i.\,e.~satisfies 
\begin{equation}
\text{\textgreek{g}}_{0}\nsubseteq J^{-}(\mathcal{I}).\label{eq:MirrorExtendsBeyondHorizon}
\end{equation}

\item In the case $\mathcal{H}^{+}\neq\emptyset$, the curve $\text{\textgreek{g}}$
is non-empty and $r$ extends continuously on $\text{\textgreek{g}}$
with $r|_{\text{\textgreek{g}}_{0}}=r_{0}$. Moreover, there is no
Cauchy horizon ``emanating from timelike infinity'': for any point
$(u_{1},v_{1})\in\text{\textgreek{g}}$, the line $\{v=v_{1}\}$ intersects
$\mathcal{I}$. In other words, there is no point in $\text{\textgreek{g}}$
which lies on the curve $\{v=v_{\mathcal{I}}\}$, where $(u_{\mathcal{I}},v_{\mathcal{I}})$
is the future limit point of $\mathcal{I}$.

\end{enumerate}

\end{customthm}

For the pro\textgreek{o}f of Theorem \ref{thm:maximalExtension},
see Section \ref{sub:ProofOfProp}.
\begin{rem*}
Note that, in view of (\ref{eq:UpperBoundRHorizon}), (\ref{eq:DefinitionRs})
and the fact that $r>r_{0}$ on $\mathcal{U}$, a necessary condition
for $\mathcal{H}^{+}$ to be non-empty is that the total mass $\lim_{v\rightarrow v_{0}}\tilde{m}_{/}(v)$
and the mirror radus $r_{0}$ satisfy 
\begin{equation}
2\frac{\lim_{v\rightarrow v_{0}^{-}}\tilde{m}_{/}(v)}{r_{0}}>1-\frac{1}{3}\Lambda r_{0}^{2}.\label{eq:LowerBoundMass}
\end{equation}
From the proof of Theorem \ref{thm:maximalExtension}, it follows
that, in the case when $\mathcal{H}^{+}\neq\emptyset$, $(\mathcal{U};r,\text{\textgreek{W}}^{2},\bar{f}_{in},\bar{f})$
approaches (in a suitable sense) the Schwarzschild--AdS solution near
timelike infinity $(u_{\mathcal{I}},v_{\mathcal{I}})$.
\end{rem*}
\begin{figure}[h] 
\centering 
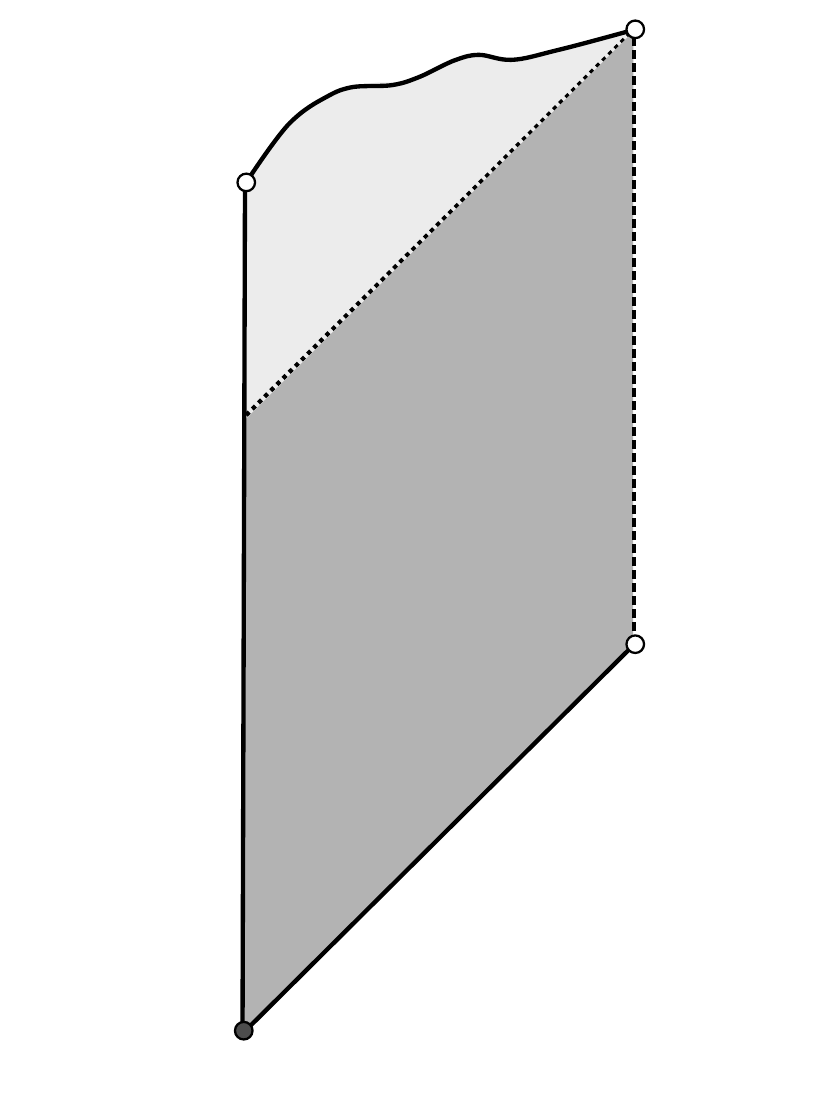 
\caption{Schematic depiction of the  maximal future development $(\mathcal{U}; r,\Omega^2,\tau,\bar{\tau})$ of a  smooth asymptotically AdS boundary-characteristic initial data set $(r_{/},\Omega_{/}^{2},\bar{f}_{in/},\bar{f}_{out/})$.  In the case when $J^{-}(\mathcal{I})$ does not cover all of $\mathcal{U}$, the future event horizon $\mathcal{H}^{+}$ is non-empty and has infinite affine length. In this case, the curve $\gamma$ is non-empty, and does \underline{not} contain a Cauchy horizon component emanating from $i^{+}$. The final mass of the event horizon is equal to the (conserved) renormalised Hawking mass on $\mathcal{I}$.}
\end{figure}
\begin{defn}
\label{def:MaximalDevelopment}The development $(\mathcal{U};r,\text{\textgreek{W}}^{2},\bar{f}_{in},\bar{f}_{out})$
introduced by Theorem \ref{thm:maximalExtension} will be called the
\emph{maximal future development} of the asymptotically AdS boundary-characteristic
initial data set $(r_{/},\text{\textgreek{W}}_{/}^{2},\bar{f}_{in/},\bar{f}_{out/})$. \end{defn}
\begin{rem*}
In a similar way, we can uniquely define the \emph{maximal past development}
$(\mathcal{U};r,\text{\textgreek{W}}^{2},\bar{f}_{in},\bar{f}_{out})$
of $(r_{/},\text{\textgreek{W}}_{/}^{2},\bar{f}_{in/},\bar{f}_{out/})$,
satisfying the properties outlined by Theorem \ref{thm:maximalExtension}
after performing a ``time reversal'' transformation $(u,v)\rightarrow(-v,-u)$.
Notice that such a coordinate transformation turns an asymptotically
AdS boundary-characteristic initial data set on $u=0$ into an asymptotically
AdS boundary-characteristic initial data set on $v=0$. However, Theorem
\ref{thm:maximalExtension} also holds (with exactly the same proof)
for such initial data sets.
\end{rem*}

\subsection{\label{sub:Cauchy-stability}Cauchy stability in a rough norm, uniformly
in $r_{0}$}

Our next result is a Cauchy stability statement for the domain of
outer communications of solutions $(r,\text{\textgreek{W}}^{2},\bar{f}_{in},\bar{f}_{out})$
to (\ref{eq:RequationFinal})--(\ref{eq:OutgoingVlasovFinal}) in
a rough initial data topology:

\begin{customthm}{2}[precise version]\label{thm:CauchyStability}

For any $v_{1}<v_{2}$ and any $0<r_{0}<(-\Lambda)^{-1/2}$, let $\mathcal{S}_{i}=(r_{/i},\text{\textgreek{W}}_{/i}^{2},\bar{f}_{in/i},\bar{f}_{out/i})$,
$i=1,2$, be two smooth asymptotically AdS boundary-characteristic
initial data sets on $[v_{1},v_{2})$ for the system (\ref{eq:RequationFinal})--(\ref{eq:OutgoingVlasovFinal})
satisfying the reflecting gauge condition at $r=r_{0},+\infty$, according
to Definition \ref{def:TypeII}, such that the quantities $\frac{\text{\textgreek{W}}_{/i}^{2}}{1-\frac{1}{3}\Lambda r_{/i}^{2}},r_{/i}^{2}(T_{vv})_{/i}$
and $\tan^{-1}r_{/i}$ extend smoothly on $v=v_{2}$. Assume, furthermore,
that the following conditions hold:

\begin{enumerate}

\item For some $u_{0}>0$, the maximal future development $(\mathcal{U}_{1};r_{1},\text{\textgreek{W}}_{1}^{2},\bar{f}_{in1},\bar{f}_{out1})$
of $\mathcal{S}_{1}$ satisfies 
\begin{equation}
\mathcal{W}_{u_{0}}\doteq\{0<u<u_{0}\}\cap\{u+v_{1}<v<u+v_{2}\}\subset\mathcal{U}_{1}
\end{equation}
and 
\begin{align}
C_{0}\doteq\sup_{\mathcal{W}_{u_{0}}}\Bigg\{\Big|\log\big(\frac{\text{\textgreek{W}}_{1}^{2}}{1-\frac{1}{3}\Lambda r_{1}^{2}}\big)\Big|+\Big|\log\Big(\frac{2\partial_{v}r_{1}}{1-\frac{2m_{1}}{r_{1}}}\Big)\Big|+\Big|\log\Big(\frac{1-\frac{2m_{1}}{r_{1}}}{1-\frac{1}{3}\Lambda r_{1}^{2}}\Big)\Big|+\sqrt{-\Lambda}|\tilde{m}_{1}|\Bigg\}+\label{eq:UpperBoundNonTrappingForCauchyStability}\\
+\sup_{\bar{u}}\int_{\{u=\bar{u}\}\cap\mathcal{W}_{u_{0}}}r_{1}\frac{(T_{vv})_{1}}{\partial_{v}r_{1}}\, dv+\sup_{\bar{v}}\int_{\{v=\bar{v}\}\cap\mathcal{W}_{u_{0}}}r_{1}\frac{(T_{uu})_{1}}{-\partial_{u}r_{1}}\, du & <+\infty.\nonumber 
\end{align}

\item The pair of initial data $\mathcal{S}_{i}$, $i=1,2$, satisfy
\begin{align}
\sup_{v\in[v_{1},v_{2})}\Bigg\{\Big|\log\big(\frac{\text{\textgreek{W}}_{/1}^{2}}{1-\frac{1}{3}\Lambda r_{/1}^{2}}\big)-\log\big(\frac{\text{\textgreek{W}}_{/2}^{2}}{1-\frac{1}{3}\Lambda r_{/2}^{2}}\big)\Big|+\Big|\log\Big(\frac{2\partial_{v}r_{/1}}{1-\frac{2m_{/1}}{r_{/1}}}\Big)-\log\Big(\frac{2\partial_{v}r_{/2}}{1-\frac{2m_{/2}}{r_{/2}}}\Big)\Big|+\label{eq:GaugeDifferenceBoundCauchystability}\\
+\Big|\log\Big(\frac{1-\frac{2m_{/_{1}}}{r_{/1}}}{1-\frac{1}{3}\Lambda r_{/1}^{2}}\Big)-\log\Big(\frac{1-\frac{2m_{/_{2}}}{r_{/2}}}{1-\frac{1}{3}\Lambda r_{/2}^{2}}\Big)\Big|+\sqrt{-\Lambda}|\tilde{m}_{/1}-\tilde{m}_{/2}|\Bigg\}(v) & \le\text{\textgreek{d}}\nonumber 
\end{align}
and 
\begin{align}
\sup_{v\in[v_{1},v_{2}]}(-\Lambda)\int_{v_{1}}^{v_{2}}\Bigg|\frac{r_{1/}^{2}(T_{vv})_{1/}(\bar{v})}{\big(|\text{\textgreek{r}}_{1/}(\bar{v})-\text{\textgreek{r}}_{1/}(v)|+\tan^{-1}\big(\sqrt{-\frac{\Lambda}{3}}r_{0}\big)\big)\partial_{v}\text{\textgreek{r}}_{1/}(\bar{v})}-\label{eq:DifferenceBoundCauchyStability}\\
\hphantom{\sup_{v\in[v_{1},v_{2}]}(-\Lambda)\int_{v_{1}}^{v_{2}}}-\frac{r_{2/}^{2}(T_{vv})_{2/}(\bar{v})}{\big(|\text{\textgreek{r}}_{2/}(\bar{v})-\text{\textgreek{r}}_{2/}(v)|+\tan^{-1}\big(\sqrt{-\frac{\Lambda}{3}}r_{0}\big)\big)\partial_{v}\text{\textgreek{r}}_{2/}(\bar{v})} & \Bigg|\, d\bar{v}\le\text{\textgreek{d}},
\end{align}
where, for some fixed large absolute constant $C_{1}$, the parameter
$\text{\textgreek{d}}$ satisfies 
\begin{equation}
0\le\text{\textgreek{d}}\le\text{\textgreek{d}}_{0}\doteq\exp\big(-\exp\big(C_{1}(1+C_{0})\frac{u_{0}}{v_{2}-v_{1}}\big)\big)\label{eq:SmallnessDeltaForCauchyStability}
\end{equation}
and $\text{\textgreek{r}}_{/}$ is defined by the relation 
\begin{equation}
\text{\textgreek{r}}_{/}(v)\doteq\tan^{-1}\big(\sqrt{-\frac{\Lambda}{3}}r_{/}(v)\big).\label{eq:DefinitionInitialRho}
\end{equation}

\end{enumerate}

Then, the maximal development $(\mathcal{U}_{2};r_{2},\text{\textgreek{W}}_{2}^{2},\bar{f}_{in2},\bar{f}_{out2})$
of $\mathcal{S}_{2}$ satisfies 
\begin{equation}
\mathcal{W}_{u_{0}}\subset\mathcal{U}_{2}\label{eq:InclusionOtherdomain}
\end{equation}
and 
\begin{align}
\sup_{\mathcal{W}_{u_{0}}}\Bigg\{\Big|\log\big(\frac{\text{\textgreek{W}}_{1}^{2}}{1-\frac{1}{3}\Lambda r_{1}^{2}}\big)-\log\big(\frac{\text{\textgreek{W}}_{2}^{2}}{1-\frac{1}{3}\Lambda r_{2}^{2}}\big)\Big|+\Big|\log\Big(\frac{2\partial_{v}r_{1}}{1-\frac{2m_{1}}{r_{1}}}\Big)-\log\Big(\frac{2\partial_{v}r_{2}}{1-\frac{2m_{2}}{r_{2}}}\Big)\Big|+\label{eq:UpperBoundNonTrappingForCauchyStability-1}\\
+\Big|\log\Big(\frac{1-\frac{2m_{1}}{r_{1}}}{1-\frac{1}{3}\Lambda r_{1}^{2}}\Big)-\log\Big(\frac{1-\frac{2m_{2}}{r_{2}}}{1-\frac{1}{3}\Lambda r_{2}^{2}}\Big)\Big|+\sqrt{-\Lambda}|\tilde{m}_{1}-\tilde{m}_{2}|\Bigg\}+\nonumber \\
+\sup_{\bar{u}}\int_{\{u=\bar{u}\}\cap\mathcal{W}_{u_{0}}}\big|r_{1}(T_{vv})_{1}-r_{2}(T_{vv})_{2}\big|\, dv+\sup_{\bar{v}}\int_{\{v=\bar{v}\}\cap\mathcal{W}_{u_{0}}}\big|r_{1}(T_{uu})_{1}-r_{2}(T_{uu})_{2}\big|\, du & \le\nonumber \\
\le\exp\big(\exp\big( & C_{1}(1+C_{0})\big)\frac{u_{0}}{v_{2}-v_{1}}\big)\text{\textgreek{d}}.\nonumber 
\end{align}

\end{customthm}

For the proof of Theorem \ref{thm:CauchyStability}, see Section \ref{sub:Proof-of-Cauchy-General}.
\begin{rem*}
By repeating the proof of Theorem \ref{thm:CauchyStability}, the
Cauchy stability estimate (\ref{eq:UpperBoundNonTrappingForCauchyStability-1})
also holds in the case when $(\mathcal{U}_{i};r_{i},\text{\textgreek{W}}_{i}^{2},\bar{f}_{in;i},\bar{f}_{out;i})$,
$i=1,2$, are the maximal \underline{past} developments of $\mathcal{S}_{i}$,
i.\,e.~when $\mathcal{W}_{u_{0}}$ is replaced by 
\begin{equation}
\mathcal{W}_{u_{0}}^{(-)}\doteq\{-u_{0}\le u<0\}\cap\{u+v_{1}<v<u+v_{2}\}
\end{equation}
and (\ref{eq:UpperBoundNonTrappingForCauchyStability}) holds on $\mathcal{W}_{u_{0}}^{(-)}$
in place of $\mathcal{W}_{u_{0}}$ (see the remark below Definition
\ref{def:MaximalDevelopment}). 
\end{rem*}
For any $r_{0}>0$ and any $v_{0}>0$, let us define the following
``norm'' on the space of smooth asymptotically AdS boundary-characteristic
initial data sets $\mathcal{S}=(r_{/},\text{\textgreek{W}}_{/}^{2},\bar{f}_{in/},\bar{f}_{out/})$
on $[0,v_{0})$ for the system (\ref{eq:RequationFinal})--(\ref{eq:OutgoingVlasovFinal}):
\begin{align}
||(r_{/},\text{\textgreek{W}}_{/}^{2},\bar{f}_{in/},\bar{f}_{out/})||_{\mathcal{C\mathcal{S}}}\doteq\sqrt{-\Lambda} & \sup_{0\le v<v_{0}}|\tilde{m}_{/}(v)|+(-\Lambda)\sup_{0\le v<v_{0}}\int_{0}^{v_{0}}\frac{1}{\text{\textgreek{r}}_{/}(v)-\text{\textgreek{r}}_{/}(\bar{v})+\text{\textgreek{r}}_{/}(0)}\Big(\frac{r_{/}^{2}(T_{vv})_{/}}{\partial_{v}\text{\textgreek{r}}_{/}}\Big)(\bar{v})\, d\bar{v}\Bigg\}+\label{eq:GeometricNorm}\\
 & +\sup_{0\le v<v_{0}}\max\big\{\frac{2\tilde{m}_{/}}{r_{/}},0\big\},\nonumber 
\end{align}
where $\text{\textgreek{r}}_{/}$ is defined by (\ref{eq:DefinitionInitialRho}).
\begin{rem*}
Note that (\ref{eq:GeometricNorm}) is invariant under gauge transformations,
as well as scale transformations of the form $(u,v)\rightarrow(\text{\textgreek{l}}u,\text{\textgreek{l}}v)$,
$(r,\tilde{m},\Lambda)\rightarrow(\text{\textgreek{l}}r,\text{\textgreek{l}}\tilde{m},\text{\textgreek{l}}^{-2}\Lambda)$,
$r_{0}\rightarrow\text{\textgreek{l}}r_{0}$, $(\bar{f}_{in},\bar{f}_{out})\rightarrow(\text{\textgreek{l}}^{-4}\bar{f}_{in},\text{\textgreek{l}}^{-4}\bar{f}_{out})$.
Furthermore, notice that $||(r_{/},\text{\textgreek{W}}_{/}^{2},\bar{f}_{in/},\bar{f}_{out/})||_{\mathcal{C\mathcal{S}}}=0$
if and only if $\bar{f}_{in/}=0$ and $\bar{f}_{out/}=0$, i.\,e.~if
$(r_{/},\text{\textgreek{W}}_{/}^{2},\bar{f}_{in/},\bar{f}_{out/})$
is the initial data set for the pure AdS spacetime $(\mathcal{M}_{AdS},g_{AdS})$
on $\{r\ge r_{0}\}$. The dependence of (\ref{eq:GeometricNorm})
in terms of the Vlasov fields $(\bar{f}_{in/},\bar{f}_{out/})$ is
only through the ingoing energy momentum component $(T_{vv})_{/}$.
\end{rem*}
Specialising to the case when $\mathcal{S}_{1}$ is the trivial initial
data set $\mathcal{S}_{AdS}$, 
\begin{equation}
\mathcal{S}_{AdS}=(r_{AdS/},\text{\textgreek{W}}_{AdS/}^{2},0,0),
\end{equation}
 in Theorem \ref{thm:CauchyStability}, we obtain the following Cauchy
stability statement for $(\mathcal{M}_{AdS},g_{AdS})$ with respect
to the topology defined by (\ref{eq:GeometricNorm}), which is independent
of the inner mirror radius $r_{0}$:

\begin{customcor}{1}\label{cor:CauchyStabilityOfAdS}

For any (possibly large) $l_{*}>0$, there exists a (small) $\text{\textgreek{e}}_{0}>0$
and a constant $C_{l_{*}}>0$ depending only on $l_{*}$, so that
the following statement holds: For any $v_{0}>0$ and $0<r_{0}<(-\Lambda)^{-1/2}$,
if $(r_{/},\text{\textgreek{W}}_{/}^{2},\bar{f}_{in/},\bar{f}_{out/})$
is a smooth asymptotically AdS boundary-characteristic initial data
set on $[0,v_{0})$ for the system (\ref{eq:RequationFinal})--(\ref{eq:OutgoingVlasovFinal})
satisfying the reflecting gauge condition at $r=r_{0},+\infty$, according
to Definition \ref{def:TypeII}, such that the quantities $\frac{\text{\textgreek{W}}_{/}^{2}}{1-\frac{1}{3}\Lambda r_{/}^{2}},r_{/}^{2}(T_{vv})_{/}$
and $\tan^{-1}r_{/}$ extend smoothly on $v=v_{0}$ and moreover 
\begin{equation}
||(r_{/},\text{\textgreek{W}}_{/}^{2},\bar{f}_{in/},\bar{f}_{out/})||_{\mathcal{C}\mathcal{S}}<\text{\textgreek{e}}\label{eq:SmallnessForCauchyStability}
\end{equation}
for some $0<\text{\textgreek{e}}\le\text{\textgreek{e}}_{0}$, then
the maximal development $(\mathcal{U};r,\text{\textgreek{W}}^{2},\bar{f}_{in},\bar{f}_{out})$
satisfies 
\begin{equation}
\mathcal{W}_{l_{*}}\doteq\{0<u\le l_{*}v_{0}\}\cap\{u<v<u+v_{0}\}\subset\mathcal{U}\label{eq:InclusionInMaximalDomain}
\end{equation}
and 
\begin{equation}
\sqrt{-\Lambda}\sup_{\mathcal{W}_{l_{*}}}|\tilde{m}|+\sup_{\mathcal{W}_{l_{*}}}\log\Bigg(\frac{1-\frac{1}{3}\Lambda r^{2}}{1-\max\{\frac{2m}{r},0\}}\Bigg)+\sup_{\bar{u}}\int_{\{u=\bar{u}\}\cap\mathcal{W}_{l_{*}}}\frac{rT_{vv}}{\partial_{v}r}\, dv+\sup_{\bar{v}}\int_{\{v=\bar{v}\}\cap\mathcal{W}_{l_{*}}}\frac{rT_{uu}}{(-\partial_{u}r)}\, du<C_{l_{*}}\text{\textgreek{e}}.\label{eq:SmallnessCauchyStability}
\end{equation}

\end{customcor}

For the proof of Corollary \ref{cor:CauchyStabilityOfAdS}, see Section
\ref{sub:Proof-of-Cauchy-AdS}.

\section{\label{sec:Well-posedness-for-the}Well-posedness and structure of
the maximal development}

The aim of this Section is the proof of Theorem \ref{thm:maximalExtension}.
To this end, we will first introduce, in Section \ref{sub:A-wider-class-Initial-data},
a number of characteristic initial value problems for (\ref{eq:RequationFinal})--(\ref{eq:OutgoingVlasovFinal}),
in addition to the asymptotically AdS boundary-characteristic initial
value problem introduced by Definition \ref{def:TypeII}. We will
then establish the well-posedness of these initial value problems
in Section \ref{sub:Local-well-Posedness}. The results of Section
\ref{sub:Local-well-Posedness}, combined with a number of continuation
criteria that will be established in Section \ref{sub:Continuation-criteria},
will allow us to construct the maximal future development of a general
smooth, asymptotically AdS boundary-characteristic initial data set
and complete the proof of Theorem \ref{thm:maximalExtension}. This
will be achieved in Section \ref{sub:ProofOfProp}.

\subsection{\label{sub:A-wider-class-Initial-data}Auxiliary types of characterisitic
initial data sets}

In this Section, we will define some auxiliary types of characteristic
initial data sets for (\ref{eq:RequationFinal})--(\ref{eq:OutgoingVlasovFinal}),
in addition to the one introduced by Definition \ref{def:TypeII}.
\begin{defn}
\label{def:TypeI} For any $u_{1}<u_{2}$, $v_{1}<v_{2}$ and any
$r_{0}>0$, let $r_{\backslash}:[u_{1},u_{2}]\rightarrow(r_{0},+\infty)$,
$\text{\textgreek{W}}_{\backslash}:[u_{1},u_{2}]\rightarrow[0,+\infty)$,
$\bar{f}_{in\backslash},\bar{f}_{out\backslash}:[u_{1},u_{2}]\times(0,+\infty)\rightarrow[0,+\infty)$,
$r_{/}:[v_{1},v_{2}]\rightarrow(r_{0},+\infty)$, $\text{\textgreek{W}}_{/}:[v_{1},v_{2}]\rightarrow(0,+\infty)$
and $\bar{f}_{in/},\bar{f}_{out/}:[v_{1},v_{2}]\times(0,+\infty)\rightarrow[0,+\infty)$
be smooth functions, such that 
\begin{equation}
r_{\backslash}(u_{1})=r_{/}(v_{1}),\label{eq:EqualRJunction}
\end{equation}
\begin{equation}
\text{\textgreek{W}}_{\backslash}(u_{1})=\text{\textgreek{W}}_{/}(v_{1}),\label{eq:EqualOmegaJunction}
\end{equation}
and, for all $p\in(0,+\infty)$: 
\begin{equation}
\bar{f}_{in\backslash}(u_{1},p)=\bar{f}_{in/}(v_{1},p)\label{eq:EqualIngoingJunction}
\end{equation}
and 
\begin{equation}
\bar{f}_{out\backslash}(u_{1},p)=\bar{f}_{out/}(v_{1},p).\label{eq:EqualOutgoingJunction}
\end{equation}
 We will call $(r_{\backslash},\text{\textgreek{W}}_{\backslash},\bar{f}_{in\backslash},\bar{f}_{out\backslash})$
and $(r_{\backslash},\text{\textgreek{W}}_{\backslash},\bar{f}_{in\backslash},\bar{f}_{out\backslash})$
a\emph{ characteristic initial data} \emph{set} for the system (\ref{eq:RequationFinal})--(\ref{eq:OutgoingVlasovFinal})
if the pairs $(r_{\backslash},\text{\textgreek{W}}_{\backslash})$
and $(r_{/},\text{\textgreek{W}}_{/})$ satisfy the constraint equations
\begin{align}
\partial_{u}(\text{\textgreek{W}}_{\backslash}^{-2}\partial_{u}r_{\backslash})= & -4\pi r_{\backslash}(T_{uu})_{\backslash}\text{\textgreek{W}}_{\backslash}^{-2},\label{eq:ConstraintUDef}\\
\partial_{v}(\text{\textgreek{W}}_{/}^{-2}\partial_{v}r_{/})= & -4\pi r_{/}(T_{vv})_{/}\text{\textgreek{W}}_{/}^{-2},\label{eq:ConstraintVDef}
\end{align}
 where 
\begin{gather}
(T_{uu})_{\backslash}(u)\doteq\int_{0}^{+\infty}\text{\textgreek{W}}_{\backslash}^{4}(u)(p^{v})^{2}\bar{f}_{out\backslash}(u;p^{v})\, r_{\backslash}^{2}(u)\frac{dp^{v}}{p^{v}},\label{eq:EnergyMomentumInitialLeft}\\
(T_{vv})_{/}(v)\doteq\int_{0}^{+\infty}\text{\textgreek{W}}_{/}^{4}(v)(p^{u})^{2}\bar{f}_{in/}(v;p^{u})\, r_{/}^{2}(v)\frac{dp^{u}}{p^{u}},\label{eq:EnergyMomentumIntialRight}
\end{gather}
while $\bar{f}_{in\backslash},\bar{f}_{out/}$ solve (\ref{eq:IngoingVlasovFinal})
and (\ref{eq:OutgoingVlasovFinal}) along $\{v=v_{1}\}$ and $\{u=u_{1}\}$
respectively, i.\,e.~ 
\begin{gather}
\partial_{u}\big(\text{\textgreek{W}}_{\backslash}^{4}(u)r_{\backslash}^{4}(u)p^{u}\bar{f}_{in\backslash}(u,p^{u})\big)+p^{u}\partial_{p^{u}}\big(\text{\textgreek{W}}_{\backslash}^{4}(u)r_{\backslash}^{4}(u)p^{u}\bar{f}_{in\backslash}(u,p^{u})\big)=0,\label{eq:IngoingEquationCombatibility}\\
\partial_{v}\big(\text{\textgreek{W}}_{/}^{4}(v)r_{/}^{4}(v)p^{v}\bar{f}_{out/}(v,p^{v})\big)+p^{v}\partial_{p^{v}}\big(\text{\textgreek{W}}_{/}^{4}(v)r_{/}^{4}(v)p^{v}\bar{f}_{out/}(v,p^{v})\big)=0.\label{eq:OutgoingEquationCombatibility}
\end{gather}
\end{defn}
\begin{rem*}
Let $(r,\text{\textgreek{W}}^{2},\bar{f}_{in},\bar{f}_{out})$ be
a solution of the system (\ref{eq:RequationFinal})--(\ref{eq:OutgoingVlasovFinal})
on a closed subset of $\mathcal{V}$ of $[u_{1},u_{2}]\times[v_{1},v_{2}]$
containing $\big([u_{1},u_{2}]\times\{v_{1}\}\big)\cup\big(\{u_{2}\}\times[v_{1},v_{2}]\big)$,
such that 
\begin{equation}
(r,\text{\textgreek{W}}^{2},\bar{f}_{in},\bar{f}_{out})|_{[u_{1},u_{2}]\times\{v_{1}\}}=(r_{\backslash},\text{\textgreek{W}}_{\backslash}^{2},\bar{f}_{in\backslash},\bar{f}_{out\backslash})
\end{equation}
and 
\begin{equation}
(r,\text{\textgreek{W}}^{2},\bar{f}_{in},\bar{f}_{out})|_{\{u_{1}\}\times[v_{1},v_{2}]}=(r_{/},\text{\textgreek{W}}_{/}^{2},\bar{f}_{in/},\bar{f}_{out/}).
\end{equation}
Then, the transversal derivatives of $r$ across $[u_{1},u_{2}]\times\{v_{1}\}$
and $\{u_{2}\}\times[v_{1},v_{2}]$ can be computed in terms of $(r_{\backslash},\text{\textgreek{W}}_{\backslash}^{2})$
and $(r_{/},\text{\textgreek{W}}_{/}^{2})$ by integrating equation
(\ref{eq:RequationFinal}), i.\,e. for all $u\in[u_{1},u_{2}]$:
\begin{equation}
(r\partial_{v}r)(u,v_{1})=r_{/}\partial_{v}r_{/}(v_{1})-\frac{1}{4}\int_{u_{1}}^{u}(1-\Lambda r_{\backslash}^{2}(\bar{u}))\text{\textgreek{W}}_{\backslash}^{2}(\bar{u})\, d\bar{u}\label{eq:TransversalDerivativeV}
\end{equation}
and, for all $v\in[v_{1},v_{2}]$: 
\begin{equation}
(r\partial_{u}r)(u_{1},v)=r_{\backslash}\partial_{u}r_{\backslash}(u_{1})-\frac{1}{4}\int_{v1}^{v}(1-\Lambda r_{/}^{2}(\bar{v}))\text{\textgreek{W}}_{/}^{2}(\bar{v})\, d\bar{v}.\label{eq:TransversalDerivativeU}
\end{equation}
\end{rem*}
\begin{defn}
\label{def:TypeIII}For any $u_{1}<u_{2}$, $v_{1}<v_{2}$ and any
$r_{0}>0$, let $r_{\backslash}:[u_{1},u_{2}]\rightarrow[r_{0},+\infty)$,
$\text{\textgreek{W}}_{\backslash}:[u_{1},u_{2}]\rightarrow[0,+\infty)$,
$\bar{f}_{in\backslash},\bar{f}_{out\backslash}:[u_{1},u_{2}]\times(0,+\infty)\rightarrow[0,+\infty)$,
$r_{/}:[v_{1},v_{2}]\rightarrow(r_{0},+\infty)$, $\text{\textgreek{W}}_{/}:[v_{1},v_{2}]\rightarrow(0,+\infty)$
and $\bar{f}_{in/},\bar{f}_{out/}:[v_{1},v_{2}]\times(0,+\infty)\rightarrow[0,+\infty)$
be smooth functions, satisfying (\ref{eq:EqualRJunction})--(\ref{eq:EqualOutgoingJunction}).
We will call $(r_{\backslash},\text{\textgreek{W}}_{\backslash},\bar{f}_{in\backslash},\bar{f}_{out\backslash})$
and $(r_{\backslash},\text{\textgreek{W}}_{\backslash},\bar{f}_{in\backslash},\bar{f}_{out\backslash})$
a\emph{ boundary-double characteristic initial data} \emph{set} for
the system (\ref{eq:RequationFinal})--(\ref{eq:OutgoingVlasovFinal})
with boundary at $r=r_{0}$ and satisfying the reflecting gauge condition
at $r=r_{0}$ if they satisfy the assumptions (\ref{eq:ConstraintUDef})--(\ref{eq:OutgoingEquationCombatibility})
of Definition \ref{def:TypeI}, and moreover 
\begin{equation}
r_{\backslash}(u_{2})=r_{0},
\end{equation}
\begin{equation}
\partial_{u}r_{\backslash}(u_{2})<0
\end{equation}
and 
\begin{equation}
(\partial_{v}r)_{\backslash}(u_{2})=-\partial_{u}r_{\backslash}(u_{2}),\label{eq:PositiveD_vDerivativeOnMirror}
\end{equation}
where $(\partial_{v}r)_{/}$ is defined by (\ref{eq:TransversalDerivativeV}),
i.\,e., for any $u_{1}\le u\le u_{2}$: 
\begin{equation}
r_{\backslash}(\partial_{v}r)_{\backslash}(u)\doteq r_{/}\partial_{v}r_{/}(v_{1})-\frac{1}{4}\int_{u_{1}}^{u}(1-\Lambda r_{\backslash}^{2}(\bar{u}))\text{\textgreek{W}}_{\backslash}^{2}(\bar{u})\, d\bar{u}.\label{eq:DefinitionTransversalD_vDerivative}
\end{equation}
\end{defn}
\begin{rem*}
Note that (\ref{eq:PositiveD_vDerivativeOnMirror}) and (\ref{eq:DefinitionTransversalD_vDerivative})
imply that 
\begin{equation}
(\partial_{v}r)_{\backslash}(u)>0\label{eq:PositiveDerivativeD_vonIngoing}
\end{equation}
for all $u\in[u_{1},u_{2}]$. In particular, 
\begin{equation}
\partial_{v}r_{/}(v_{1})>0.\label{eq:PositivityOfOutgoingDerivativeForTypeIII}
\end{equation}

\end{rem*}

\subsection{\label{sub:Local-well-Posedness}Local existence and uniqueness}

In this Section, we will establish the local well-posedness of the
initial value problems for (\ref{eq:RequationFinal})--(\ref{eq:OutgoingVlasovFinal})
associated to the types of initial data sets introduced by Definitions
\ref{def:TypeII}, \ref{def:TypeI} and \ref{def:TypeIII}. We will
in fact establish the well-posedness of these initial value problems
in the rough topology defined by (\ref{eq:GeometricNorm}).

The next result is a well-posedness result for the initial data introduced
by Definition \ref{def:TypeII}.
\begin{prop}
\label{Prop:LocalExistenceTypeII}Let $C_{0}\gg1$ be a (large) constant.
For any $v_{1}<v_{2}$ and any $r_{0}>0$, let $(r_{/},\text{\textgreek{W}}_{/}^{2},\bar{f}_{in/},\bar{f}_{out/})$
be a smooth asymptotically AdS boundary-characteristic initial data
set on $[v_{1},v_{2})$ for the system (\ref{eq:RequationFinal})--(\ref{eq:OutgoingVlasovFinal})
satisfying the reflecting gauge condition at $r=r_{0},+\infty$, according
to Definition \ref{def:TypeII}, such that the quantities $\frac{\text{\textgreek{W}}_{/}^{2}}{1-\frac{1}{3}\Lambda r_{/}^{2}},r_{/}^{2}(T_{vv})_{/}$
and $\tan^{-1}r_{/}$ extend smoothly on $v=v_{2}$. Let us also set
\begin{equation}
M\doteq\sup_{v\in[v_{1},v_{2})}\Bigg\{\Big|\log\big(\frac{\text{\textgreek{W}}_{/}^{2}}{1-\frac{1}{3}\Lambda r_{/}^{2}}\big)\Big|+\Big|\log\Big(\frac{2\partial_{v}r_{/}}{1-\frac{2m_{/}}{r_{/}}}\Big)\Big|+\Big|\log\Big(\frac{1-\frac{2m_{/}}{r_{/}}}{1-\frac{1}{3}\Lambda r_{/}^{2}}\Big)\Big|+\sqrt{-\Lambda}|\tilde{m}_{/}|\Bigg\}(v)+\int_{v_{1}}^{v_{2}}r_{/}(T_{vv})_{/}\, d\bar{v}\label{eq:UpperBoundInitialData}
\end{equation}
and, for any $0<\text{\textgreek{d}}<1$: 
\begin{equation}
v_{in}(\text{\textgreek{d}})\doteq\sup\Bigg\{0\le v_{*}\le v_{2}-v_{1}:\mbox{ }\sup_{v\in[v_{1},v_{2}]}\int_{\max\{v-v_{*},v_{1}\}}^{\min\{v+v_{*},v_{2}\}}\frac{r_{/}^{2}(T_{vv})_{/}(\bar{v})}{|\bar{v}-v|+r_{0}}\, d\bar{v}<\text{\textgreek{d}}\Bigg\},\label{eq:SmallnessInL1OfT}
\end{equation}
where 
\begin{equation}
m_{/}(v)=\frac{r_{/}}{2}\big(1+4\text{\textgreek{W}}_{/}^{-2}(\partial_{u}r)_{/}\partial_{v}r_{/}\big)(v),
\end{equation}
 
\begin{equation}
\tilde{m}_{/}(v)=m(v)-\frac{1}{6}\Lambda r_{/}^{3}(v)
\end{equation}
and $(\partial_{u}r)_{/}$ is defined according to (\ref{eq:TransversalDerivativeU-1}).
Then, provided 
\begin{equation}
u_{0}<\frac{v_{in}(2e^{-C_{0}^{2}M}M)}{e^{C_{0}^{2}((-\Lambda)(v_{2}-v_{1})^{2}+1)}},\label{eq:U0UpperBound}
\end{equation}
the following holds: Setting 
\begin{equation}
\mathcal{I}_{u_{0}}\doteq\{u=v-v_{2}\}\cap\{0<u<u_{0}\},
\end{equation}
\begin{equation}
\text{\textgreek{g}}_{0;u_{0}}\doteq\{u=v-v_{1}\}\cap\{0<u<u_{0}\}
\end{equation}
and 
\begin{equation}
\mathcal{W}\doteq\{0<u<u_{0}\}\cap\{u+v_{1}<v<u+v_{2}\},
\end{equation}
 there exist unique smooth functions $r:\mathcal{W}\cup\text{\textgreek{g}}_{0;u_{0}}\rightarrow(r_{0},+\infty)$,
$\text{\textgreek{W}}:\mathcal{W}\cup\text{\textgreek{g}}_{0;u_{0}}\rightarrow(0,+\infty)$
and $\bar{f}_{in},\bar{f}_{out}:\mathcal{W}\cup\text{\textgreek{g}}_{0;u_{0}}\times(0,+\infty)\rightarrow[0,+\infty)$
solving equations (\ref{eq:RequationFinal})--(\ref{eq:OutgoingVlasovFinal})
on $\mathcal{W}$ (with $T_{uu},T_{vv}$ expressed by (\ref{eq:T_uuComponent}),
(\ref{eq:T_vvComponent})), such that:

\begin{enumerate}

\item The functions $r,\text{\textgreek{W}}^{2},\bar{f}_{in},\bar{f}_{out}$
satisfy the given initial conditions on $\{0\}\times[v_{1},v_{2})$,
i.\,e.: 
\begin{equation}
(r,\text{\textgreek{W}}^{2},\bar{f}_{in},\bar{f}_{out})|_{\{0\}\times[v_{1},v_{2})}=(r_{/},\text{\textgreek{W}}_{/}^{2},\bar{f}_{in/},\bar{f}_{out/}).\label{eq:InitialDataRight-1}
\end{equation}

\item The functions $(r,\bar{f}_{in},\bar{f}_{out})$ satisfy on
$\text{\textgreek{g}}_{0;u_{0}}$ the boundary conditions 
\begin{equation}
r|_{\text{\textgreek{g}}_{0;u_{0}}}=r_{0}\label{eq:MirrorLocalExistence}
\end{equation}
 and 
\begin{equation}
\bar{f}_{out}\big(u_{*},v_{*};\, p\big)=\bar{f}_{in}\big(u_{*},v_{*};\, p\big)\label{eq:ReflectionMirrorLocalExistence}
\end{equation}
for all $(u_{*},v_{*})\in\text{\textgreek{g}}_{0;u_{0}}$ and $p>0$,
as well as the reflecting gauge condition 
\begin{equation}
\partial_{u}r|_{\text{\textgreek{g}}_{0;u_{0}}}=-\partial_{v}r|_{\text{\textgreek{g}}_{0;u_{0}}}.\label{eq:GaugeMirrorLocalExistence}
\end{equation}

\item The functions $(r,\bar{f}_{in},\bar{f}_{out})$ satisfy on
$\mathcal{I}_{u_{0}}$ the boundary conditions: 
\begin{equation}
(1/r)|_{\mathcal{I}_{u_{0}}}=0\label{eq:InfinityRLocalExistence}
\end{equation}
 and 
\begin{equation}
\lim_{h\rightarrow0^{+}}\Bigg(\frac{\bar{f}_{in}\big(u_{*}+h,v_{*};\,\text{\textgreek{W}}^{-2}(u_{*}+h,v_{*})\cdot p\big)}{\bar{f}_{out}\big(u_{*},v_{*}-h;\,\text{\textgreek{W}}^{-2}(u_{*},v_{*}-h)\cdot p\big)}\Bigg)=1\label{eq:ReflectionInfinityLocalExistence}
\end{equation}
for all $(u_{*},v_{*})\in\mathcal{I}_{u_{0}}$ and $p>0$, as well
as the reflecting gauge condition 
\begin{equation}
\partial_{u}(1/r)|_{\mathcal{I}_{u_{0}}}=-\partial_{v}(1/r)|_{\mathcal{I}_{u_{0}}}.\label{eq:GaugeInfinityLocalExistence}
\end{equation}

\item The function $r$ satisfies 
\begin{equation}
\sup_{\mathcal{W}}\partial_{u}r<0.\label{eq:SignConditionDuR-1-1-2}
\end{equation}

\item The following estimates hold on $\mathcal{W}$: 
\begin{align}
\sup_{\mathcal{W}}\Bigg\{\Big|\log\big(\frac{\text{\textgreek{W}}^{2}}{1-\frac{1}{3}\Lambda r^{2}}\big)\Big|+\Big|\log\Big(\frac{2\partial_{v}r}{1-\frac{2m}{r}}\Big)\Big|+\Big|\log\Big(\frac{1-\frac{2m}{r}}{1-\frac{1}{3}\Lambda r^{2}}\Big)\Big|+\sqrt{-\Lambda}|\tilde{m}|\Bigg\}+\label{eq:BoundToShowInitialValueForContinuation-1}\\
+\sup_{\bar{u}}\int_{\{u=\bar{u}\}\cap\mathcal{W}}rT_{vv}\, dv & \le C_{0}M,\nonumber 
\end{align}
\begin{equation}
\sup_{\bar{v}}\int_{\{v=\bar{v}\}\cap\mathcal{W}}rT_{uu}\, du+\sup_{\bar{u}}\sup_{v\in[v_{1}+\bar{u},v_{2}+\bar{u}]}\int_{\max\{v-u_{0},v_{1}+\bar{u}\}}^{\min\{v+u_{0},v_{2}+\bar{u}\}}\frac{r^{2}(T_{vv})(\bar{u},\bar{v})}{|\bar{v}-v|+r_{0}}\, d\bar{v}\le e^{-C_{0}M}M
\end{equation}
and 
\begin{equation}
\sup_{\mathcal{W}}r^{2}T_{uu},\sup_{\mathcal{W}}r^{2}T_{vv}=\sup_{[v_{1},v_{2}]}r_{/}^{2}(T_{vv})_{/}.\label{eq:BoundForenergyMomentum}
\end{equation}

\end{enumerate}\end{prop}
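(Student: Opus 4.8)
The plan is to set up a contraction-mapping / bootstrap argument for the system (\ref{eq:RequationFinal})--(\ref{eq:OutgoingVlasovFinal}) in the characteristic rectangle $\mathcal{W}$, using the rough norm appearing in the statement as the bootstrap quantity. First I would recast the equations in the ``renormalised'' form (\ref{eq:DerivativeInUDirectionKappa})--(\ref{eq:ConservationT_uu}), introducing the quantities $\kappa \doteq \frac{\partial_v r}{1-\frac{2m}{r}}$, $\bar\kappa \doteq \frac{-\partial_u r}{1-\frac{2m}{r}}$, $\tilde m$, and the conserved fluxes $r^2 T_{vv}$ (constant in $u$) and $r^2 T_{uu}$ (constant in $v$); the variable $\frac{\Omega^2}{1-\frac13\Lambda r^2}$ is reconstructed algebraically via (\ref{eq:RelationHawkingMass}). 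The unknowns $(r,\tilde m,\kappa,\bar\kappa, \bar f_{in},\bar f_{out})$ then satisfy a system of transport equations in $u$ or $v$ with sources that are products of the unknowns times $r^{-1}r^2 T_{\cdot\cdot}$ and $r^{-2}(2\tilde m-\frac23\Lambda r^3)$; these are exactly the building blocks controlled by the norm in (\ref{eq:UpperBoundInitialData}). I would propagate the initial data off $\{u=0\}$, off the mirror $\gamma_{0;u_0}$ (using the reflecting conditions (\ref{eq:MirrorLocalExistence}), (\ref{eq:ReflectionMirrorLocalExistence}), (\ref{eq:GaugeMirrorLocalExistence}), which are handled by the reflection formulas (\ref{eq:LeftBoundaryConditionT})), and off $\mathcal{I}_{u_0}$ (using (\ref{eq:InfinityRLocalExistence})--(\ref{eq:GaugeInfinityLocalExistence}) together with the blow-up of $r$ there, which forces working with $1/r$ and $\tan^{-1}r$ as the good variables near $\mathcal{I}$).

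The core of the argument is the a priori estimate. Assuming a bootstrap bound of the form ``$\le 2C_0 M$'' for the left-hand side of (\ref{eq:BoundToShowInitialValueForContinuation-1}) on $\mathcal{W}$, I would integrate (\ref{eq:DerivativeInUDirectionKappa})--(\ref{eq:DerivativeInVDirectionKappaBar}) in the short $u$-direction: the total $u$-length is $u_0$, and the source involves $\int r T_{uu}\,du$, which is small by the smallness of $u_0$ relative to $v_{in}$; this is where the smallness condition (\ref{eq:U0UpperBound}) and the definition (\ref{eq:SmallnessInL1OfT}) of $v_{in}(\delta)$ enter, since $r^2 T_{uu}$ on a line $\{v=\bar v\}$ equals $r^2 T_{vv}$ on the reflected line, whose $L^1$-weighted mass over a $v$-window of size $\sim u_0$ is bounded by $\delta = 2e^{-C_0^2 M}M$. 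Gronwall in $u$ then improves the $\kappa,\bar\kappa,\tilde m, \Omega^2$ bounds from $2C_0 M$ back to $C_0 M$ (the exponential loss $e^{C_0^2((-\Lambda)(v_2-v_1)^2+1)}$ being absorbed by choosing $u_0$ small as in (\ref{eq:U0UpperBound})), and the conservation laws (\ref{eq:ConservationT_vv})--(\ref{eq:ConservationT_uu}) plus the reflection relations give (\ref{eq:BoundForenergyMomentum}) and the weighted-$T_{vv}$ bound essentially for free, while $\int r T_{uu}\,du \le e^{-C_0 M}M$ follows from the smallness of the reflected flux. The sign $\partial_u r <0$ (\ref{eq:SignConditionDuR-1-1-2}) is obtained by integrating (\ref{eq:RequationFinal}) in $v$ from the initial data, where $(\partial_u r)_/ <0$ by construction, and noting that the integrand $(1-\Lambda r^2)\Omega^2$ has a definite sign; similarly $r>r_0$ is propagated from $\partial_u r<0$ and the mirror value. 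The Vlasov components $\bar f_{in},\bar f_{out}$ are then transported by (\ref{eq:IngoingVlasovFinal})--(\ref{eq:OutgoingVlasovFinal}), which are linear transport equations in phase space once $r,\Omega$ are known, preserving non-negativity and compact $p$-support.

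Existence and uniqueness of the solution are obtained by the standard iteration: solve the linearised transport system, show the iteration map is a contraction in the norm (using the same estimates, now for differences of two iterates — the reflection conditions are bilinear but Lipschitz on the bootstrap set), and pass to the limit; the continuity of $(r,\Omega,\bar f_{in},\bar f_{out})$ up to $\gamma_{0;u_0}$ and the compatibility of the reflecting corner conditions with the given data (encoded in (\ref{eq:LeftBoundaryConditionInitialData})) ensure the solution is smooth across the mirror. I expect the main obstacle to be twofold: (i) the behaviour near the conformal boundary $\mathcal{I}_{u_0}$, where $r\to\infty$ and $\Omega^2\to\infty$, so one must verify that the renormalised variables $\frac{\Omega^2}{1-\frac13\Lambda r^2}$, $\tan^{-1}r$, $\kappa$, $\bar\kappa$, $\tilde m$ all extend smoothly and that the reflecting boundary conditions (\ref{eq:ReflectionInfinityLocalExistence}), (\ref{eq:GaugeInfinityLocalExistence}) are consistent with the evolution — this requires a careful Taylor expansion at $\mathcal{I}$, using the gauge normalisation $\frac{-\partial_u r}{\partial_v r}\to 1$ from (\ref{eq:GaugeInfinityInitialData}); and (ii) the precise tracking of constants so that the loss in Gronwall, which is exponential in $M$ and in $(-\Lambda)(v_2-v_1)^2$, is defeated by the definition of $v_{in}$ and the hypothesis (\ref{eq:U0UpperBound}) — in particular one must check that the weighted $L^1$ smallness of $r^2 T_{vv}$ at scale $u_0$ survives being pushed forward along the short characteristics and being reflected off $\gamma_{0;u_0}$, which is exactly why the weight $|\bar v - v| + r_0$ (adapted to the mirror radius) appears in (\ref{eq:SmallnessInL1OfT}) rather than a plain $L^1$ norm.
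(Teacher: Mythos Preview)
Your plan is correct and follows essentially the same route as the paper: reduce to the renormalised first-order system in $(\rho,\kappa,\bar\kappa,\tilde m,\tau,\bar\tau)$ with $\rho=\tan^{-1}(\sqrt{-\Lambda/3}\,r)$, run a Picard iteration, and close the bootstrap by integrating the transport equations and exploiting the smallness of the reflected flux $\tau$ over the short $u$-window via the weighted $L^1$ condition defining $v_{in}$. The one place where the paper's implementation is sharper than your sketch is the $\bar\kappa$ estimate: since $\bar\kappa$ is transported in $v$ over the \emph{full} width $v_2-v_1$ (not a short interval), the paper writes the $v$-integral of the source as the initial-data contribution plus a double integral $\int\!\!\int \partial_u(\ldots)\,du\,dv$, splits the $v$-range at $v_1+e^{C_0^{3/2}M}u_*$, and uses the weight $(|\bar v-v|+r_0)^{-1}$ to bound both pieces by $e^{-\frac12 C_0^{3/2}M}M$; this is the step where the precise form of $v_{in}(\delta)$ and the hypothesis (\ref{eq:U0UpperBound}) are actually consumed, and it is worth spelling out explicitly rather than subsuming it under ``Gronwall in $u$''.
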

\begin{rem*}
By repeating the proof of Proposition \ref{Prop:LocalExistenceTypeII}
without any significant change, one also infers the existence and
of a smooth development of $(r_{/},\text{\textgreek{W}}_{/}^{2},\bar{f}_{in/},\bar{f}_{out/})$
\emph{backwards} in time, i.\,e.~the existence and uniqueness of
a smooth solution $(r,\text{\textgreek{W}}^{2},\bar{f}_{in},\bar{f}_{out})$
to the system (\ref{eq:RequationFinal})--(\ref{eq:OutgoingVlasovFinal})
on 
\begin{equation}
\mathcal{W}^{-}\doteq\{-u_{0}<u<0\}\cap\{u+v_{1}<v<u+v_{2}\}
\end{equation}
satisfying (the analogues of) (\ref{eq:InitialDataRight-1})--(\ref{eq:BoundForenergyMomentum}).
\end{rem*}
\begin{figure}[h] 
\centering 
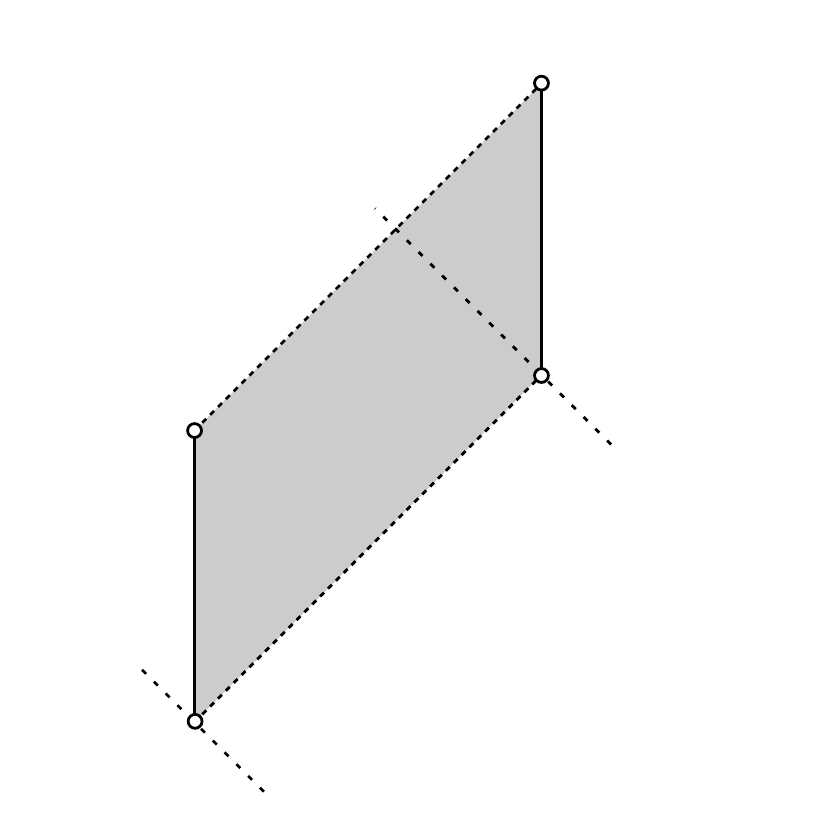 
\caption{Schematic depiction of the domain $\mathcal{W}$ in the statement of Proposition \ref{Prop:LocalExistenceTypeII}.}
\end{figure}
\begin{proof}
It suffices to establish the existence and uniqueness of a smooth
solution $(r,m,T_{uu},T_{vv})$ to the system (\ref{eq:DerivativeInUDirectionKappa})--(\ref{eq:ConservationT_uu})
on $\mathcal{W}\cup\text{\textgreek{g}}_{0;u_{0}}$, satisfying initial
conditions 
\begin{equation}
(r,T_{vv})|_{\{0\}\times[v_{1},v_{2})}=(r_{/},(T_{vv})_{/})\label{eq:InitialRight}
\end{equation}
and 
\begin{equation}
\tilde{m}(0,v_{1})=\tilde{m}_{/}(0,v_{1})\label{eq:InitialMass}
\end{equation}
the gauge conditions (\ref{eq:GaugeMirrorLocalExistence}) , (\ref{eq:GaugeInfinityLocalExistence})
and the boundary conditions (\ref{eq:MirrorLocalExistence}), (\ref{eq:InfinityRLocalExistence}),
\begin{equation}
\frac{(\partial_{v}r)^{2}r^{2}T_{uu}}{(\partial_{u}r)^{2}r^{2}T_{vv}}\Bigg|_{\text{\textgreek{g}}_{0;u_{0}}}=1\label{eq:BoundaryConditionTAxis}
\end{equation}
and
\begin{equation}
\frac{(\partial_{v}r)^{2}r^{2}T_{uu}}{(\partial_{u}r)^{2}r^{2}T_{vv}}\Bigg|_{\mathcal{I}_{u_{0}}}=1.\label{eq:BoundaryConditionTInfinity}
\end{equation}
Given the existence and uniqueness of such a smooth solution $(r,m,T_{uu},T_{vv})$
to the system (\ref{eq:DerivativeInUDirectionKappa})--(\ref{eq:ConservationT_uu}),
by solving equations (\ref{eq:IngoingVlasovFinal}) and (\ref{eq:OutgoingVlasovFinal})
for $\bar{f}_{in},\bar{f}_{out}$ on $\mathcal{W}$ with initial data
\begin{equation}
(\bar{f}_{in},\bar{f}_{out})|_{\{0\}\times[v_{1},v_{2})}=(\bar{f}_{in/},\bar{f}_{out/})
\end{equation}
and boundary conditions (\ref{eq:ReflectionMirrorLocalExistence})
and (\ref{eq:ReflectionInfinityLocalExistence}), and using the formula
(\ref{eq:RelationHawkingMass}) for $\text{\textgreek{W}}^{2}$, one
then obtains the existence and uniqueness of a smooth solution $(r,\text{\textgreek{W}}^{2},\bar{f}_{in},\bar{f}_{out})$
of the system (\ref{eq:RequationFinal})--(\ref{eq:OutgoingVlasovFinal})
satisfying (\ref{eq:InitialDataRight-1}), (\ref{eq:GaugeMirrorLocalExistence}),
(\ref{eq:GaugeInfinityLocalExistence}), (\ref{eq:MirrorLocalExistence}),
(\ref{eq:InfinityRLocalExistence}), (\ref{eq:ReflectionMirrorLocalExistence})
and (\ref{eq:ReflectionInfinityLocalExistence}). 

Let us introduce a new set of renormalised variables 
\begin{gather}
\text{\textgreek{r}}=\tan^{-1}(\sqrt{-\frac{\Lambda}{3}}r),\label{eq:RhoVariable}\\
\text{\textgreek{k}}=2\frac{\partial_{v}r}{1-\frac{2m}{r}},\label{eq:KappaVariable}\\
\bar{\text{\textgreek{k}}}=2\frac{-\partial_{u}r}{1-\frac{2m}{r}}\label{eq:KappaBarVariable}\\
\bar{\text{\textgreek{t}}}=r^{2}T_{vv},\label{eq:Tbar}\\
\text{\textgreek{t}}=r^{2}T_{uu}.\label{eq:TVariable}
\end{gather}
We will also define, for convenience, the functions $F_{1},F_{2}:(0,+\infty)\times\mathbb{R}\rightarrow\mathbb{R}$
by the relations 
\begin{equation}
F_{1}(x;y)=8\pi\frac{1}{x(1-\frac{2y}{x}+x^{2})},
\end{equation}
 and
\begin{equation}
F_{2}(x;y)=-\frac{1}{2}\frac{y}{x}\frac{(1+3x^{2})(1-\frac{2y}{x}+x^{2})}{x(1+x^{2})^{2}}.\label{eq:RightHandSideForRho}
\end{equation}
Note that, for any $x_{0}>0$, any $y\in\mathbb{R}$ such that $(1-\frac{2y}{x_{0}}+x_{0}^{2})>0$,
any $x\ge x_{0}$ and any integer $k\ge0$, we can bound:
\begin{equation}
x^{k+1}\big|\partial_{x}^{k}F_{1}(x,y)\big|+x^{k+1}\big|\partial_{x}^{k}F_{2}(x,y)\big|\le C_{k}\Big(1+\frac{|y|^{k}}{x^{k}}(1-\frac{2y}{x}+x^{2})^{-k-1}+\frac{|y|^{2}}{x^{2}}\Big)\label{eq:BoundednessF_1,F_2,F_3}
\end{equation}
Switching to the variables (\ref{eq:RhoVariable})--(\ref{eq:TVariable})
and using the relations (\ref{eq:DefinitionHawkingMass}) and (\ref{eq:RenormalisedHawkingMass})
for $\tilde{m}$, the system (\ref{eq:DerivativeInUDirectionKappa})--(\ref{eq:ConservationT_uu})
transforms into: 
\begin{align}
\partial_{u}\log(\text{\textgreek{k}})= & -\sqrt{-\frac{\Lambda}{3}}F_{1}\big(\tan\text{\textgreek{r}};\sqrt{-\frac{\Lambda}{3}}\tilde{m}\big)\bar{\text{\textgreek{k}}}^{-1}\text{\textgreek{t}},\label{eq:DerivativeInUDirectionKappaRenormalised}\\
\partial_{v}\log(\bar{\text{\textgreek{k}}})= & \sqrt{-\frac{\Lambda}{3}}F_{1}\big(\tan\text{\textgreek{r}};\sqrt{-\frac{\Lambda}{3}}\tilde{m}\big)\text{\textgreek{k}}^{-1}\bar{\text{\textgreek{t}}},\label{eq:DerivativeInVDirectionKappaBarRenormalised}\\
\partial_{u}\partial_{v}\text{\textgreek{r}}= & (-\frac{\Lambda}{3})F_{2}\big(\tan\text{\textgreek{r}};\sqrt{-\frac{\Lambda}{3}}\tilde{m}\big)\text{\textgreek{k}}\bar{\text{\textgreek{k}}},\label{eq:EquationRForRhoRenormalised}\\
\partial_{u}\tilde{m}= & -4\pi\bar{\text{\textgreek{k}}}^{-1}\text{\textgreek{t}},\label{eq:DerivativeInUtildeMRenormalised}\\
\partial_{u}\bar{\text{\textgreek{t}}}= & 0,\label{eq:ConservationTauBar}\\
\partial_{v}\text{\textgreek{t}}= & 0,\label{eq:ConservationTau}
\end{align}
 The initial condition (\ref{eq:InitialRight}), the gauge conditions
(\ref{eq:GaugeMirrorLocalExistence}), (\ref{eq:GaugeInfinityLocalExistence})
and the boundary conditions (\ref{eq:MirrorLocalExistence}), (\ref{eq:InfinityRLocalExistence})
(\ref{eq:BoundaryConditionTAxis}) and (\ref{eq:BoundaryConditionTInfinity})
are then replaced by: 
\begin{equation}
(\text{\textgreek{r}},\text{\textgreek{k}},\tilde{m},\bar{\text{\textgreek{t}}})|_{\{0\}\times[v_{1},v_{2})}=\Big(\tan^{-1}\big(\sqrt{-\frac{\Lambda}{3}}r_{/}\big),\frac{2\partial_{v}r_{/}}{1-\frac{2m_{/}}{r_{/}}},\tilde{m}_{/},r_{/}^{2}(T_{vv})_{/}\Big),\label{eq:InitialDataRecastRight}
\end{equation}
\begin{equation}
\text{\textgreek{r}}|_{\text{\textgreek{g}}_{0;u_{0}}}=\text{\textgreek{r}}_{0},\,\text{\textgreek{r}}|_{\mathcal{I}_{u_{0}}}=\frac{\pi}{2},\label{eq:RhoBoundary}
\end{equation}
 
\begin{equation}
\frac{\text{\textgreek{k}}}{\bar{\text{\textgreek{k}}}}\Big|_{\text{\textgreek{g}}_{0;u_{0}}}=1,\textnormal{}\frac{\text{\textgreek{k}}}{\bar{\text{\textgreek{k}}}}\Big|_{\mathcal{I}_{u_{0}}}=1\label{eq:KappaBoundary}
\end{equation}
 and
\begin{equation}
\frac{\text{\textgreek{t}}}{\bar{\text{\textgreek{t}}}}\Big|_{\text{\textgreek{g}}_{0;u_{0}}}=1,\mbox{ }\frac{\text{\textgreek{t}}}{\bar{\text{\textgreek{t}}}}\Big|_{\mathcal{I}_{u_{0}}}=1,\label{eq:TauBoundary}
\end{equation}
where 
\begin{equation}
\text{\textgreek{r}}_{0}\doteq\tan^{-1}(\sqrt{-\frac{\Lambda}{3}}r_{0})>0.\label{eq:Rho_0}
\end{equation}

\medskip{}

\noindent \emph{Remark.} Note that equations (\ref{eq:DerivativeInVDirectionKappaBar})
and (\ref{eq:EquationRForProof}) yield equation (\ref{eq:DerivativeTildeVMass})
for $\tilde{m}$. The relations (\ref{eq:DerivativeTildeVMass}),
(\ref{eq:DerivativeInUtildeMRenormalised}) and (\ref{eq:BoundaryConditionTInfinity})
imply that $\tilde{m}$ is conserved on $\mathcal{I}_{u_{0}}$, i.\,e.
\begin{equation}
\tilde{m}|_{\mathcal{I}_{u_{0}}}=\lim_{v\rightarrow v_{2}^{-}}\tilde{m}(0,v_{2})=\lim_{v\rightarrow v_{2}^{-}}\tilde{m}_{/}(0,v).\label{eq:ConservationTildeMAxis}
\end{equation}

\medskip{}

The proof of Proposition \ref{Prop:LocalExistenceTypeII} will follow
if we establish that the system (\ref{eq:DerivativeInUDirectionKappaRenormalised})--(\ref{eq:ConservationTau})
admits a unique smooth solution $(\text{\textgreek{r}},\text{\textgreek{k}},\bar{\text{\textgreek{k}}}\text{\textgreek{t}},\bar{\text{\textgreek{t}}},\tilde{m})$
on $\mathcal{W}\cup\text{\textgreek{g}}_{0;u_{0}}$ satisfying (\ref{eq:InitialDataRecastRight})--(\ref{eq:ConservationTildeMAxis})
and the estimates 
\begin{align}
\sup_{\mathcal{W}}\Big\{|\log(\text{\textgreek{k}})|+|\log(\bar{\text{\textgreek{k}}})|+\Big|\log\big(\sqrt{-\frac{12}{\Lambda}}\partial_{v}\text{\textgreek{r}}\big)\Big|+\Big|\log\big(-\sqrt{-\frac{12}{\Lambda}}\partial_{u}\text{\textgreek{r}}\big)\Big|+\label{eq:BoundToShowInitialValueFirstStep}\\
+\Big|\log\Big(\frac{1-\frac{2\sqrt{-\frac{\Lambda}{3}}\tilde{m}}{\tan\text{\textgreek{r}}}+\tan^{2}\text{\textgreek{r}}}{1+\tan^{2}\text{\textgreek{r}}}\Big)\Big|+\sqrt{-\Lambda}|\tilde{m}|\Big\}+\sup_{0\le\bar{u}\le u_{0}}\int_{\mathcal{W}\cap\{u=\bar{u}\}}\sqrt{-\Lambda} & \frac{\bar{\text{\textgreek{t}}}}{\text{\textgreek{r}}}\, dv\le\frac{1}{2}C_{0}M\nonumber 
\end{align}
(where $M$ is defined by (\ref{eq:UpperBoundInitialData})), 
\begin{equation}
\sup_{v_{1}\le\bar{v}\le v_{2}+u_{0}}\int_{\mathcal{W}\cap\{v=\bar{v}\}}\sqrt{-\Lambda}\frac{\text{\textgreek{t}}}{\text{\textgreek{r}}}\, du+\sup_{0\le\bar{u}\le u_{0}}\sup_{v\in[v_{1}+\bar{u},v_{2}+\bar{u}]}\int_{\max\{v-u_{0},v_{1}+\bar{u}\}}^{\min\{v+u_{0},v_{2}+\bar{u}\}}\frac{\bar{\text{\textgreek{t}}}}{|\bar{v}-v|+r_{0}}\, d\bar{v}\le e^{-C_{0}^{\frac{3}{2}}M}M\label{eq:BoundToShowSmallnessInDvR}
\end{equation}
 and 
\begin{equation}
\sup_{\mathcal{W}}\max\{\text{\textgreek{t}},\bar{\text{\textgreek{t}}}\}\le\sup_{v_{1}\le v\le v_{2}}r_{/}^{2}(T_{vv})_{/},\label{eq:UpperBoundForTauTauBar}
\end{equation}
 as well as the bound 
\begin{equation}
\text{\textgreek{r}}_{0}<\text{\textgreek{r}}|_{\mathcal{W}}<\frac{\pi}{2}.\label{eq:UpperBoundRho}
\end{equation}
Note that (\ref{eq:UpperBoundRho}) readily follows once (\ref{eq:BoundToShowInitialValueFirstStep})
has been established, in view of the fact that $\partial_{u}\text{\textgreek{r}}<0$
(as a consequence of (\ref{eq:BoundToShowInitialValueFirstStep}))
and $r_{/}|_{[v_{1},v_{2})}<+\infty$.

The proof of the existence and uniqueness of a smooth solution to
(\ref{eq:DerivativeInUDirectionKappaRenormalised})--(\ref{eq:ConservationTau})
satisfying (\ref{eq:InitialDataRecastRight})--(\ref{eq:ConservationTildeMAxis})
and the estimates (\ref{eq:BoundToShowInitialValueFirstStep})--(\ref{eq:UpperBoundForTauTauBar})
will consisit of two steps:

\medskip{}

\noindent \emph{Step 1.} We will first show that there exists a (weak)
$C^{0}$ solution $(\text{\textgreek{r}},\text{\textgreek{k}},\bar{\text{\textgreek{k}}},\text{\textgreek{t}},\bar{\text{\textgreek{t}}},\tilde{m})$
of the system (\ref{eq:DerivativeInUDirectionKappaRenormalised})--(\ref{eq:ConservationTau})
on $\mathcal{W}$ satisfying (\ref{eq:InitialDataRecastRight})--(\ref{eq:ConservationTildeMAxis})
and (\ref{eq:BoundToShowInitialValueFirstStep})--(\ref{eq:UpperBoundForTauTauBar})
and extending continuously on $\text{\textgreek{g}}_{0;u_{0}},\mathcal{I}_{u_{0}}$.
Notice that, in view of our assumption that the quantities $\frac{\text{\textgreek{W}}_{/}^{2}}{1-\frac{1}{3}\Lambda r_{/}^{2}},r_{/}^{2}(T_{vv})_{/}$
and $\tan^{-1}r_{/}$ extend smoothly up to $v=v_{2}$, the initial
data (\ref{eq:InitialDataRecastRight}) for $(\text{\textgreek{r}},\text{\textgreek{k}},\tilde{m},\bar{\text{\textgreek{t}}})$
on $\{0\}\times[v_{1},v_{2})$ extend smoothly up to $v=v_{2}$. Using
the bounds (\ref{eq:BoundToShowInitialValueFirstStep}) and (\ref{eq:UpperBoundForTauTauBar}),
it can be then readily inferred that $(\text{\textgreek{r}},\text{\textgreek{k}},\bar{\text{\textgreek{k}}},\text{\textgreek{t}},\bar{\text{\textgreek{t}}},\tilde{m})$
is in fact a classical $C^{\infty}$ solution on $\mathcal{W}$ extending
smoothly on $\text{\textgreek{g}}_{0;u_{0}}$, by commuting equations
(\ref{eq:DerivativeInUDirectionKappaRenormalised})--(\ref{eq:ConservationTau})
with $\partial_{u},\partial_{v}$ and treating the commuted equations
as linear equations in the highest order terms.

The proof will follow by the usual iteration argument. Let us define
inductively the sequence of $C^{1}$ functions $\big\{(\text{\textgreek{r}}_{n},\text{\textgreek{k}}_{n},\bar{\text{\textgreek{k}}}_{n},\text{\textgreek{t}}_{n},\bar{\text{\textgreek{t}}}_{n},\tilde{m}_{n})\Big\}_{n\in\mathbb{N}}$
on $\mathcal{W}$ by solving for each $n\in\mathbb{N}$: 
\begin{align}
\partial_{u}\log(\text{\textgreek{k}}_{n})= & -\sqrt{-\frac{\Lambda}{3}}F_{1}\big(\tan\text{\textgreek{r}}_{n-1};\sqrt{-\frac{\Lambda}{3}}\tilde{m}_{n-1}\big)\bar{\text{\textgreek{k}}}_{n-1}^{-1}\text{\textgreek{t}}_{n-1},\label{eq:DerivativeInUDirectionKappaInduction}\\
\partial_{v}\log(\bar{\text{\textgreek{k}}}_{n})= & \sqrt{-\frac{\Lambda}{3}}F_{1}\big(\tan\text{\textgreek{r}}_{n-1};\sqrt{-\frac{\Lambda}{3}}\tilde{m}_{n-1}\big)\text{\textgreek{k}}_{n-1}^{-1}\bar{\text{\textgreek{t}}}_{n-1},\label{eq:DerivativeInVDirectionKappaBarInduction}\\
\partial_{u}\partial_{v}\text{\textgreek{r}}_{n}= & (-\frac{\Lambda}{3})F_{2}\big(\tan\text{\textgreek{r}}_{n-1};\sqrt{-\frac{\Lambda}{3}}\tilde{m}_{n-1}\big)\text{\textgreek{k}}_{n-1}\bar{\text{\textgreek{k}}}_{n-1},\label{eq:EquationRForRhoInduction}\\
\partial_{u}\tilde{m}_{n}= & -4\pi\bar{\text{\textgreek{k}}}_{n}^{-1}\text{\textgreek{t}}_{n-1},\label{eq:DerivativeInUtildeMInduction}\\
\partial_{u}\bar{\text{\textgreek{t}}}_{n}= & 0,\label{eq:ConservationTauBarInduction}\\
\partial_{v}\text{\textgreek{t}}_{n}= & 0,\label{eq:ConservationTauInduction}
\end{align}
 where $(\text{\textgreek{r}}_{n},\text{\textgreek{k}}_{n},\bar{\text{\textgreek{k}}}_{n},\bar{\text{\textgreek{t}}}_{n},\text{\textgreek{t}}_{n},\tilde{m}_{n})$
satisfy 
\begin{equation}
(\text{\textgreek{r}}_{n},\text{\textgreek{k}}_{n},\tilde{m},\bar{\text{\textgreek{t}}}_{n})|_{\{0\}\times[v_{1},v_{2}]}=\Big(\tan^{-1}\big(\sqrt{-\frac{\Lambda}{3}}r_{/}\big),\frac{2\partial_{v}r_{/}}{1-\frac{2m_{/}}{r_{/}}},\tilde{m}_{/},r_{/}^{2}(T_{vv})_{/}\Big),\label{eq:InitialDataSequenceInduct}
\end{equation}
\begin{equation}
\text{\textgreek{r}}_{n}|_{\text{\textgreek{g}}_{0}}=\text{\textgreek{r}}_{0},\,\text{\textgreek{r}}_{n}|_{\mathcal{I}_{u_{0}}}=\frac{\pi}{2},\label{eq:RhoBoundaryInduct}
\end{equation}
 
\begin{equation}
\frac{\text{\textgreek{k}}_{n}}{\bar{\text{\textgreek{k}}}_{n}}\Big|_{\text{\textgreek{g}}_{0}}=1,\textnormal{}\frac{\text{\textgreek{k}}_{n}}{\bar{\text{\textgreek{k}}}_{n}}\Big|_{\mathcal{I}_{u_{0}}}=1,\label{eq:KappaBoundaryInduct}
\end{equation}
\begin{equation}
\frac{\text{\textgreek{t}}_{n}}{\bar{\text{\textgreek{t}}}_{n}}\Big|_{\text{\textgreek{g}}_{0}}=1,\mbox{ }\frac{\text{\textgreek{t}}_{n}}{\bar{\text{\textgreek{t}}}_{n}}\Big|_{\mathcal{I}_{u_{0}}}=1\label{eq:TauBoundaryInduct}
\end{equation}
and
\begin{equation}
\tilde{m}_{n}|_{\mathcal{I}_{u_{0}}}=\lim_{v\rightarrow v_{2}}\tilde{m}_{/}(0,v).\label{eq:ConservationTildeMAxisInduct}
\end{equation}
We also use the convention that, 
\begin{equation}
\text{\textgreek{k}}_{0}=\bar{\text{\textgreek{k}}}_{0}=\text{\textgreek{t}}_{0}=\bar{\text{\textgreek{t}}}_{0}=0.
\end{equation}

\medskip{}

\noindent \emph{Remark. }Notice that, under the assumption that $(\text{\textgreek{r}}_{n-1},\text{\textgreek{k}}_{n-1},\bar{\text{\textgreek{k}}}_{n-1},\text{\textgreek{t}}_{n-1},\bar{\text{\textgreek{t}}}_{n-1},\tilde{m}_{n-1})$
are $C^{1}$ functions on $\mathcal{W}$ satisying $\inf_{\mathcal{W}}\text{\textgreek{r}}_{n-1}>0$,
(\ref{eq:BoundednessF_1,F_2,F_3}) and (\ref{eq:UpperBoundForTauTauBar})
imply that $(\text{\textgreek{r}}_{n},\text{\textgreek{k}}_{n},\bar{\text{\textgreek{k}}}_{n},\text{\textgreek{t}}_{n},\bar{\text{\textgreek{t}}}_{n},\tilde{m}_{n})$
(obtained by solving (\ref{eq:DerivativeInUDirectionKappaInduction})--(\ref{eq:ConservationTauInduction})
with (\ref{eq:InitialDataSequenceInduct})--(\ref{eq:ConservationTildeMAxisInduct}))
are $C^{1}$ functions on $\mathcal{W}$ (despite the fact that $\tan\text{\textgreek{r}}_{n-1}$,
appearing as an argument of $F_{1},F_{2},F_{3}$ in the right hand
side of (\ref{eq:DerivativeInUDirectionKappaInduction})--(\ref{eq:ConservationTauInduction}),
is unbounded on $\mathcal{W}$).

\medskip{}

We will show that, as $n\rightarrow\infty$, the sequence $(\text{\textgreek{r}}_{n},\text{\textgreek{k}}_{n},\bar{\text{\textgreek{k}}}_{n},\text{\textgreek{t}}_{n},\bar{\text{\textgreek{t}}}_{n},\tilde{m}_{n})_{n\in\mathbb{N}}$
converges in the $C^{0}\big(\mathcal{W}\big)$ norm to a solution
$(\text{\textgreek{r}},\text{\textgreek{k}},\bar{\text{\textgreek{k}}},\text{\textgreek{t}},\bar{\text{\textgreek{t}}},\tilde{m})$
of (\ref{eq:DerivativeInUDirectionKappaRenormalised})--(\ref{eq:ConservationTau})
on $\mathcal{W}$ satisfying (\ref{eq:InitialDataRecastRight})--(\ref{eq:ConservationTildeMAxis})
and (\ref{eq:BoundToShowInitialValueFirstStep})--(\ref{eq:UpperBoundForTauTauBar}).
To this end, it suffices to show that, for any $n\in\mathbb{N}$,
under the assumption that for any $1\le k\le n-1$
\begin{align}
\sup_{\mathcal{W}}\Big\{|\log(\text{\textgreek{k}}_{k})|+|\log(\bar{\text{\textgreek{k}}}_{k})|+\Big|\log\big(\sqrt{-\frac{12}{\Lambda}}\partial_{v}\text{\textgreek{r}}_{k}\big)\Big|+\Big|\log\big(-\sqrt{-\frac{12}{\Lambda}}\partial_{u}\text{\textgreek{r}}_{k}\big)\Big|+\label{eq:BoundForInductionPreviousStep}\\
+\Big|\log\Big(\frac{1-\frac{2\sqrt{-\frac{\Lambda}{3}}\tilde{m}_{k}}{\tan\text{\textgreek{r}}_{k}}+\tan^{2}\text{\textgreek{r}}_{k}}{1+\tan^{2}\text{\textgreek{r}}_{k}}\Big)\Big|+\sqrt{-\Lambda}|\tilde{m}_{k}|\Big\}+\sup_{0\le\bar{u}\le u_{0}}\int_{\mathcal{W}\cap\{u=\bar{u}\}} & \sqrt{-\Lambda}\frac{\bar{\text{\textgreek{t}}}_{k}}{\text{\textgreek{r}}_{k}}\, dv\le\frac{1}{2}C_{0}M\nonumber 
\end{align}
\begin{equation}
\sup_{v_{1}\le\bar{v}\le v_{2}+u_{0}}\int_{\mathcal{W}\cap\{v=\bar{v}\}}\sqrt{-\Lambda}\frac{\text{\textgreek{t}}_{k}}{\text{\textgreek{r}}_{k}}\, du+\sup_{0\le\bar{u}\le u_{0}}\sup_{v\in[v_{1}+\bar{u},v_{2}+\bar{u}]}\int_{\max\{v-u_{0},v_{1}+\bar{u}\}}^{\min\{v+u_{0},v_{2}+\bar{u}\}}\frac{\bar{\text{\textgreek{t}}}_{k}}{|\bar{v}-v|+r_{0}}\, d\bar{v}\le e^{-C_{0}^{\frac{3}{2}}M}M\label{eq:SmallnessPreviousStep}
\end{equation}
\begin{equation}
\sup_{\mathcal{W}}\max\{\text{\textgreek{t}}_{k},\bar{\text{\textgreek{t}}}_{k}\}\le2\sup_{v_{1}\le v\le v_{2}}r_{/}^{2}(T_{vv})_{/},\label{eq:UpperBoundForTauTauBarPreviousStep}
\end{equation}
 and 
\begin{equation}
\text{\textgreek{r}}_{k}\ge\text{\textgreek{r}}_{0},\label{eq:LowerBoundRhoPreviousStep}
\end{equation}
the bounds (\ref{eq:BoundForInductionPreviousStep})--(\ref{eq:LowerBoundRhoPreviousStep})
also hold for $(\text{\textgreek{r}}_{n},\text{\textgreek{k}}_{n},\bar{\text{\textgreek{k}}}_{n},\text{\textgreek{t}}_{n},\bar{\text{\textgreek{t}}}_{n},\tilde{m}_{n})$,
i.\,e.:
\begin{align}
\sup_{\mathcal{W}}\Big\{|\log(\text{\textgreek{k}}_{n})|+|\log(\bar{\text{\textgreek{k}}}_{n})|+\Big|\log\big(\sqrt{-\frac{12}{\Lambda}}\partial_{v}\text{\textgreek{r}}_{n}\big)\Big|+\Big|\log\big(-\sqrt{-\frac{12}{\Lambda}}\partial_{u}\text{\textgreek{r}}_{n}\big)\Big|+\label{eq:BoundForInductionNextStep}\\
+\Big|\log\Big(\frac{1-\frac{2\sqrt{-\frac{\Lambda}{3}}\tilde{m}_{n}}{\tan\text{\textgreek{r}}_{n}}+\tan^{2}\text{\textgreek{r}}_{n}}{1+\tan^{2}\text{\textgreek{r}}_{n}}\Big)\Big|+\sqrt{-\Lambda}|\tilde{m}_{n}|\Big\}+\sup_{0\le\bar{u}\le u_{0}}\int_{\mathcal{W}\cap\{u=\bar{u}\}} & \sqrt{-\Lambda}\frac{\bar{\text{\textgreek{t}}}_{n}}{\text{\textgreek{r}}_{n}}\, dv\le\frac{1}{2}C_{0}M\nonumber 
\end{align}
\begin{equation}
\sup_{v_{1}\le\bar{v}\le v_{2}+u_{0}}\int_{\mathcal{W}\cap\{v=\bar{v}\}}\sqrt{-\Lambda}\frac{\text{\textgreek{t}}_{n}}{\text{\textgreek{r}}_{n}}\, du+\sup_{0\le\bar{u}\le u_{0}}\sup_{v\in[v_{1}+\bar{u},v_{2}+\bar{u}]}\int_{\max\{v-u_{0},v_{1}+\bar{u}\}}^{\min\{v+u_{0},v_{2}+\bar{u}\}}\frac{\bar{\text{\textgreek{t}}}_{n}}{|\bar{v}-v|+r_{0}}\, d\bar{v}\le e^{-C_{0}^{\frac{3}{2}}M}M,\label{eq:SmallnessNextStep}
\end{equation}
 
\begin{equation}
\sup_{\mathcal{W}}\max\{\text{\textgreek{t}}_{n},\bar{\text{\textgreek{t}}}_{n}\}\le2\sup_{v_{1}\le v\le v_{2}}r_{/}^{2}(T_{vv})_{/},\label{eq:UpperBoundForTauTauBarNextStep}
\end{equation}
and
\begin{equation}
\text{\textgreek{r}}_{n}\ge\text{\textgreek{r}}_{0}\label{eq:LowerBoundRhoNextStep}
\end{equation}
and, moreover, for $n\ge4$: 
\begin{equation}
\mathfrak{D}_{n}\le\frac{1}{2}\max\big\{\mathfrak{D}_{n-1},\mathfrak{D}_{n-2}\big\},\label{eq:DifferenceRate}
\end{equation}
where 
\begin{equation}
\mathfrak{D}_{n}\doteq\sup_{\mathcal{W}}\max\Big\{|\text{\textgreek{r}}_{n}-\text{\textgreek{r}}_{n-1}|,\Big|\log\text{\textgreek{k}}_{n}-\log\text{\textgreek{k}}_{n-1}\Big|,\Big|\log\bar{\text{\textgreek{k}}}_{n}-\log\bar{\text{\textgreek{k}}}_{n-1}\Big|,\sqrt{-\Lambda}|\tilde{m}_{n}-\tilde{m}_{n-1}|,|\text{\textgreek{t}}_{n}-\text{\textgreek{t}}_{n-1}|,|\bar{\text{\textgreek{t}}}_{n}-\bar{\text{\textgreek{t}}}_{n-1}|\Big\}.\label{eq:DifferenceNorm}
\end{equation}
Notice that, when $n=2$, the bounds (\ref{eq:BoundForInductionPreviousStep})--(\ref{eq:LowerBoundRhoNextStep})
can be readily obtained from (\ref{eq:DerivativeInUDirectionKappaInduction})--(\ref{eq:ConservationTauInduction}),
(\ref{eq:InitialDataSequenceInduct})--(\ref{eq:ConservationTildeMAxisInduct})
and (\ref{eq:UpperBoundInitialData})--(\ref{eq:U0UpperBound}), provided
$C_{0}\gg1$.

Integrating equation (\ref{eq:ConservationTauBarInduction}) along
the lines $\{v=const\}$ and (\ref{eq:ConservationTauInduction})
along the lines $\{u=const\}$, and using the boundary conditions
(\ref{eq:TauBoundaryInduct}), we obtain for any $(u,v)\in\mathcal{W}$
\begin{align}
\bar{\text{\textgreek{t}}}_{n} & (u,v)=\bar{\text{\textgreek{t}}}_{n}(0,v_{\dashv}[u,v])=r_{/}^{2}(T_{vv})_{/}(v_{\dashv}[u,v]),\label{eq:EqualityForEnergyMomentum}\\
\text{\textgreek{t}}_{n} & (u,v)=\bar{\text{\textgreek{t}}}_{n}(0,v_{\vdash}[u,v])=r_{/}^{2}(T_{vv})_{/}(v_{\vdash}[u,v]),\nonumber 
\end{align}
where 
\begin{equation}
v_{\dashv}[u,v]=\begin{cases}
v, & \mbox{if }v\le v_{2}\\
v-v_{2}+v_{1}, & \mbox{if }v>v_{2}
\end{cases}\label{eq:RightPastCurve}
\end{equation}
and 
\begin{equation}
v_{\vdash}[u,v]=v_{1}+u.\label{eq:LeftPastCurve}
\end{equation}
In particular, 
\begin{equation}
\bar{\text{\textgreek{t}}}_{n},\text{\textgreek{t}}_{n}\le\sup_{v_{1}\le v\le v_{2}}r_{/}^{2}(T_{vv})_{/}\label{eq:BoundForEnergyMomentum}
\end{equation}
 
\begin{equation}
\bar{\text{\textgreek{t}}}_{n}=\bar{\text{\textgreek{t}}}_{n-1}\mbox{ and }\text{\textgreek{t}}_{n}=\text{\textgreek{t}}_{n-1}.\label{eq:EqualityEnergyMomentum}
\end{equation}

Integrating equation (\ref{eq:DerivativeInUDirectionKappaInduction})
along $\{0\le u\le u_{0}\}\cap\{v=v_{*}\}$ for any $v_{1}\le v_{*}\le v_{2}$,
we obtain for any point $(u,v_{*})\in\mathcal{W}\cap\{v\le v_{2}\}$:
\begin{equation}
\log\big(\text{\textgreek{k}}_{n}(u,v_{*})\big)=\log\big(\text{\textgreek{k}}_{n}(0,v_{*})\big)-\int_{0}^{u}\sqrt{-\frac{\Lambda}{3}}F_{1}\big(\tan\text{\textgreek{r}}_{n-1};\sqrt{-\frac{\Lambda}{3}}\tilde{m}_{n-1}\big)\bar{\text{\textgreek{k}}}_{n-1}^{-1}\text{\textgreek{t}}_{n-1}(\bar{u},v_{*})\, d\bar{u}.\label{eq:FirstRelationKappaIntermediateRegion}
\end{equation}
In view of the bounds (\ref{eq:BoundForInductionPreviousStep}), (\ref{eq:SmallnessPreviousStep})
and (\ref{eq:BoundednessF_1,F_2,F_3}), using also (\ref{eq:UpperBoundInitialData})
and (\ref{eq:LowerBoundRhoPreviousStep}), the relation (\ref{eq:FirstRelationKappaIntermediateRegion})
yields for any point $(u,v)\in\mathcal{W}\cap\{v\le v_{2}\}$: 
\begin{equation}
\big|\log\big(\text{\textgreek{k}}_{n}(u,v)\big)\big|\le M(1+e^{-\frac{1}{2}C_{0}^{\frac{3}{2}}M}).\label{eq:UpperBoundK_nSmallerRegion}
\end{equation}
Furthermore, subtracting from equation (\ref{eq:FirstRelationKappaIntermediateRegion})
the same equation for $n-1$ in place of $n$, and using (\ref{eq:InitialDataSequenceInduct}),
(\ref{eq:BoundednessF_1,F_2,F_3}), (\ref{eq:BoundForInductionPreviousStep}),
(\ref{eq:SmallnessPreviousStep}), (\ref{eq:LowerBoundRhoPreviousStep})
and (\ref{eq:EqualityEnergyMomentum}), we infer that for any point
$(u,v)\in\mathcal{W}\cap\{v\le v_{2}\}$: 
\begin{equation}
\big|\log\big(\text{\textgreek{k}}_{n}(u,v)\big)-\log\big(\text{\textgreek{k}}_{n-1}(u,v)\big)\big|\le C_{0}e^{-\frac{1}{2}C_{0}^{\frac{3}{2}}M}M\mathfrak{D}_{n-1}.\label{eq:DifferenceKappaIntermediateRegion}
\end{equation}

Integrating equation (\ref{eq:DerivativeInVDirectionKappaBarInduction})
along $\{u=u_{*}\}\cap\{v_{1}+u_{*}\le v\le v_{2}+u_{*}\}$ for any
$0\le u_{*}\le u_{0}$, we obtain for any point $(u_{*},v)\in\mathcal{W}$:
\begin{align}
\log\big(\bar{\text{\textgreek{k}}}_{n}(u_{*},v)\big) & =\log\big(\bar{\text{\textgreek{k}}}_{n}(u_{*},v_{1}+u_{*})\big)+\int_{v_{1}+u_{*}}^{v}\sqrt{-\frac{\Lambda}{3}}F_{1}\big(\tan\text{\textgreek{r}}_{n-1};\sqrt{-\frac{\Lambda}{3}}\tilde{m}_{n-1}\big)\text{\textgreek{k}}_{n-1}^{-1}\bar{\text{\textgreek{t}}}_{n-1}(u_{*},\bar{v})\, d\bar{v}=\label{eq:FirstRelationKappaIntermediateRegion-1}\\
 & =\log\big(\bar{\text{\textgreek{k}}}_{n}(u_{*},v_{1}+u_{*})\big)+\int_{v_{1}+u_{*}}^{\min\{v_{2},v\}}\sqrt{-\frac{\Lambda}{3}}F_{1}\big(\tan\text{\textgreek{r}}_{n-1};\sqrt{-\frac{\Lambda}{3}}\tilde{m}_{n-1}\big)\text{\textgreek{k}}_{n-1}^{-1}\bar{\text{\textgreek{t}}}_{n-1}(u_{*},\bar{v})\, d\bar{v}+\nonumber \\
 & \hphantom{=++}+\int_{\min\{v_{2},v\}}^{v}\sqrt{-\frac{\Lambda}{3}}F_{1}\big(\tan\text{\textgreek{r}}_{n-1};\sqrt{-\frac{\Lambda}{3}}\tilde{m}_{n-1}\big)\text{\textgreek{k}}_{n-1}^{-1}\bar{\text{\textgreek{t}}}_{n-1}(u_{*},\bar{v})\, d\bar{v}=\nonumber \\
 & =\log\big(\bar{\text{\textgreek{k}}}_{n}(u_{*},v_{1}+u_{*})\big)+\int_{v_{1}+u_{*}}^{\min\{v_{2},v\}}\int_{0}^{u_{*}}\partial_{u}\Bigg\{\sqrt{-\frac{\Lambda}{3}}F_{1}\big(\tan\text{\textgreek{r}}_{n-1};\sqrt{-\frac{\Lambda}{3}}\tilde{m}_{n-1}\big)\text{\textgreek{k}}_{n-1}^{-1}\bar{\text{\textgreek{t}}}_{n-1}(\bar{u},\bar{v})\Bigg\}\, d\bar{u}d\bar{v}+\nonumber \\
 & \hphantom{=++}+\int_{v_{1}+u_{*}}^{\min\{v_{2},v\}}\sqrt{-\frac{\Lambda}{3}}F_{1}\big(\tan\text{\textgreek{r}}_{n-1};\sqrt{-\frac{\Lambda}{3}}\tilde{m}_{n-1}\big)\text{\textgreek{k}}_{n-1}^{-1}\bar{\text{\textgreek{t}}}_{n-1}(0,\bar{v})\, d\bar{v}+\nonumber \\
 & \hphantom{=++}+\int_{\min\{v_{2},v\}}^{v}\sqrt{-\frac{\Lambda}{3}}F_{1}\big(\tan\text{\textgreek{r}}_{n-1};\sqrt{-\frac{\Lambda}{3}}\tilde{m}_{n-1}\big)\text{\textgreek{k}}_{n-1}^{-1}\bar{\text{\textgreek{t}}}_{n-1}(u_{*},\bar{v})\, d\bar{v}.\nonumber 
\end{align}
In view of (\ref{eq:KappaBoundaryInduct}) and (\ref{eq:UpperBoundK_nSmallerRegion}),
we can bound 
\begin{equation}
\Big|\log\big(\bar{\text{\textgreek{k}}}_{n}(u_{*},v_{1}+u_{*})\big)\Big|\le M(1+e^{-\frac{1}{2}C_{0}^{\frac{3}{2}}M}).\label{eq:BoundKbarAxis}
\end{equation}
In view of (\ref{eq:DerivativeInUDirectionKappaInduction}), (\ref{eq:DerivativeInUtildeMInduction})
and (\ref{eq:ConservationTauBarInduction}) for $n-1$ in place of
$n$, as well as (\ref{eq:BoundForInductionPreviousStep}), (\ref{eq:SmallnessPreviousStep}),
(\ref{eq:LowerBoundRhoPreviousStep}) and (\ref{eq:BoundednessF_1,F_2,F_3}),
we can readily bound for all $(u_{*},v)\in\mathcal{W}$: 
\begin{align}
\int_{v_{1}+u_{*}}^{\min\{v_{2},v\}}\int_{0}^{u_{*}}\Bigg|\partial_{u}\Bigg\{ & \sqrt{-\frac{\Lambda}{3}}F_{1}\big(\tan\text{\textgreek{r}}_{n-1};\sqrt{-\frac{\Lambda}{3}}\tilde{m}_{n-1}\big)\text{\textgreek{k}}_{n-1}^{-1}\bar{\text{\textgreek{t}}}_{n-1}(\bar{u},\bar{v})\Bigg\}\Bigg|\, d\bar{u}d\bar{v}\le\label{eq:BoundDerivativeF_1}\\
 & \le C_{1}e^{10C_{0}M}(-\Lambda)\int_{v_{1}+u_{*}}^{\min\{v_{2},v\}}\int_{0}^{u_{*}}\Big(\frac{\bar{\text{\textgreek{t}}}_{n-1}}{\tan^{2}\text{\textgreek{r}}_{n-1}}+\frac{\text{\textgreek{t}}_{n-1}\bar{\text{\textgreek{t}}}_{n-1}}{\tan\text{\textgreek{r}}_{n-1}}\Big)\, d\bar{u}d\bar{v}\le\nonumber \\
 & \le C_{1}e^{20C_{0}M}(-\Lambda)\Bigg(\int_{v_{1}+u_{*}}^{\min\{v_{2},v\}}\big(\frac{1}{\text{\textgreek{r}}_{n-1}(u_{*},\bar{v})}-\frac{1}{\text{\textgreek{r}}_{n-1}(0,\bar{v})}\big)\bar{\text{\textgreek{t}}}_{n-1}(0,\bar{v})\, d\bar{v}+\nonumber \\
 & \hphantom{C_{1}e^{20C_{0}M}(-\Lambda)\Bigg(\int}+\Big(\sup_{0\le\bar{u}\le u_{*}}\int_{v_{1}+u_{*}}^{\min\{v_{2},v\}}\frac{\bar{\text{\textgreek{t}}}_{n-1}}{\text{\textgreek{r}}_{n-1}}(\bar{u},\bar{v})\, d\bar{v}\Big)\Big(\sup_{v_{1}+u_{*}\le\bar{v}\le\min\{v_{2},v\}}\int_{0}^{u_{*}}\frac{\text{\textgreek{t}}_{n-1}}{\text{\textgreek{r}}_{n-1}}(\bar{u},\bar{v})\, d\bar{u}\Big)\Bigg)\le\nonumber \\
 & \le C_{1}e^{20C_{0}M}\Bigg(e^{10C_{0}M}\int_{v_{1}+u_{*}}^{\min\{v_{2},v\}}\frac{u_{*}}{(r_{0}+\bar{v}-v_{1}-u_{*})(r_{0}+\bar{v}-v_{1})}\bar{\text{\textgreek{t}}}_{n-1}(0,\bar{v})\, d\bar{v}+e^{10C_{0}M}\cdot e^{-C_{0}^{\frac{3}{2}}M}M\Bigg)\le\nonumber \\
 & \le C_{1}e^{30C_{0}M}\Bigg(\int_{v_{1}+u_{*}}^{v_{1}+e^{C_{0}^{3/2}M}u_{*}}\frac{u_{*}}{(r_{0}+\bar{v}-v_{1}-u_{*})(r_{0}+\bar{v}-v_{1})}\bar{\text{\textgreek{t}}}_{n-1}(0,\bar{v})\, d\bar{v}+\nonumber \\
 & \hphantom{\le C_{1}e^{30C_{0}M}\Big(ss}+\int_{v_{1}+e^{C_{0}^{3/2}M}u_{*}}^{\min\{v_{2},v\}}\frac{u_{*}}{(r_{0}+\bar{v}-v_{1}-u_{*})(r_{0}+\bar{v}-v_{1})}\bar{\text{\textgreek{t}}}_{n-1}(0,\bar{v})\, d\bar{v}+e^{-C_{0}^{\frac{3}{2}}M}M\Bigg)\le\nonumber \\
 & \le C_{1}e^{30C_{0}M}\Bigg(\int_{v_{1}+u_{*}}^{v_{1}+e^{C_{0}^{3/2}M}u_{*}}\frac{\bar{\text{\textgreek{t}}}_{n-1}(0,\bar{v})}{(r_{0}+\bar{v}-v_{1})}\, d\bar{v}+\nonumber \\
 & \hphantom{\le C_{1}e^{30C_{0}M}\Big(ss}+e^{-C_{0}^{\frac{3}{2}}M}e^{10C_{0}M}\int_{v_{1}+e^{C_{0}^{3/2}}u_{*}}^{\min\{v_{2},v\}}\frac{\bar{\text{\textgreek{t}}}_{n-1}(0,\bar{v})}{r_{0}+\bar{v}-v_{1}}\, d\bar{v}+e^{-C_{0}^{\frac{3}{2}}M}M\Bigg).\nonumber 
\end{align}
for some absolute constant $C_{1}>0$.%
\footnote{Note that in passing from the second to the third line of (\ref{eq:BoundDerivativeF_1}),
we have used the fact that $\partial_{u}\bar{\text{\textgreek{t}}}=0$
and $\partial_{v}\text{\textgreek{t}}=0$.%
} Thus, in view of (\ref{eq:SmallnessInL1OfT}) and (\ref{eq:U0UpperBound}),
(\ref{eq:BoundDerivativeF_1}) implies (provided $C_{0}\gg C_{1}$)
that 
\begin{equation}
\Bigg|\int_{v_{1}+u_{*}}^{\min\{v_{2},v\}}\int_{0}^{u_{*}}\partial_{u}\Bigg\{\sqrt{-\frac{\Lambda}{3}}F_{1}\big(\tan\text{\textgreek{r}}_{n-1};\sqrt{-\frac{\Lambda}{3}}\tilde{m}_{n-1}\big)\text{\textgreek{k}}_{n-1}^{-1}\bar{\text{\textgreek{t}}}_{n-1}(\bar{u},\bar{v})\Bigg\}\, d\bar{u}d\bar{v}\Bigg|\le10e^{-\frac{1}{2}C_{0}^{\frac{3}{2}}M}M.\label{eq:BoundDerivativeF_1Integrated}
\end{equation}
In view of (\ref{eq:InitialDataSequenceInduct}), (\ref{eq:UpperBoundInitialData})
and (\ref{eq:DerivativeInVDirectionKappaBarInduction}) for $n-1$
in place of $n$ and $u=0$, we can estimate 
\begin{equation}
\Big|\int_{v_{1}+u_{*}}^{\min\{v_{2},v\}}\sqrt{-\frac{\Lambda}{3}}F_{1}\big(\tan\text{\textgreek{r}}_{n-1};\sqrt{-\frac{\Lambda}{3}}\tilde{m}_{n-1}\big)\text{\textgreek{k}}_{n-1}^{-1}\bar{\text{\textgreek{t}}}_{n-1}(0,\bar{v})\, d\bar{v}\Big|\le2\sup_{v_{1}\le v\le v_{2}}\big|\log\big(\bar{\text{\textgreek{k}}}(0,v)\big)\big|\le2M.\label{eq:BoundForInitialDataForDifference}
\end{equation}
Furthermore, inequality 
\begin{equation}
v-\min\{v_{2},v\}\le u_{*}\le u_{0}
\end{equation}
combined with (\ref{eq:SmallnessPreviousStep}), (\ref{eq:BoundForInductionPreviousStep}),
(\ref{eq:EqualityForEnergyMomentum}) and (\ref{eq:BoundednessF_1,F_2,F_3})
yields 
\begin{equation}
\Bigg|\int_{\min\{v_{2},v\}}^{v}\sqrt{-\frac{\Lambda}{3}}F_{1}\big(\tan\text{\textgreek{r}}_{n-1};\sqrt{-\frac{\Lambda}{3}}\tilde{m}_{n-1}\big)\text{\textgreek{k}}_{n-1}^{-1}\bar{\text{\textgreek{t}}}_{n-1}(u_{*},\bar{v})\, d\bar{v}\Bigg|\le10e^{-\frac{1}{2}C_{0}^{\frac{3}{2}}M}M.\label{eq:BoundLastTermOnIntegrandForKappaBar}
\end{equation}
Combining (\ref{eq:BoundKbarAxis}), (\ref{eq:BoundDerivativeF_1Integrated}),
(\ref{eq:BoundForInitialDataForDifference}) and (\ref{eq:BoundLastTermOnIntegrandForKappaBar}),
the relation (\ref{eq:FirstRelationKappaIntermediateRegion-1}) readily
yields for any $(u,v)\in\mathcal{W}$: 
\begin{equation}
\Big|\log\big(\bar{\text{\textgreek{k}}}_{n}(u,v)\big)\Big|\le3M(1+10e^{-\frac{1}{2}C_{0}^{\frac{3}{2}}M}).\label{eq:BoundKappaBarEveywhereInW}
\end{equation}

Subtracting from (\ref{eq:FirstRelationKappaIntermediateRegion-1})
the same relation for $n-1$ in place of $n$ and using (similarly
as before) (\ref{eq:SmallnessInL1OfT}), (\ref{eq:U0UpperBound}),
(\ref{eq:BoundForInductionPreviousStep}), (\ref{eq:SmallnessPreviousStep}),
(\ref{eq:BoundednessF_1,F_2,F_3}) and (\ref{eq:DifferenceKappaIntermediateRegion})
(in some instances for $n-1$ in place of $n$), we obtain for any
$(u,v)\in\mathcal{W}$: 
\begin{equation}
\Big|\log\big(\bar{\text{\textgreek{k}}}_{n}(u,v)\big)-\log\big(\bar{\text{\textgreek{k}}}_{n-1}(u,v)\big)\Big|\le C_{0}Me^{-C_{0}^{\frac{3}{2}}M}\mathfrak{D}_{n-1}.\label{eq:BoundKappaBarEveywhereInWDifference}
\end{equation}

Integrating equation (\ref{eq:DerivativeInUDirectionKappaInduction})
along $\{v_{*}-v_{2}\le u\le u_{0}\}\cap\{v=v_{*}\}$ for any $v_{2}\le v_{*}\le v_{2}+u_{0}$,
we obtain for any point $(u,v)\in\mathcal{W}\cap\{v_{2}\le v\le v_{2}+u\}$:
\begin{equation}
\log\big(\text{\textgreek{k}}_{n}(u,v)\big)=\log\big(\text{\textgreek{k}}_{n}(v-v_{2},v)\big)-\int_{v-v_{2}}^{u}\sqrt{-\frac{\Lambda}{3}}F_{1}\big(\tan\text{\textgreek{r}}_{n-1};\sqrt{-\frac{\Lambda}{3}}\tilde{m}_{n-1}\big)\bar{\text{\textgreek{k}}}_{n-1}^{-1}\text{\textgreek{t}}_{n-1}(\bar{u},v)\, d\bar{u}.\label{eq:FirstRelationKappaAwayRegion}
\end{equation}
Thus, in view of (\ref{eq:BoundKappaBarEveywhereInW}) for $n-1$
in place of $n$, (\ref{eq:U0UpperBound}), (\ref{eq:BoundForInductionPreviousStep}),
(\ref{eq:SmallnessPreviousStep}) and (\ref{eq:BoundednessF_1,F_2,F_3}),
using also (\ref{eq:KappaBoundaryInduct}) and (\ref{eq:BoundKappaBarEveywhereInW})
for the first term of the right hand side of (\ref{eq:FirstRelationKappaAwayRegion}),
the relation (\ref{eq:FirstRelationKappaAwayRegion}) yields for any
point $(u,v)\in\mathcal{W}\cap\{v_{2}\le v\le v_{2}+u\}$: 
\begin{equation}
\big|\log\big(\text{\textgreek{k}}_{n}(u,v)\big)\big|\le3M(1+10e^{-\frac{1}{2}C_{0}^{\frac{3}{2}}M}).\label{eq:UpperBoundK_nAwayRegion}
\end{equation}
 Furthermore, subtracting from equation (\ref{eq:FirstRelationKappaAwayRegion})
the same equation for $n-1$ in place of $n$, and using (\ref{eq:EqualityForEnergyMomentum}),
(\ref{eq:BoundForInductionPreviousStep}), (\ref{eq:SmallnessPreviousStep}),
(\ref{eq:BoundednessF_1,F_2,F_3}), (\ref{eq:LowerBoundRhoPreviousStep}),
(\ref{eq:KappaBoundaryInduct}) and (\ref{eq:BoundKappaBarEveywhereInWDifference})
(in some instances for $n-1$ in place of $n$), we infer for any
point $(u,v)\in\mathcal{W}\cap\{v_{2}\le v\le v_{2}+u\}$: 
\begin{equation}
\big|\log\big(\text{\textgreek{k}}_{n}(u,v)\big)-\log\big(\text{\textgreek{k}}_{n-1}(u,v)\big)\big|\le C_{0}Me^{-C_{0}^{\frac{3}{2}}M}\big(\mathfrak{D}_{n-1}+\mathfrak{D}_{n-2}\big).\label{eq:DifferenceKappaAwayRegion}
\end{equation}
Combining (\ref{eq:UpperBoundK_nSmallerRegion}), (\ref{eq:DifferenceKappaIntermediateRegion}),
(\ref{eq:UpperBoundK_nAwayRegion}) and (\ref{eq:DifferenceKappaAwayRegion}),
we thus deduce that, for any $(u,v)\in\mathcal{W}$:
\begin{equation}
\big|\log\big(\text{\textgreek{k}}_{n}(u,v)\big)\big|\le3M(1+10e^{-\frac{1}{2}C_{0}^{\frac{3}{2}}M})\label{eq:UpperBoundK_nEverywhere}
\end{equation}
 and 
\begin{equation}
\big|\log\big(\text{\textgreek{k}}_{n}(u,v)\big)-\log\big(\text{\textgreek{k}}_{n-1}(u,v)\big)\big|\le C_{0}Me^{-C_{0}^{\frac{3}{2}}M}\big(\mathfrak{D}_{n-1}+\mathfrak{D}_{n-2}\big).\label{eq:DifferenceKappaEverywhere}
\end{equation}

Setting 
\begin{equation}
\varrho_{n}=\partial_{v}\text{\textgreek{r}}_{n}\label{eq:Varrho}
\end{equation}
and 
\begin{equation}
\bar{\varrho}_{n}=-\partial_{u}\text{\textgreek{r}}_{n},\label{eq:VarRhoBar}
\end{equation}
equation (\ref{eq:EquationRForRhoInduction}) is equivalent to the
system 
\begin{align}
\partial_{u}\varrho_{n}= & (-\frac{\Lambda}{3})F_{2}\big(\tan\text{\textgreek{r}}_{n-1};\sqrt{-\frac{\Lambda}{3}}\tilde{m}_{n-1}\big)\text{\textgreek{k}}_{n-1}\bar{\text{\textgreek{k}}}_{n-1},\label{eq:EquationVarRhoInduction}\\
\partial_{v}\bar{\varrho}_{n}= & -(-\frac{\Lambda}{3})F_{2}\big(\tan\text{\textgreek{r}}_{n-1};\sqrt{-\frac{\Lambda}{3}}\tilde{m}_{n-1}\big)\text{\textgreek{k}}_{n-1}\bar{\text{\textgreek{k}}}_{n-1},\label{eq:EquationVarRhoBarInduction}
\end{align}
while the initial data (\ref{eq:InitialDataSequenceInduct}) and the
boundary conditions (\ref{eq:RhoBoundaryInduct}) gives rise to the
conditions 
\begin{equation}
\sqrt{-\frac{12}{\Lambda}}\varrho_{n}(0,v)=\frac{2\partial_{v}r_{/}}{1-\frac{1}{3}\Lambda r_{/}^{2}}(v)\label{eq:InitialConditionVarRho}
\end{equation}
and 
\begin{equation}
\frac{\varrho_{n}}{\bar{\text{\ensuremath{\varrho}}}_{n}}\Big|_{\text{\textgreek{g}}_{0}}=1,\textnormal{}\frac{\varrho_{n}}{\bar{\text{\ensuremath{\varrho}}}_{n}}\Big|_{\mathcal{I}_{u_{0}}}=1.\label{eq:BoundaryVarRho}
\end{equation}
Notice the similarity of (\ref{eq:EquationVarRhoInduction}), (\ref{eq:EquationVarRhoBarInduction})
and (\ref{eq:BoundaryVarRho}) with (\ref{eq:DerivativeInUDirectionKappaInduction}),
(\ref{eq:DerivativeInVDirectionKappaBarInduction}) and (\ref{eq:KappaBoundaryInduct}),
respectively. In particular, by repeating the same arguments that
led to (\ref{eq:BoundKappaBarEveywhereInW}), (\ref{eq:BoundKappaBarEveywhereInWDifference}),
(\ref{eq:UpperBoundK_nEverywhere}) and (\ref{eq:DifferenceKappaEverywhere})
(treating the regions $\mathcal{W}\cap\{v\le v_{2}\}$ and $\mathcal{W}\cap\{v_{2}\le v\le v_{2}+u\}$
seperately, as was done in the case of $\text{\textgreek{k}}_{n}$),
we readily obtain for any $(u,v)\in\mathcal{W}$ (provided $\text{\textgreek{d}}_{M}$
is small enough in terms of $\text{\textgreek{r}}_{0},M$): 
\begin{equation}
\big|\log\big(\sqrt{-\frac{12}{\Lambda}}\varrho_{n}(u,v)\big)\big|\le3M(1+10e^{-\frac{1}{2}C_{0}^{\frac{3}{2}}M}),\label{eq:BoundVarRho}
\end{equation}
\begin{equation}
\big|\log\big(\sqrt{-\frac{12}{\Lambda}}\bar{\varrho}_{n}(u,v)\big)\big|\le3M(1+10e^{-\frac{1}{2}C_{0}^{\frac{3}{2}}M}),\label{eq:BoundVarRhoBar}
\end{equation}
\begin{equation}
\big|\log\big(\sqrt{-\frac{12}{\Lambda}}\varrho_{n}(u,v)\big)-\log\big(\sqrt{-\frac{12}{\Lambda}}\varrho_{n-1}(u,v)\big)\big|\le C_{0}Me^{-C_{0}^{\frac{3}{2}}M}\big(\mathfrak{D}_{n-1}+\mathfrak{D}_{n-2}\big),\label{eq:BoundDifferenceVarRho}
\end{equation}
and 
\begin{equation}
\big|\log\big(\sqrt{-\frac{12}{\Lambda}}\bar{\varrho}_{n}(u,v)\big)-\log\big(\sqrt{-\frac{12}{\Lambda}}\bar{\varrho}_{n-1}(u,v)\big)\big|\le C_{0}Me^{-C_{0}^{\frac{3}{2}}M}\big(\mathfrak{D}_{n-1}+\mathfrak{D}_{n-2}\big).\label{eq:BoundDifferenceVarRhoBar}
\end{equation}

Integrating (\ref{eq:VarRhoBar}) in $u$ and using (\ref{eq:U0UpperBound}),
(\ref{eq:InitialDataSequenceInduct}), (\ref{eq:BoundVarRhoBar})
and (\ref{eq:BoundDifferenceVarRhoBar}), we readily obtain that,
for any $(u,v)\in\mathcal{W}\cap\{v\le v_{2}\}$: 
\begin{equation}
|\text{\textgreek{r}}_{n}(u,v)-\text{\textgreek{r}}_{n}(0,v)|\le4Me^{-C_{0}^{\frac{3}{2}}M}\label{eq:FirstBoundRhoNearRegion}
\end{equation}
and 
\begin{equation}
|\text{\textgreek{r}}_{n}(u,v)-\text{\textgreek{r}}_{n-1}(u,v)|\le C_{0}Me^{-C_{0}^{\frac{3}{2}}M}\big(\mathfrak{D}_{n-1}+\mathfrak{D}_{n-2}\big).\label{eq:DifferenceRhoNearRegion}
\end{equation}
Integrating (\ref{eq:Varrho}) in $v$ in the region $\mathcal{W}\cap\{v_{2}\le v\le v_{2}+u\}$
and using (\ref{eq:U0UpperBound}), (\ref{eq:FirstBoundRhoNearRegion})
and (\ref{eq:DifferenceRhoNearRegion}) (for $v=v_{2}$), we readily
obtain for any $(u,v)\in\mathcal{W}\cap\{v_{2}\le v\le v_{2}+u\}$:
\begin{equation}
|\text{\textgreek{r}}_{n}(u,v)-\text{\textgreek{r}}_{n}(0,v_{2})|\le4Me^{-C_{0}^{\frac{3}{2}}M}\label{eq:FirstBoundRhoAway}
\end{equation}
and 
\begin{equation}
|\text{\textgreek{r}}_{n}(u,v)-\text{\textgreek{r}}_{n-1}(u,v)|\le C_{0}Me^{-C_{0}^{\frac{3}{2}}M}\big(\mathfrak{D}_{n-1}+\mathfrak{D}_{n-2}\big).\label{eq:DifferenceRhoAway}
\end{equation}

By integrating equation (\ref{eq:DerivativeInUtildeMInduction}) in
$u$ and using (\ref{eq:InitialDataSequenceInduct}), (\ref{eq:ConservationTildeMAxisInduct}),
(\ref{eq:EqualityForEnergyMomentum}), (\ref{eq:UpperBoundInitialData}),
(\ref{eq:BoundForInductionPreviousStep}), (\ref{eq:UpperBoundK_nEverywhere})
and (\ref{eq:DifferenceKappaEverywhere}), we readily obtain that
\begin{equation}
\sqrt{-\Lambda}|\tilde{m}_{n}|\le M(1+10e^{-C_{0}^{\frac{3}{2}}M})\label{eq:BoundInMass}
\end{equation}
and 
\begin{equation}
\sqrt{-\Lambda}|\tilde{m}_{n}-\tilde{m}_{n-1}|\le C_{0}Me^{-C_{0}^{\frac{3}{2}}M}\big(\mathfrak{D}_{n-1}+\mathfrak{D}_{n-2}\big).\label{eq:DifferenceInMass}
\end{equation}
From (\ref{eq:InitialDataSequenceInduct}), (\ref{eq:UpperBoundInitialData}),
(\ref{eq:FirstBoundRhoNearRegion}), (\ref{eq:FirstBoundRhoAway})
and (\ref{eq:BoundInMass}), we also obtain for all $(u,v)\in\mathcal{W}$:
\begin{equation}
\Big|\log\Big(\frac{1-\frac{2\sqrt{-\Lambda}\tilde{m}_{n-1}}{\tan\text{\textgreek{r}}_{n}}+\frac{1}{3}\tan^{2}\text{\textgreek{r}}_{n}}{1+\frac{1}{3}\tan^{2}\text{\textgreek{r}}_{n}}\Big)\Big|\le M(1+10e^{-C_{0}^{\frac{3}{2}}M})\label{eq:BoundNonTrappedInductionStep}
\end{equation}
Finally, from (\ref{eq:EqualityEnergyMomentum}), (\ref{eq:UpperBoundInitialData}),
(\ref{eq:SmallnessInL1OfT}), (\ref{eq:RhoBoundaryInduct}), (\ref{eq:BoundVarRho})
and (\ref{eq:BoundVarRhoBar}) we can readily estimate 
\begin{equation}
\sup_{0\le\bar{u}\le u_{0}}\int_{\mathcal{W}\cap\{u=\bar{u}\}}\sqrt{-\Lambda}\frac{\bar{\text{\textgreek{t}}}_{n}}{\text{\textgreek{r}}_{n}}\, dv+\sup_{0\le\bar{u}\le u_{0}}\sup_{v\in[v_{1}+\bar{u},v_{2}+\bar{u}]}\int_{\max\{v-u_{0},v_{1}+\bar{u}\}}^{\min\{v+u_{0},v_{2}+\bar{u}\}}\frac{\bar{\text{\textgreek{t}}}_{n}}{|\bar{v}-v|+r_{0}}\, d\bar{v}\le10M(1+10e^{-C_{0}^{\frac{3}{2}}M}).\label{eq:UpperBoundTauBarIntegral}
\end{equation}

The bound (\ref{eq:BoundForInductionNextStep}) readily follows from
(\ref{eq:BoundKappaBarEveywhereInW}), (\ref{eq:UpperBoundK_nEverywhere}),
(\ref{eq:BoundVarRho}), (\ref{eq:BoundVarRhoBar}), (\ref{eq:BoundInMass}),
(\ref{eq:BoundNonTrappedInductionStep}) and (\ref{eq:UpperBoundTauBarIntegral}),
while the bound (\ref{eq:DifferenceRate}) follows from (\ref{eq:EqualityEnergyMomentum}),
(\ref{eq:BoundKappaBarEveywhereInWDifference}), (\ref{eq:DifferenceKappaEverywhere}),
(\ref{eq:BoundDifferenceVarRho}), (\ref{eq:BoundDifferenceVarRhoBar}),
(\ref{eq:DifferenceRhoNearRegion}), (\ref{eq:DifferenceRhoAway})
and (\ref{eq:DifferenceInMass}). 

The estimate (\ref{eq:SmallnessNextStep}) follows from (\ref{eq:SmallnessInL1OfT}),
(\ref{eq:U0UpperBound}), (\ref{eq:EqualityEnergyMomentum}), (\ref{eq:RhoBoundaryInduct}),
(\ref{eq:BoundVarRho}) and (\ref{eq:BoundVarRhoBar}). Finally, the
bound (\ref{eq:UpperBoundForTauTauBarNextStep}) follows from (\ref{eq:BoundForEnergyMomentum}),
while the lower bound (\ref{eq:LowerBoundRhoNextStep}) follows immediately
from the fact that $\varrho_{n}>0$ (in view of (\ref{eq:BoundVarRhoBar}))
and the boundary condition (\ref{eq:RhoBoundaryInduct}). 

\medskip{}

\noindent \emph{Step 2. }The second step of the proof will consist
of showing that the smooth solution constructed in the previous step
is actually unique. Let $(\text{\textgreek{r}}_{1},\text{\textgreek{k}}_{1},\bar{\text{\textgreek{k}}}_{1},\text{\textgreek{t}}_{1},\bar{\text{\textgreek{t}}}_{1},\tilde{m}_{1})$
and $(\text{\textgreek{r}}_{2},\text{\textgreek{k}}_{2},\bar{\text{\textgreek{k}}}_{2},\text{\textgreek{t}}_{2},\bar{\text{\textgreek{t}}}_{2},\tilde{m}_{2})$
be two $C^{0}$ solutions of the system (\ref{eq:DerivativeInUDirectionKappaRenormalised})--(\ref{eq:ConservationTau})
on $\mathcal{W}$ satisfying (\ref{eq:InitialDataRecastRight})--(\ref{eq:ConservationTildeMAxis}),
such that $(\text{\textgreek{r}}_{1},\text{\textgreek{k}}_{1},\bar{\text{\textgreek{k}}}_{1},\text{\textgreek{t}}_{1},\bar{\text{\textgreek{t}}}_{1},\tilde{m}_{1})$
is the solution constructed in the previous step. In particular, $(\text{\textgreek{r}}_{1},\text{\textgreek{k}}_{1},\bar{\text{\textgreek{k}}}_{1},\text{\textgreek{t}}_{1},\bar{\text{\textgreek{t}}}_{1},\tilde{m}_{1})$
satisfies the bounds (\ref{eq:BoundToShowInitialValueFirstStep})--(\ref{eq:UpperBoundForTauTauBar}),
and the two solutions satisfy the same initial data, i.\,e.
\begin{equation}
(\text{\textgreek{r}}_{1},\text{\textgreek{k}}_{1},\tilde{m}_{1},\bar{\text{\textgreek{t}}}_{1})|_{\{0\}\times[v_{1},v_{2}]}=(\text{\textgreek{r}}_{2},\text{\textgreek{k}}_{2},\tilde{m}_{2},\bar{\text{\textgreek{t}}}_{2})|_{\{0\}\times[v_{1},v_{2}]}.\label{eq:SameInitialData}
\end{equation}

Let $\mathscr{B}$ be the set of all subsets $\mathcal{B}$ of $\mathcal{W}$
satisfying the following properties: 

\begin{enumerate}

\item The closure $clos(\mathcal{V})$ of $\mathcal{V}$ (in the
ambient topology of $\mathbb{R}^{2}$) contains $\{0\}\times[v_{1},v_{2}]$,

\item For any $(u_{*},v_{*})\in\mathcal{V}$, the line segments $\{u=u_{*}\}\cap\{v_{1}+u_{*}\le v\le v_{2}+u_{*}\}$
and $\{0\le u\le u_{*}\}\cap\{v=v_{*}\}$ are contained in $\mathcal{V}$,

\end{enumerate}

Let $\mathcal{W}_{0}\in\mathscr{B}$ be a subset of $\mathcal{W}$,
which is closed in the induced topology of $\mathcal{W}\subset\mathbb{R}^{2}$,
such that 
\begin{equation}
(\text{\textgreek{r}}_{1},\text{\textgreek{k}}_{1},\bar{\text{\textgreek{k}}}_{1},\text{\textgreek{t}}_{1},\bar{\text{\textgreek{t}}}_{1},\tilde{m}_{1})|_{\mathcal{W}_{0}}=(\text{\textgreek{r}}_{2},\text{\textgreek{k}}_{2},\bar{\text{\textgreek{k}}}_{2},\text{\textgreek{t}}_{2},\bar{\text{\textgreek{t}}}_{2},\tilde{m}_{2})|_{\mathcal{W}_{0}}.\label{eq:EqualitySolutions}
\end{equation}
Since (\ref{eq:EqualitySolutions}) implies automatically that $(\text{\textgreek{r}}_{2},\text{\textgreek{k}}_{2},\bar{\text{\textgreek{k}}}_{2},\text{\textgreek{t}}_{2},\bar{\text{\textgreek{t}}}_{2},\tilde{m}_{2})$
satisfies the bounds (\ref{eq:BoundToShowInitialValueFirstStep})--(\ref{eq:UpperBoundForTauTauBar})
on $\mathcal{W}_{0}$, we can always find an set $\mathcal{V}\supseteq\mathcal{W}_{0}$
with $\mathcal{V}\in\mathscr{B}$, such that $\mathcal{V}$ is open
in the induced topology of $\mathcal{W}\subset\mathbb{R}^{2}$ and
$(\text{\textgreek{r}}_{2},\text{\textgreek{k}}_{2},\bar{\text{\textgreek{k}}}_{2},\text{\textgreek{t}}_{2},\bar{\text{\textgreek{t}}}_{2},\tilde{m}_{2})$
satisfies (\ref{eq:BoundToShowInitialValueFirstStep})--(\ref{eq:UpperBoundForTauTauBar})
on $\mathcal{V}$. Therefore, it suffices to show that (\ref{eq:EqualitySolutions})
holds on any such set $\mathcal{V}$, since this will imply that $\mathcal{W}_{0}\subseteq\mathcal{V}$
and, therefore, $\mathcal{W}_{0}=\mathcal{W}$.

Let $\mathcal{V}\in\mathscr{B}$ such that $(\text{\textgreek{r}}_{2},\text{\textgreek{k}}_{2},\bar{\text{\textgreek{k}}}_{2},\text{\textgreek{t}}_{2},\bar{\text{\textgreek{t}}}_{2},\tilde{m}_{2})$
satisfies the bounds (\ref{eq:BoundToShowInitialValueFirstStep})--(\ref{eq:UpperBoundForTauTauBar}).
Subtracting equations (\ref{eq:DerivativeInUDirectionKappaRenormalised})--(\ref{eq:ConservationTau})
for $(\text{\textgreek{r}}_{2},\text{\textgreek{k}}_{2},\bar{\text{\textgreek{k}}}_{2},\text{\textgreek{t}}_{2},\bar{\text{\textgreek{t}}}_{2},\tilde{m}_{2})$
from the same equations for $(\text{\textgreek{r}}_{1},\text{\textgreek{k}}_{1},\bar{\text{\textgreek{k}}}_{1},\text{\textgreek{t}}_{1},\bar{\text{\textgreek{t}}}_{1},\tilde{m}_{1})$,
we obtain: 
\begin{align}
\partial_{u}\big(\log(\text{\textgreek{k}}_{2})-\log(\text{\textgreek{k}}_{1})\big)= & -\sqrt{-\frac{\Lambda}{3}}\Big(F_{1}\big(\tan\text{\textgreek{r}}_{2};\sqrt{-\frac{\Lambda}{3}}\tilde{m}_{2}\big)\bar{\text{\textgreek{k}}}_{2}^{-1}\text{\textgreek{t}}_{2}-F_{1}\big(\tan\text{\textgreek{r}}_{1};\sqrt{-\frac{\Lambda}{3}}\tilde{m}_{1}\big)\bar{\text{\textgreek{k}}}_{1}^{-1}\text{\textgreek{t}}_{1}\Big),\label{eq:DerivativeInUDirectionKappaRenormalised-1}\\
\partial_{v}\big(\log(\bar{\text{\textgreek{k}}}_{2})-\log(\bar{\text{\textgreek{k}}}_{1})\big)= & \sqrt{-\frac{\Lambda}{3}}\Big(F_{1}\big(\tan\text{\textgreek{r}}_{2};\sqrt{-\frac{\Lambda}{3}}\tilde{m}_{2}\big)\bar{\text{\textgreek{k}}}_{2}^{-1}\text{\textgreek{t}}_{2}-F_{1}\big(\tan\text{\textgreek{r}}_{1};\sqrt{-\frac{\Lambda}{3}}\tilde{m}_{1}\big)\bar{\text{\textgreek{k}}}_{1}^{-1}\text{\textgreek{t}}_{1}\Big),\label{eq:DerivativeInVDirectionKappaBarRenormalised-1}\\
\partial_{u}\partial_{v}\big(\text{\textgreek{r}}_{2}-\text{\textgreek{r}}_{1}\big)= & (-\frac{\Lambda}{3})\Big(F_{2}\big(\tan\text{\textgreek{r}}_{2};\sqrt{-\frac{\Lambda}{3}}\tilde{m}_{2}\big)\text{\textgreek{k}}_{2}\bar{\text{\textgreek{k}}}_{2}-F_{2}\big(\tan\text{\textgreek{r}}_{1};\sqrt{-\frac{\Lambda}{3}}\tilde{m}_{1}\big)\text{\textgreek{k}}_{1}\bar{\text{\textgreek{k}}}_{1}\Big),\label{eq:EquationRForRhoRenormalised-1}\\
\partial_{u}\big(\tilde{m}_{2}-\tilde{m}_{1}\big)= & -4\pi\big(\bar{\text{\textgreek{k}}}_{2}^{-1}\text{\textgreek{t}}_{2}-\bar{\text{\textgreek{k}}}_{1}^{-1}\text{\textgreek{t}}_{1}\big),\label{eq:DerivativeInUtildeMRenormalised-1}\\
\partial_{u}\big(\bar{\text{\textgreek{t}}}_{2}-\bar{\text{\textgreek{t}}}_{1}\big)= & 0,\label{eq:ConservationTauBar-1}\\
\partial_{v}\big(\text{\textgreek{t}}_{2}-\text{\textgreek{t}}_{1}\big)= & 0.\label{eq:ConservationTau-1}
\end{align}
 Integrating equations (\ref{eq:DerivativeInUDirectionKappaRenormalised-1}),
(\ref{eq:EquationRForRhoRenormalised-1}), (\ref{eq:DerivativeInUtildeMRenormalised-1})
and (\ref{eq:ConservationTauBar-1}) in $u$ and equations (\ref{eq:DerivativeInVDirectionKappaBarRenormalised-1}),
(\ref{eq:EquationRForRhoRenormalised-1}) and (\ref{eq:ConservationTau-1})
in $v$, using also the initial condition (\ref{eq:SameInitialData}),
the boundary conditions (\ref{eq:RhoBoundary})--(\ref{eq:ConservationTildeMAxis})
(noticing that (\ref{eq:RhoBoundary}) implies that $\partial_{v}\text{\textgreek{r}}_{i}=-\partial_{u}\text{\textgreek{r}}_{i}$
on $\text{\textgreek{g}}_{0;u_{0}}\cap clos(\mathcal{V})$ and $\mathcal{I}_{u_{0}}\cap clos(\mathcal{V})$,
$i=1,2$) and the bounds (\ref{eq:BoundToShowInitialValueFirstStep})--(\ref{eq:UpperBoundForTauTauBar})
and (\ref{eq:BoundednessF_1,F_2,F_3}), we obtain for any $(u,v)\in\mathcal{V}$
\begin{align}
\big|\log\big(\text{\textgreek{k}}_{2}(u,v)\big)-\log\big(\text{\textgreek{k}}_{1}(u,v)\big)\big| & \le C_{M,\text{\textgreek{r}}_{0},T_{*}}\sqrt{-\Lambda}\Big\{\int_{\max\{0,v-v_{2}\}}^{u}\mathfrak{D}(\bar{u},v)\, d\bar{u}+\text{\textgreek{q}}_{v>v_{2}}(v)\int_{v-v_{2}+v_{1}}^{v}\mathfrak{D}\big(v-v_{2},\bar{v}\big)\, d\bar{v}+\label{eq:FirstDifferenceForUniqueness}\\
 & \hphantom{\le C_{M,\text{\textgreek{r}}_{0}}\sqrt{-\Lambda}\Bigg\{++}++\text{\textgreek{q}}_{v>v_{2}}(v)\int_{0}^{v-v_{2}}\mathfrak{D}(\bar{u},v_{1}+v-v_{2})\, d\bar{u}\Big\},\nonumber \\
\big|\log\big(\bar{\text{\textgreek{k}}}_{2}(u,v)\big)-\log\big(\bar{\text{\textgreek{k}}}_{1}(u,v)\big)\big| & \le C_{M,\text{\textgreek{r}}_{0},T_{*}}\sqrt{-\Lambda}\Big\{\int_{v_{1}+u}^{v}\mathfrak{D}(u,\bar{v})\, d\bar{v}+\int_{0}^{u}\mathfrak{D}(\bar{u},v_{1}+u)\, d\bar{u}\Big\},\\
\big|\partial_{v}\text{\textgreek{r}}_{2}(u,v)-\partial_{v}\text{\textgreek{r}}_{1}(u,v)\big| & \le C_{M,\text{\textgreek{r}}_{0},T_{*}}(-\Lambda)\Big\{\int_{\max\{0,v-v_{2}\}}^{u}\mathfrak{D}(\bar{u},v)\, d\bar{u}+\text{\textgreek{q}}_{v>v_{2}}(v)\int_{v-v_{2}+v_{1}}^{v}\mathfrak{D}\big(v-v_{2},\bar{v}\big)\, d\bar{v}+\le\\
 & \hphantom{\le C_{M,\text{\textgreek{r}}_{0},T_{*}}\sqrt{-\Lambda}\Bigg\{++}++\text{\textgreek{q}}_{v>v_{2}}(v)\int_{0}^{v-v_{2}}\mathfrak{D}(\bar{u},v_{1}+v-v_{2})\, d\bar{u}\Big\},\nonumber \\
\big|\partial_{u}\text{\textgreek{r}}_{2}(u,v)-\partial_{u}\text{\textgreek{r}}_{1}(u,v)\big| & \le C_{M,\text{\textgreek{r}}_{0},T_{*}}(-\Lambda)\Big\{\int_{v_{1}+u}^{v}\mathfrak{D}(u,\bar{v})\, d\bar{v}+\int_{0}^{u}\mathfrak{D}(\bar{u},v_{1}+u)\, d\bar{u}\Big\},\\
\big|\tilde{m}_{2}(u,v)-\tilde{m}_{1}(u,v)\big| & \le C_{M,\text{\textgreek{r}}_{0},T_{*}}\int_{\max\{0,v-v_{2}\}}^{u}\mathfrak{D}(\bar{u},v)\, d\bar{u},\\
\big|\bar{\text{\textgreek{t}}}_{2}(u,v)-\bar{\text{\textgreek{t}}}_{1}(u,v)\big| & =0,\\
\big|\text{\textgreek{t}}_{2}(u,v)-\text{\textgreek{t}}_{1}(u,v)\big| & =0,\label{eq:LastDifferenceForUniqueness}
\end{align}
where 
\begin{align}
\mathfrak{D}(u,v)\doteq\big|\log & \big(\text{\textgreek{k}}_{2}(u,v)\big)-\log\big(\text{\textgreek{k}}_{1}(u,v)\big)\big|+\big|\log\big(\bar{\text{\textgreek{k}}}_{2}(u,v)\big)-\log\big(\bar{\text{\textgreek{k}}}_{1}(u,v)\big)\big|+\\
 & +\big|\text{\textgreek{r}}_{2}(u,v)-\text{\textgreek{r}}_{1}(u,v)\big|+\sqrt{-\Lambda}\big|\tilde{m}_{2}(u,v)-\tilde{m}_{1}(u,v)\big|+\nonumber \\
 & +\big|\bar{\text{\textgreek{t}}}_{2}(u,v)-\bar{\text{\textgreek{t}}}_{1}(u,v)\big|+\big|\text{\textgreek{t}}_{2}(u,v)-\text{\textgreek{t}}_{1}(u,v)\big|,\nonumber 
\end{align}
\begin{equation}
\text{\textgreek{q}}_{v>v_{2}}(v)=\begin{cases}
1, & \mbox{ if }v>v_{2},\\
0, & \mbox{ if }v\le v_{2}
\end{cases}
\end{equation}
and $C_{M,\text{\textgreek{r}}_{0},T_{*}}$ depends on $M,\text{\textgreek{r}}_{0}$
and 
\[
T_{*}\doteq\sup_{v_{1}<\bar{v}<v_{2}}r_{/}^{2}(T_{vv})_{/}(\bar{v}).
\]
 Inequalities (\ref{eq:FirstDifferenceForUniqueness})--(\ref{eq:LastDifferenceForUniqueness})
now readily yield (\ref{eq:EqualitySolutions}). 
\end{proof}
Our next result is a well-posedness result for the initial data introduced
by Definition \ref{def:TypeIII}.
\begin{prop}
\label{Prop:LocalExistenceTypeIII}Let $C_{0}\gg1$ be a (large) absolute
constant. For any $u_{1}<u_{2}$, $v_{1}<v_{2}$ and $r_{0}>0$, let
$r_{\backslash}:[u_{1},u_{2}]\rightarrow[r_{0},+\infty)$, $\text{\textgreek{W}}_{\backslash}:[u_{1},u_{2}]\rightarrow(0,+\infty)$,
$\bar{f}_{in\backslash},\bar{f}_{out\backslash}:[u_{1},u_{2}]\times(0,+\infty)\rightarrow[0,+\infty)$,
$r_{/}:[v_{1},v_{2}]\rightarrow(r_{0},+\infty)$, $\text{\textgreek{W}}_{/}:[v_{1},v_{2}]\rightarrow(0,+\infty)$
and $\bar{f}_{in/},\bar{f}_{out/}:[v_{1},v_{2}]\times(0,+\infty)\rightarrow[0,+\infty)$
be smooth functions, so that $(r_{\backslash},\text{\textgreek{W}}_{\backslash}^{2},\bar{f}_{in\backslash},\bar{f}_{out\backslash})$
and $(r_{/},\text{\textgreek{W}}_{/}^{2},\bar{f}_{in/},\bar{f}_{out/})$
consist a boundary-double characteristic initial data set for the
system (\ref{eq:RequationFinal})--(\ref{eq:OutgoingVlasovFinal})
satisfying the reflecting boundary condition at $r=r_{0}$, according
to Definition \ref{def:TypeIII}. We will assume, in addition, that
$r_{\backslash}$ satisfies for all $u\in[u_{1},u_{2}]$ 
\begin{equation}
\partial_{u}r_{\backslash}(u)<0\label{eq:NegativeDerivativeInitially-1}
\end{equation}
and $r_{/}$ satisfies for all $v\in[v_{1},v_{2}]$
\begin{equation}
\partial_{v}r_{/}(v)>0.\label{eq:PositiveD_vDerivativeRInitially}
\end{equation}
Note that (\ref{eq:RelationHawkingMass}), (\ref{eq:PositiveDerivativeD_vonIngoing})
and (\ref{eq:NegativeDerivativeInitially-1}) imply that 
\begin{equation}
1-\frac{2m_{\backslash}}{r_{\backslash}}>0,
\end{equation}
while (\ref{eq:RelationHawkingMass}), (\ref{eq:ConstraintVDef}),
(\ref{eq:PositiveD_vDerivativeRInitially}) and (\ref{eq:NegativeDerivativeInitially-1})
imply that 
\begin{equation}
1-\frac{2m_{/}}{r_{/}}>0,
\end{equation}
where 
\begin{gather}
m_{\backslash}(v)=\frac{r_{\backslash}}{2}\big(1+4\text{\textgreek{W}}_{\backslash}^{-2}\partial_{u}r_{\backslash}(\partial_{v}r)_{\backslash}\big)(v),\\
m_{/}(v)=\frac{r_{/}}{2}\big(1+4\text{\textgreek{W}}_{/}^{-2}(\partial_{u}r)_{/}\partial_{v}r_{/}\big)(v)
\end{gather}
and $(\partial_{v}r)_{\backslash}(\partial_{u}r)_{/},$ are defined
according to the formulas (\ref{eq:TransversalDerivativeV}) and (\ref{eq:TransversalDerivativeU}),
respectively. Let us also set
\begin{align}
M\doteq\max_{u\in[u_{1},u_{2}]} & \Bigg\{\Big|\log\big(\frac{\text{\textgreek{W}}_{\backslash}^{2}}{1-\frac{1}{3}\Lambda r_{\backslash}^{2}}\big)\Big|+\Big|\log\Big(\frac{-2\partial_{u}r_{\backslash}}{1-\frac{2m_{\backslash}}{r_{\backslash}}}\Big)\Big|+\Big|\log\Big(\frac{1-\frac{2m_{\backslash}}{r_{\backslash}}}{1-\frac{1}{3}\Lambda r_{\backslash}^{2}}\Big)\Big|\Bigg\}(u)+\int_{u_{1}}^{u_{2}}r_{\backslash}(T_{uu})_{\backslash}\, d\bar{u}+\label{eq:UpperBoundInitialData-1}\\
 & +\max_{v\in[v_{1},v_{2}]}\Bigg\{\Big|\log\big(\frac{\text{\textgreek{W}}_{/}^{2}}{1-\frac{1}{3}\Lambda r_{/}^{2}}\big)\Big|+\Big|\log\Big(\frac{2\partial_{v}r_{/}}{1-\frac{2m_{/}}{r_{/}}}\Big)\Big|+\Big|\log\Big(\frac{1-\frac{2m_{/}}{r_{/}}}{1-\frac{1}{3}\Lambda r_{/}^{2}}\Big)\Big|+\sqrt{-\Lambda}|\tilde{m}_{/}|\Bigg\}(v)\nonumber 
\end{align}
and, for any $0<\text{\textgreek{d}}<1$: 
\begin{equation}
u_{in}(\text{\textgreek{d}})\doteq\sup\Bigg\{0\le u_{*}\le u_{2}-u_{1}:\mbox{ }\sup_{u\in[u_{1},u_{2}]}\int_{\max\{u-u_{*},u_{1}\}}^{\min\{u+u_{*},u_{2}\}}\frac{r_{\backslash}^{2}(T_{uu})_{\backslash}(\bar{u})}{|\bar{u}-u|+r_{0}}\, d\bar{u}<\text{\textgreek{d}}\Bigg\},\label{eq:SmallnessInL1OfT-1}
\end{equation}
where 
\begin{equation}
m_{/}(v)=\frac{r_{/}}{2}\big(1+4\text{\textgreek{W}}_{/}^{-2}(\partial_{u}r)_{/}\partial_{v}r_{/}\big)(v),
\end{equation}
 
\begin{equation}
\tilde{m}_{/}(v)=m(v)-\frac{1}{6}\Lambda r_{/}^{3}(v)
\end{equation}
and $(\partial_{u}r)_{/}$ is defined according to (\ref{eq:TransversalDerivativeU-1}).
Then, provided $v_{2}-v_{1}$ is sufficiently small so that 
\begin{equation}
v_{2}-v_{1}<\frac{u_{in}(2e^{-C_{0}^{2}((-\Lambda)(u_{2}-u_{1})^{2}+1)M}M)}{e^{C_{0}^{2}((-\Lambda)(u_{2}-u_{1})^{2}+1)}}\label{eq:U0UpperBound-1}
\end{equation}
and 
\begin{equation}
\int_{v_{1}}^{v_{2}}r_{/}(T_{vv})_{/}\, d\bar{v}<2e^{-C_{0}^{2}((-\Lambda)(u_{2}-u_{1})^{2}+1)M}M,
\end{equation}
the following holds: Setting 
\begin{equation}
\mathcal{V}=\{u_{1}<u<v-v_{1}+u_{2}\}\cap\{v_{1}<v<v_{2}\}
\end{equation}
and 
\begin{equation}
\text{\textgreek{g}}_{0;\mathcal{V}}=\{u-u_{2}=v-v_{1}\}\cap\{v_{1}<v<v_{2}\},
\end{equation}
there exist smooth functions $r:\mathcal{V}\cup\text{\textgreek{g}}_{0;\mathcal{V}}\rightarrow(r_{0},+\infty)$,
$\text{\textgreek{W}}:\mathcal{V}\cup\text{\textgreek{g}}_{0;\mathcal{V}}\rightarrow(0,+\infty)$
and $\bar{f}_{in},\bar{f}_{out}:\mathcal{V}\cup\text{\textgreek{g}}_{0;\mathcal{V}}\times(0,+\infty)\rightarrow[0,+\infty)$
solving equations (\ref{eq:RequationFinal})--(\ref{eq:OutgoingVlasovFinal})
on $\mathcal{V}$ (with $T_{uu},T_{vv}$ given by the formulas (\ref{eq:T_uuComponent}),
(\ref{eq:T_vvComponent})), such that:

\begin{enumerate}

\item The functions $r,\text{\textgreek{W}}^{2},\bar{f}_{in},\bar{f}_{out}$
satisfy the given initial conditions on $[u_{1},u_{2}]\times\{v_{1}\}$
and $\{u_{1}\}\times[v_{1},v_{2}]$ , i.\,e.:
\begin{equation}
(r,\text{\textgreek{W}}^{2},\bar{f}_{in},\bar{f}_{out})|_{[u_{1},u_{2}]\times\{v_{1}\}}=(r_{\backslash},\text{\textgreek{W}}_{\backslash}^{2},\bar{f}_{in\backslash},\bar{f}_{out\backslash})\label{eq:InitialDataLeft-1-1}
\end{equation}
and 
\begin{equation}
(r,\text{\textgreek{W}}^{2},\bar{f}_{in},\bar{f}_{out})|_{\{u_{1}\}\times[v_{1},v_{2}]}=(r_{/},\text{\textgreek{W}}_{/}^{2},\bar{f}_{in/},\bar{f}_{out/}).\label{eq:InitialDataRight-1-1}
\end{equation}

\item The functions $(r,\bar{f}_{in},\bar{f}_{out})$ satisfy on
$\text{\textgreek{g}}_{0;\mathcal{V}}$ the boundary conditions 
\begin{equation}
r|_{\text{\textgreek{g}}_{0;\mathcal{V}}}=r_{0}\label{eq:MirrorLocalExistence-1}
\end{equation}
 and 
\begin{equation}
\bar{f}_{out}\big(u_{*},v_{*};\, p\big)=\bar{f}_{in}\big(u_{*},v_{*};\, p\big)\label{eq:ReflectionMirrorLocalExistence-1}
\end{equation}
for all $(u_{*},v_{*})\in\text{\textgreek{g}}_{0;\mathcal{V}}$ and
$p>0$, as well as the reflecting gauge condition 
\begin{equation}
\partial_{u}r|_{\text{\textgreek{g}}_{0;\mathcal{V}}}=-\partial_{v}r|_{\text{\textgreek{g}}_{0;\mathcal{V}}}.\label{eq:GaugeMirrorLocalExistence-1}
\end{equation}

\item The function $r$ satisfies 
\begin{equation}
\sup_{\mathcal{V}}\partial_{u}r<0.\label{eq:SignConditionDuR-1-1-2-1}
\end{equation}

\item The following estimates hold on $\mathcal{V}$: 
\begin{align}
\sup_{\mathcal{V}}\Bigg\{\Big|\log\big(\frac{\text{\textgreek{W}}^{2}}{1-\frac{1}{3}\Lambda r^{2}}\big)\Big|+\Big|\log\Big(\frac{2\partial_{v}r}{1-\frac{2m}{r}}\Big)\Big|+\Big|\log\Big(\frac{1-\frac{2m}{r}}{1-\frac{1}{3}\Lambda r^{2}}\Big)\Big|+\sqrt{-\Lambda}|\tilde{m}|\Bigg\}+\label{eq:BoundToShowInitialValueForContinuation-1-2}\\
+\sup_{\bar{v}}\int_{\{v=\bar{v}\}\cap\mathcal{V}}rT_{uu}\, du & \le C_{0}M\nonumber 
\end{align}
and 
\begin{equation}
\sup_{\bar{u}}\int_{\{u=\bar{u}\}\cap\mathcal{V}}rT_{vv}\, dv\le e^{-C_{0}M}M.\label{eq:BoundSmallnessForTypeIII}
\end{equation}

\end{enumerate}
\end{prop}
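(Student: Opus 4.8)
The proof runs in close parallel to that of Proposition~\ref{Prop:LocalExistenceTypeII}, the only structural differences being that the development region $\mathcal{V}$ is a characteristic rectangle truncated by a \emph{single} timelike reflecting boundary---the mirror $\gamma_{0;\mathcal{V}}$ emanating from $(u_{2},v_{1})$, with no conformal boundary $\mathcal{I}$ present---and that the initial data is now prescribed on the two transversal null segments $[u_{1},u_{2}]\times\{v_{1}\}$ and $\{u_{1}\}\times[v_{1},v_{2}]$, compatible at the corner $(u_{1},v_{1})$ via \eqref{eq:EqualRJunction}--\eqref{eq:EqualOutgoingJunction}. As in the proof of Proposition~\ref{Prop:LocalExistenceTypeII}, it suffices to construct a unique smooth solution $(\rho,\kappa,\bar{\kappa},\tau,\bar{\tau},\tilde{m})$ of the renormalised system \eqref{eq:DerivativeInUDirectionKappaRenormalised}--\eqref{eq:ConservationTau} on $\mathcal{V}\cup\gamma_{0;\mathcal{V}}$ (in the variables \eqref{eq:RhoVariable}--\eqref{eq:TVariable}), attaining the given data on the two segments and satisfying the reflecting gauge and boundary relations $\rho|_{\gamma_{0;\mathcal{V}}}=\rho_{0}$, $(\kappa/\bar{\kappa})|_{\gamma_{0;\mathcal{V}}}=1$, $(\tau/\bar{\tau})|_{\gamma_{0;\mathcal{V}}}=1$ (with $\rho_{0}$ as in \eqref{eq:Rho_0}), together with the estimates \eqref{eq:BoundToShowInitialValueForContinuation-1-2}--\eqref{eq:BoundSmallnessForTypeIII}, the sign condition \eqref{eq:SignConditionDuR-1-1-2-1} and $\rho\ge\rho_{0}$; the functions $(r,\Omega^{2},\bar{f}_{in},\bar{f}_{out})$ are then recovered from $(\rho,\kappa,\bar{\kappa},\tau,\bar{\tau},\tilde{m})$ exactly as at the start of the proof of Proposition~\ref{Prop:LocalExistenceTypeII}.

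Existence is obtained through the same Picard iteration \eqref{eq:DerivativeInUDirectionKappaInduction}--\eqref{eq:ConservationTauInduction}, adapted to the geometry of $\mathcal{V}$. Since every point of $\mathcal{V}$ lies on a constant-$v$ ray reaching $\{u=u_{1}\}$, the transport equation $\partial_{u}\bar{\tau}_{n}=0$ gives $\bar{\tau}_{n}\equiv r_{/}^{2}(T_{vv})_{/}$ throughout $\mathcal{V}$; and $\partial_{v}\tau_{n}=0$ gives $\tau_{n}=r_{\backslash}^{2}(T_{uu})_{\backslash}$ on $\{u\le u_{2}\}$ (where the constant-$u$ ray reaches $\{v=v_{1}\}$), while on $\{u>u_{2}\}$ the constant-$u$ ray meets $\gamma_{0;\mathcal{V}}$ and the reflection condition $\tau=\bar{\tau}$ there forces $\tau_{n}(u,v)=r_{/}^{2}(T_{vv})_{/}(v_{1}+u-u_{2})$. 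This yields at once the $L^{\infty}$ bound $\max\{\tau_{n},\bar{\tau}_{n}\}\le\max\{\sup r_{\backslash}^{2}(T_{uu})_{\backslash},\sup r_{/}^{2}(T_{vv})_{/}\}$, the stabilisation $\tau_{n}=\tau_{n-1}$, $\bar{\tau}_{n}=\bar{\tau}_{n-1}$ after the first step, and the three facts that govern the whole argument: $\bar{\tau}$ is small in $L^{1}_{v}$ by the smallness of $\int_{v_{1}}^{v_{2}}r_{/}(T_{vv})_{/}$ imposed in the hypotheses, $\tau$ is likewise small on $\{u>u_{2}\}$, and on $\{u\le u_{2}\}$ the field $\tau=r_{\backslash}^{2}(T_{uu})_{\backslash}$---possibly large---is non-concentrated at scale $r_{0}$, as quantified by $u_{in}$ in \eqref{eq:SmallnessInL1OfT-1}. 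One then runs the same chain of characteristic integrations: $\kappa_{n}$ by integrating \eqref{eq:DerivativeInUDirectionKappaInduction} in $u$ from $\{u=u_{1}\}$; $\bar{\kappa}_{n}$ by integrating \eqref{eq:DerivativeInVDirectionKappaBarInduction} in $v$---from $\{v=v_{1}\}$ where $u\le u_{2}$ and from $\gamma_{0;\mathcal{V}}$ (using $\bar{\kappa}_{n}=\kappa_{n}$ there) where $u>u_{2}$; $\varrho_{n}=\partial_{v}\rho_{n}$ and $\bar{\varrho}_{n}=-\partial_{u}\rho_{n}$ in the same manner via \eqref{eq:EquationVarRhoInduction}--\eqref{eq:EquationVarRhoBarInduction}; $\tilde{m}_{n}$ by integrating \eqref{eq:DerivativeInUtildeMInduction} in $u$; and finally $\rho_{n}$ by integrating $\varrho_{n}$ in $v$ (equivalently $-\bar{\varrho}_{n}$ in $u$), the transversal derivative across $\{v=v_{1}\}$ being fixed by \eqref{eq:TransversalDerivativeV}. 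Closing the induction on the bounds \eqref{eq:BoundToShowInitialValueForContinuation-1-2}--\eqref{eq:BoundSmallnessForTypeIII}---with an auxiliary $L^{1}$-at-scale-$r_{0}$ quantity for $\tau_{n}$ playing the role that the $v_{in}$-controlled quantity for $\bar{\tau}_{n}$ plays in the proof of Proposition~\ref{Prop:LocalExistenceTypeII}---and on a contraction estimate of the shape \eqref{eq:DifferenceRate}, the sequence converges in $C^{0}(\mathcal{V})$ to a solution, which is upgraded to $C^{\infty}$ by commuting the equations with $\partial_{u},\partial_{v}$ and reading the commuted equations as linear in the top-order terms. Uniqueness follows by repeating Step~2 of the proof of Proposition~\ref{Prop:LocalExistenceTypeII} verbatim: subtracting \eqref{eq:DerivativeInUDirectionKappaRenormalised}--\eqref{eq:ConservationTau} for two $C^{0}$ solutions, integrating to obtain difference inequalities and a Gr\"onwall-type estimate, and concluding with the open--closed argument on the set where the two solutions agree.

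The one genuinely delicate point, exactly as in Proposition~\ref{Prop:LocalExistenceTypeII}, is securing the \emph{sharp} bounds---that is, $O(M)$ rather than $e^{O(M)}$---for $\kappa_{n}$ and $\varrho_{n}$ across the strip $\{u\le u_{2}\}$ carrying the (possibly large) data $r_{\backslash}^{2}(T_{uu})_{\backslash}$, and, via the reflection $\bar{\kappa}_{n}=\kappa_{n}$ on $\gamma_{0;\mathcal{V}}$, the corresponding bounds for $\bar{\kappa}_{n}$ and $\bar{\varrho}_{n}$ on $\{u>u_{2}\}$. Comparing $\log\kappa_{n}$ on a slice $\{v=\bar{v}\}$ with its value on $\{v=v_{1}\}$, and using that $\tau_{n-1}$ is $v$-independent for $u\le u_{2}$, reduces the difference to a double integral of $\partial_{v}\big\{\sqrt{-\tfrac{\Lambda}{3}}\,F_{1}\big(\tan\rho_{n-1};\sqrt{-\tfrac{\Lambda}{3}}\tilde{m}_{n-1}\big)\bar{\kappa}_{n-1}^{-1}\tau_{n-1}\big\}$ over $\{u_{1}\le\bar{u}\le u\}\times\{v_{1}\le v'\le\bar{v}\}$, in which the only contribution that is not manifestly small comes from $\partial_{\rho}F_{1}\sim F_{1}/\tan\rho_{n-1}$ paired with $\partial_{v}\rho_{n-1}=\varrho_{n-1}\sim\sqrt{-\Lambda}$ (the contributions of $\partial_{\tilde{m}}F_{1}\cdot\partial_{v}\tilde{m}_{n-1}$ and of $\partial_{v}\bar{\kappa}_{n-1}^{-1}$ being controlled by the smallness of $\bar{\tau}_{n-1}$). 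Bounding $\tan\rho_{n-1}\ge\tan\rho_{0}$ (so $\tan\rho_{n-1}\gtrsim\sqrt{-\Lambda}\,r_{0}$) and invoking the derivative estimates \eqref{eq:BoundednessF_1,F_2,F_3} for $F_{1}$, this term is dominated by $\tau_{n-1}=r_{\backslash}^{2}(T_{uu})_{\backslash}$ integrated against a kernel of size $\lesssim(r_{0}+\,\cdot\,)^{-2}$ over a $v'$-interval of length $\le v_{2}-v_{1}$; its smallness is precisely what the hypothesis \eqref{eq:U0UpperBound-1}---bounding $v_{2}-v_{1}$ in terms of $u_{in}(\,\cdot\,)$ and of $(-\Lambda)(u_{2}-u_{1})^{2}$---is designed to guarantee, in complete analogy with estimate \eqref{eq:BoundDerivativeF_1} in the proof of Proposition~\ref{Prop:LocalExistenceTypeII}. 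This is the single place where the non-concentration of $r_{\backslash}^{2}(T_{uu})_{\backslash}$, and not merely the finiteness of $M$, enters; every remaining estimate is one already carried out in the proof of Proposition~\ref{Prop:LocalExistenceTypeII} for Type~II data.
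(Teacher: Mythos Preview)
Your proposal is correct and takes essentially the same approach as the paper: reduce to the renormalised system \eqref{eq:DerivativeInUDirectionKappaRenormalised}--\eqref{eq:ConservationTau} with the appropriate initial conditions on the two null segments and the reflecting boundary conditions on $\gamma_{0;\mathcal{V}}$, then argue exactly as in Proposition~\ref{Prop:LocalExistenceTypeII} with the roles of $u$ and $v$ interchanged. In fact, the paper's own proof simply records the initial and boundary conditions \eqref{eq:InitialDataRecastRight-2}--\eqref{eq:TauBoundary-1} and then states that ``the proof follows in the same way as the proof of Proposition~\ref{Prop:LocalExistenceTypeII}, and hence the details will be omitted''; your write-up supplies precisely those omitted details, including the correct identification of the delicate double-integral estimate (now for $\kappa_{n}$, via $\partial_{v}$, controlled by $u_{in}$ and the smallness of $v_{2}-v_{1}$) that mirrors \eqref{eq:BoundDerivativeF_1}.
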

\begin{figure}[h] 
\centering 
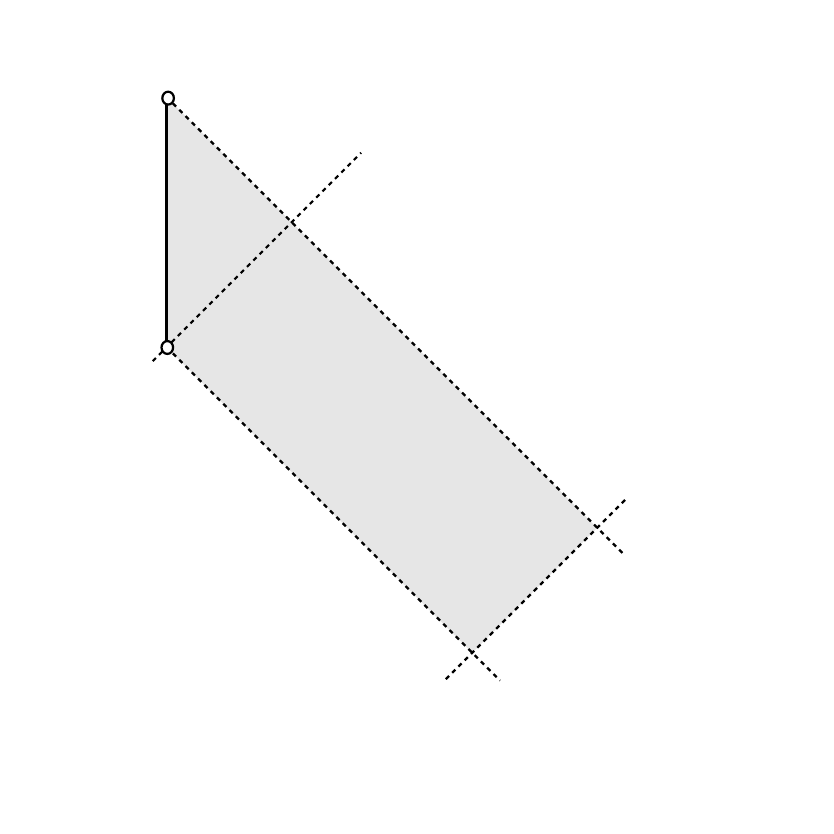 
\caption{Schematic depiction of the domain $\mathcal{V}$ in the statement of Proposition \ref{Prop:LocalExistenceTypeIII}.}
\end{figure}
\begin{proof}
As in the proof of Proposition \ref{Prop:LocalExistenceTypeII}, by
introducing the new variables (\ref{eq:RhoVariable})--(\ref{eq:TVariable}),
the proof of Proposition \ref{Prop:LocalExistenceTypeIII} will follow
by establishing the existence and uniqueness of a smooth solution
to the system (\ref{eq:DerivativeInUDirectionKappaRenormalised})--(\ref{eq:ConservationTau})
on $\mathcal{V}\cup\text{\textgreek{g}}_{0;\mathcal{V}}$ satisfying
the initial conditions 
\begin{equation}
(\text{\textgreek{r}},\text{\textgreek{k}},\tilde{m},\bar{\text{\textgreek{t}}})|_{\{u_{1}\}\times[v_{1},v_{2}]}=\Big(\tan^{-1}\big(\sqrt{-\frac{\Lambda}{3}}r_{/}\big),\frac{2\partial_{v}r_{/}}{1-\frac{2m_{/}}{r_{/}}},\tilde{m}_{/},r_{/}^{2}(T_{vv})_{/}\Big)\label{eq:InitialDataRecastRight-2}
\end{equation}
and 
\begin{equation}
(\text{\textgreek{r}},\bar{\text{\textgreek{k}}},\text{\textgreek{t}})|_{[u_{1},u_{2}]\times\{v_{1}\}}=\Big(\tan^{-1}\big(\sqrt{-\frac{\Lambda}{3}}r_{\backslash}\big),\frac{-2\partial_{u}r_{\backslash}}{1-\frac{2m_{\backslash}}{r_{\backslash}}},r_{\backslash}^{2}(T_{uu})_{\backslash}\Big),
\end{equation}
as well as the boundary conditions
\begin{equation}
\text{\textgreek{r}}|_{\text{\textgreek{g}}_{0;\mathcal{V}}}=\text{\textgreek{r}}_{0},\label{eq:RhoBoundary-1}
\end{equation}
 
\begin{equation}
\frac{\text{\textgreek{k}}}{\bar{\text{\textgreek{k}}}}\Big|_{\text{\textgreek{g}}_{0;\mathcal{V}}}=1\label{eq:KappaBoundary-1}
\end{equation}
 and
\begin{equation}
\frac{\text{\textgreek{t}}}{\bar{\text{\textgreek{t}}}}\Big|_{\text{\textgreek{g}}_{0;\mathcal{V}}}=1,\mbox{ }\label{eq:TauBoundary-1}
\end{equation}
where $\text{\textgreek{r}}_{0}$ is defined by (\ref{eq:Rho_0}),
such that, moreover, the estimates (\ref{eq:BoundToShowInitialValueForContinuation-1-2})--(\ref{eq:BoundSmallnessForTypeIII})
are satisfied. The proof follows in the same way as the proof of Proposition
\ref{Prop:LocalExistenceTypeII}, and hence the details will be omitted.\end{proof}
\begin{prop}
\label{prop:LocalExistenceTypeI}For any $u_{1}<u_{2}$, $v_{1}<v_{2}$
and $0<r_{1}<R_{1}<+\infty$, let $r_{\backslash}:[u_{1},u_{2}]\rightarrow[r_{1},R_{1}]$,
$\text{\textgreek{W}}_{\backslash}:[u_{1},u_{2}]\rightarrow(0,+\infty)$,
$\bar{f}_{in\backslash},\bar{f}_{out\backslash}:[u_{1},u_{2}]\times(0,+\infty)\rightarrow[0,+\infty)$,
$r_{/}:[v_{1},v_{2}]\rightarrow[r_{1},R_{1}]$, $\text{\textgreek{W}}_{/}:[v_{1},v_{2}]\rightarrow(0,+\infty)$
and $\bar{f}_{in/},\bar{f}_{out/}:[v_{1},v_{2}]\times(0,+\infty)\rightarrow[0,+\infty)$
be smooth functions consisting a characteristic initial data set for
the system (\ref{eq:RequationFinal})--(\ref{eq:OutgoingVlasovFinal}),
according to Definition \ref{def:TypeI}, satisfying, in addition,
for all $u\in[u_{1},u_{2}]$, the sign condition 
\begin{equation}
\partial_{u}r_{\backslash}(u)<0.\label{eq:NegativeDerivativeInitially}
\end{equation}
 Let us also set 
\begin{align}
M\doteq\max_{u\in[u_{1},u_{2}]} & \Bigg\{\Big|\log(-\partial_{u}r_{\backslash})\Big|+\Big|\log(\text{\textgreek{W}}_{\backslash}^{2})\Big|+\sqrt{-\Lambda}|\tilde{m}_{\backslash}|\Bigg\}(u)+\int_{u_{1}}^{u_{2}}r_{\backslash}(T_{uu})_{\backslash}\, d\bar{u}+\label{eq:UpperBoundInitialDataCharacteristi}\\
 & +\max_{v\in[v_{1},v_{2}]}\Bigg\{\Big|\log\big(-(\partial_{u}r)_{/}\big)\Big|+\Big|\log(\text{\textgreek{W}}_{/}^{2})\Big|+\sqrt{-\Lambda}|\tilde{m}_{/}|\Bigg\}(v)\nonumber 
\end{align}
 where 
\begin{align}
\tilde{m}_{\backslash}(u) & =\frac{r_{\backslash}}{2}\big(1+4\text{\textgreek{W}}_{\backslash}^{-2}\partial_{u}r_{\backslash}(\partial_{v}r)_{\backslash}\big)(u)-\frac{1}{6}\Lambda r_{\backslash}^{3}(u),\label{eq:DefinitionHawkingMassCharacteristic}\\
\tilde{m}_{/}(v) & =\frac{r_{/}}{2}\big(1+4\text{\textgreek{W}}_{/}^{-2}(\partial_{u}r)_{/}\partial_{v}r_{/}\big)(v)-\frac{1}{6}\Lambda r_{/}^{3}(v),\nonumber 
\end{align}
and the transversal derivatives $(\partial_{v}r)_{\backslash},(\partial_{u}r)_{/}$
are computed according to (\ref{eq:TransversalDerivativeV})--(\ref{eq:TransversalDerivativeU}),
i.\,e.: 
\begin{align}
(\partial_{v}r)_{\backslash}(u) & =\partial_{v}r_{/}(v_{1})-\frac{1}{4r_{\backslash}(u)}\int_{u_{1}}^{u}\big(1-\Lambda r_{\backslash}^{2}(\bar{u})\big)\text{\textgreek{W}}_{\backslash}^{2}(\bar{u})\, d\bar{u},\label{eq:TransversalDerivativesCharacteristic}\\
(\partial_{u}r)_{/}(v) & =\partial_{u}r_{\backslash}(u_{1})-\frac{1}{4r_{/}(v)}\int_{v_{1}}^{v}\big(1-\Lambda r_{/}^{2}(\bar{v})\big)\text{\textgreek{W}}_{/}^{2}(\bar{v})\, d\bar{v}.\nonumber 
\end{align}
We will also define for any $0<\text{\textgreek{d}}<1$: 
\begin{equation}
u_{in}(\text{\textgreek{d}})\doteq\sup\Bigg\{0\le u_{*}\le u_{2}-u_{1}:\mbox{ }\sup_{u\in[u_{1},u_{2}]}\int_{\max\{u-u_{*},u_{1}\}}^{\min\{u+u_{*},u_{2}\}}\frac{r_{\backslash}^{2}(T_{uu})_{\backslash}(\bar{u})}{|\bar{u}-u|+r_{0}}\, d\bar{u}<\text{\textgreek{d}}\Bigg\}.\label{eq:SmallnessInL1OfT-1-1}
\end{equation}
Then, provided 
\begin{equation}
v_{2}-v_{1}<\frac{u_{in}(2e^{-C_{r_{1}R_{1}}^{2}((-\Lambda)(u_{2}-u_{1})^{2}+1)M}M)}{e^{C_{r_{1}R_{1}}^{2}((-\Lambda)(u_{2}-u_{1})^{2}+1)}}\label{eq:U0UpperBound-1-1}
\end{equation}
and 
\begin{equation}
\int_{v_{1}}^{v_{2}}r_{/}(T_{vv})_{/}\, d\bar{v}<2e^{-C_{r_{1}R_{1}}^{2}((-\Lambda)(u_{2}-u_{1})^{2}+1)M}M,
\end{equation}
where $C_{r_{1}R_{1}}\gg1$ is a constant depending only on $r_{1}$
and $R_{1}$, there exist unique smooth functions $r:[u_{1},u_{2}]\times[v_{1},v_{2}]\rightarrow[\frac{1}{2}r_{1},R_{1}]$,
$\text{\textgreek{W}}:[u_{1},u_{2}]\times[v_{1},v_{2}]\rightarrow(0,+\infty)$
and $\bar{f}_{in},\bar{f}_{out}:[u_{1},u_{2}]\times[v_{1},v_{2}]\times(0,+\infty)\rightarrow[0,+\infty)$
solving equations (\ref{eq:RequationFinal})--(\ref{eq:OutgoingVlasovFinal})
(with $T_{uu},T_{vv}$ given by the formulas (\ref{eq:T_uuComponent}),
(\ref{eq:T_vvComponent})), such that:

\begin{enumerate}

\item The functions $r,\text{\textgreek{W}}^{2},\bar{f}_{in},\bar{f}_{out}$
satisfy the initial conditions on $[u_{1},u_{2}]\times\{v_{1}\}$and
$\{u_{1}\}\times[v_{1},v_{2}]$, i.\,e.: 
\begin{equation}
(r,\text{\textgreek{W}}^{2},\bar{f}_{in},\bar{f}_{out})|_{[u_{1},u_{2}]\times\{v_{1}\}}=(r_{\backslash},\text{\textgreek{W}}_{\backslash}^{2},\bar{f}_{in\backslash},\bar{f}_{out\backslash})\label{eq:InitialDataLeft}
\end{equation}
and 
\begin{equation}
(r,\text{\textgreek{W}}^{2},\bar{f}_{in},\bar{f}_{out})|_{\{u_{1}\}\times[v_{1},v_{2}]}=(r_{/},\text{\textgreek{W}}_{/}^{2},\bar{f}_{in/},\bar{f}_{out/}).\label{eq:InitialDataRight}
\end{equation}

\item The function $r$ satisfies 
\[
\sup_{[u_{1},u_{2}]\times[v_{1},v_{2}]}\partial_{u}r<0.
\]

\item The following estimates hold on $[u_{1},u_{2}]\times[v_{1},v_{2}]$:
\begin{equation}
\sup_{[u_{1},u_{2}]\times[v_{1},v_{2}]}\Bigg\{\log\big(1+|\partial_{v}r|\big)+\Big|\log(\text{\textgreek{W}}^{2})\Big|+\Big|\log(-\partial_{u}r)\Big|+\sqrt{-\Lambda}|\tilde{m}|\Bigg\}+\sup_{\bar{v}\in[v_{1},v_{2}]}\int_{u_{1}}^{u_{2}}rT_{uu}(u,\bar{v})\, du<C_{r_{1}R_{1}}M
\end{equation}
and 
\begin{equation}
\sup_{\bar{u}\in[u_{1},u_{2}]}\int_{v_{1}}^{v_{2}}rT_{vv}(\bar{u},v)\, dv\le e^{-C_{r_{1}R_{1}}M}M.\label{eq:BoundSmallnessForTypeIII-1}
\end{equation}

\end{enumerate}\end{prop}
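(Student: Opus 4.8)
The plan is to follow the scheme of the proof of Proposition~\ref{Prop:LocalExistenceTypeII}, exploiting two simplifications available in the present, purely characteristic (boundary-free) setting with $r$ confined to a compact range, while paying attention to one new difficulty. First, since the rectangle $[u_1,u_2]\times[v_1,v_2]$ carries no reflecting boundary, the conservation laws (\ref{eq:IngoingConservationClosed})--(\ref{eq:OutgoingConservationClosed}) immediately determine the renormalised matter components: $r^2T_{vv}(u,v)=r_{/}^{2}(T_{vv})_{/}(v)$ and $r^2T_{uu}(u,v)=r_{\backslash}^{2}(T_{uu})_{\backslash}(u)$ throughout the rectangle, so the entire matter content is explicit and uniformly bounded in terms of the initial data. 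Second, because $r$ stays in a compact subset of $(0,+\infty)$, none of the $\tan^{-1}r$--type renormalisations of Proposition~\ref{Prop:LocalExistenceTypeII} (designed to handle $r\to+\infty$) is needed; in particular I would work directly with a first-order formulation of (\ref{eq:RequationFinal})--(\ref{eq:ConstraintUFinal}) in the variables $(r,\partial_u r,\partial_v r,\text{\textgreek{W}}^{2},\tilde m)$ rather than with the $(\text{\textgreek{r}},\text{\textgreek{k}},\bar{\text{\textgreek{k}}},\ldots)$ system, since here $\partial_v r$ is not a priori signed (which is why the conclusion only controls $\log(1+|\partial_v r|)$, an upper bound on $|\partial_v r|$, and the renormalisation by $1-\tfrac{2m}{r}$ is unavailable). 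Once a smooth $(r,\text{\textgreek{W}}^{2})$ is produced, $\bar f_{in},\bar f_{out}$ are recovered by solving the linear transport equations (\ref{eq:IngoingVlasovFinal})--(\ref{eq:OutgoingVlasovFinal}) with the prescribed characteristic data (no boundary conditions enter), and $\text{\textgreek{W}}^{2}$ is read off from (\ref{eq:RelationHawkingMass}).

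With the matter frozen to the transported values above, I would run the usual Picard iteration, inductively solving the linearised system (in the spirit of (\ref{eq:DerivativeInUDirectionKappaInduction})--(\ref{eq:ConservationTauInduction})) and proving the a priori bounds of item~(3): $r_n\in[\tfrac12 r_1,R_1]$, $\partial_u r_n<0$, $|\log(\text{\textgreek{W}}_n^{2})|+|\log(-\partial_u r_n)|+\log(1+|\partial_v r_n|)+\sqrt{-\Lambda}|\tilde m_n|\le C_{r_1R_1}M$, together with the smallness estimate (\ref{eq:BoundSmallnessForTypeIII-1}). The growth in the long $u$-direction is driven only by $\int_{u_1}^{u_2}r_{\backslash}(T_{uu})_{\backslash}\,d\bar u\le M$ and by the geometric source $\sim(1-\Lambda r^2)\text{\textgreek{W}}^{2}$ integrated over $[u_1,u_2]$, which accounts for the factor $e^{C_{r_1R_1}^{2}((-\Lambda)(u_2-u_1)^{2}+1)}$ appearing in the hypotheses. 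The decisive structural input is that the rectangle is \emph{thin in $v$}: the smallness of $v_2-v_1$ and of the ingoing flux $\int_{v_1}^{v_2}r_{/}(T_{vv})_{/}\,d\bar v$ relative to that exponential factor (cf.\ (\ref{eq:U0UpperBound-1-1})) forces each iterate to stay $O(\text{\textgreek{d}})$-close, on every slice $\{v=\mathrm{const}\}$, to the one-dimensional outgoing data $(r_{\backslash},\text{\textgreek{W}}_{\backslash}^{2},\ldots)$ restricted to $\{v=v_1\}$; this is precisely what propagates both the strict sign $\partial_u r_n<0$ and the confinement $r_n\in[\tfrac12 r_1,R_1]$ along the whole $u$-range, and what yields the smallness (\ref{eq:BoundSmallnessForTypeIII-1}). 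The base case $n=2$ follows directly from the hypotheses once $C_{r_1R_1}\gg1$.

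Finally, with the difference norm $\mathfrak{D}_n$ defined as in (\ref{eq:DifferenceNorm}) (the $\text{\textgreek{r}}$-difference replaced by an $r$-difference), I would subtract consecutive iterates, integrate along characteristics, and use the a priori bounds and the $v$-smallness to get $\mathfrak{D}_n\le\tfrac12\max\{\mathfrak{D}_{n-1},\mathfrak{D}_{n-2}\}$ for $n\ge4$; this yields $C^0$-convergence of the iterates to a solution of the reduced system with the asserted bounds, and the same estimate applied to two $C^0$ solutions with identical characteristic data gives uniqueness. Smoothness is then obtained, exactly as in Proposition~\ref{Prop:LocalExistenceTypeII}, by commuting the equations with $\partial_u,\partial_v$ and reading the commuted equations as linear equations for the top-order derivatives, using that the data are smooth and that $r,\text{\textgreek{W}}^{2}$ are bounded away from $0$ (and $r$ away from $+\infty$). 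I expect the main obstacle to be Step~2 — keeping the solution inside the region $r\in[\tfrac12 r_1,R_1]$, $\partial_u r<0$ with $\text{\textgreek{W}}^{2}$ bounded above and below over a $u$-interval that is \emph{not} short; the point is that this cannot be achieved by a naive short-time argument and relies essentially on the thin-$v$ smallness assumptions, which make the solution a small perturbation of its restriction to $\{v=v_1\}$.
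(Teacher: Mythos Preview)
Your proposal is correct and follows essentially the same approach as the paper. The paper's own proof is a brief sketch: it reduces to the closed system for $(r,\text{\textgreek{W}}^{2},T_{uu},T_{vv})$, introduces the variables $\text{\textgreek{l}}=\partial_{v}r$, $\text{\textgreek{k}}=\tfrac{1}{4}\text{\textgreek{W}}^{2}/(-\partial_{u}r)$, $\text{\textgreek{t}}=r^{2}T_{uu}$, $\bar{\text{\textgreek{t}}}=r^{2}T_{vv}$, $\tilde{m}$, $r$ (rather than the $(\text{\textgreek{r}},\text{\textgreek{k}},\bar{\text{\textgreek{k}}})$ system of Proposition~\ref{Prop:LocalExistenceTypeII}), writes down the resulting first-order system, and then states that the iteration argument proceeds exactly as in Proposition~\ref{Prop:LocalExistenceTypeII}, omitting the details. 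The only cosmetic difference from your outline is that the paper packages $-\partial_{u}r$ into $\text{\textgreek{k}}=\tfrac{1}{4}\text{\textgreek{W}}^{2}/(-\partial_{u}r)$ (whose $\partial_{u}$-equation is the constraint (\ref{eq:ConstraintUFinal})), whereas you propose to track $\partial_{u}r$ directly; the two choices are interchangeable once $|\log\text{\textgreek{W}}^{2}|$ is controlled.
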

\begin{proof}
Using the expression 
\begin{equation}
\frac{\partial_{v}r_{/}}{1-\frac{2\tilde{m}_{/}}{r_{/}}-\frac{1}{3}\Lambda r_{/}^{2}}=\frac{1}{4}\frac{\text{\textgreek{W}}_{/}^{2}}{-(\partial_{u}r)_{/}},\label{eq:D_vrFromK}
\end{equation}
the relation (\ref{eq:UpperBoundInitialDataCharacteristi}) implies
that 
\begin{equation}
\max_{v\in[v_{1},v_{2}]}\log\big(1+|\partial_{v}r_{/}|\big)\le4M.\label{eq:BoundKInitially}
\end{equation}

Equations (\ref{eq:RequationFinal})--(\ref{eq:OutgoingVlasovFinal})
for $(r,\text{\textgreek{W}}^{2},\bar{f}_{in},\bar{f}_{out})$ on
$[u_{1},u_{2}]\times[v_{1},v_{2}]$ (together with the initial constraint
(\ref{eq:ConstraintVDef})) are equivalent to the system (\ref{eq:RequationFinal-2})--(\ref{eq:OutgoingConservationClosed})
for $(r,\text{\textgreek{W}}^{2},T_{uu},T_{vv})$. In particular,
if $(r,\text{\textgreek{W}}^{2},T_{uu},T_{vv})$ is a solution to
the system (\ref{eq:RequationFinal-2})--(\ref{eq:OutgoingConservationClosed})
on $[u_{1},u_{2}]\times[v_{1},v_{2}]$ satisfying the initial conditions
\begin{equation}
(r,\text{\textgreek{W}}^{2},T_{uu})|_{[u_{1},u_{2}]\times\{v_{1}\}}=(r_{\backslash},\text{\textgreek{W}}_{\backslash}^{2},(T_{uu})_{\backslash})\label{eq:InitialLeft}
\end{equation}
and 
\begin{equation}
(r,\text{\textgreek{W}}^{2},T_{vv})|_{\{u_{1}\}\times[v_{1},v_{2}]}=(r_{/},\text{\textgreek{W}}_{/}^{2},(T_{vv})_{/})\label{eq:InitialRight-1}
\end{equation}
(which are assumed to satisfy the constraints (\ref{eq:ConstraintUDef})--(\ref{eq:ConstraintVDef})),
then, by solving equations (\ref{eq:IngoingVlasovFinal}) and (\ref{eq:OutgoingVlasovFinal})
for $\bar{f}_{in},\bar{f}_{out}$ on $[u_{1},u_{2}]\times[v_{1},v_{2}]$
with initial data 
\begin{equation}
(\bar{f}_{in},\bar{f}_{out})|_{[u_{1},u_{2}]\times\{v_{1}\}}=(\bar{f}_{in\backslash},\bar{f}_{out\backslash})
\end{equation}
and 
\begin{equation}
(\bar{f}_{in},\bar{f}_{out})|_{\{u_{1}\}\times[v_{1},v_{2}]}=(\bar{f}_{in/},\bar{f}_{out/}),
\end{equation}
one obtains a solution $(r,\text{\textgreek{W}}^{2},\bar{f}_{in},\bar{f}_{out})$
of the system (\ref{eq:RequationFinal})--(\ref{eq:OutgoingVlasovFinal})
satisfying (\ref{eq:InitialDataLeft}) and (\ref{eq:InitialDataRight}).
Therefore, it suffices to establish the existence and uniqueness of
a smooth solution $(r,\text{\textgreek{W}}^{2},T_{uu},T_{vv})$ to
the system (\ref{eq:RequationFinal-2})--(\ref{eq:OutgoingConservationClosed})
on $[u_{1},u_{2}]\times[v_{1},v_{2}]$ satisfying (\ref{eq:InitialLeft})
and (\ref{eq:InitialRight-1}).

Introducng the new set of variables 
\begin{gather}
\text{\textgreek{l}}=\partial_{v}r,\label{eq:OmegaVariable}\\
\text{\textgreek{k}}=\frac{1}{4}\frac{\text{\textgreek{W}}^{2}}{-\partial_{u}r},\label{eq:KappaVariable-1}\\
\bar{\text{\textgreek{t}}}=r^{2}T_{vv},\label{eq:Tbar-1}\\
\text{\textgreek{t}}=r^{2}T_{uu},\label{eq:T-1}
\end{gather}
the system (\ref{eq:RequationFinal-2})--(\ref{eq:OutgoingConservationClosed})
yields: 
\begin{align}
\partial_{u}\text{\textgreek{l}}= & -\frac{1}{2}\frac{\tilde{m}-\frac{1}{3}\Lambda r^{3}}{r^{2}}\text{\textgreek{W}}^{2},\label{eq:EquationLambdaRenormalisedOO}\\
\partial_{u}\partial_{v}\log(\text{\textgreek{W}}^{2})= & \frac{\text{\textgreek{W}}^{2}}{2r^{2}}\big(1+\text{\textgreek{k}}^{-1}\text{\textgreek{l}}\big),\label{eq:EquationOmega-1-1}\\
\partial_{u}\text{\textgreek{k}}= & -4\pi r^{-1}\text{\textgreek{t}}\text{\textgreek{W}}^{-2},\label{eq:ConstraintU-1-1}\\
\partial_{u}\bar{\text{\textgreek{t}}}= & 0,\label{eq:ConservationT_vv-1-1}\\
\partial_{v}\text{\textgreek{t}}= & 0,\label{eq:ConservationT_uu-1-1}\\
\partial_{u}\tilde{m}= & -8\pi\text{\textgreek{W}}^{-2}\text{\textgreek{l}}\text{\textgreek{t}},\label{eq:NewMTildeRenormalised}\\
\partial_{v}r= & \text{\textgreek{l}}.\label{eq:EquationForROO}
\end{align}
while the initial conditions (\ref{eq:InitialLeft})--(\ref{eq:InitialRight-1})
give rise to the following initial conditions for $(\text{\textgreek{l}},\text{\textgreek{W}}^{2},\text{\textgreek{k}},\bar{\text{\textgreek{t}}},\text{\textgreek{t}},\tilde{m},r)$:
\begin{equation}
(r,\text{\textgreek{W}}^{2},\text{\textgreek{t}})|_{[u_{1},u_{2}]\times\{v_{1}\}}=\Big(r_{\backslash},\text{\textgreek{W}}_{\backslash}^{2},r_{\backslash}^{2}(T_{uu})_{\backslash}\Big)\label{eq:InitialDataRecastLeft}
\end{equation}
and 
\begin{equation}
(\text{\textgreek{l}},\text{\textgreek{W}}^{2},\text{\textgreek{k}},\bar{\text{\textgreek{t}}},\tilde{m})|_{\{u_{1}\}\times[v_{1},v_{2}]}=\Big(\partial_{v}r_{/},\text{\textgreek{W}}_{/}^{2},\frac{-(\partial_{u}r)_{/}}{\text{\textgreek{W}}_{/}^{2}},r_{/}^{2}(T_{vv})_{/},\tilde{m}_{/}\Big).\label{eq:InitialDataRecastRight-1}
\end{equation}

The proof of the fact that the system (\ref{eq:EquationLambdaRenormalisedOO})--(\ref{eq:EquationForROO})
admits a unique smooth solution on $[u_{1},u_{2}]\times[v_{1},v_{2}]$
satifying the initial conditions (\ref{eq:InitialDataRecastLeft})--(\ref{eq:InitialDataRecastRight-1}),
such that, moreover, 
\begin{equation}
\sup_{[u_{1},u_{2}]\times[v_{1},v_{2}]}\Bigg\{\log\big(1+|\text{\textgreek{l}}|\big)+\Big|\log(\text{\textgreek{W}}^{2})\Big|+\Big|\log(\text{\textgreek{k}})\Big|+\sqrt{-\Lambda}|\tilde{m}|\Bigg\}+\sup_{\bar{v}\in[v_{1},v_{2}]}\int_{u_{1}}^{u_{2}}\frac{\text{\textgreek{t}}}{r}(u,\bar{v})\, du<C_{r_{1}R_{1}}M
\end{equation}
and 
\begin{equation}
\sup_{\bar{u}\in[u_{1},u_{2}]}\int_{v_{1}}^{v_{2}}\frac{\bar{\text{\textgreek{t}}}}{r}(\bar{u},v)\, dv\le e^{-C_{r_{1}R_{1}}M}M\label{eq:BoundSmallnessForTypeIII-1-1}
\end{equation}
hold, follows by similar arguments as in the proof of Proposition
\ref{Prop:LocalExistenceTypeII}. We will omit the details.
\end{proof}

\subsection{\label{sub:Continuation-criteria}Continuation criteria}

In this Section, we will establish two continuation criteria, i.\,e.~conditions
under which smooth solutions $(r,\text{\textgreek{W}}^{2},\bar{f}_{in},\bar{f}_{out})$
of (\ref{eq:RequationFinal})--(\ref{eq:OutgoingVlasovFinal}) on
open regions $\mathcal{U}\subset\mathbb{R}^{2}$ admit a smooth extension
across $\partial\mathcal{U}$.

The following lemma is a continuation criterion away from $r=0$ and
$r=+\infty$ (cf.~the extension principle for the spherically symmetric
Einstein--massive Vlasov system, not reduced to the radial case, in
\cite{DafermosRendall} and the generalised extension principle for
strongly tame matter models in \cite{Kommemi2013}).

\begin{figure}[h] 
\centering 
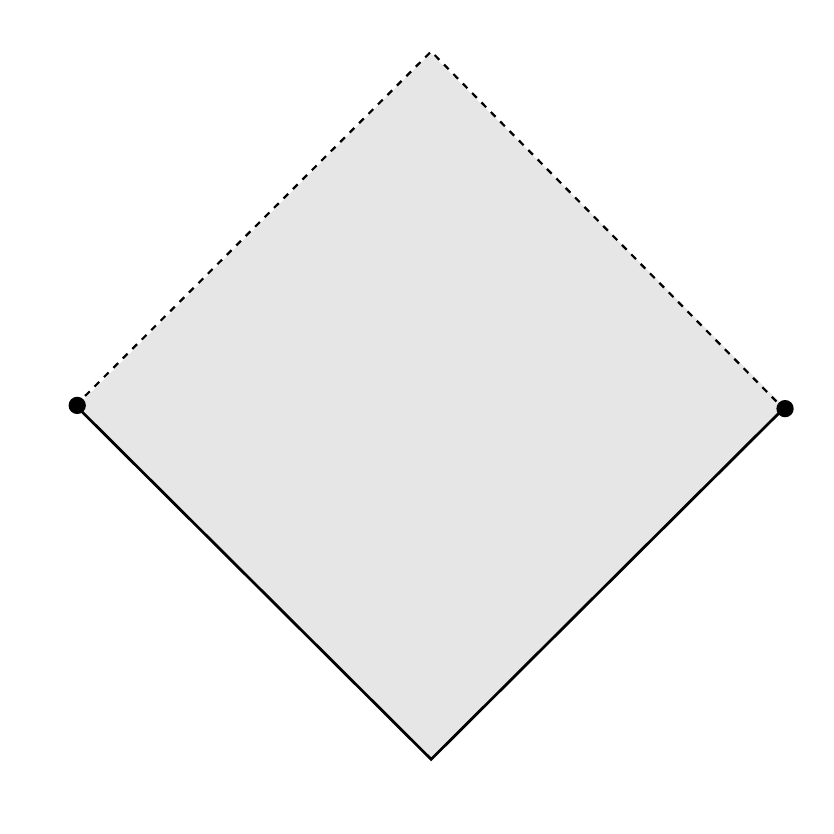 
\caption{Schematic depiction of the domain of definition $[u_1,u_2)\times [v_1,v_2)$ of $(r,\Omega^2,\bar{f}_{in},\bar{f}_{out})$ in Lemma \ref{lem:cotinuationCriterion}. The functions $(r,\Omega^2,\bar{f}_{in},\bar{f}_{out})$ are assumed to extend smoothly on $\{u_2\}\times\{v_1\}$ and $\{u_1\}\times\{v_2\}$ and,  under the conditions \eqref{eq:BoundedBelowRContinuation} and \eqref{eq:BoundedAboveRContinuation}, it is shown that they extend smoothly on $\{u_2\}\times [v_1,v_2]$ and $[u_1,u_2]\times \{v_2\}$.}
\end{figure}
\begin{lem}
\label{lem:cotinuationCriterion}For any $u_{1}<u_{2}$ and $v_{1}<v_{2}$,
let $(r,\text{\textgreek{W}}^{2},\bar{f}_{in},\bar{f}_{out})$ be
a smooth solution of the system (\ref{eq:RequationFinal})--(\ref{eq:OutgoingVlasovFinal})
on $[u_{1},u_{2})\times[v_{1},v_{2})$, such that $(r,\log(\text{\textgreek{W}}^{2}))|_{[u_{1},u_{2})\times\{v_{1}\}}$
and $(r,\log(\text{\textgreek{W}}^{2}))|_{\{u_{1}\}\times[v_{1},v_{2})}$
extend smoothly on $\{u_{2}\}\times\{v_{1}\}$ and $\{u_{1}\}\times\{v_{2}\}$,
respectively, and $(\bar{f}_{in},\bar{f}_{out})|_{[u_{1},u_{2})\times\{v_{1}\}\times(0,+\infty)}$
and $(\bar{f}_{in},\bar{f}_{out})|_{\{u_{1}\}\times[v_{1},v_{2})\times(0,+\infty)}$
extend smoothly on $\{u_{2}\}\times\{v_{1}\}\times(0,+\infty)$ and
$\{u_{1}\}\times\{v_{2}\}\times(0,+\infty)$, respectively. Assume,
moreover, that 
\begin{equation}
\inf_{[u_{1},u_{2})\times[v_{1},v_{2})}r>0\label{eq:BoundedBelowRContinuation}
\end{equation}
and
\begin{equation}
\sup_{[u_{1},u_{2})\times[v_{1},v_{2})}r<+\infty.\label{eq:BoundedAboveRContinuation}
\end{equation}
Then, $(r,\log(\text{\textgreek{W}}^{2}))$ extend smoothly on the
whole of $[u_{1},u_{2}]\times[v_{1},v_{2}]$ and $(\bar{f}_{in},\bar{f}_{out})$
extend smoothly on the whole of $[u_{1},u_{2}]\times[v_{1},v_{2}]\times(0,+\infty)$.\end{lem}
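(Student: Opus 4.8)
\emph{Plan of proof.} The strategy is to establish uniform bounds, up to the missing boundary segments $\{u_{2}\}\times[v_{1},v_{2}]$ and $[u_{1},u_{2}]\times\{v_{2}\}$, on $r$, $\text{\textgreek{W}}^{2}$, $\partial_{u}r$, $\partial_{v}r$, the Hawking mass $m$, the renormalised components $r^{2}T_{uu}$, $r^{2}T_{vv}$, and on all of their higher order derivatives (together with all derivatives of $\bar{f}_{in},\bar{f}_{out}$ in the variables $(u,v,p)$). Once this is done, $(r,\log(\text{\textgreek{W}}^{2}))$ extends to a $C^{\infty}$ function on $[u_{1},u_{2}]\times[v_{1},v_{2}]$ and $(\bar{f}_{in},\bar{f}_{out})$ to a $C^{\infty}$ function on $[u_{1},u_{2}]\times[v_{1},v_{2}]\times(0,+\infty)$; since the system \eqref{eq:RequationFinal}--\eqref{eq:OutgoingVlasovFinal} holds on the interior and every derivative occurring in it extends continuously to the boundary, the extension again solves \eqref{eq:RequationFinal}--\eqref{eq:OutgoingVlasovFinal}. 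Thus no new appeal to local existence is needed; it is only the a priori bounds that have to be proved.

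First I would control the matter fields. By the conservation laws \eqref{eq:IngoingConservationClosed}--\eqref{eq:OutgoingConservationClosed}, i.\,e.~$\partial_{u}(r^{2}T_{vv})=0$ and $\partial_{v}(r^{2}T_{uu})=0$, the quantities $r^{2}T_{vv}$ and $r^{2}T_{uu}$ are constant along the ingoing, resp.\ outgoing, null directions, hence equal their values on the initial characteristic segments $\{u_{1}\}\times[v_{1},v_{2})$ and $[u_{1},u_{2})\times\{v_{1}\}$; by hypothesis these values extend smoothly up to the endpoints, so $r^{2}T_{uu}$, $r^{2}T_{vv}$ are uniformly bounded and extend smoothly. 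Likewise, $\text{\textgreek{W}}^{4}r^{4}p\,\bar{f}_{in}$ and $\text{\textgreek{W}}^{4}r^{4}p\,\bar{f}_{out}$ are transported by \eqref{eq:IngoingVlasovFinal}--\eqref{eq:OutgoingVlasovFinal} along the flows $(u,p^{u})\mapsto(u+s,e^{s}p^{u})$, resp.\ $(v,p^{v})\mapsto(v+s,e^{s}p^{v})$, from the corresponding initial segment; using the two-sided bounds on $r$ and on $\text{\textgreek{W}}^{2}$ obtained in the next step, $\bar{f}_{in},\bar{f}_{out}$ and their $p$-derivatives are then bounded, with support in a fixed compact subset of $(0,+\infty)$.

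The heart of the argument, and the place where the hypotheses $0<\inf r\le\sup r<+\infty$ are used, is the bound on the geometry. Integrating \eqref{eq:RequationFinal} over $[u_{1},u]\times[v_{1},v]$ and using $1-\Lambda r^{2}\ge 1$ (recall $\Lambda<0$) together with the boundedness of $r$ gives the Bondi-type a priori estimate
\[
\tfrac{1}{2}\int_{u_{1}}^{u}\!\!\int_{v_{1}}^{v}\text{\textgreek{W}}^{2}\,dv'\,du'\ \le\ \int_{u_{1}}^{u}\!\!\int_{v_{1}}^{v}(1-\Lambda r^{2})\,\text{\textgreek{W}}^{2}\,dv'\,du'\ \le\ 2\sup r^{2},
\]
i.\,e.~an $L^{1}$-control of $\text{\textgreek{W}}^{2}$ over the whole rectangle. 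One then closes the remaining estimates: $r\,\partial_{v}r$ and $r\,\partial_{u}r$ from $\partial_{u}(r\,\partial_{v}r)=-\tfrac{1}{4}(1-\Lambda r^{2})\text{\textgreek{W}}^{2}=\partial_{v}(r\,\partial_{u}r)$ and the initial data (so $\partial_{u}r$, $\partial_{v}r$ are bounded, $r$ being bounded away from $0$, once $\int(1-\Lambda r^{2})\text{\textgreek{W}}^{2}$ is controlled in each null direction); $m$, equivalently $\tilde{m}$ (see \eqref{eq:RenormalisedHawkingMass}), from the constraints \eqref{eq:ConstrainVFinal}--\eqref{eq:ConstraintUFinal} and the definition \eqref{eq:DefinitionHawkingMass}; and $\log(\text{\textgreek{W}}^{2})$ from its wave equation \eqref{eq:OmegaEquationFinal}, which by \eqref{eq:DefinitionHawkingMass} can be rewritten as $\partial_{u}\partial_{v}\log(\text{\textgreek{W}}^{2})=\text{\textgreek{W}}^{2}m/r^{3}$.

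The main obstacle is precisely this last, nonlinearly self-coupled, equation: the source $\text{\textgreek{W}}^{2}m/r^{3}$ is not bounded a priori purely in terms of the initial data, and one must rule out $\text{\textgreek{W}}^{2}$ blowing up, or collapsing to $0$, before the boundary is reached. I would handle this by a continuity/bootstrap argument on $[u_{1},u_{2}]\times[v_{1},v_{2}]$, subdividing it into finitely many sufficiently small sub-rectangles if necessary, feeding in the $L^{1}$-bound on $\text{\textgreek{W}}^{2}$ above, the boundedness of $r$, and the monotonicity of $\tilde{m}$ supplied by \eqref{eq:DerivativeTildeUMass}--\eqref{eq:DerivativeTildeVMass} in the non-trapped regions, so as to propagate two-sided bounds on $\text{\textgreek{W}}^{2}$, $\partial_{u}r$, $\partial_{v}r$ and $m$ all the way to the boundary. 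Once these zeroth-order bounds are in place, the higher order derivatives of $(r,\log(\text{\textgreek{W}}^{2}),m)$ are bounded up to the boundary by commuting \eqref{eq:RequationFinal}, \eqref{eq:OmegaEquationFinal}, \eqref{eq:ConstrainVFinal}, \eqref{eq:ConstraintUFinal} with $\partial_{u},\partial_{v}$ arbitrarily many times and treating the commuted equations as linear transport/wave equations in the top-order terms (as in the proof of Proposition \ref{Prop:LocalExistenceTypeII}); the Vlasov fields $\bar{f}_{in},\bar{f}_{out}$, being solutions of the transport equations \eqref{eq:IngoingVlasovFinal}--\eqref{eq:OutgoingVlasovFinal} with now-smooth coefficients and smooth data, then extend smoothly as well. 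Combined with the reduction of the first paragraph, this completes the proof.
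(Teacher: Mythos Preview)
Your overall architecture --- reduce to $C^{0}$ bounds on $r$, $\log(\text{\textgreek{W}}^{2})$, $\tilde{m}$, $r^{2}T_{uu}$, $r^{2}T_{vv}$, then commute for higher regularity --- matches the paper, and your treatment of the matter transport and of the spacetime $L^{1}$ bound $\int\!\!\int(1-\Lambda r^{2})\text{\textgreek{W}}^{2}\,du\,dv\le 2\sup r^{2}$ from \eqref{eq:RequationFinal} is exactly right.

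The gap is at the step you yourself flag as ``the main obstacle''. You propose to close the coupled bounds on $\tilde{m}$ and $\log(\text{\textgreek{W}}^{2})$ by a bootstrap on small sub-rectangles, invoking ``the monotonicity of $\tilde{m}$ supplied by \eqref{eq:DerivativeTildeUMass}--\eqref{eq:DerivativeTildeVMass} in the non-trapped regions''. But the lemma assumes \emph{only} $0<\inf r\le\sup r<\infty$; there is no hypothesis that $\partial_{u}r<0<\partial_{v}r$ or that $1-\tfrac{2m}{r}>0$, so those monotonicity relations are unavailable in general (indeed the lemma is later applied in the black-hole interior). Moreover the source in the $\tilde{m}$ equation is proportional to $\text{\textgreek{W}}^{-2}$, not $\text{\textgreek{W}}^{2}$, so your $L^{1}$ bound on $\text{\textgreek{W}}^{2}$ does not control it; a bootstrap on sub-rectangles then faces the circularity that the admissible rectangle size depends on the current lower bound for $\text{\textgreek{W}}^{2}$, which is precisely what you are trying to propagate.

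The paper breaks this circularity with a structural observation you have not used: $\tilde{m}$ itself satisfies a wave equation
\[
\partial_{u}\partial_{v}\tilde{m}\;=\;32\pi^{2}\,r^{-1}\text{\textgreek{W}}^{-2}\,(r^{2}T_{uu})(r^{2}T_{vv})\;\ge\;0,
\]
so the representation formula immediately gives $\inf\tilde{m}>-\infty$ from the initial data alone, with no sign assumptions on $\partial r$. Feeding this one-sided bound into $\partial_{u}\partial_{v}\log(\text{\textgreek{W}}^{2})=(\tilde{m}r^{-3}+\tfrac{1}{6}\Lambda)\text{\textgreek{W}}^{2}$ and using the $L^{1}$ bound on $\text{\textgreek{W}}^{2}$ yields $\inf\log(\text{\textgreek{W}}^{2})>-\infty$; this in turn bounds $\text{\textgreek{W}}^{-2}$ from above, which together with the bounds on $r^{2}T_{uu},r^{2}T_{vv}$ and $r^{-1}$ gives $\sup\tilde{m}<+\infty$ from the same wave equation for $\tilde{m}$. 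The argument is a three-step cascade rather than a bootstrap, and it is this sign in $\partial_{u}\partial_{v}\tilde{m}$ that you are missing.
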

\begin{proof}
It suffices to show that $(r,\log(\text{\textgreek{W}}^{2}))$ extend
smoothly on the whole of $[u_{1},u_{2}]\times[v_{1},v_{2}]$, since
then the smooth extension of $(\bar{f}_{in},\bar{f}_{out})$ will
readily follow by integrating equations (\ref{eq:IngoingVlasovFinal})--(\ref{eq:OutgoingVlasovFinal}).
In fact, we will only show that $(r,\log(\text{\textgreek{W}}^{2}))$
extend continuously on the whole of $[u_{1},u_{2}]\times[v_{1},v_{2}]$.
Assuming that $(r,\log(\text{\textgreek{W}}^{2}))\in C^{0}\big([u_{1},u_{2}]\times[v_{1},v_{2}]\big)$,
equation (\ref{eq:RequationFinal}), combined with the fact that $r|_{[u_{1},u_{2})\times\{v_{1}\}}$
and $r|_{\{u_{1}\}\times[v_{1},v_{2})}$ extend smoothly on $\{u_{2}\}\times\{v_{1}\}$
and $\{u_{1}\}\times\{v_{2}\}$, implies that $\partial_{v}r$, $\partial_{u}r$
are continuous on $[u_{1},u_{2}]\times[v_{1},v_{2}]$. Similarly,
equation (\ref{eq:OmegaEquationFinal}) yields, in turn, that $\partial_{u}\log(\text{\textgreek{W}}^{2}),\partial_{v}\log(\text{\textgreek{W}}^{2})$
are continuous on $[u_{1},u_{2}]\times[v_{1},v_{2}]$. Commuting (\ref{eq:RequationFinal})--(\ref{eq:OmegaEquationFinal})
successively with $\partial_{u},\partial_{v}$ and treating the commuted
equations as linear equations in the highest order terms, the smoothness
of $(r,\log(\text{\textgreek{W}}^{2}))$ then follows readily.

In view of (\ref{eq:DefinitionHawkingMass}), (\ref{eq:RenormalisedHawkingMass}),
(\ref{eq:T_uuComponent}), (\ref{eq:T_vvComponent}), the system (\ref{eq:RequationFinal})--(\ref{eq:OutgoingVlasovFinal})
yields 
\begin{align}
\partial_{u}\partial_{v}(r^{2})= & -\frac{1}{2}(1-\Lambda r^{2})\text{\textgreek{W}}^{2},\label{eq:RequationFinal-1}\\
\partial_{u}\partial_{v}\log(\text{\textgreek{W}}^{2})= & (\tilde{m}r^{-3}+\frac{1}{6}\Lambda)\text{\textgreek{W}}^{2},\label{eq:OmegaEquationFinal-1}\\
\partial_{u}\partial_{v}\tilde{m}= & 32\pi^{2}r^{-1}\text{\textgreek{W}}^{-2}(r^{2}T_{vv})(r^{2}T_{uu})\label{eq:MassWaveFinal-1}\\
\partial_{v}(\text{\textgreek{W}}^{-2}\partial_{v}r)= & -4\pi rT_{vv}\text{\textgreek{W}}^{-2},\label{eq:Constraint-1-1}\\
\partial_{u}(\text{\textgreek{W}}^{-2}\partial_{u}r)= & -4\pi rT_{uu}\text{\textgreek{W}}^{-2},\label{eq:Constraint-2-1}\\
\partial_{v}(r^{2}T_{uu})= & 0,\label{eq:IngoingVlasovFinal-1}\\
\partial_{u}(r^{2}T_{vv})= & 0.\label{eq:OutgoingVlasovFinal-1}
\end{align}
Integrating equations (\ref{eq:RequationFinal-1})--(\ref{eq:MassWaveFinal-1})
we obtain for any $(u,v)\in[u_{1},u_{1})\times[v_{1},v_{2})$: 
\begin{align}
r^{2}(u,v)= & r^{2}(u_{1},v)+r^{2}(u,v_{1})-r^{2}(u_{1},v_{1})-\frac{1}{2}\int_{u_{1}}^{u}\int_{v_{1}}^{v}(1-\Lambda r^{2})\text{\textgreek{W}}^{2}\, d\bar{v}d\bar{u},\label{eq:RequationIntegrated}\\
\log\big(\text{\textgreek{W}}^{2}(u,v)\big)= & \log\big(\text{\textgreek{W}}^{2}(u_{1},v)\big)+\log\big(\text{\textgreek{W}}^{2}(u,v_{1})\big)-\log\big(\text{\textgreek{W}}^{2}(u_{1},v_{1})\big)+\int_{u_{1}}^{u}\int_{v_{1}}^{v}(\tilde{m}r^{-3}+\frac{1}{6}\Lambda)\text{\textgreek{W}}^{2}\, d\bar{v}d\bar{u},\label{eq:OmegaEquationIntegrated}\\
\tilde{m}(u,v)= & \tilde{m}(u_{1},v)+\tilde{m}(u,v_{1})-\tilde{m}(u_{1},v_{1})+32\pi^{2}\int_{u_{1}}^{u}\int_{v_{1}}^{v}r^{-1}\text{\textgreek{W}}^{-2}(r^{2}T_{vv})(r^{2}T_{uu})\, d\bar{v}d\bar{u}.\label{eq:MassIntegrated}
\end{align}

Since $(r,\log(\text{\textgreek{W}}^{2}))|_{[u_{1},u_{2})\times\{v_{1}\}}$
and $(r,\log(\text{\textgreek{W}}^{2}))|_{\{u_{1}\}\times[v_{1},v_{2})}$
extend smoothly on $\{u_{2}\}\times\{v_{1}\}$ and $\{u_{1}\}\times\{v_{2}\}$,
the functions $r(u_{1},v)$, $\log\big(\text{\textgreek{W}}^{2}(u_{1},v)\big)$
and $\tilde{m}(u_{1},v)$ extend continuously to $v=v_{2}$, while
the functions $r(u,v_{1})$, $\log\big(\text{\textgreek{W}}^{2}(u,v_{1})\big)$
and $\tilde{m}(u,v_{1})$ extend continuously to $u=u_{2}$. Therefore,
in view of (\ref{eq:BoundedBelowRContinuation})--(\ref{eq:BoundedAboveRContinuation})
and (\ref{eq:RequationIntegrated})--(\ref{eq:MassIntegrated}), the
continuous extension of $r,\log(\text{\textgreek{W}}^{2}),\tilde{m}$
on the whole of $[u_{1},u_{2}]\times[v_{1},v_{2}]$ will follow if
we establish 
\begin{equation}
\sup_{[u_{1},u_{2})\times[v_{1},v_{2})}\int_{u_{1}}^{u}\int_{v_{1}}^{v}\text{\textgreek{W}}^{2}\, d\bar{v}d\bar{u}<+\infty,\label{eq:IntegrabilityOmega-1}
\end{equation}
\begin{equation}
\sup_{[u_{1},u_{2})\times[v_{1},v_{2})}|\tilde{m}(u,v)|<+\infty\label{eq:UpperBoundMass}
\end{equation}
and 
\begin{equation}
\sup_{[u_{1},u_{2}]\times[v_{1},v_{2}]}\big(T_{uu}+T_{vv}\big)<+\infty.\label{eq:UpperBoundEnergyMomentum}
\end{equation}

Since $(r,\log(\text{\textgreek{W}}^{2}))|_{[u_{1},u_{2})\times\{v_{1}\}}$
and $(r,\log(\text{\textgreek{W}}^{2}))|_{\{u_{1}\}\times[v_{1},v_{2})}$
extend smoothly on $\{u_{2}\}\times\{v_{1}\}$ and $\{u_{1}\}\times\{v_{2}\}$,
from (\ref{eq:BoundedBelowRContinuation}), (\ref{eq:Constraint-1-1})
and (\ref{eq:Constraint-2-1}) we infer that 
\begin{equation}
\sup_{[u_{1},u_{2})\times\{v_{1}\}}T_{uu}+\sup_{\{u_{1}\}\times[v_{1},v_{2})}T_{vv}<+\infty,\label{eq:UpperBoundEnergyMomentum-1}
\end{equation}
Integrating (\ref{eq:IngoingVlasovFinal-1})--(\ref{eq:OutgoingVlasovFinal-1})
and using (\ref{eq:UpperBoundEnergyMomentum-1}) and (\ref{eq:BoundedAboveRContinuation}),
we readily infer that $r^{2}T_{uu}$ and $r^{2}T_{vv}$ extend continuously
on the whole of $[u_{1},u_{2}]\times[v_{1},v_{2}]$ and satisfy (\ref{eq:UpperBoundEnergyMomentum}).
In view of (\ref{eq:BoundedAboveRContinuation}), equation (\ref{eq:RequationIntegrated})
yields 
\begin{equation}
\sup_{[u_{1},u_{2})\times[v_{1},v_{2})}\int_{u_{1}}^{u}\int_{v_{1}}^{v}(1-\Lambda r^{2})\text{\textgreek{W}}^{2}\, d\bar{v}d\bar{u}<+\infty\label{eq:IntegrabilityOmega}
\end{equation}
and, therefore, (\ref{eq:IntegrabilityOmega-1}). Thus, it only remains
to establish (\ref{eq:UpperBoundMass})

In view of the fact that $T_{uu},T_{vv}\ge0$ and $\tilde{m}$ extends
smoothly on $\{u_{2}\}\times\{v_{1}\}$ and $\{u_{1}\}\times\{v_{2}\}$,
equation (\ref{eq:MassIntegrated}) implies that 
\begin{equation}
\inf_{[u_{1},u_{2})\times[v_{1},v_{2})}\tilde{m}>-\infty.\label{eq:LowerBoundTildeM}
\end{equation}
Equation (\ref{eq:OmegaEquationIntegrated}) then yields that 
\begin{align}
\inf_{[u_{1},u_{2})\times[v_{1},v_{2})}\big(\log\big(\text{\textgreek{W}}^{2}(u,v)\big)\big)\ge & \inf_{[u_{1},u_{2})\times[v_{1},v_{2})}\big(\log\big(\text{\textgreek{W}}^{2}(u_{1},v)\big)+\log\big(\text{\textgreek{W}}^{2}(u,v_{1})\big)-\log\big(\text{\textgreek{W}}^{2}(u_{1},v_{1})\big)\big)+\label{eq:LowerBoundLogOmega}\\
 & \hphantom{+++}+\Big(\big(\inf_{[u_{1},u_{2})\times[v_{1},v_{2})}\tilde{m}\big)\big(\inf_{[u_{1},u_{2})\times[v_{1},v_{2})}r\big)^{-3}+\frac{1}{6}\Lambda\Big)\sup_{[u_{1},u_{2})\times[v_{1},v_{2})}\int_{u_{1}}^{u}\int_{v_{1}}^{v}\text{\textgreek{W}}^{2}\, d\bar{v}d\bar{u}>\nonumber \\
 & >-\infty\nonumber 
\end{align}
in view of (\ref{eq:BoundedBelowRContinuation}), (\ref{eq:IntegrabilityOmega-1})
and (\ref{eq:LowerBoundTildeM}). Returning to equation (\ref{eq:MassIntegrated})
and considering the supremum of the right hand side, we infer: 
\begin{align}
\sup_{[u_{1},u_{2})\times[v_{1},v_{2})}\tilde{m}(u,v)\le & \sup_{[u_{1},u_{2})\times[v_{1},v_{2})}\big(\tilde{m}(u_{1},v)+\tilde{m}(u,v_{1})-\tilde{m}(u_{1},v_{1})\big)+\label{eq:ForUpperBoundMTilde}\\
 & \hphantom{+++}+32\pi^{2}\sup_{[u_{1},u_{2})\times[v_{1},v_{2})}\big(r^{2}T_{uu}+r^{2}T_{vv}\big)^{2}\big(\inf_{[u_{1},u_{2})\times[v_{1},v_{2})}r\big)^{-1}e^{-\inf_{[u_{1},u_{2})\times[v_{1},v_{2})}\log\text{\textgreek{W}}^{2}}|u_{2}-u_{1}||v_{2}-v_{1}|.\nonumber 
\end{align}
Thus, (\ref{eq:BoundedBelowRContinuation}), (\ref{eq:BoundedAboveRContinuation}),
(\ref{eq:UpperBoundEnergyMomentum}), (\ref{eq:LowerBoundLogOmega})
and (\ref{eq:ForUpperBoundMTilde}) imply that 
\begin{equation}
\sup_{[u_{1},u_{2})\times[v_{1},v_{2})}\tilde{m}(u,v)<+\infty.\label{eq:UpperBoundMTilde}
\end{equation}
Therefore, (\ref{eq:UpperBoundMass}) follows from (\ref{eq:LowerBoundTildeM})
and (\ref{eq:UpperBoundMTilde}). 
\end{proof}
The following lemma is a continuation criterion on the mirror $\text{\textgreek{g}}_{0}$.

\begin{figure}[h] 
\centering 
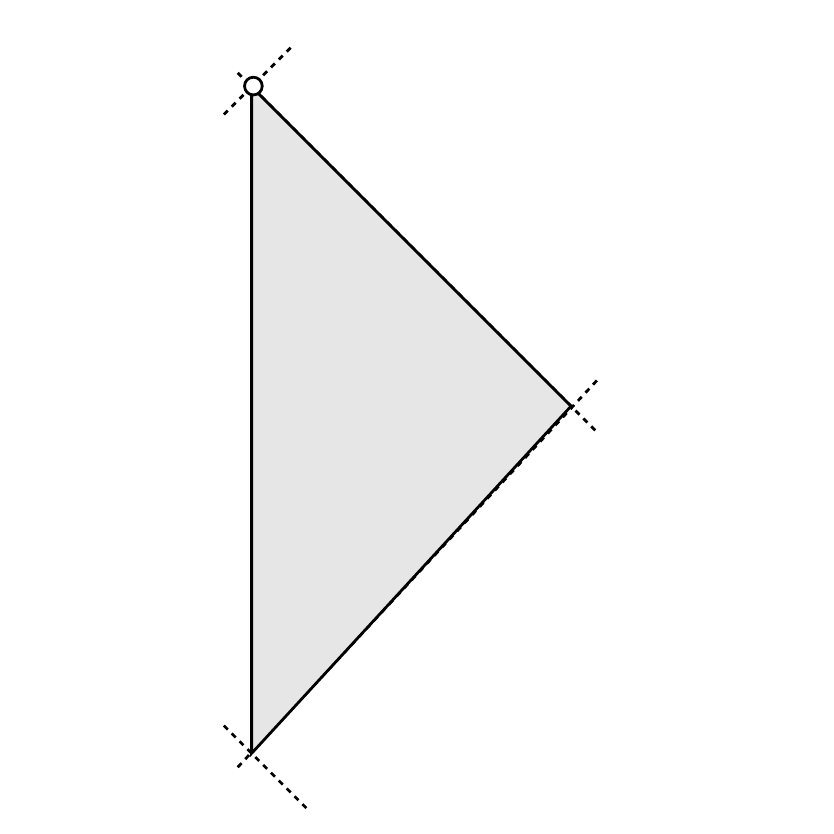 
\caption{Schematic depiction of the domain $\mathcal{D}$  in Lemma \ref{lem:cotinuationCriterionMirror}. The functions $(r,\Omega^2,\bar{f}_{in},\bar{f}_{out})$ are assumed to be smooth on $\mathcal{D}\backslash \{u_2\}\times \{v_2\}$ and, under the condition \eqref{eq:NonNullMirrorContinuation}, it is shown that they extend smoothly on $\{u_2\}\times \{v_2\}$.}
\end{figure}
\begin{lem}
\label{lem:cotinuationCriterionMirror}For any $u_{1}<u_{2}$ and
$v_{1}<v_{2}$ such that $u_{2}-u_{1}=v_{2}-v_{1}$, let 
\begin{equation}
\text{\textgreek{g}}_{0;u_{1}u_{2}}\doteq\{u_{1}\le u<u_{2}\}\cap\{v=u\}
\end{equation}
and 
\begin{equation}
\mathcal{D}\doteq\{u_{1}\le u\le u_{2}\}\cap\{v_{1}\le v\le u\}.
\end{equation}
 For any $r_{0}>0$, let $(r,\text{\textgreek{W}}^{2},\bar{f}_{in},\bar{f}_{out})$
be a smooth solution of the system (\ref{eq:RequationFinal})--(\ref{eq:OutgoingVlasovFinal})
on $\mathcal{D}\backslash\{u_{2}\}\times\{v_{2}\}$, such that 
\begin{equation}
r|_{\text{\textgreek{g}}_{0;u_{1}u_{2}}}=r_{0},\label{eq:ConditionRMirrorContinuation}
\end{equation}
\begin{equation}
\partial_{u}r|_{\text{\textgreek{g}}_{0;u_{1}u_{2}}}=-\partial_{v}r|_{\text{\textgreek{g}}_{0;u_{1}u_{2}}}\label{eq:gaugeconditionMirrorContinuation}
\end{equation}
and $\bar{f}_{in},\bar{f}_{out}$ satisfy the reflecting boundary
condition (\ref{eq:LeftBoundaryCondition}) on $\text{\textgreek{g}}_{0;u_{1}u_{2}}$
with $w=r-r_{0}$. Then, provided 
\begin{equation}
\inf_{\text{\textgreek{g}}_{0;u_{1}u_{2}}}\partial_{v}r>0,\label{eq:NonNullMirrorContinuation}
\end{equation}
the pair $(r,\log(\text{\textgreek{W}}^{2}))$ extends smoothly on
the whole of $\mathcal{D}$ and the pair $(\bar{f}_{in},\bar{f}_{out})$
extends smoothly on the whole of $\mathcal{D}\times(0,+\infty)$.\end{lem}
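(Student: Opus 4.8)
The plan is to propagate the non-degeneracy assumption $\inf_{\text{\textgreek{g}}_{0;u_{1}u_{2}}}\partial_{v}r>0$ off the mirror into a uniform two-sided bound on $r$ on all of $\mathcal{D}$, and then invoke Lemma \ref{lem:cotinuationCriterion} (the continuation criterion away from $r=0$ and $r=+\infty$) on a slightly enlarged rectangle. First I would record the elementary consequences of the setup on $\text{\textgreek{g}}_{0;u_{1}u_{2}}$: since $r|_{\text{\textgreek{g}}_{0;u_{1}u_{2}}}=r_{0}$ is constant and $u_{2}-u_{1}=v_{2}-v_{1}$, the upper-right corner $(u_{2},v_{2})$ is the future endpoint of the mirror, and $\partial_{v}r$ is bounded below by some $c>0$ there (and above, by smoothness up to any point strictly before $(u_{2},v_{2})$, plus the smooth extension of the characteristic data at $v=v_{2}$, $u=u_{2}$ implicit in the statement — one has smooth data along $\{u=u_{1}\}$ and along the portion of the mirror). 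From the gauge condition $\partial_{u}r=-\partial_{v}r$ on $\text{\textgreek{g}}_{0;u_{1}u_{2}}$ we also get $-\partial_{u}r$ bounded below by $c$ there.

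The core step is a lower bound $r\ge r_{0}$ and an upper bound $r\le R<+\infty$ on $\mathcal{D}$. For the lower bound: I would show $\partial_{v}r>0$ holds not just on the mirror but throughout the causal past of the mirror inside $\mathcal{D}$. Indeed, integrating the constraint equation (\ref{eq:ConstrainVFinal}) (equivalently (\ref{eq:Constraint-1-1})) along outgoing rays from $\{u=u_{1}\}$ shows $\text{\textgreek{W}}^{-2}\partial_{v}r$ is non-increasing in $v$, while integrating (\ref{eq:RequationFinal}) shows $r\partial_{u}r$ evolves in a controlled way; combining these with the mirror data and the smooth data on $\{u=u_{1}\}$ gives $\partial_{v}r>0$ on all of $\mathcal{D}\backslash\{(u_{2},v_{2})\}$ by an open–closed argument, exactly as in the proof of Theorem \ref{thm:maximalExtension} near $\text{\textgreek{g}}_{0}$. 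Since $r=r_{0}$ on the mirror and the mirror is the ``$v=u$'' boundary, positivity of $\partial_{v}r$ forces $r> r_{0}$ in the interior and $r\ge r_{0}$ on $\mathcal{D}$, giving (\ref{eq:BoundedBelowRContinuation}). For the upper bound (\ref{eq:BoundedAboveRContinuation}): $r$ is continuous on the compact set $\mathcal{D}\backslash\{(u_{2},v_{2})\}$ and smooth up to the data hypersurfaces; using $\partial_{u}r<0$ (which follows from the sign of $\text{\textgreek{W}}^{2}>0$ in (\ref{eq:RequationFinal}) together with $\partial_{v}r>0$, since $\partial_u(r\partial_v r) = -\frac14(1-\Lambda r^2)\text{\textgreek{W}}^2 < 0$) we can bound $r$ on any incoming ray by its value on $\{u=u_{1}\}$, which is bounded by $R_{1}\doteq\max_{[v_{1},v_{2}]}r_{/}$. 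Hence $r\le R_{1}$ on $\mathcal{D}$.

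With $0<r_{0}\le r\le R_{1}$ in hand, I would finish by applying Lemma \ref{lem:cotinuationCriterion}. The geometry is slightly different — $\mathcal{D}$ is a triangle, not a rectangle, and the ``missing'' point $(u_{2},v_{2})$ sits on the diagonal mirror rather than at a rectangle corner — so the clean way is: pick $u_{2}'>u_{2}$, $v_{2}'>v_{2}$, and first extend the solution a little past the mirror using the local existence result of Proposition \ref{Prop:LocalExistenceTypeIII} (boundary–double characteristic data at $r=r_{0}$), whose hypotheses are met precisely because $\partial_{v}r>0$ on the mirror and $\partial_{u}r<0$; this yields a smooth solution on a full rectangular neighborhood $[u_{1},u_{2}')\times[v_{1},v_{2}')$ minus possibly the corner, on which $r$ is still bounded between two positive constants by the argument above. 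Then Lemma \ref{lem:cotinuationCriterion} applied to this rectangle gives smooth extension of $(r,\log(\text{\textgreek{W}}^{2}))$ across $(u_{2},v_{2})$, and hence onto all of $\mathcal{D}$; smoothness of $(\bar{f}_{in},\bar{f}_{out})$ then follows by integrating (\ref{eq:IngoingVlasovFinal})–(\ref{eq:OutgoingVlasovFinal}) with the reflecting boundary condition (\ref{eq:LeftBoundaryCondition}) along the mirror. The main obstacle I anticipate is the bookkeeping around the corner point $(u_{2},v_{2})$: one must be careful that the a priori bounds on $r$ (and on the mass $\tilde m$, needed for Lemma \ref{lem:cotinuationCriterion}'s conclusion) are genuinely uniform up to that point, which is where the monotonicity $\partial_v r>0$ and the conservation-type identities for $r^2T_{uu}$, $r^2T_{vv}$ along null directions do the real work; everything else is a routine adaptation of the arguments already used for Theorem \ref{thm:maximalExtension}.
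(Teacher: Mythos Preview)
Your strategy lands in essentially the same place as the paper's --- both ultimately apply Proposition~\ref{Prop:LocalExistenceTypeIII} to boundary--double characteristic data on $[u_{1},u_{2}-\delta]\times\{v_{2}-\delta\}\cup\{u_{1}\}\times[v_{2}-\delta,v_{2}+\delta]$ to push the solution smoothly through $(u_{2},v_{2})$. But there is a real gap in how you invoke that proposition.

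You write that the hypotheses of Proposition~\ref{Prop:LocalExistenceTypeIII} ``are met precisely because $\partial_{v}r>0$ on the mirror and $\partial_{u}r<0$.'' This is not correct: the proposition requires the finiteness of the quantity $M$ in \eqref{eq:UpperBoundInitialData-1}, i.e.\ uniform control of $\log\big(\Omega^{2}/(1-\tfrac{1}{3}\Lambda r^{2})\big)$, $\log\big((-\partial_{u}r)/(1-\tfrac{2m}{r})\big)$, $\log\big((1-\tfrac{2m}{r})/(1-\tfrac{1}{3}\Lambda r^{2})\big)$, $|\tilde m|$, and $\int rT_{uu}\,du$ on the data, \emph{uniformly as $\delta\to 0$}. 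Sign conditions on $\partial_{u}r,\partial_{v}r$ do not give this. The paper's proof is precisely the derivation of this bound --- equation~\eqref{eq:BoundToShowContinuation} --- and it occupies essentially the whole argument: one integrates \eqref{eq:RequationFinal} from $\gamma_{0;u_{1}u_{2}}$ to get $r\partial_{u}r\le -r_{0}\lambda_{0}$ and hence $r_{0}\le r\le r_{+}$; then $r\partial_{v}r$ bounds from integrating \eqref{eq:RequationFinal} in $u$; then $1-\tfrac{2m}{r}>0$ via \eqref{eq:RelationHawkingMass} and a quantitative lower bound via the monotonicity of $\partial_{v}r/(1-\tfrac{2m}{r})$; then $r^{2}T_{vv}$ by conservation \eqref{eq:ConservationT_vv} and $r^{2}T_{uu}$ by the reflecting condition \eqref{eq:LeftBoundaryConditionT}; then $\log\big((-\partial_{u}r)/(1-\tfrac{2m}{r})\big)$ by integrating \eqref{eq:DerivativeInVDirectionKappaBar}, hence $\log\Omega^{2}$; and finally $|\tilde m|$ by integrating \eqref{eq:DerivativeTildeUMass}. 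You have sketched the $r$-bounds correctly and flagged the mass and $T$-bounds as the ``main obstacle,'' but you have not actually established them, and that is the entire content of the proof. (Minor point: your derivation of $\partial_{u}r<0$ from $\partial_{u}(r\partial_{v}r)<0$ is the wrong monotonicity --- you need $\partial_{v}(r\partial_{u}r)<0$, integrated in $v$ from the mirror where $r\partial_{u}r=-r_{0}\partial_{v}r\le -r_{0}\lambda_{0}$.)

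Second, the detour through Lemma~\ref{lem:cotinuationCriterion} is superfluous and does not fit the geometry. Once the $M$-bound is in hand and Proposition~\ref{Prop:LocalExistenceTypeIII} applies, the resulting development is already smooth on $\mathcal{V}\cup\gamma_{0;\mathcal{V}}$, which contains $(u_{2},v_{2})$; there is nothing left for Lemma~\ref{lem:cotinuationCriterion} to do. Moreover, you cannot form a genuine open rectangle $[u_{1},u_{2}')\times[v_{1},v_{2}')$ containing $(u_{2},v_{2})$ on which the solution is defined, since the mirror $\{u=v\}$ remains the boundary of the domain no matter how far you extend with Proposition~\ref{Prop:LocalExistenceTypeIII}. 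The paper simply skips this step.
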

\begin{proof}
It suffices to show that 
\begin{equation}
M\doteq\sup_{\mathcal{D}\backslash\{u_{2}\}\times\{v_{2}\}}\Bigg\{\big|\log\text{\textgreek{W}}^{2}\big|+\big|\log(\frac{r}{r_{0}})|+\Big|\log(\partial_{v}r)\Big|+\Big|\log(-\partial_{u}r)\Big|+\Big|\log\Big(1-\frac{2m}{r}\Big)\Big|+\sqrt{-\Lambda}|\tilde{m}|+r^{2}T_{vv}+r^{2}T_{uu}\Bigg\}<+\infty.\label{eq:BoundToShowContinuation}
\end{equation}
Provided (\ref{eq:BoundToShowContinuation}) has been established,
in view of the fact that $(r,\text{\textgreek{W}}^{2},\bar{f}_{in},\bar{f}_{out})$
are smooth on $\mathcal{D}\backslash\{u_{2}\}\times\{v_{2}\}$, choosing
some $\text{\textgreek{d}}>0$ small enough in tems of $M$, $r_{0}$
and $v_{2}-v_{1}$, we can smoothly extend the boundary-double characteristic
initial data set 
\[
(r_{/},\text{\textgreek{W}}_{/}^{2},\bar{f}_{in/},\bar{f}_{out/})^{\prime}\doteq(r,\text{\textgreek{W}}^{2},\bar{f}_{in},\bar{f}_{out})|_{u=u_{1}}
\]
and 
\[
(r_{\backslash},\text{\textgreek{W}}_{\backslash}^{2},\bar{f}_{in\backslash},\bar{f}_{out\backslash})^{\prime}\doteq(r,\text{\textgreek{W}}^{2},\bar{f}_{in},\bar{f}_{out})|_{v=v_{2}-\text{\textgreek{d}}}
\]
induced by $(r,\text{\textgreek{W}}^{2},\bar{f}_{in},\bar{f}_{out})$
on $\big([u_{1},u_{2}-\text{\textgreek{d}}]\times\{v_{2}-\text{\textgreek{d}}\}\big)\cup\big(\{u_{1}\}\times[v_{2}-\text{\textgreek{d}},v_{2}]\big)$
(see Definition \ref{def:TypeIII}), to a boundary-double characteristic
initial data set $(r_{/},\text{\textgreek{W}}_{/}^{2},\bar{f}_{in/},\bar{f}_{out/})$,
$(r_{\backslash},\text{\textgreek{W}}_{\backslash}^{2},\bar{f}_{in\backslash},\bar{f}_{out\backslash})$
 on $\big([u_{1},u_{2}-\text{\textgreek{d}}]\times\{v_{2}-\text{\textgreek{d}}\}\big)\cup\big(\{u_{1}\}\times[v_{2}-\text{\textgreek{d}},v_{2}+\text{\textgreek{d}}]\big)$,
satisfying 
\begin{align}
\max_{u\in[u_{1},u_{2}-\text{\textgreek{d}}]} & \Bigg\{\Big|\log\big(\frac{\text{\textgreek{W}}_{\backslash}^{2}}{1-\frac{1}{3}\Lambda r_{\backslash}^{2}}\big)\Big|+\Big|\log\Big(\frac{-2\partial_{u}r_{\backslash}}{1-\frac{2m_{\backslash}}{r_{\backslash}}}\Big)\Big|+\Big|\log\Big(\frac{1-\frac{2m_{\backslash}}{r_{\backslash}}}{1-\frac{1}{3}\Lambda r_{\backslash}^{2}}\Big)\Big|\Bigg\}(u)+\int_{u_{1}}^{u_{2}-\text{\textgreek{d}}}r_{\backslash}(T_{uu})_{\backslash}\, d\bar{u}+\label{eq:UpperBoundInitialData-1-1}\\
 & +\max_{v\in[v_{2}-\text{\textgreek{d}},v_{2}+\text{\textgreek{d}}]}\Bigg\{\Big|\log\big(\frac{\text{\textgreek{W}}_{/}^{2}}{1-\frac{1}{3}\Lambda r_{/}^{2}}\big)\Big|+\Big|\log\Big(\frac{2\partial_{v}r_{/}}{1-\frac{2m_{/}}{r_{/}}}\Big)\Big|+\Big|\log\Big(\frac{1-\frac{2m_{/}}{r_{/}}}{1-\frac{1}{3}\Lambda r_{/}^{2}}\Big)\Big|+\sqrt{-\Lambda}|\tilde{m}_{/}|\Bigg\}(v)\le2M,\nonumber 
\end{align}
\begin{equation}
(v_{2}+\text{\textgreek{d}})-(v_{2}-\text{\textgreek{d}})<\frac{u_{in}(2e^{-C_{0}^{2}((-\Lambda)(u_{2}-u_{1})^{2}+1)M}M)}{e^{C_{0}^{2}((-\Lambda)(u_{2}-u_{1})^{2}+1)}}\label{eq:U0UpperBound-1-2}
\end{equation}
and 
\begin{equation}
\int_{v_{2}-\text{\textgreek{d}}}^{v_{2}+\text{\textgreek{d}}}r_{/}(T_{vv})_{/}\, d\bar{v}<2e^{-C_{0}^{2}((-\Lambda)(u_{2}-u_{1})^{2}+1)M}M.
\end{equation}
Therefore, by applying Proposition \ref{Prop:LocalExistenceTypeIII}
for the boundary-double characteristic initial data set $(r_{/},\text{\textgreek{W}}_{/}^{2},\bar{f}_{in/},\bar{f}_{out/})$,
$(r_{\backslash},\text{\textgreek{W}}_{\backslash}^{2},\bar{f}_{in\backslash},\bar{f}_{out\backslash})$
 on $\big([u_{1},u_{2}-\text{\textgreek{d}}]\times\{v_{2}-\text{\textgreek{d}}\}\big)\cup\big(\{u_{1}\}\times[v_{2}-\text{\textgreek{d}},v_{2}+\text{\textgreek{d}}]\big)$,
we readily infer that this initial data set admits a smooth development
on $\{u_{1}<u<u_{2}-\text{\textgreek{d}}\}\cap\{u\le v\}\cap\{v_{2}-\text{\textgreek{d}}<v<v_{2}+\text{\textgreek{d}}\}$,
which coincides with $(r,\text{\textgreek{W}}^{2},\bar{f}_{in},\bar{f}_{out})$
on $\mathcal{D}\backslash\{u_{2}\}\times\{v_{2}\}$. This fact then
implies the statement of Lemma \ref{lem:cotinuationCriterionMirror}.

Let us set 
\begin{equation}
\text{\textgreek{l}}_{0}=\inf_{\text{\textgreek{g}}_{0;u_{1}u_{2}}}\partial_{v}r.\label{eq:LowerBoundDvRAxis}
\end{equation}
Note that $\text{\textgreek{l}}_{0}>0$, in view of (\ref{eq:NonNullMirrorContinuation}). 

By integrating equation (\ref{eq:RequationFinal}) in $v$ and using
(\ref{eq:ConditionRMirrorContinuation}), (\ref{eq:gaugeconditionMirrorContinuation})
and (\ref{eq:LowerBoundDvRAxis}), we readily obtain that 
\begin{equation}
r\partial_{u}r\le-r_{0}\text{\textgreek{l}}_{0}<0\label{eq:NegativeDuDerivativeContinuation}
\end{equation}
on $\mathcal{D}\backslash\{u_{2}\}\times\{v_{2}\}$. Therefore, (\ref{eq:NegativeDuDerivativeContinuation})
and (\ref{eq:ConditionRMirrorContinuation}) imply that 
\begin{equation}
r_{0}\le r\le\max_{\{u_{1}\}\times[v_{1},v_{2}]}r\doteq r_{+}<+\infty\label{eq:BoundsForRContinuation}
\end{equation}
on $\mathcal{D}\backslash\{u_{2}\}\times\{v_{2}\}$. Furthermore,
integrating (\ref{eq:RequationFinal}) in $u$ and using (\ref{eq:ConditionRMirrorContinuation})
and (\ref{eq:LowerBoundDvRAxis}), we obtain for all points in $\mathcal{D}\backslash\{u_{2}\}\times\{v_{2}\}$:
\begin{equation}
0<r_{0}\text{\textgreek{l}}_{0}\le r\partial_{v}r\le\max_{\{u_{1}\}\times[v_{1},v_{2}]}r\partial_{v}r.\label{eq:LowerBoundDvRContinuation}
\end{equation}

In view of the (\ref{eq:RelationHawkingMass}), the fact that $\text{\textgreek{W}}^{2}>0$
on $\mathcal{D}\backslash\{u_{2}\}\times\{v_{2}\}$ combined with
(\ref{eq:NegativeDuDerivativeContinuation}), (\ref{eq:BoundsForRContinuation})
and (\ref{eq:LowerBoundDvRContinuation}), implies that 
\begin{equation}
1-\frac{2m}{r}>0\label{eq:NonTrappingContinuation}
\end{equation}
everywhere on $\mathcal{D}\backslash\{u_{2}\}\times\{v_{2}\}$. Setting
\begin{equation}
\text{\textgreek{d}}_{0}\doteq\min_{\{u_{1}\}\times[v_{1},v_{2}]}\big(1-\frac{2m}{r}\big),\label{eq:LowerBoundTrappingContinuation}
\end{equation}
inequality (\ref{eq:NonTrappingContinuation}) impies that $\text{\textgreek{d}}_{0}>0$.
In view of (\ref{eq:NegativeDuDerivativeContinuation}) and the fact
that $T_{uu}\ge0$, the relation (\ref{eq:DerivativeInVDirectionKappaBar})
(which is well defined on $\mathcal{D}\backslash\{u_{2}\}\times\{v_{2}\}$
in view of (\ref{eq:LowerBoundDvRContinuation}) and (\ref{eq:NonTrappingContinuation}))
yields 
\begin{equation}
\partial_{u}\log\big(\frac{\partial_{v}r}{1-\frac{2m}{r}}\big)\le0.\label{eq:NegativeDerivatineKapaContinuation}
\end{equation}
Integrating (\ref{eq:NegativeDerivatineKapaContinuation}) in $u$
starting from $u=u_{1}$, we obtain for all points in $\mathcal{D}\backslash\{u_{2}\}\times\{v_{2}\}$:
\begin{equation}
\frac{\partial_{v}r}{1-\frac{2m}{r}}\le\max_{\{u_{1}\}\times[v_{1},v_{2}]}\frac{\partial_{v}r}{1-\frac{2m}{r}}\doteq\text{\textgreek{k}}_{0}<+\infty.\label{eq:UpperBoundKappaContinuaton}
\end{equation}
The bounds (\ref{eq:BoundsForRContinuation}), (\ref{eq:LowerBoundDvRContinuation}),
(\ref{eq:LowerBoundTrappingContinuation}) and (\ref{eq:NegativeDerivatineKapaContinuation})
yield for all points in $\mathcal{D}\backslash\{u_{2}\}\times\{v_{2}\}$:
\begin{equation}
1-\frac{2m}{r}\ge\text{\textgreek{k}}_{0}^{-1}\text{\textgreek{l}}_{0}r_{0}r_{+}^{-1}>0.\label{eq:LowerBoundTrapping}
\end{equation}

Integrating (\ref{eq:ConservationT_vv}) in $u$, we obtain for all
points in $\mathcal{D}\backslash\{u_{2}\}\times\{v_{2}\}$: 
\begin{equation}
r^{2}T_{vv}\le\max_{\{u_{1}\}\times[v_{1},v_{2}]}(r^{2}T_{vv})<+\infty.\label{eq:UpperBoundTvvContinuation}
\end{equation}
Since $\bar{f}_{in},\bar{f}_{out}$ satisfy the boundary condition
(\ref{eq:LeftBoundaryCondition}) on $\text{\textgreek{g}}_{0;u_{1}u_{2}}$
with $w=r-r_{0}$, $T_{uu}|_{\text{\textgreek{g}}_{0;u_{1}u_{2}}}$
and $T_{vv}|_{\text{\textgreek{g}}_{0;u_{1}u_{2}}}$ are related by
(\ref{eq:LeftBoundaryConditionT}), i.\,e. (in view of (\ref{eq:ConditionRMirrorContinuation})
and (\ref{eq:gaugeconditionMirrorContinuation})): 
\begin{equation}
\frac{T_{uu}}{T_{vv}}\Big|_{\text{\textgreek{g}}_{0;u_{1}u_{2}}}=1.\label{eq:ReflectionTContinuation}
\end{equation}
Therefore, integrating (\ref{eq:ConservationT_uu}) in $v$ and using
(\ref{eq:ReflectionTContinuation}), (\ref{eq:NonNullMirrorContinuation})
and (\ref{eq:UpperBoundTvvContinuation}), we infer that, for all
points in $\mathcal{D}\backslash\{u_{2}\}\times\{v_{2}\}$: 
\begin{equation}
r^{2}T_{uu}\le\max_{\{u_{1}\}\times[v_{1},v_{2}]}(r^{2}T_{vv})<+\infty.\label{eq:UpperBoundTuuContinuation}
\end{equation}
Integrating (\ref{eq:DerivativeInVDirectionKappaBar}) in $u$ starting
from $u=u_{1}$ and using (\ref{eq:NegativeDuDerivativeContinuation}),
(\ref{eq:UpperBoundTuuContinuation}) and (\ref{eq:BoundsForRContinuation}),
we obtain: 
\begin{equation}
\sup_{\mathcal{D}\backslash\{u_{2}\}\times\{v_{2}\}}\Big|\log\Big(\frac{\partial_{v}r}{1-\frac{2m}{r}}\Big)\Big|<+\infty.\label{eq:UpperBoundKappaContinuation}
\end{equation}

The relation (\ref{eq:DerivativeInVDirectionKappaBar}) is well defined
on $\mathcal{D}\backslash\{u_{2}\}\times\{v_{2}\}$, in view of (\ref{eq:NegativeDuDerivativeContinuation})
and (\ref{eq:NonTrappingContinuation}). Integrating (\ref{eq:DerivativeInVDirectionKappaBar})
in $v$ starting from $\text{\textgreek{g}}_{0;u_{1}u_{2}}$ and using
(\ref{eq:gaugeconditionMirrorContinuation}), (\ref{eq:BoundsForRContinuation}),
(\ref{eq:LowerBoundDvRContinuation}), (\ref{eq:UpperBoundKappaContinuation})
and (\ref{eq:UpperBoundTvvContinuation}), we infer that 
\begin{equation}
\sup_{\mathcal{D}\backslash\{u_{2}\}\times\{v_{2}\}}\Big|\log\Big(\frac{-\partial_{u}r}{1-\frac{2m}{r}}\Big)\Big|<\infty.\label{eq:BoundKappaBarContinuatio}
\end{equation}
Thus, (\ref{eq:RelationHawkingMass}), (\ref{eq:LowerBoundDvRContinuation})
and (\ref{eq:BoundKappaBarContinuatio}) yield: 
\begin{equation}
\sup_{\mathcal{D}\backslash\{u_{2}\}\times\{v_{2}\}}\Big|\log(\text{\textgreek{W}}^{2})\Big|<+\infty.\label{eq:BoundOmegaContinuation}
\end{equation}

Finally, by integrating equation 
\begin{equation}
\partial_{u}\tilde{m}=-2\pi\big(1-\frac{2\tilde{m}}{r}-\frac{1}{3}\Lambda r^{2}\big)\cdot\frac{r^{2}T_{uu}}{-\partial_{u}r}
\end{equation}
in $u$ starting from $u=u_{1}$ and using (\ref{eq:NegativeDuDerivativeContinuation}),
(\ref{eq:BoundsForRContinuation}) and (\ref{eq:UpperBoundTuuContinuation}),
we infer that: 
\begin{equation}
\sup_{\mathcal{D}\backslash\{u_{2}\}\times\{v_{2}\}}|\tilde{m}|<+\infty.\label{eq:AbsoluteBoundRenormalisdMassContinuation}
\end{equation}
The bound (\ref{eq:BoundToShowContinuation}) now readily follows
from (\ref{eq:BoundsForRContinuation}), (\ref{eq:LowerBoundDvRContinuation}),
(\ref{eq:NonTrappingContinuation}), (\ref{eq:UpperBoundTvvContinuation}),
(\ref{eq:UpperBoundTuuContinuation}), (\ref{eq:BoundKappaBarContinuatio}),
(\ref{eq:BoundOmegaContinuation}) and (\ref{eq:AbsoluteBoundRenormalisdMassContinuation}).
\end{proof}

\subsection{\label{sub:ProofOfProp}Proof of Theorem \ref{thm:maximalExtension}}

The construction of the maximal future development $(\mathcal{U};r,\text{\textgreek{W}}^{2},\bar{f}_{in},\bar{f}_{out})$
will be performed in two steps: In the first step, we will construct
the domain of outer communications $(J^{-}(\mathcal{I})\cap\mathcal{U};r,\text{\textgreek{W}}^{2},\bar{f}_{in},\bar{f}_{out})$
of the maximal future development, using Proposition \ref{Prop:LocalExistenceTypeII}
as the main tool. In the second step, we will construct the rest of
the maximal future development, i.\,e.~$(\mathcal{U}\backslash J^{-}(\mathcal{I});r,\text{\textgreek{W}}^{2},\bar{f}_{in},\bar{f}_{out})$,
using Propositions \ref{Prop:LocalExistenceTypeIII} and \ref{prop:LocalExistenceTypeI}
as the main tool. Notice that $\mathcal{U}\backslash J^{-}(\mathcal{I})$
can possibly be empty; in the proof, we will actually consider the
case $\mathcal{U}\backslash J^{-}(\mathcal{I})=\emptyset$ separately. 

The uniqueness and maximality of $(\mathcal{U};r,\text{\textgreek{W}}^{2},\bar{f}_{in},\bar{f}_{out})$
will follow readily from our construction, in conjunction with the
uniqueness statements of Propositions \ref{Prop:LocalExistenceTypeII},
\ref{Prop:LocalExistenceTypeIII} and \ref{prop:LocalExistenceTypeI}.
The properties 1--6 of $(\mathcal{U};r,\text{\textgreek{W}}^{2},\bar{f}_{in},\bar{f}_{out})$
stated in Theorem \ref{thm:maximalExtension} will also be established
during the construction of $(\mathcal{U};r,\text{\textgreek{W}}^{2},\bar{f}_{in},\bar{f}_{out})$.

In order to better keep track of the notations introduced throughout
the proof, the reader is advised to refer to Figure \ref{fig:pieces}.

\begin{figure}[h!] 
\centering 
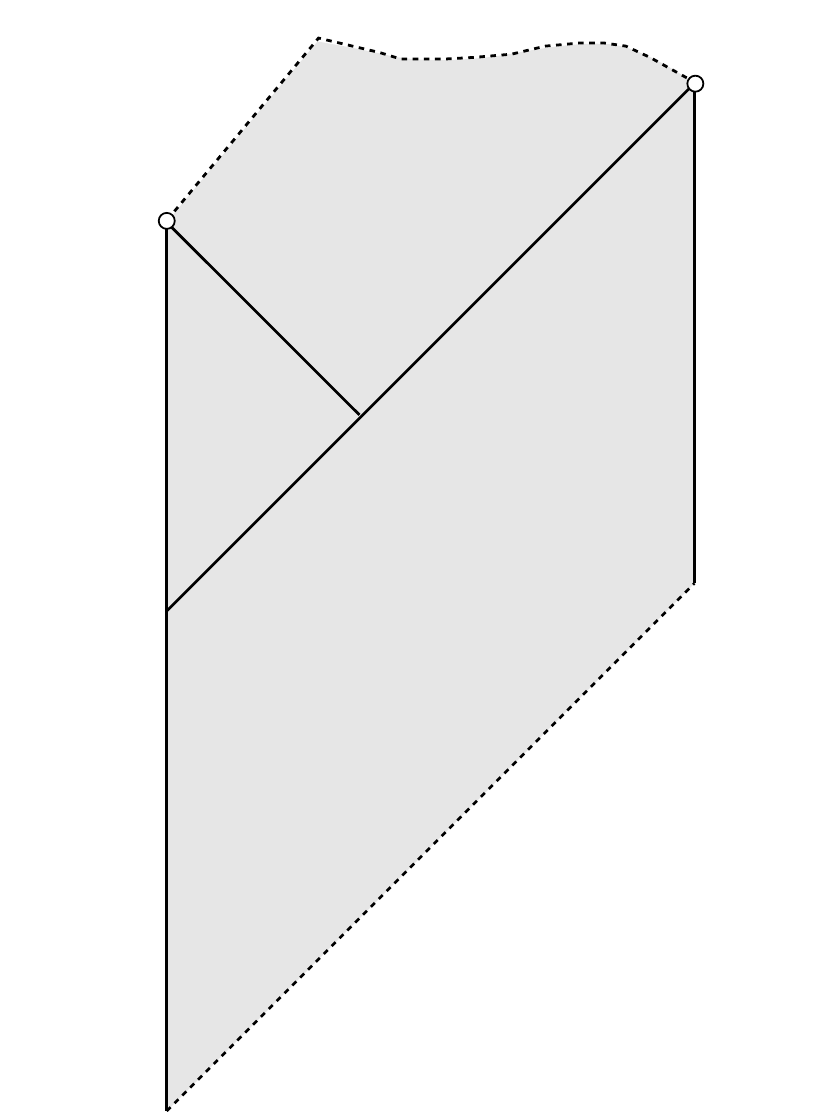 
\caption{The construction of the maximal development $(\mathcal{U};r,\Omega^2,\bar{f}_{in},\bar{f}_{out}$ of $(r_{\slash},\Omega^2_{\slash},\bar{f}_{in\slash},\bar{f}_{out\slash})$ will proceed in two steps: In the first step, we will construct the domain of outer communications $\mathcal{U}\backslash \mathcal{I}$, corresponding to the region $\mathcal{U}_{*}=\{0<u<u_{*}\}$ depicted above. In the case when $u_{*}<+\infty$ (which is the case depicted), the solution will have a non-empty future event horizon $\mathcal{H}^{+}=\{u=u_{*}\}$, and the second step of the construction will consist of constructing the part of the solution lying to the future of $\mathcal{H}^{+}$. In the figure, this corresponds to the domains $\mathcal{W}_{*}$ and $\mathcal{V}_{*}$. The construction of $\mathcal{U}$ will require the use of Proposition \ref{Prop:LocalExistenceTypeII} in the region $\mathcal{U}_{*}$, Proposition \ref{Prop:LocalExistenceTypeIII} in the region $\mathcal{W}_{*}$ and Proposition \ref{Prop:LocalExistenceTypeI} in the region $\mathcal{V}_{*}$.}
\label{fig:pieces}
\end{figure}

\subsubsection*{Step 1: Construction of $J^{-}(\mathcal{I})\cap\mathcal{U}$}

We will first construct the domain of outer communications $(J^{-}(\mathcal{I})\cap\mathcal{U};r,\text{\textgreek{W}}^{2},\bar{f}_{in},\bar{f}_{out})$
of the maximal future development $(\mathcal{U};r,\text{\textgreek{W}}^{2},\bar{f}_{in},\bar{f}_{out})$
of $(r_{/},\text{\textgreek{W}}_{/}^{2},\bar{f}_{in/},\bar{f}_{out/})$.

Let $\mathscr{U}_{\mathcal{I}}$ be the set of all developments $\mathscr{D}=(\mathcal{U}_{\mathscr{D}};r,\text{\textgreek{W}}^{2},\bar{f}_{in},\bar{f}_{out})$
of $(r_{/},\text{\textgreek{W}}_{/}^{2},\bar{f}_{in/},\bar{f}_{out/})$
(according to Definition \ref{def:Development}) such that, for some
$0<u_{\mathscr{D}}<+\infty$: 
\begin{equation}
\mathcal{U}_{\mathscr{D}}=\big\{0<u<u_{\mathscr{D}}\big\}\cap\big\{ u<v<u+v_{0}\big\}.\label{eq:DomainOfExtension}
\end{equation}
 In view of Proposition \ref{Prop:LocalExistenceTypeII}, $\mathscr{U}_{\mathcal{I}}\neq\emptyset$.
Furthermore, in view of the remark below Definition \ref{def:Development}
and the form (\ref{eq:DomainOfExtension}) of the domain of the developments
belonging to $\mathscr{U}_{\mathcal{I}}$, any two developments $\mathscr{D}_{1},\mathscr{D}_{2}\in\mathscr{U}_{\mathcal{I}}$
will satisfy $\mathscr{D}_{1}\subseteq\mathscr{D}_{2}$ or $\mathscr{D}_{2}\subseteq\mathscr{D}_{1}$.
Therefore, there exists a unique $0<u_{*}\le+\infty$ and a unique
development $\mathscr{D}_{*}=(\mathcal{U}_{*};r,\text{\textgreek{W}}^{2},\bar{f}_{in},\bar{f}_{out})$
of $(r_{/},\text{\textgreek{W}}_{/}^{2},\bar{f}_{in/},\bar{f}_{out/})$,
where 
\begin{equation}
\mathcal{U}_{*}=\big\{0<u<u_{*}\big\}\cap\big\{ u<v<u+v_{0}\big\},\label{eq:DomainOfExtension-1}
\end{equation}
such that any $\mathscr{D}\in\mathscr{U}_{\mathcal{I}}$ satisfies
$\mathscr{D}\subseteq\mathscr{D}_{*}$. Let us set 
\begin{equation}
\mathcal{I}=\{u=v-v_{0}\}\cap\{0\le u<u_{*}\}
\end{equation}
and 
\begin{equation}
\text{\textgreek{g}}_{0}^{*}=\{u=v\}\cap\{0\le u<u_{*}\}.
\end{equation}

We will now establish that $\mathscr{D}_{*}$ has the property that,
for any $u^{\prime}<u_{*}$:

\begin{equation}
\sup_{\mathcal{U}_{*}\cap\{u\le u^{\prime}\}}\max\Bigg\{\Big|\log\big(\frac{\text{\textgreek{W}}^{2}}{1-\frac{1}{3}\Lambda r^{2}}\big)\Big|,\Big|\log\Big(\frac{2\partial_{v}r}{1-\frac{2m}{r}}\Big)\Big|,\Big|\log\Big(\frac{1-\frac{2m}{r}}{1-\frac{1}{3}\Lambda r^{2}}\Big)\Big|\Big|,\sqrt{-\Lambda}|\tilde{m}|,r^{2}T_{vv}\Bigg\}<+\infty\label{eq:BoundednessOfQuantitiesOutsideTheEventHorizon}
\end{equation}
and 

\begin{itemize}

\item{$u_{*}=+\infty$ or}

\item{$u_{*}<+\infty$ and 
\begin{equation}
\sup_{\mathcal{U}_{*}}\max\Bigg\{\Big|\log\big(\frac{\text{\textgreek{W}}^{2}}{1-\frac{1}{3}\Lambda r^{2}}\big)\Big|,\Big|\log\Big(\frac{2\partial_{v}r}{1-\frac{2m}{r}}\Big)\Big|,\Big|\log\Big(\frac{1-\frac{2m}{r}}{1-\frac{1}{3}\Lambda r^{2}}\Big)\Big|\Big|,\sqrt{-\Lambda}|\tilde{m}|,r^{2}T_{vv}\Bigg\}=+\infty.\label{eq:InfinityForMaximalDomainOfOuterCommunications}
\end{equation}
}

\end{itemize}

\noindent This can be inferred as follows: Let $\mathscr{D}=(\mathcal{U}_{\mathscr{D}};r,\text{\textgreek{W}}^{2},\bar{f}_{in},\bar{f}_{out})$
be any development of $(r_{/},\text{\textgreek{W}}_{/}^{2},\bar{f}_{in/},\bar{f}_{out/})$
belonging to the set $\mathscr{U}_{\mathcal{I}}$. Provided that 
\begin{equation}
M\doteq\sup_{\mathcal{U}_{\mathscr{D}}}\max\Bigg\{\Big|\log\big(\frac{\text{\textgreek{W}}^{2}}{1-\frac{1}{3}\Lambda r^{2}}\big)\Big|,\Big|\log\Big(\frac{2\partial_{v}r}{1-\frac{2m}{r}}\Big)\Big|,\Big|\log\Big(\frac{1-\frac{2m}{r}}{1-\frac{1}{3}\Lambda r^{2}}\Big)\Big|\Big|,\sqrt{-\Lambda}|\tilde{m}|,r^{2}T_{vv}\Bigg\}<+\infty,\label{eq:ForContinuingInitialDataInDomain}
\end{equation}
by applying Proposition \ref{Prop:LocalExistenceTypeII} for the initial
data induced by $(r,\text{\textgreek{W}}^{2},\bar{f}_{in},\bar{f}_{out})$
on $\{u=u_{\mathscr{D}}-u^{\prime}\}\times\{u\le v\le v_{0}+u\}$
for some $u^{\prime}$ small enough in terms of $r_{0},v_{0}$ and
$M$, we infer that there exists some $\mathscr{D}^{\prime}\in\mathscr{U}_{\mathcal{I}}$
strictly extending $\mathscr{D}$, i.\,e.~$\mathscr{D}\subseteq\mathscr{D}^{\prime}$
and $\mathscr{D}\neq\mathscr{D}^{\prime}$. Therefore, in view of
the inextendibility of $\mathscr{D}_{*}$ in $\mathscr{U}_{\mathcal{I}}$,
either $u_{*}=+\infty$, or (\ref{eq:InfinityForMaximalDomainOfOuterCommunications})
holds. Moreover, it can be readily verified that (\ref{eq:ForContinuingInitialDataInDomain})
always holds if $\mathscr{D}$ has a strict extension $\mathscr{D}^{\prime}$
in $\mathscr{U}_{\mathcal{I}}$. Therefore, (\ref{eq:BoundednessOfQuantitiesOutsideTheEventHorizon})
holds.

\paragraph*{Some basic estimates on $\mathcal{I}$ and $\mathcal{U}_{*}$.}

It follows readily from the proof of Proposition \ref{Prop:LocalExistenceTypeII}
that the quantities (\ref{eq:RhoVariable})--(\ref{eq:TVariable})
and $\tilde{m}$ extend smoothly on $\mathcal{I}$. The relations
(\ref{eq:DerivativeTildeUMass})--(\ref{eq:DerivativeTildeVMass})
and the conditions (\ref{eq:MirrorRMaximal}), (\ref{eq:GaugeMirrorMaximal}),
(\ref{eq:InfinityRMaximal}) and (\ref{eq:GaugeInfinityMaximal})
imply that $\tilde{m}$ is constant on $\text{\textgreek{g}}_{0}^{*}$
and $\mathcal{I}$, i.\,e. 
\begin{equation}
\tilde{m}|_{\text{\textgreek{g}}_{0}^{*}}=\tilde{m}_{/}(0)\mbox{ and }\tilde{m}|_{\mathcal{I}}=\lim_{v\rightarrow v_{0}^{-}}\tilde{m}_{/}(v).\label{eq:ConservationTildeM}
\end{equation}

The relations (\ref{eq:DerivativeTildeUMass})--(\ref{eq:DerivativeTildeVMass})
and the bound (\ref{eq:BoundednessOfQuantitiesOutsideTheEventHorizon})
imply that 
\begin{equation}
\partial_{u}\tilde{m}\le0\mbox{ and }\partial_{v}\tilde{m}\ge0\mbox{ on }\mathcal{U}_{*}
\end{equation}
 and, hence 
\begin{equation}
\tilde{m}_{/}(0)\le\tilde{m}\le\lim_{v\rightarrow v_{0}^{-}}\tilde{m}_{/}(v)\mbox{ on }\mathcal{U}_{*}.\label{eq:MassBounds}
\end{equation}
Moreover, the relations (\ref{eq:RelationHawkingMass}) and (\ref{eq:RenormalisedHawkingMass})
imply, in view of the fact that (\ref{eq:KappaVariable})--(\ref{eq:KappaBarVariable})
extend smoothly on $\mathcal{I}$, that the quantity 
\begin{equation}
\text{\textgreek{w}}\doteq\sqrt{\frac{\text{\textgreek{W}}^{2}}{1-\frac{1}{3}\Lambda r^{2}}}
\end{equation}
 extends smoothly on $\mathcal{I}$, with 
\begin{equation}
\text{\textgreek{w}}^{2}=4\Big(\frac{\partial_{v}r}{1-\frac{2m}{r}}\Big)^{2}\Big|_{\mathcal{I}}\label{eq:RelationConformalFactorInfinity}
\end{equation}
in view of (\ref{eq:RelationHawkingMass}), (\ref{eq:RenormalisedHawkingMass})
and (\ref{eq:GaugeInfinityMaximal}).

\subsubsection*{End of the proof in the case $u_{_{*}}=+\infty$}

In the case $u_{*}=+\infty$, we will set 
\begin{equation}
\mathcal{U}=\mathcal{U}_{*}.
\end{equation}
Note that, in this case, we necessarily have 
\begin{equation}
\mathcal{U}\backslash J(\mathcal{I})=\emptyset
\end{equation}
and, thus, 
\begin{equation}
\mathcal{H}^{+}=\emptyset
\end{equation}
 and 
\begin{equation}
\text{\textgreek{g}}_{0}=\text{\textgreek{g}}_{0}^{*}.
\end{equation}

In order to complete the proof of Theorem \ref{thm:maximalExtension}
in this case, it remains to establish (\ref{eq:NegativeDerivativeRMaximal})--(\ref{eq:D_vRPositiveMaximal}),
(\ref{eq:DerivativeTildeUMass})--(\ref{eq:DerivativeTildeVMass})
and (\ref{eq:CompleConformalInfinity}).

The bounds (\ref{eq:NegativeDerivativeRMaximal})--(\ref{eq:D_vRPositiveMaximal})
follow readily from (\ref{eq:BoundednessOfQuantitiesOutsideTheEventHorizon}),
while (\ref{eq:DerivativeTildeUMass})--(\ref{eq:DerivativeTildeVMass})
follow immediately from (\ref{eq:ConservationTildeM}) and the fact
that $\text{\textgreek{g}}_{0}=\text{\textgreek{g}}_{0}^{*}$.

The proof of (\ref{eq:CompleConformalInfinity}) will follow by showing
that 
\begin{equation}
\int_{\mathcal{I}}\text{\textgreek{w}}\, du=\lim_{u\rightarrow u_{*}}\int_{0}^{u}\text{\textgreek{w}}(\bar{u},\bar{u}+v_{0})\, d\bar{u}=+\infty.\label{eq:InfiniteConformalLengthToShow}
\end{equation}
In view of (\ref{eq:RelationConformalFactorInfinity}), it suffices
to show that, for any $u\ge v_{0}$: 
\begin{equation}
\int_{u-v_{0}}^{u}\Big(\frac{\partial_{v}r}{1-\frac{2m}{r}}\Big)\Big|_{\mathcal{I}}(\bar{u},\bar{u}+v_{0})\, d\bar{u}\ge c_{1}>0\label{eq:InequalityForDerivativeRAtInfinity}
\end{equation}
 for some absolute constant $c_{1}$ depending only on the initial
data $(r_{/},\text{\textgreek{W}}_{/}^{2},\bar{f}_{in/},\bar{f}_{out/})$. 

The lower bound (\ref{eq:InequalityForDerivativeRAtInfinity}) is
deduced as follows: Inequality (\ref{eq:DerivativeInUDirectionKappa})
and the bound (\ref{eq:BoundednessOfQuantitiesOutsideTheEventHorizon})
imply that 
\begin{equation}
\partial_{u}\Big(\frac{\partial_{v}r}{1-\frac{2m}{r}}\Big)\le0\label{eq:InequalityDerivativeKappa}
\end{equation}
on $\mathcal{U}_{*}$ and, thus, for any $u\ge v_{0}$ and $u\le v\le u+v_{0}$:
\begin{equation}
\Big(\frac{\partial_{v}r}{1-\frac{2m}{r}}\Big)\Big|_{\mathcal{I}}(v-v_{0},v)\ge\frac{\partial_{v}r}{1-\frac{2m}{r}}(u,v).\label{eq:InequalityKappa}
\end{equation}
 For any $u\ge0$, we compute: 
\begin{align}
\sqrt{-\frac{\Lambda}{3}}\int_{u}^{u+v_{0}}\frac{\partial_{v}r}{1-\frac{1}{3}\Lambda r^{2}}(u,v)\, dv & =\tan^{-1}(\sqrt{-\frac{\Lambda}{3}}r)\Big|_{\mathcal{I}}-\tan^{-1}(\sqrt{-\frac{\Lambda}{3}}r)\Big|_{\text{\textgreek{g}}_{0}}\label{eq:LowerBoundRDifferenceeMaximal}\\
 & =\frac{\pi}{2}-\tan^{-1}(\sqrt{-\frac{\Lambda}{3}}r_{0}).\nonumber 
\end{align}
Therefore, (\ref{eq:InequalityKappa}), (\ref{eq:LowerBoundRDifferenceeMaximal})
and (\ref{eq:MassBounds}) readily yield that, for any $u\ge0$ 
\begin{align}
\int_{u-v_{0}}^{u}\Big(\frac{\partial_{v}r}{1-\frac{2m}{r}}\Big)\Big|_{\mathcal{I}}(\bar{u},\bar{u}+v_{0})\, d\bar{u} & =\int_{u}^{u+v_{0}}\Big(\frac{\partial_{v}r}{1-\frac{2m}{r}}\Big)\Big|_{\mathcal{I}}(v-v_{0},v)\, dv\ge\\
 & \ge\int_{u}^{u+v_{0}}\Big(\frac{\partial_{v}r}{1-\frac{2m}{r}}\Big)(u,v)\, dv\ge\nonumber \\
 & \ge\sqrt{-\frac{3}{\Lambda}}\Big(1+\frac{2\max\{0,-\tilde{m}_{/}(0)\}}{r_{0}}\Big)^{-1}\cdot\Big(\frac{\pi}{2}-\tan^{-1}(\sqrt{-\frac{\Lambda}{3}}r_{0})\Big)\nonumber 
\end{align}
and, thus, (\ref{eq:InequalityForDerivativeRAtInfinity}) holds.

\subsubsection*{The case $u_{*}<+\infty$}

For the rest of the proof, we will assume without loss of generality
that 
\begin{equation}
u_{*}<+\infty.\label{eq:U_*<infty}
\end{equation}
We will set 
\begin{equation}
\mathcal{H}^{+}\doteq\{u=u_{*}\}\cap\{u_{*}\le v<u_{*}+v\}.
\end{equation}
We will show that $(r,\text{\textgreek{W}}^{2},\bar{f}_{in},\bar{f}_{out})$
extend smoothly beyond $\mathcal{H}^{+}$ and, therefore, in the case
(\ref{eq:U_*<infty}), we have 
\begin{equation}
\mathcal{U}\backslash J^{-}(\mathcal{I})\neq\emptyset\label{eq:NonEmptyBlackHole}
\end{equation}
 and $\mathcal{H}^{+}$ wll actually be the future evnt horizon defined
by (\ref{eq:DefinitionHorizon}).

Notice that, equations (\ref{eq:ConservationT_vv})--(\ref{eq:ConservationT_uu})
and the reflecting boundary conditions 
\begin{equation}
\frac{r^{2}T_{uu}}{r^{2}T_{vv}}\Big|_{\text{\textgreek{g}}_{0}^{*}}=1\mbox{ and }\frac{r^{2}T_{uu}}{r^{2}T_{vv}}\Big|_{\mathcal{I}}=1\label{eq:ReflectiveConditionsTMaximal}
\end{equation}
 imply that 
\begin{equation}
r^{2}T_{uu},r^{2}T_{vv}\le\sup_{[0,v_{0})}r_{/}^{2}(T_{vv})_{/}\label{eq:BoundEnergyMomentumTensorMaximal}
\end{equation}
and, thus, in view of (\ref{eq:RelationHawkingMass}), (\ref{eq:BoundednessOfQuantitiesOutsideTheEventHorizon}),
(\ref{eq:MassBounds}) and (\ref{eq:BoundEnergyMomentumTensorMaximal}),
the condition (\ref{eq:InfinityForMaximalDomainOfOuterCommunications})
is equivalent to 
\begin{equation}
\limsup_{\bar{u}\rightarrow u_{*}}\Bigg\{\sup_{u=\bar{u}}\Bigg(\Big|\log\Big(\frac{\partial_{v}r}{1-\frac{2m}{r}}\Big)\Big|+\Big|\log\Big(\frac{-\partial_{u}r}{1-\frac{2m}{r}}\Big)\Big|+\Big(1-\frac{2m}{r}\Big)^{-1}\Bigg)\Bigg\}=+\infty.\label{eq:InfinityForMaximalDomainOfOuterCommunicationsAlt}
\end{equation}

\paragraph*{Smooth extension across $\mathcal{H}^{+}$.}

We will now show that $(r,\log\text{\textgreek{W}}^{2})$ extend smoothly
on the whole of $\mathcal{H}^{+}$, and $(\bar{f}_{in},\bar{f}_{out})$
extends smoothly on the whole of $\mathcal{H}^{+}\times(0,+\infty)$
and, thus, (\ref{eq:NonEmptyBlackHole}) holds. 

Assuming, first, that $(r,\log\text{\textgreek{W}}^{2})$ extend smoothly
on $\{u_{*}\}\times\{u_{*}\}$ and $(\bar{f}_{in},\bar{f}_{out})$
extend smoothly on $\{u_{*}\}\times\{u_{*}\}\times(0,+\infty)$, we
can readily deduce that $(r,\log\text{\textgreek{W}}^{2})$ extends
smoothly on the whole of $\mathcal{H}^{+}$, and $(\bar{f}_{in},\bar{f}_{out})$
extends smoothly on the whole of $\mathcal{H}^{+}\times(0,+\infty)$,
by applying Lemma \ref{lem:cotinuationCriterion} for $u_{1}=u_{*}-\text{\textgreek{d}}$,
$u_{2}=u_{*}$, $v_{1}=u_{*}$, $v_{2}=u_{*}+v_{0}-2\text{\textgreek{d}}$
for any $0<\text{\textgreek{d}}\ll1$. 

Thus, it remains to show that $(r,\log\text{\textgreek{W}}^{2})$
extend smoothly on the point $\{u_{*}\}\times\{u_{*}\}$, and $(\bar{f}_{in},\bar{f}_{out})$
extend smoothly on $\{u_{*}\}\times\{u_{*}\}\times(0,+\infty)$. Provided
\begin{equation}
\limsup_{u\rightarrow u_{*}^{-}}\partial_{v}r(u,u)>0,\label{eq:LowerBoundDvROnAxis}
\end{equation}
the smooth extension of $(r,\log\text{\textgreek{W}}^{2})$ on $\{u_{*}\}\times\{u_{*}\}$
and $(\bar{f}_{in},\bar{f}_{out})$ on $\{u_{*}\}\times\{u_{*}\}\times(0,+\infty)$
follows readily from the continuation criterion of Lemma \ref{lem:cotinuationCriterionMirror}
with $u_{1}=v_{1}=0$ and $u_{2}=v_{2}=v_{0}$. Hence, it suffices
to establish (\ref{eq:LowerBoundDvROnAxis}). 

\medskip{}

\noindent \emph{Proof of (\ref{eq:LowerBoundDvROnAxis}).} Assume,
for the sake of contradiction, that (\ref{eq:LowerBoundDvROnAxis})
is false, i.\,e.
\begin{equation}
\limsup_{u\rightarrow u_{*}^{-}}\partial_{v}r(u,u)=0.\label{eq:ContradictionForTrappingAxis}
\end{equation}

In view of (\ref{eq:ConstrainVFinal}) and (\ref{eq:BoundednessOfQuantitiesOutsideTheEventHorizon}),
we can bound 
\begin{equation}
\partial_{v}r>0\label{eq:PositiveDerivativeU_*}
\end{equation}
and 
\begin{equation}
\partial_{v}(\text{\textgreek{W}}^{-2}\partial_{v}r)=-4\pi rT_{vv}\text{\textgreek{W}}^{-2}\le0\label{eq:ConcaveR}
\end{equation}
everywhere on $\mathcal{U}_{*}.$ Thus, (\ref{eq:MirrorRMaximal}),
(\ref{eq:ContradictionForTrappingAxis}), (\ref{eq:PositiveDerivativeU_*})
and (\ref{eq:ConcaveR}) imply that $r$ extends continuously on $\mathcal{H}^{+}$
so that 
\begin{equation}
r|_{\mathcal{H}^{+}}=r_{0}
\end{equation}
 and, for any $u_{*}<v<u_{*}+u_{0}$: 
\begin{equation}
\lim_{u\rightarrow u_{*}}T_{vv}(u,v)=0.\label{eq:ZeroFluxFromTheHorizon}
\end{equation}

Equations (\ref{eq:ConservationT_vv})--(\ref{eq:ConservationT_uu})
and the reflecting boundary conditions (\ref{eq:ReflectiveConditionsTMaximal})
imply, in view of (\ref{eq:ZeroFluxFromTheHorizon}), that 
\begin{equation}
T_{uu}(u,v)=T_{vv}(u,v)=0\label{eq:VanishingEnergyMomentumTensor}
\end{equation}
for all points $(u,v)\in\mathcal{U}_{*}\backslash\text{\textgreek{g}}_{\vdash}^{-}(u_{*},u_{*})$,
where $\text{\textgreek{g}}_{\vdash}^{-}(u_{*},u_{*})$ is the past
directed null geodesic emanating from $(u_{*},u_{*})$, reflected
successively on $\mathcal{I}$ and $\text{\textgreek{g}}_{0}^{*}$,
i.\,e.~ 
\[
\text{\textgreek{g}}_{\vdash}^{-}(u_{*},u_{*})=\cup_{n\in\mathbb{N}}\text{\textgreek{g}}_{\vdash}^{-(n)}(u_{*},u_{*})
\]
with 
\begin{equation}
\begin{cases}
\text{\textgreek{g}}_{\vdash}^{-(2n)}(u_{*},u_{*})=\{v=u_{*}-nv_{0}\}\cap\{u_{*}-(n+1)v_{0}\le u\le u_{*}-nv_{0}\}, & n\ge0,\\
\text{\textgreek{g}}_{\vdash}^{-(2n+1)}(u_{*},u_{*})=\{u=u_{*}-(n+1)v_{0}\}\cap\{u_{*}-(n+1)v_{0}\le v\le u_{*}-nv_{0}\}, & n\ge0.
\end{cases}
\end{equation}
Since $T_{uu},T_{vv}$ are smooth on $\mathcal{U}_{*}$, we infer
that (\ref{eq:VanishingEnergyMomentumTensor}) holds on the whole
of $\mathcal{U}_{*}$. Thus, $\tilde{m}$ is constant on $\mathcal{U}_{*}$.
Since $\partial_{v}r|_{\mathcal{S}_{v_{0}}}>0$, we infer that 
\begin{equation}
1-\frac{2\overline{M}}{r_{0}}-\frac{1}{3}\Lambda r_{0}^{2}>0.\label{eq:SchwarzschildOutsideHorizon}
\end{equation}

Equations (\ref{eq:DerivativeInUDirectionKappa}) and (\ref{eq:DerivativeInVDirectionKappaBar}),
combined with (\ref{eq:GaugeMirrorMaximal}) and (\ref{eq:GaugeInfinityMaximal}),
imply in this case that 
\begin{equation}
\Big|\log\Big(\frac{\partial_{v}r}{1-\frac{2\overline{M}}{r}-\frac{1}{3}\Lambda r^{2}}\Big)\Big|,\Big|\log\Big(\frac{-\partial_{u}r}{1-\frac{2\overline{M}}{r}-\frac{1}{3}\Lambda r^{2}}\Big)\Big|\le\sup_{\{0\}\times[0,v_{0})}\Big|\log\Big(\frac{\partial_{v}r}{1-\frac{2\overline{M}}{r}-\frac{1}{3}\Lambda r^{2}}\Big)\Big|\label{eq:BoundednessKappa&KappaBarSchwarzschild}
\end{equation}
everywhere on $\mathcal{U}_{*}$. Therefore, (\ref{eq:SchwarzschildOutsideHorizon})
and (\ref{eq:BoundednessKappa&KappaBarSchwarzschild}) imply that
(\ref{eq:InfinityForMaximalDomainOfOuterCommunicationsAlt}) is false,
which is a contradiction. 

\medskip{}

\noindent \emph{Useful bounds on $\mathcal{H}^{+}$.} Notice that,
in view of (\ref{eq:BoundednessOfQuantitiesOutsideTheEventHorizon}):
\begin{equation}
\partial_{v}r|_{\mathcal{H}^{+}}\ge0.\label{eq:IncreasingRHorizon}
\end{equation}
In addition, (\ref{eq:ConcaveR}) implies that, for any $\bar{v}\in[u_{*},u_{*}+v_{0}]$:
\begin{equation}
\partial_{v}r(u_{*},\bar{v})=0\mbox{ }\Rightarrow\mbox{ }\partial_{v}r(u_{*},v)=0\mbox{ for all}v\ge\bar{v}.\label{eq:D_vRCan'tvanish}
\end{equation}
Integrating (\ref{eq:RequationFinal}) in $v$ along $u=\bar{u}$,
$\bar{u}\le u_{*}$ and using (\ref{eq:GaugeMirrorMaximal}), we also
infer that 
\begin{equation}
\inf_{\mathcal{U}_{*}\cup\mathcal{H}^{+}}(-\partial_{u}r)>0.\label{eq:LowerBoundDuRHorizon}
\end{equation}

\paragraph*{Proof of (\ref{eq:UpperBoundRHorizon}) and (\ref{eq:TrappingAsymptoticallyHorizon}).}

In order to establish (\ref{eq:UpperBoundRHorizon}), we will first
establish 
\begin{equation}
\sup_{\mathcal{H}^{+}}r\le r_{S}\label{eq:UpperBoundRHorizonGeneral}
\end{equation}
and (\ref{eq:TrappingAsymptoticallyHorizon}), and then show that
\begin{equation}
\sup_{\mathcal{H}^{+}}r\ge r_{S}.\label{eq:LowerBoundRHorizonSpecialForMatterField}
\end{equation}

\medskip{}

\noindent \emph{Proof of (\ref{eq:UpperBoundRHorizon}) and (\ref{eq:TrappingAsymptoticallyHorizon}).}
The upper bound (\ref{eq:UpperBoundRHorizon}) follows by a contradiction
argument: Assuming that (\ref{eq:UpperBoundRHorizonGeneral}) is false,
in view of (\ref{eq:IncreasingRHorizon}) and (\ref{eq:LowerBoundDuRHorizon})
we infer that there exists some (possibly small) $\text{\textgreek{d}}>0$
such that 
\begin{equation}
\inf_{\mathcal{V}_{\text{\textgreek{d}}}}r\ge(1+\text{\textgreek{d}})r_{S},\label{eq:BigRAwayOnTheHorizon}
\end{equation}
where 
\begin{equation}
\mathcal{V}_{\text{\textgreek{d}}}\doteq\{u_{*}-\text{\textgreek{d}}\le u\le u_{*}\}\cap\{u_{*}+v_{0}-\text{\textgreek{d}}\le v<u+v_{0}\}.
\end{equation}
In view of (\ref{eq:MassBounds}) and (\ref{eq:DefinitionRs}), the
lower bound (\ref{eq:BigRAwayOnTheHorizon}) implies that 
\begin{equation}
C_{tr}\doteq\sup_{\mathcal{V}_{\text{\textgreek{d}}}}\Big\{\big(1-\frac{2m}{r}\big)^{-1}\Big\}<+\infty.\label{eq:NonTrappingAwayFromHorizon}
\end{equation}

By integrating (\ref{eq:DerivativeInUDirectionKappa}) in $u$ and
(\ref{eq:DerivativeInVDirectionKappaBar}) in $v$ and using condition
(\ref{eq:GaugeInfinityMaximal}) on $\mathcal{I}$, we infer that,
for any $u\in[u_{*}-\text{\textgreek{d}},u_{*})$: 
\begin{equation}
\log\Big(\frac{\partial_{v}r}{1-\frac{2m}{r}}\Big)(u_{*},u+v_{0})=\log\Big(\frac{-\partial_{u}r}{1-\frac{2m}{r}}\Big)(u,u_{*}+v_{0}-\text{\textgreek{d}})+4\pi\int_{u_{*}+v_{0}-\text{\textgreek{d}}}^{u+v_{0}}r\frac{T_{vv}}{\partial_{v}r}(u,\bar{v})\, d\bar{v}-4\pi\int_{u}^{u_{*}}r\frac{T_{uu}}{-\partial_{u}r}(\bar{u},u+v_{0})\, d\bar{u}.\label{eq:FromIntegrationDvDur}
\end{equation}
In view of (\ref{eq:DerivativeTildeUMass}), (\ref{eq:DerivativeTildeVMass}),
(\ref{eq:PositiveDerivativeU_*}) and (\ref{eq:NonTrappingAwayFromHorizon}),
we can estimate: 
\begin{align}
\Big|4\pi\int_{u_{*}+v_{0}-\text{\textgreek{d}}}^{u+v_{0}}r\frac{T_{vv}}{\partial_{v}r}(u,\bar{v})\, d\bar{v}-4\pi & \int_{u}^{u_{*}}r\frac{T_{uu}}{-\partial_{u}r}(\bar{u},u+v_{0})\, d\bar{u}\Big|\le\label{eq:BoundFrmTheMassAwayFromTrapping}\\
 & \le2C_{tr}r_{0}^{-1}\big(\tilde{m}(u,u+v_{0})-\tilde{m}(u,u_{*}+v_{0}-\text{\textgreek{d}})+\tilde{m}(u,u+v_{0})-\tilde{m}(u_{*},u+v_{0})\big).\nonumber 
\end{align}
Thus, in view of (\ref{eq:MassBounds}), (\ref{eq:LowerBoundDuRHorizon}),
(\ref{eq:NonTrappingAwayFromHorizon}) and (\ref{eq:BoundFrmTheMassAwayFromTrapping}),
from (\ref{eq:FromIntegrationDvDur}) we infer that 
\begin{equation}
\sup_{\mathcal{V}_{\text{\textgreek{d}}}}\Big|\log\Big(\frac{\partial_{v}r}{1-\frac{2m}{r}}\Big)\Big|<+\infty.\label{eq:UpperBoundKappaAwayFromHorizon}
\end{equation}

By integrating equation (\ref{eq:DerivativeInVDirectionKappaBar})
in $v$ starting from the point $(u,u_{*}+v_{0}-\text{\textgreek{d}})$
for any $u\in[u_{*}-\text{\textgreek{d}},u_{*})$ and using (\ref{eq:LowerBoundDuRHorizon}),
(\ref{eq:BoundEnergyMomentumTensorMaximal}), (\ref{eq:NonTrappingAwayFromHorizon})
and (\ref{eq:UpperBoundKappaAwayFromHorizon}), we also infer that
\begin{equation}
\sup_{\mathcal{V}_{\text{\textgreek{d}}}}\Big|\log\Big(-\frac{\partial_{u}r}{1-\frac{2m}{r}}\Big)\Big|<+\infty.\label{eq:UpperBoundKappaBarAwayFromHorizon}
\end{equation}
The bounds (\ref{eq:NonTrappingAwayFromHorizon}), (\ref{eq:UpperBoundKappaAwayFromHorizon})
and (\ref{eq:UpperBoundKappaBarAwayFromHorizon}) combine into: 
\begin{equation}
\sup_{\mathcal{V}_{\text{\textgreek{d}}}}\Bigg(\Big|\log\Big(\frac{\partial_{v}r}{1-\frac{2m}{r}}\Big)\Big|+\Big|\log\Big(-\frac{\partial_{u}r}{1-\frac{2m}{r}}\Big)\Big|+\Big(1-\frac{2m}{r}\Big)^{-1}\Bigg)<+\infty.\label{eq:BoundVdelta}
\end{equation}

In view of (\ref{eq:BoundednessOfQuantitiesOutsideTheEventHorizon}),
(\ref{eq:D_vRCan'tvanish}) and (\ref{eq:BoundVdelta}), we infer
that 
\begin{equation}
\sup_{\{u_{*}-\text{\textgreek{d}}\le u\le u_{*}\}\cap\{u\le v\le u_{*}+v_{0}-\text{\textgreek{d}}\}}\big|\log(\partial_{v}r)\big|<+\infty.\label{eq:UpperBoundDvRAwayTInf}
\end{equation}
In view of (\ref{eq:RelationHawkingMass}), (\ref{eq:LowerBoundDuRHorizon}),
(\ref{eq:UpperBoundDvRAwayTInf}) and the fact that $\log(\text{\textgreek{W}}^{2})$
extends continuously on $\mathcal{H}^{+}$, we also infer that 
\begin{equation}
\sup_{\{u_{*}-\text{\textgreek{d}}\le u\le u_{*}\}\cap\{u\le v\le u_{*}+v_{0}-\text{\textgreek{d}}\}}\Big(1-\frac{2m}{r}\Big)^{-1}<+\infty.\label{eq:UpperBoundTrappingAwayTInf}
\end{equation}
Therefore, (\ref{eq:LowerBoundDuRHorizon}), (\ref{eq:UpperBoundDvRAwayTInf})
and (\ref{eq:UpperBoundTrappingAwayTInf}) imply that 
\begin{equation}
\sup_{\{u_{*}-\text{\textgreek{d}}\le u\le u_{*}\}\cap\{u\le v\le u_{*}+v_{0}-\text{\textgreek{d}}\}}\Bigg(\Big|\log\Big(\frac{\partial_{v}r}{1-\frac{2m}{r}}\Big)\Big|+\Big|\log\Big(-\frac{\partial_{u}r}{1-\frac{2m}{r}}\Big)\Big|+\Big(1-\frac{2m}{r}\Big)^{-1}\Bigg)<+\infty.\label{eq:BoundHorizonAwayTinf}
\end{equation}
Combining (\ref{eq:BoundednessOfQuantitiesOutsideTheEventHorizon}),
(\ref{eq:BoundVdelta}) and (\ref{eq:BoundHorizonAwayTinf}), we thus
obtain 
\begin{equation}
\sup_{\mathcal{U}_{*}}\Bigg(\Big|\log\Big(\frac{\partial_{v}r}{1-\frac{2m}{r}}\Big)\Big|+\Big|\log\Big(-\frac{\partial_{u}r}{1-\frac{2m}{r}}\Big)\Big|+\Big(1-\frac{2m}{r}\Big)^{-1}\Bigg)<+\infty.
\end{equation}
This is a contradiction, in view of (\ref{eq:InfinityForMaximalDomainOfOuterCommunicationsAlt}).
Thus, (\ref{eq:UpperBoundRHorizon}) holds.

The relation (\ref{eq:TrappingAsymptoticallyHorizon}) also follows
by a similar argument: Assuming that (\ref{eq:TrappingAsymptoticallyHorizon})
is false, i.\,e.
\begin{equation}
\inf_{\mathcal{H}^{+}}\Big(1-\frac{2m}{r}\Big)>0,\label{eq:LowerBoundTrappingContradiction}
\end{equation}
the relation (\ref{eq:RelationHawkingMass}) and the fact that $\log(\text{\textgreek{W}}^{2})$
extends continuously on $\mathcal{H}^{+}$ implies that 
\begin{equation}
\partial_{v}r|_{\mathcal{H}^{+}}>0.\label{eq:PositiveDerivativeHorizon}
\end{equation}
Thus, the inequality 
\begin{equation}
\partial_{u}\log\Big(\frac{\partial_{v}r}{1-\frac{2m}{r}}\Big)\le0
\end{equation}
(following from (\ref{eq:DerivativeInUDirectionKappa}) and (\ref{eq:LowerBoundDuRHorizon})),
combined with (\ref{eq:PositiveDerivativeU_*}) (on $\mathcal{U}_{*}$),
(\ref{eq:LowerBoundTrappingContradiction}) and (\ref{eq:PositiveDerivativeHorizon}),
implies (\ref{eq:NonTrappingAwayFromHorizon}). Therefore, repeating
the same arguments as for the proof of (\ref{eq:UpperBoundRHorizonGeneral}),
we reach a contradiction.

\medskip{}

\noindent \emph{Proof of (\ref{eq:LowerBoundRHorizonSpecialForMatterField}).}
In view of (\ref{eq:UpperBoundRHorizonGeneral}) and (\ref{eq:TrappingAsymptoticallyHorizon}),
it suffices to show that 
\begin{equation}
\lim_{\bar{v}\rightarrow u_{*}+v_{0}}\tilde{m}|_{\mathcal{H}^{+}\cap\{v=\bar{v}\}}=\lim_{v\rightarrow v_{0}}\tilde{m}_{/}(v).\label{eq:LimitOfRenormalisedMassOnHorizon}
\end{equation}

Integrating (\ref{eq:DerivativeTildeUMass}) in $u$, we calculate
for any $u_{*}+v_{0}-\text{\textgreek{d}}\le\bar{v}<u_{*}+v_{0}$:
\begin{equation}
\tilde{m}|_{\mathcal{H}^{+}\cap\{v=\bar{v}\}}=\tilde{m}|_{\mathcal{I}\cap\{v=\bar{v}\}}+2\pi\int_{\bar{v}-v_{0}}^{u_{*}}\frac{1-\frac{2m}{r}}{-\partial_{u}r}rT_{uu}(u,\bar{v})\, du.\label{eq:RelationForMassDifferenceHorizonInfinity}
\end{equation}
The relation 
\begin{equation}
\partial_{v}\Big(\frac{1-\frac{2m}{r}}{-\partial_{u}r}\Big)\ge0\label{eq:DecreasingKappaBar}
\end{equation}
(following from (\ref{eq:DerivativeInVDirectionKappaBar}) and (\ref{eq:PositiveDerivativeU_*})),
combined with the bound (\ref{eq:BoundEnergyMomentumTensorMaximal})
(and the fact that $r\ge r_{0}$ on $\mathcal{U}_{*}$) implies that
\begin{equation}
\sup_{\{u_{*}-\text{\textgreek{d}}\le u\le u_{*}\}\cap\{u_{*}+v_{0}-\text{\textgreek{d}}\le v\le u+v_{0}\}}\frac{1-\frac{2m}{r}}{-\partial_{u}r}rT_{uu}<+\infty.\label{eq:UpperBoundIntegrand}
\end{equation}
Thus, in view of (\ref{eq:ConservationTildeM}) and (\ref{eq:UpperBoundIntegrand}),
the relation (\ref{eq:LimitOfRenormalisedMassOnHorizon}) is obtained
by taking the limit $\bar{v}\rightarrow u_{*}+v_{0}$ in (\ref{eq:RelationForMassDifferenceHorizonInfinity}).

\paragraph*{Proof of (\ref{eq:CompleConformalInfinity})}

In view of (\ref{eq:RelationHawkingMass}), (\ref{eq:MassBounds})
and (\ref{eq:GaugeInfinityMaximal}), (\ref{eq:CompleConformalInfinity})
will follow from 
\begin{equation}
\int_{\mathcal{I}}\frac{-\partial_{u}r}{1-\frac{2m}{r}}\Big|_{\mathcal{I}}\, du=+\infty.\label{eq:ForInfiniteLengthInfinity}
\end{equation}
Assume, for the sake of contradiction, that (\ref{eq:ForInfiniteLengthInfinity})
is false. Then 
\begin{equation}
\lim_{\bar{u}\rightarrow u_{*}}\int_{\bar{u}}^{u_{*}}\frac{-\partial_{u}r}{1-\frac{2m}{r}}(u,u+v_{0})\, du=\lim_{\bar{u}\rightarrow u_{*}}\int_{\mathcal{I}\cap\{u\ge\bar{u}\}}\frac{-\partial_{u}r}{1-\frac{2m}{r}}\Big|_{\mathcal{I}}\, du=0.\label{eq:ZeroForContradictionInfinity}
\end{equation}
The relation (\ref{eq:DecreasingKappaBar}) implies, in view of (\ref{eq:ZeroForContradictionInfinity})
that:
\begin{equation}
\lim_{\bar{u}\rightarrow u_{*}}\int_{\bar{u}}^{u_{*}}\frac{-\partial_{u}r}{1-\frac{2m}{r}}(u,\bar{u}+v_{0})\, du=0.\label{eq:LengthRayHorizonInfinity}
\end{equation}
Thus, (\ref{eq:MassBounds}) and (\ref{eq:LengthRayHorizonInfinity})
imply that 
\begin{align}
0 & =\lim_{\bar{u}\rightarrow u_{*}}\int_{\bar{u}}^{u_{*}}\frac{-\partial_{u}r}{1-\frac{1}{3}\Lambda r^{2}}(u,\bar{u}+v_{0})\, du=\\
 & =\sqrt{-\frac{3}{\Lambda}}\lim_{\bar{u}\rightarrow u_{*}}\Big(\tan^{-1}\big(\sqrt{-\frac{\Lambda}{3}}r\big)\Big|_{\mathcal{I}\cap\{v=\bar{u}+v_{0}\}}-\tan^{-1}\big(\sqrt{-\frac{\Lambda}{3}}r\big)\Big|_{\mathcal{H}^{+}\cap\{v=\bar{u}+v_{0}\}}\Big)=\nonumber \\
 & =\sqrt{-\frac{3}{\Lambda}}\Big(\frac{\pi}{2}-\lim_{\bar{u}\rightarrow u_{*}}\tan^{-1}\big(\sqrt{-\frac{\Lambda}{3}}r\big)\Big|_{\mathcal{H}^{+}\cap\{v=\bar{u}+v_{0}\}}\Big),\nonumber 
\end{align}
which is a contradiction in view of (\ref{eq:UpperBoundRHorizonGeneral}).

\paragraph*{Proof of (\ref{eq:InfiniteLengthHorizon}).}

In view of (\ref{eq:RelationHawkingMass}), (\ref{eq:LowerBoundDuRHorizon}),
(\ref{eq:UpperBoundRHorizonGeneral}) and the fact that 
\begin{equation}
\partial_{v}(-r\partial_{u}r)\ge0\label{eq:IncreasingDuR}
\end{equation}
(following from (\ref{eq:RequationFinal})), in order to establish
(\ref{eq:InfiniteLengthHorizon}) it suffices to show that 
\begin{equation}
\int_{\mathcal{H}^{+}}\frac{\text{\textgreek{W}}^{2}}{-\partial_{u}r}\, dv=+\infty.\label{eq:BoundToShowHorizon}
\end{equation}

In view of (\ref{eq:LowerBoundDuRHorizon}), (\ref{eq:UpperBoundRHorizonGeneral}),
(\ref{eq:IncreasingDuR}) and (\ref{eq:BoundEnergyMomentumTensorMaximal}),
we can bound 
\begin{equation}
C_{T}\doteq\sup_{\mathcal{U}_{*}}\Big(\frac{rT_{uu}}{-\partial_{u}r}\Big)<+\infty.\label{eq:BoundForChangeDvR}
\end{equation}
Integrating equation 
\begin{equation}
\partial_{u}\log\big(\frac{\text{\textgreek{W}}^{2}}{-\partial_{u}r}\big)=-4\pi r\frac{T_{uu}}{-\partial_{u}r}\label{eq:RephrasedConstraint}
\end{equation}
 in $u$ and using (\ref{eq:BoundForChangeDvR}),%
\footnote{Equation (\ref{eq:RephrasedConstraint}) is readily obtained from
(\ref{eq:ConstraintUFinal}).%
} we obtain for any $v_{0}\le\bar{v}<u_{*}+v_{0}$: 
\begin{equation}
\Bigg|\log\Big(\frac{\text{\textgreek{W}}^{2}}{-\partial_{u}r}\Big)\Big|_{\mathcal{H}^{+}\cap\{v=\bar{v}\}}-\log\Big(\frac{\partial_{v}r}{1-\frac{2m}{r}}\Big)\Big|_{\mathcal{I}\cap\{v=\bar{v}\}}\Bigg|\le C_{T}(u_{*}+v_{0}-\bar{v}).\label{eq:DifferenceLenghtHorizonInfinity}
\end{equation}
Integrating (\ref{eq:DifferenceLenghtHorizonInfinity}) in $\bar{v}\in[v_{0},u_{*}+v_{0})$
and using (\ref{eq:RelationHawkingMass}) on $\mathcal{I}$, (\ref{eq:ForInfiniteLengthInfinity})
and (\ref{eq:GaugeInfinityMaximal}), we thus infer (\ref{eq:BoundToShowHorizon}).

\subsubsection*{Step 2: Construction of $\mathcal{U}\backslash J^{-}(\mathcal{U})$
in the case $u_{*}<+\infty$}

By applying Proposition \ref{Prop:LocalExistenceTypeIII} for the
initial data induced by $(r,\text{\textgreek{W}}^{2},\bar{f}_{in},\bar{f}_{out})$
on $[u_{*}-\text{\textgreek{d}}_{1},u_{*}]\times\{u_{*}\}\cup\{u_{*}-\text{\textgreek{d}}_{1}\}\times[u_{*},u_{*}+\text{\textgreek{d}}_{2}]$
for some $0<\text{\textgreek{d}}_{2}\ll\text{\textgreek{d}}_{1}\ll1$,
we infer that $(r,\text{\textgreek{W}}^{2},\bar{f}_{in},\bar{f}_{out})$
extends smoothly as a solution to (\ref{eq:RequationFinal})--(\ref{eq:OutgoingVlasovFinal})
(satisfying (\ref{eq:MirrorRMaximal})--(\ref{eq:GaugeMirrorMaximal})
on $u=v$) on 
\begin{equation}
\mathcal{W}_{\text{\textgreek{d}}_{2}}\doteq\{u_{*}\le u\le v\}\cap\{v\le u_{*}+\text{\textgreek{d}}_{2}\}.\label{eq:FirstExtension}
\end{equation}
Repeating the same procedure as for the construction of $\mathcal{U}_{*}$
with the use of Proposition \ref{Prop:LocalExistenceTypeII} in the
previous step, by using Proposition \ref{Prop:LocalExistenceTypeIII}
for the initial data induced by $(r,\text{\textgreek{W}}^{2},\bar{f}_{in},\bar{f}_{out})$
on sets of the form $[u_{*},u_{*}+\bar{u}]\times\{u_{*}+\bar{u}\}\cup\{u_{*}\}\times[u_{*}+\bar{u},u_{*}+\bar{u}+\text{\textgreek{d}}]$
(starting from $\bar{u}=\text{\textgreek{d}}_{2}$), we infer that
there exists a $v_{*}\le u_{*}+v_{0}$ with the following properties:

\begin{enumerate}

\item Setting 
\begin{equation}
\mathcal{W}_{*}\doteq\{u_{*}\le u<v\}\cap\{v<v_{*}\},\label{eq:DefinitionW_*}
\end{equation}
\begin{equation}
\mathcal{C}^{-}\doteq\{v=v_{*}\}\cap\{u\le u_{*}<v_{*}\}\label{eq:PastOfTerminalMirror}
\end{equation}
and 
\begin{equation}
\text{\textgreek{g}}_{0}\doteq\{u=v\}\cap\{0\le u<v_{*}\},\label{eq:DefinitionGamma0Full}
\end{equation}
the functions $(r,\text{\textgreek{W}}^{2},\bar{f}_{in},\bar{f}_{out})$
extend smoothly on $\mathcal{W}_{*}\cup\text{\textgreek{g}}_{0}$,
so that $(\mathcal{U}_{*}\cup\mathcal{W}_{*};r,\text{\textgreek{W}}^{2},\bar{f}_{in},\bar{f}_{out})$
is a development of the initial data $(r_{/},\text{\textgreek{W}}_{/}^{2},\bar{f}_{in/},\bar{f}_{out/})$
according to Definition \ref{def:Development}.

\item For any $\bar{v}<v_{*}$, we can bound 
\begin{equation}
\sup_{\mathcal{W}_{*}\cap\{v\le\bar{v}\}}\Bigg\{\Big|\log\big(\frac{\text{\textgreek{W}}^{2}}{1-\frac{1}{3}\Lambda r^{2}}\big)\Big|+\Big|\log\Big(\frac{2\partial_{v}r}{1-\frac{2m}{r}}\Big)\Big|+\Big|\log\Big(\frac{1-\frac{2m}{r}}{1-\frac{1}{3}\Lambda r^{2}}\Big)\Big|+\sqrt{-\Lambda}|\tilde{m}|+r^{2}T_{uu}+r^{2}T_{vv}\Bigg\}<+\infty\label{eq:BoundednessNearAxis}
\end{equation}
and, in the case $v_{*}<u_{*}+v_{0}$: 
\begin{equation}
\sup_{\mathcal{W}_{*}}\Bigg\{\Big|\log\big(\frac{\text{\textgreek{W}}^{2}}{1-\frac{1}{3}\Lambda r^{2}}\big)\Big|+\Big|\log\Big(\frac{2\partial_{v}r}{1-\frac{2m}{r}}\Big)\Big|+\Big|\log\Big(\frac{1-\frac{2m}{r}}{1-\frac{1}{3}\Lambda r^{2}}\Big)\Big|+\sqrt{-\Lambda}|\tilde{m}|+r^{2}T_{uu}+r^{2}T_{vv}\Bigg\}=+\infty.\label{eq:InfinityNearAxis}
\end{equation}

\end{enumerate}

We will now proceed to define the domain $\mathcal{U}$ of the maximal
future development of $(r_{/},\text{\textgreek{W}}_{/}^{2},\bar{f}_{in/},\bar{f}_{out/})$.
We have to distinguish between two cases: The case $v_{*}=u_{*}+v_{0}$,
and the case $v_{*}<u_{*}+v_{0}$. Let us remark already that, later
in the proof, we will establish that, necessarily, $v_{*}<u_{*}+v_{0}$,
and thus the former case can not be actually realised.

\medskip{}

\noindent \emph{The case $v_{*}=u_{*}+v_{0}$.} In this case, we set
\begin{equation}
\mathcal{U}=\mathcal{U}_{*}\cup\mathcal{W}_{*}.
\end{equation}

\noindent \medskip{}

\noindent \emph{The case $v_{*}<u_{*}+v_{0}$.} In this case, an application
of the continuation criterion of Lemma \ref{lem:cotinuationCriterion}
on the domains 
\begin{equation}
\mathcal{Y}_{\text{\textgreek{d}}}=[u_{*},v_{*}-2\text{\textgreek{d}}]\times[v_{*}-\text{\textgreek{d}},v_{*})
\end{equation}
for any $0<\text{\textgreek{d}}\ll1$ implies that $(r,\log\text{\textgreek{W}}^{2})$
extend smoothly on $\{v=v_{*}\}\cap\{u_{*}\le u<v_{*}\}$ and $(\bar{f}_{in},\bar{f}_{out})$
extend smoothly on $\{v=v_{*}\}\cap\{u_{*}\le u<v_{*}\}\times(0,+\infty)$.
Thus, considering the initial data induced by $(r,\text{\textgreek{W}}^{2},\bar{f}_{in},\bar{f}_{out})$
on $[u_{*},v_{*})\times\{v_{*}\}\cup\{u_{*}\}\times[v_{*},u_{*}+v_{0})$
and using Proposition \ref{prop:LocalExistenceTypeI} we infer (again
by repeating a similar procedure to the construction of $\mathcal{U}_{*}$
with the use of Proposition \ref{Prop:LocalExistenceTypeII}) that
there exists an open set 
\begin{equation}
\mathcal{V}^{\prime}\subseteq(u_{*},v_{*})\times(v_{*},u_{*}+v_{0})\label{eq:DefinitionFromCharacteristicInitialDataMaxExtension}
\end{equation}
 which is globally hyperbolic (with respect to the reference metric
(\ref{eq:ComparisonUVMetric})), such that $(r,\text{\textgreek{W}}^{2},\bar{f}_{in},\bar{f}_{out})$
extend smoothly on $\mathcal{V}^{\prime}$ as solutions to the system
(\ref{eq:RequationFinal})--(\ref{eq:OutgoingVlasovFinal}).

Let us set 
\begin{equation}
\mathcal{V}_{*}\doteq\mathcal{V}^{\prime}\cap\{r>r_{0}\}.\label{eq:R0MaxExtension}
\end{equation}
In view of (\ref{eq:ConstrainVFinal}) and the fact that $T_{vv}\ge0$,
any $\{u=const\}$ line in the region 
\[
\mathcal{U}^{\prime}=\mathcal{U}_{*}\cup\mathcal{H}^{+}\cup\mathcal{W}_{*}\cup\mathcal{C}^{-}\cup\mathcal{V}^{\prime}
\]
 can intersect the level set $\{r=r_{0}\}$ at most two times. Thus,
since $r|_{\text{\textgreek{g}}_{0}}=r_{0}$, we readily infer that
the boundary of $\mathcal{V}_{*}$ in $\mathcal{V}^{\prime}$ (which
consists of a subset of the level set $\{r=r_{0}\}$) is a smooth
achronal curve and $\mathcal{V}_{*}$ is globally hyperbolic with
respect to the reference metric (\ref{eq:ComparisonUVMetric}). Therefore,
the domain $\mathcal{U}_{*}\cup\mathcal{H}^{+}\cup\mathcal{W}_{*}\cup\mathcal{V}^{\prime}$
belongs to the set $\mathscr{U}_{v_{0}}$, introdced in Definition
\ref{def:DevelopmentSets}. In this case, we will set 
\begin{equation}
\mathcal{U}\doteq\mathcal{U}_{*}\cup\mathcal{H}^{+}\cup\mathcal{W}_{*}\cup\mathcal{C}^{-}\cup\mathcal{V}_{*}.
\end{equation}

\medskip{}

In both the case $v_{*}=u_{*}+v_{0}$ and the case $v_{*}<u_{*}+v_{0}$,
the boundary $\partial\mathcal{U}$ of $\mathcal{U}$ splits as (\ref{eq:BoundaryOfU}),
for a continuous achronal curve \textgreek{g} with a parametrization
$\text{\textgreek{g}}:(0,v_{0})\rightarrow\mathbb{R}^{2}$ of the
form $\text{\textgreek{g}}(v)=(u_{\text{\textgreek{g}}}(x),v_{\text{\textgreek{g}}}(x))$,
where 
\begin{equation}
u_{\text{\textgreek{g}}}(x)=\begin{cases}
v_{*}, & 0<x\le v_{1}\\
f_{1}(x), & v_{1}<x<v_{2}\\
u_{*}+v_{0}-x, & v_{2}\le x<v_{0}
\end{cases}\label{eq:ucoordinateofgamma}
\end{equation}
 and 
\begin{equation}
v_{\text{\textgreek{g}}}(x)=\begin{cases}
v_{*}+x, & 0<x\le v_{1}\\
f_{2}(x), & v_{1}<x<v_{2}\\
u_{*}+v_{0}, & v_{2}\le x<v_{0}
\end{cases}\label{eq:vcoordinateofgamma}
\end{equation}
for some $0\le v_{1}\le v_{2}\le v_{0}$, where 
\begin{equation}
u_{*}<f_{1}(x)<v_{*}\label{eq:Bound1F1}
\end{equation}
 and 
\begin{equation}
v_{*}<f_{2}(x)<u_{*}+v_{0}\label{eq:Bound1F2}
\end{equation}
 for all $x\in(v_{1},v_{2})$. Note that the properties (\ref{eq:Bound1F1}),
(\ref{eq:Bound1F2}) of $f_{1},f_{2}$ imply, by an application of
Lemma \ref{lem:cotinuationCriterion}, that $(r,\log\text{\textgreek{W}}^{2},\bar{f}_{in},\bar{f}_{out})$
extends smoothly across $\text{\textgreek{g}}\big((v_{1},v_{2})\big)$.

We will now proceed to show that 
\begin{equation}
v_{2}=v_{0}\label{eq:NoCauchyHorizon}
\end{equation}
 in (\ref{eq:ucoordinateofgamma})--(\ref{eq:vcoordinateofgamma}),
and that $r$ extends continuously on $\text{\textgreek{g}}$ with
\begin{equation}
r|_{\text{\textgreek{g}}}=r_{0}.\label{eq:RonGamma}
\end{equation}
Since in the case $v_{*}=u_{*}+v_{0}$ it is necessary that $v_{2}=0$,
(\ref{eq:NoCauchyHorizon}) will imply that $v_{*}<u_{*}+v_{0}$ always.
Furthermore, since $r$ extends smoothly across $\text{\textgreek{g}}\big((v_{1},v_{2})\big)$,
we will also infer from (\ref{eq:RonGamma}) that $\text{\textgreek{g}}\big((v_{1},v_{2})\big)$
is smooth. Note that,since $\text{\textgreek{g}}$ is continuous and
$\partial\mathcal{U}$ has the form (\ref{eq:BoundaryOfU}), the relation
(\ref{eq:NoCauchyHorizon}) also implies that, necessarily, $v_{1}<v_{2}=v_{0}$.

\medskip{}

\noindent \emph{Proof of (\ref{eq:NoCauchyHorizon}).} Integrating
equation (\ref{eq:RequationFinal}) in $v$ starting from $\text{\textgreek{g}}_{0}$,
we readily obtain that 
\begin{equation}
\sup_{\mathcal{U}}\partial_{u}r<0.\label{eq:NegativeDerivativeUEverywhere}
\end{equation}
Integrating (\ref{eq:ConservationT_vv}) in $u$ and (\ref{eq:ConservationT_uu})
in $v$ and using the boundary conditions (\ref{eq:ReflectiveConditionsTMaximal}),
we readily infer that (\ref{eq:BoundEnergyMomentumTensorMaximal})
holds on the whole of $\mathcal{U}$. In view of (\ref{eq:DerivativeTildeUMass}),
(\ref{eq:BoundEnergyMomentumTensorMaximal}), (\ref{eq:NegativeDerivativeUEverywhere}),
(\ref{eq:LimitOfRenormalisedMassOnHorizon}) and the fact that $\lim_{v\rightarrow v_{0}}\tilde{m}_{/}(v)>0$
in the case $\mathcal{H}^{+}\neq\emptyset$ (otherwise (\ref{eq:TrappingAsymptoticallyHorizon})
is violated), we infer that there exists some $0<\text{\textgreek{d}}\ll1$
so that 
\begin{equation}
\inf_{[u_{*},u_{*}+\text{\textgreek{d}}]\times[u_{*}+v_{0}-\text{\textgreek{d}},u_{*}+v_{0}]\cap\mathcal{U}}\tilde{m}>0.\label{eq:PositiveMassAway}
\end{equation}

Integrating equation 
\begin{equation}
\partial_{v}\log(-\partial_{u}r)=\frac{1}{2}\frac{\tilde{m}-\frac{1}{3}\Lambda r^{3}}{r^{2}}\frac{\text{\textgreek{W}}^{2}}{-\partial_{u}r}\label{eq:RephrasedDuR}
\end{equation}
(which is readily obtained from (\ref{eq:RequationFinal})) in $v$
starting from $v=u_{*}+v_{0}-\text{\textgreek{d}}$ and using (\ref{eq:PositiveMassAway}),
we obtain for any point $(u,v)\in[u_{*},u_{*}+\text{\textgreek{d}}]\times[u_{*}+v_{0}-\text{\textgreek{d}},u_{*}+v_{0}]\cap\mathcal{U}$:
\begin{equation}
\log(-\partial_{u}r)(u,v)\ge\log(-\partial_{u}r)(u,u_{*}+v_{0}-\text{\textgreek{d}})+c_{0}\int_{u_{*}+v_{0}-\text{\textgreek{d}}}^{v}\frac{\text{\textgreek{W}}^{2}}{-\partial_{u}r}(u,\bar{v})\, d\bar{v}.\label{eq:EquationForDerivativeU}
\end{equation}
for some $c_{0}>0$ depending on $r_{0}$ and (\ref{eq:PositiveMassAway}).
Integrating (\ref{eq:RephrasedConstraint}) in $u$ starting from
$u=u_{*}$ and using (\ref{eq:BoundEnergyMomentumTensorMaximal})
and (\ref{eq:NegativeDerivativeUEverywhere}), we can also estimate
\begin{equation}
\int_{u_{*}+v_{0}-\text{\textgreek{d}}}^{v}\frac{\text{\textgreek{W}}^{2}}{-\partial_{u}r}(u,\bar{v})\ge c_{1}\int_{u_{*}+v_{0}-\text{\textgreek{d}}}^{v}\frac{\text{\textgreek{W}}^{2}}{-\partial_{u}r}(u_{*},\bar{v}),\label{eq:EstimateLengthFromTheHorizon}
\end{equation}
for some $c_{1}>0$ depending on $r_{0}$, (\ref{eq:BoundEnergyMomentumTensorMaximal}),
(\ref{eq:NegativeDerivativeUEverywhere}) and (\ref{eq:PositiveMassAway}).
Thus, from (\ref{eq:EquationForDerivativeU}) and (\ref{eq:EstimateLengthFromTheHorizon})
we can bound for any $(u,v)\in[u_{*},u_{*}+\text{\textgreek{d}}]\times[u_{*}+v_{0}-\text{\textgreek{d}},u_{*}+v_{0}]\cap\mathcal{U}$:
\begin{align}
r(u_{*},v)-r(u,v) & =\int_{u_{*}}^{u}\big(-\partial_{u}r(\bar{u},v)\big)\, d\bar{u}\ge\label{eq:AlmostThereForDifferenceR}\\
 & \ge\exp\Big(c_{0}\sup_{\bar{u}\in[u_{*},u]}\int_{u_{*}+v_{0}-\text{\textgreek{d}}}^{v}\frac{\text{\textgreek{W}}^{2}}{-\partial_{u}r}(\bar{u},\bar{v})\, d\bar{v}\Big)\cdot\int_{u_{*}}^{u}\big(-\partial_{u}r(\bar{u},u_{*}+v_{0}-\text{\textgreek{d}})\big)\, d\bar{u}\ge\nonumber \\
 & \ge\exp\Big(c_{0}c_{1}\int_{u_{*}+v_{0}-\text{\textgreek{d}}}^{v}\frac{\text{\textgreek{W}}^{2}}{-\partial_{u}r}(u_{*},\bar{v})\, d\bar{v}\Big)\cdot\big(r(u_{*},u_{*}+v_{0}-\text{\textgreek{d}})-r(u,u_{*}+v_{0}-\text{\textgreek{d}})\big).\nonumber 
\end{align}

Assuming, for the sake of contradiction, that $v_{2}<v_{0}$ in (\ref{eq:ucoordinateofgamma})--(\ref{eq:vcoordinateofgamma}),
for any fixed $u_{0}\in(u_{*},u_{*}+v_{0}-v_{2})$ and any $v\in[u_{*}+v_{0}-\text{\textgreek{d}},u_{*}+v_{0})$,
the point $(u_{0},v)$ belongs to $[u_{*},u_{*}+\text{\textgreek{d}}]\times[u_{*}+v_{0}-\text{\textgreek{d}},u_{*}+v_{0}]\cap\mathcal{U}$.
Therefore, applying (\ref{eq:AlmostThereForDifferenceR}) for $(u,v)=(u_{0},v)$
and considering the limit $v\rightarrow u_{*}+v_{0}$, we obtain
\begin{equation}
\int_{u_{*}+v_{0}-\text{\textgreek{d}}}^{u_{*}+v_{0}}\frac{\text{\textgreek{W}}^{2}}{-\partial_{u}r}(u_{*},\bar{v})\, d\bar{v}\le C_{0}\limsup_{v\rightarrow u_{*}+v_{0}}\log\Big(\frac{r(u_{*},v)-r(u_{0},v)}{r(u_{*},u_{*}+v_{0}-\text{\textgreek{d}})-r(u_{0},u_{*}+v_{0}-\text{\textgreek{d}})}\Big),\label{eq:BoundForLengthFinal}
\end{equation}
where $C_{0}>0$ depends on $r_{0}$, (\ref{eq:BoundEnergyMomentumTensorMaximal}),
(\ref{eq:NegativeDerivativeUEverywhere}) and (\ref{eq:PositiveMassAway}).
In view of (\ref{eq:NegativeDerivativeUEverywhere}) and the fact
that $r_{0}\le r\le r_{S}$ on $\{u\ge u_{*}\}\cap\mathcal{U}$, the
right hand side of (\ref{eq:BoundForLengthFinal}) is finite, while
the left hand side is infinite in view of (\ref{eq:BoundToShowHorizon}),
which is a contradiction. Therefore, (\ref{eq:NoCauchyHorizon}) holds.

\medskip{}

\noindent \emph{Proof of (\ref{eq:RonGamma}).} In view of Lemma \ref{lem:cotinuationCriterionMirror},
(\ref{eq:InfinityNearAxis}) and the fact that $(r,\log\text{\textgreek{W}}^{2},\bar{f}_{in},\bar{f}_{out})$
are smooth on $\{u=u_{*}\}\cap\{u_{*}\le v\le v_{*}\}$, we infer
that, necessarily 
\begin{equation}
\lim_{v\rightarrow v^{*}}\partial_{v}r|_{\text{\textgreek{g}}_{0}\cap\{v=\bar{v}\}}=0.\label{eq:ZeroAtBreakingPoint}
\end{equation}
In view of the inequality (\ref{eq:ConcaveR}), the relation (\ref{eq:ZeroAtBreakingPoint})
implies that, for any $v\in[v_{*},v_{*}+v_{1}]$ (with $v_{1}$ as
in (\ref{eq:vcoordinateofgamma})) 
\begin{equation}
\lim_{u\rightarrow v_{*}}\partial_{v}r(u,v)\le0.\label{eq:BoundFromConcavity}
\end{equation}
Since $\mathcal{V}_{*}$ was defined by (\ref{eq:R0MaxExtension}),
we necessarily have for any point $p\in\text{\textgreek{g}}$: 
\begin{equation}
\liminf_{p^{\prime}\rightarrow p}r\ge r_{0}.\label{eq:LimInfGamma}
\end{equation}
Thus, since $r|_{\text{\textgreek{g}}_{0}}=r_{0}$, the relations
(\ref{eq:LimInfGamma}) and (\ref{eq:BoundFromConcavity}) imply that
$r$ extends continuously on $\overline{\text{\textgreek{g}}\big((0,v_{1})\big)},$with
\begin{equation}
r|_{clos\big(\text{\textgreek{g}}\big((0,v_{1})\big)\big)}=r_{0}.\label{eq:ROnGammaNull}
\end{equation}

Since 
\begin{equation}
\text{\textgreek{g}}\big((v_{1},v_{0})\big)\subset(u_{*},v_{*})\times(v_{*},u_{*}+v_{0}),
\end{equation}
an application of Lemma \ref{lem:cotinuationCriterion} yields that
$(r,\log\text{\textgreek{W}}^{2})$ extend smoothly across $\text{\textgreek{g}}$.
Since $\mathcal{V}_{*}$ was defined by (\ref{eq:R0MaxExtension}),
and $\mathcal{V}^{\prime}$ is the maximal globally hyperbolic development
of the\noun{ }characteristic initial data induced by $(r,\text{\textgreek{W}}^{2},\bar{f}_{in},\bar{f}_{out})$
on $[u_{*},v_{*})\times\{v_{*}\}\cup\{u_{*}\}\times[v_{*},u_{*}+v_{0})$,
we infer that $\text{\textgreek{g}}\big((v_{1},v_{0})\big)$ is the
bounary of $\mathcal{V}_{*}$ in $\mathcal{V}^{\prime}$ and, thus:
\begin{equation}
r|_{\text{\textgreek{g}}\big((v_{1},v_{0})\big)}=r_{0}.\label{eq:ROnGammaSpacelike}
\end{equation}
The relation (\ref{eq:RonGamma}) now follows from (\ref{eq:ROnGammaNull})
and (\ref{eq:ROnGammaSpacelike}).

\paragraph*{End of the proof. }

In order to finish the proof of Theorem \ref{thm:maximalExtension}
in the case when (\ref{eq:U_*<infty}) holds, it remains to establish
(\ref{eq:NegativeDerivativeRMaximal})--(\ref{eq:D_vRPositiveMaximal}),
(\ref{eq:ConstantMassMirror})--(\ref{eq:ConstantMassInfinity}),
(\ref{eq:MirrorExtendsBeyondHorizon}), as well as Property 6.

The bound (\ref{eq:NegativeDerivativeRMaximal}) has been already
established in (\ref{eq:NegativeDerivativeUEverywhere}). The bounds
(\ref{eq:NonTrappingMaximal})--(\ref{eq:D_vRPositiveMaximal}) follow
readily from (\ref{eq:BoundednessOfQuantitiesOutsideTheEventHorizon})(on
$J^{-}(\mathcal{I})$) and (\ref{eq:BoundednessNearAxis}) (on $J^{-}(\text{\textgreek{g}}_{0})$).

The conservation of $\tilde{m}$ on $\text{\textgreek{g}}_{0}$ and
$\mathcal{I}$, i.\,e.~(\ref{eq:ConstantMassMirror})--(\ref{eq:ConstantMassInfinity}),
follows readily fromhe relations (\ref{eq:DerivativeTildeUMass})--(\ref{eq:DerivativeTildeVMass})
and the conditions (\ref{eq:MirrorRMaximal}), (\ref{eq:GaugeMirrorMaximal}),
(\ref{eq:InfinityRMaximal}) and (\ref{eq:GaugeInfinityMaximal}).

The relation (\ref{eq:MirrorExtendsBeyondHorizon}) follows immediately
from the fact that $\text{\textgreek{g}}_{0}\cap\{u>u_{*}\}\neq\emptyset$
(see (\ref{eq:DefinitionGamma0Full})).

Finally, Property 6 is an immediate consequence of (\ref{eq:NoCauchyHorizon}).

\qed

\section{\label{sec:Proof-of-Cauchy}Cauchy stability in a rough norm, uniformly
in $r_{0}$}

In this Section, we will establish Theorem \ref{thm:CauchyStability}
and Corollary \ref{cor:CauchyStabilityOfAdS}.

\subsection{\label{sub:Proof-of-Cauchy-General}Proof of Theorem \ref{thm:CauchyStability}}

Let $C_{1}\gg1$ be a large, fixed constant. Using a standard continuity
argument, it suffices to show that, for any $0<u_{*}\le u_{0}$ such
that 
\[
\mathcal{W}_{u_{*}}\doteq\{0<u<u_{*}\}\cap\{u+v_{1}<v<u+v_{2}\}\subset\mathcal{U}_{2}
\]
and
\begin{align}
\sup_{\mathcal{W}_{u_{*}}}\Bigg\{\Big|\log\big(\frac{\text{\textgreek{W}}_{1}^{2}}{1-\frac{1}{3}\Lambda r_{1}^{2}}\big)-\log\big(\frac{\text{\textgreek{W}}_{2}^{2}}{1-\frac{1}{3}\Lambda r_{2}^{2}}\big)\Big|+\Big|\log\Big(\frac{2\partial_{v}r_{1}}{1-\frac{2m_{1}}{r_{1}}}\Big)-\log\Big(\frac{2\partial_{v}r_{2}}{1-\frac{2m_{2}}{r_{2}}}\Big)\Big|+\label{eq:UpperBoundNonTrappingForCauchyStabilityBootstrap}\\
+\Big|\log\Big(\frac{1-\frac{2m_{1}}{r_{1}}}{1-\frac{1}{3}\Lambda r_{1}^{2}}\Big)-\log\Big(\frac{1-\frac{2m_{2}}{r_{2}}}{1-\frac{1}{3}\Lambda r_{2}^{2}}\Big)\Big|+\sqrt{-\Lambda}|\tilde{m}_{1}-\tilde{m}_{2}|\Bigg\}+\nonumber \\
+\sup_{\bar{u}}\int_{\{u=\bar{u}\}\cap\mathcal{W}_{u_{*}}}\big|r_{1}(T_{vv})_{1}-r_{2}(T_{vv})_{2}\big|\, dv+\sup_{\bar{v}}\int_{\{v=\bar{v}\}\cap\mathcal{W}_{u_{*}}}\big|r_{1}(T_{uu})_{1}-r_{2}(T_{uu})_{2}\big|\, du & \le\nonumber \\
\le2\exp\big(\exp\big(C_{1} & (1+C_{0})\big)\frac{u_{0}}{v_{2}-v_{1}}\big)\text{\textgreek{d}},\nonumber 
\end{align}
the following improvement of (\ref{eq:UpperBoundNonTrappingForCauchyStabilityBootstrap})
actually holds: 
\begin{align}
\sup_{\mathcal{W}_{u_{*}}}\Bigg\{\Big|\log\big(\frac{\text{\textgreek{W}}_{1}^{2}}{1-\frac{1}{3}\Lambda r_{1}^{2}}\big)-\log\big(\frac{\text{\textgreek{W}}_{2}^{2}}{1-\frac{1}{3}\Lambda r_{2}^{2}}\big)\Big|+\Big|\log\Big(\frac{2\partial_{v}r_{1}}{1-\frac{2m_{1}}{r_{1}}}\Big)-\log\Big(\frac{2\partial_{v}r_{2}}{1-\frac{2m_{2}}{r_{2}}}\Big)\Big|+\label{eq:UpperBoundNonTrappingForCauchyStabilityToShow}\\
+\Big|\log\Big(\frac{1-\frac{2m_{1}}{r_{1}}}{1-\frac{1}{3}\Lambda r_{1}^{2}}\Big)-\log\Big(\frac{1-\frac{2m_{2}}{r_{2}}}{1-\frac{1}{3}\Lambda r_{2}^{2}}\Big)\Big|+\sqrt{-\Lambda}|\tilde{m}_{1}-\tilde{m}_{2}|\Bigg\}+\nonumber \\
+\sup_{\bar{u}}\int_{\{u=\bar{u}\}\cap\mathcal{W}_{u_{*}}}\big|r_{1}(T_{vv})_{1}-r_{2}(T_{vv})_{2}\big|\, dv+\sup_{\bar{v}}\int_{\{v=\bar{v}\}\cap\mathcal{W}_{u_{*}}}\big|r_{1}(T_{uu})_{1}-r_{2}(T_{uu})_{2}\big|\, du & \le\nonumber \\
\le\exp\big(\exp\big(C_{1} & (1+C_{0})\big)\cdot\frac{u_{0}}{v_{2}-v_{1}}\big)\text{\textgreek{d}}.\nonumber 
\end{align}

In order to establish (\ref{eq:UpperBoundNonTrappingForCauchyStabilityToShow}),
it suffices to show that, for any $u_{1}<u_{2}<u_{*}$ such that 
\begin{equation}
u_{2}-u_{1}\le v_{2}-v_{1},\label{eq:u_*small}
\end{equation}
setting 
\begin{equation}
\mathcal{W}_{u_{1};u_{2}}\doteq\{u_{1}<u<u_{2}\}\cap\{u<v<u+v_{0}\},
\end{equation}
 we can bound on $\mathcal{W}_{u_{1};u_{2}}$ 
\begin{align}
\sup_{\mathcal{W}_{u_{1};u_{2}}}\Bigg\{\Big|\log\big(\frac{\text{\textgreek{W}}_{1}^{2}}{1-\frac{1}{3}\Lambda r_{1}^{2}}\big)-\log\big(\frac{\text{\textgreek{W}}_{2}^{2}}{1-\frac{1}{3}\Lambda r_{2}^{2}}\big)\Big|+\Big|\log\Big(\frac{2\partial_{v}r_{1}}{1-\frac{2m_{1}}{r_{1}}}\Big)-\log\Big(\frac{2\partial_{v}r_{2}}{1-\frac{2m_{2}}{r_{2}}}\Big)\Big|+\label{eq:UpperBoundNonTrappingForCauchyStabilityToShowRed}\\
+\Big|\log\Big(\frac{1-\frac{2m_{1}}{r_{1}}}{1-\frac{1}{3}\Lambda r_{1}^{2}}\Big)-\log\Big(\frac{1-\frac{2m_{2}}{r_{2}}}{1-\frac{1}{3}\Lambda r_{2}^{2}}\Big)\Big|+\sqrt{-\Lambda}|\tilde{m}_{1}-\tilde{m}_{2}|\Bigg\}+\nonumber \\
+\sup_{\bar{u}}\int_{\{u=\bar{u}\}\cap\mathcal{W}_{u_{1};u_{2}}}\big|r_{1}(T_{vv})_{1}-r_{2}(T_{vv})_{2}\big|\, dv+\sup_{\bar{v}}\int_{\{v=\bar{v}\}\cap\mathcal{W}_{u_{1};u_{2}}}\big|r_{1}(T_{uu})_{1}-r_{2}(T_{uu})_{2}\big|\, du & \le\nonumber \\
\le\exp\big(\exp\big(C_{1} & (1+C_{0})\big)\big)\text{\textgreek{d}}_{u_{1}},\nonumber 
\end{align}
where 
\begin{align}
\text{\textgreek{d}}_{u_{1}}\doteq & \sup_{v\in(u_{1}+v_{1},u_{1}+v_{2})}\Bigg\{\Big|\log\big(\frac{\text{\textgreek{W}}_{1}^{2}}{1-\frac{1}{3}\Lambda r_{1}^{2}}\big)\big|_{(u_{1},v)}-\log\big(\frac{\text{\textgreek{W}}_{2}^{2}}{1-\frac{1}{3}\Lambda r_{2}^{2}}\big)\big|_{(u_{1},v)}\Big|+\Big|\log\Big(\frac{2\partial_{v}r_{1}}{1-\frac{2m_{1}}{r_{1}}}\Big)\big|_{(u_{1},v)}-\log\Big(\frac{2\partial_{v}r_{2}}{1-\frac{2m_{2}}{r_{2}}}\Big)\big|_{(u_{1},v)}\Big|+\label{eq:CauchyTransportedInitialDistance}\\
 & +\Big|\log\Big(\frac{1-\frac{2m_{_{1}}}{r_{1}}}{1-\frac{1}{3}\Lambda r_{1}^{2}}\Big)\big|_{(u_{1},v)}-\log\Big(\frac{1-\frac{2m_{_{2}}}{r_{2}}}{1-\frac{1}{3}\Lambda r_{2}^{2}}\Big)\big|_{(u_{1},v)}\Big|+\sqrt{-\Lambda}\Big|\tilde{m}_{1}\big|_{(u_{1},v)}-\tilde{m}_{2}\big|_{(u_{1},v)}\Big|\Bigg\}+\nonumber \\
 & +\sup_{v\in[u_{1}+v_{1},u_{1}+v_{2})}(-\Lambda)\int_{u_{1}+v_{1}}^{u_{1}+v_{2}}\Bigg|\frac{r_{1}^{2}(T_{vv})_{1}|_{(u_{1},\bar{v})}}{\big(|\text{\textgreek{r}}_{1}|_{(u_{1},\bar{v})}-\text{\textgreek{r}}_{1}|_{(u_{1},v)}|+\text{\textgreek{r}}_{1}|_{(u_{1},u_{1}+v_{1})}\big)\partial_{v}\text{\textgreek{r}}_{1}|_{(u_{1},\bar{v})}}-\nonumber \\
 & \hphantom{+\sup_{v\in[u_{1}+v_{1},u_{1}+v_{2})}(-\Lambda)\int_{u_{1}+v_{1}}^{u_{1}+v_{2}}\Bigg|}-\frac{r_{2}^{2}(T_{vv})_{2}|_{(u_{1},\bar{v})}}{\big(|\text{\textgreek{r}}_{2}|_{(u_{1},\bar{v})}-\text{\textgreek{r}}_{2}|_{(u_{1},v)}|+\text{\textgreek{r}}_{2}|_{(u_{1},u_{1}+v_{1})}\big)\partial_{v}\text{\textgreek{r}}_{2}|_{(u_{1},\bar{v})}}\Bigg|\, d\bar{v}\nonumber 
\end{align}
(with $\text{\textgreek{r}}_{i}$ defined in terms of $r_{i}$ by
(\ref{eq:RhoVariable})) measures the distance of the initial data
induced by $(r_{1},\text{\textgreek{W}}_{1}^{2},\bar{f}_{in1},\bar{f}_{out1})$
and $(r_{2},\text{\textgreek{W}}_{2}^{2},\bar{f}_{in2},\bar{f}_{out2})$
on $u=u_{1}$. Note that, in the general case when $u_{*}$ does not
satisfy (\ref{eq:u_*small}), the bound (\ref{eq:UpperBoundNonTrappingForCauchyStabilityToShow})
follows by applying (\ref{eq:UpperBoundNonTrappingForCauchyStabilityToShowRed})
successively on intervals of the form $\{u_{1}^{(n)}\le u\le u_{2}^{(n)}\}$,
where $u_{2}^{(n)}=n\cdot(v_{2}-v_{1})$ and $u_{1}^{(n)}=u_{2}^{(n-1)}$.
\begin{rem*}
Notice that the bound (\ref{eq:UpperBoundNonTrappingForCauchyStability})
yields the following bound for $\partial_{v}\tan^{-1}\Big(\sqrt{-\frac{\Lambda}{3}}r\Big)$:
\[
\Big|\log\Big(\sqrt{-\frac{3}{\Lambda}}\partial_{v}\tan^{-1}\Big(\sqrt{-\frac{\Lambda}{3}}r\Big)\Big)\Big|=\Big|\log\Big(\frac{\partial_{v}r}{1-\frac{1}{3}\Lambda r^{2}}\Big)\Big|\le C_{0}.
\]
Thus, in view of the fact that $\tan^{-1}\Big(\sqrt{-\frac{\Lambda}{3}}r\Big)\big|_{\text{\textgreek{g}}_{0}}=\tan^{-1}\Big(\sqrt{-\frac{\Lambda}{3}}r_{0}\Big)$
and $\tan^{-1}\Big(\sqrt{-\frac{\Lambda}{3}}r\Big)\big|_{+\infty}=\frac{\text{\textgreek{p}}}{2}$,
we readily infer that $|v_{2}-v_{1}|$ must necessarily satisfy the
bound: 
\begin{equation}
\Big|\log\big(\sqrt{-\Lambda}|v_{2}-v_{1}|\big)\Big|\le2C_{0}.\label{eq:UpperBoundWidth}
\end{equation}

\end{rem*}
Let us define the variables $\text{\textgreek{r}}_{i}$, $\text{\textgreek{k}}_{i}$,
$\bar{\text{\textgreek{k}}}_{i}$, $\text{\textgreek{t}}_{i}$ and
$\bar{\text{\textgreek{t}}}_{i}$, $i=1,2$, by (\ref{eq:RhoVariable})--(\ref{eq:TVariable}).
Recall that these variables satisfy equations (\ref{eq:DerivativeInUDirectionKappaRenormalised})--(\ref{eq:ConservationTau})
and the boundary conditions (\ref{eq:RhoBoundary})--(\ref{eq:ConservationTildeMAxis}).
In view of (\ref{eq:RelationHawkingMass}), the bounds (\ref{eq:UpperBoundNonTrappingForCauchyStability})
and (\ref{eq:UpperBoundNonTrappingForCauchyStabilityBootstrap}) yield
that: 
\begin{equation}
\max_{i=1,2}\sup_{\mathcal{W}_{u_{1};u_{2}}}\Big\{\big|\log(\text{\textgreek{k}}_{i})\big|+\big|\log(\bar{\text{\textgreek{k}}}_{i})\big|+\big|\log\big((-\Lambda)^{-\frac{1}{2}}\partial_{v}\text{\textgreek{r}}_{i}\big)\big|+\big|\log\big((-\Lambda)^{-\frac{1}{2}}\partial_{u}\text{\textgreek{r}}_{i}\big)\big|+\Big|\log\Big(\frac{1-\frac{2\sqrt{-\frac{\Lambda}{3}}\tilde{m}_{i}}{\tan\text{\textgreek{r}}_{i}}+\tan^{2}\text{\textgreek{r}}_{i}}{1+\tan^{2}\text{\textgreek{r}}_{i}}\Big)\Big|+\sqrt{-\Lambda}|\tilde{m}_{i}|\Bigg\}\le10C_{0}.\label{eq:UpperBoundNonTrappingUseful}
\end{equation}
Furthermore, using the fact that, for $i=1,2$, $\bar{\text{\textgreek{t}}}_{i}$
are functions of $v$ and $\text{\textgreek{t}}_{i}$ are functions
of $u$ (in view of equations (\ref{eq:ConservationTauBar}) and (\ref{eq:ConservationTau}),
respectively) from (\ref{eq:UpperBoundNonTrappingForCauchyStability}),
(\ref{eq:UpperBoundNonTrappingForCauchyStabilityBootstrap}) and (\ref{eq:UpperBoundNonTrappingUseful})
(using also (\ref{eq:UpperBoundWidth})) we obtain the following bound
for $\bar{\text{\textgreek{t}}}_{i},\text{\textgreek{t}}_{i}$ on
$\mathcal{W}_{u_{1};u_{2}}$: 
\begin{equation}
\sup_{\bar{u}\in(u_{1},u_{2})}\sup_{v\in(\bar{u}+v_{1},\bar{u}+v_{2})}\int_{\bar{u}+v_{1}}^{\bar{u}+v_{2}}\frac{\bar{\text{\textgreek{t}}}_{i}(\bar{u},v)}{|v-\bar{v}|+r_{0}}\, dv+\sup_{\bar{v}\in(v_{1},v_{2}+u_{2}-u_{1})}\sup_{\bar{u}\in(u_{1},u_{2})}\int_{u_{1}}^{u_{2}}\frac{\bar{\text{\textgreek{t}}}_{i}(u,\bar{v})}{|u-\bar{u}|+r_{0}}\, du\le e^{10(C_{0}+1)}.\label{eq:UpperBoundUsefulTauTauBar}
\end{equation}
Moreover, from (\ref{eq:CauchyTransportedInitialDistance}), (\ref{eq:UpperBoundNonTrappingUseful})
and (\ref{eq:UpperBoundUsefulTauTauBar}) we can estimate on $u=u_{1}$:
\begin{align}
\sup_{v\in(u_{1}+v_{1},u_{1}+v_{2})}\Bigg\{ & \Big|\text{\textgreek{k}}_{1}(u_{1},v)-\text{\textgreek{k}}_{2}(u_{1},v)\Big|+\Big|\text{\textgreek{k}}_{1}^{-1}(u_{1},v)-\text{\textgreek{k}}_{2}^{-1}(u_{1},v)\Big|+\label{eq:CauchyTransportedInitialDistanceUseful}\\
 & +\Big|\partial_{v}\text{\textgreek{r}}_{1}(u_{1},v)-\partial_{v}\text{\textgreek{r}}_{2}(u_{1},v)\Big|+\Big|(\partial_{v}\text{\textgreek{r}}_{1})^{-1}(u_{1},v)-(\partial_{v}\text{\textgreek{r}}_{2})^{-1}(u_{1},v)\Big|+\nonumber \\
 & +\Big|\Big(\frac{1-\frac{2\sqrt{-\frac{\Lambda}{3}}\tilde{m}_{1}}{\tan\text{\textgreek{r}}_{1}}+\tan^{2}\text{\textgreek{r}}_{1}}{1+\tan^{2}\text{\textgreek{r}}_{1}}\Big)(u_{1},v)-\Big(\frac{1-\frac{2\sqrt{-\frac{\Lambda}{3}}\tilde{m}_{2}}{\tan\text{\textgreek{r}}_{2}}+\tan^{2}\text{\textgreek{r}}_{2}}{1+\tan^{2}\text{\textgreek{r}}_{2}}\Big)(u_{1},v)\Big|+\nonumber \\
 & +\Big|\Big(\frac{1-\frac{2\sqrt{-\frac{\Lambda}{3}}\tilde{m}_{1}}{\tan\text{\textgreek{r}}_{1}}+\tan^{2}\text{\textgreek{r}}_{1}}{1+\tan^{2}\text{\textgreek{r}}_{1}}\Big)^{-1}(u_{1},v)-\Big(\frac{1-\frac{2\sqrt{-\frac{\Lambda}{3}}\tilde{m}_{2}}{\tan\text{\textgreek{r}}_{2}}+\tan^{2}\text{\textgreek{r}}_{2}}{1+\tan^{2}\text{\textgreek{r}}_{2}}\Big)^{-1}(u_{1},v)\Big|+\nonumber \\
 & +\sqrt{-\Lambda}\Big|\tilde{m}_{1}\big|_{(u_{1},v)}-\tilde{m}_{2}\big|_{(u_{1},v)}\Big|\Bigg\}+\sup_{v\in(u_{1}+v_{1},u_{1}+v_{2})}\int_{v_{1}}^{v_{2}}\frac{\big|\bar{\text{\textgreek{t}}}_{1}(u_{1},v)-\bar{\text{\textgreek{t}}}_{2}(u_{1},v)\big|}{|v-\bar{v}|+r_{0}}\, dv\le e^{100(C_{0}+1)}\text{\textgreek{d}}_{u_{1}}.\nonumber 
\end{align}
Thus, the proof of (\ref{eq:UpperBoundNonTrappingForCauchyStabilityToShowRed})
(and, therefore, the proof of Theorem \ref{thm:CauchyStability})
will follow (in view of (\ref{eq:UpperBoundNonTrappingUseful}) and
the boundary conditions (\ref{eq:RhoBoundary}) for $\text{\textgreek{r}}_{i}$)
by showing that 
\begin{align}
\sup_{\mathcal{W}_{u_{1};u_{2}}}\Bigg\{ & \Big|\log(\text{\textgreek{k}}_{1})-\log(\text{\textgreek{k}}_{2})\Big|+\Big|\log(\bar{\text{\textgreek{k}}}_{1})-\log(\bar{\text{\textgreek{k}}}_{2})\Big|+\Big|\log\big((-\Lambda)^{-\frac{1}{2}}\partial_{v}\text{\textgreek{r}}_{1}\big)-\log\big((-\Lambda)^{-\frac{1}{2}}\partial_{v}\text{\textgreek{r}}_{2}\big)\Big|+\label{eq:BoundToShowUseful}\\
 & +\Big|\log\big((-\Lambda)^{-\frac{1}{2}}\partial_{u}\text{\textgreek{r}}_{1}\big)-\log\big((-\Lambda)^{-\frac{1}{2}}\partial_{u}\text{\textgreek{r}}_{2}\big)\Big|+\sqrt{-\Lambda}\Big|\tilde{m}_{1}-\tilde{m}_{2}\Big|\Bigg\}+\nonumber \\
+ & \sup_{\bar{u}\in(u_{1},u_{2})}\int_{\{u=\bar{u}\}\cap\mathcal{W}_{u_{1};u_{2}}}\frac{\big|\bar{\text{\textgreek{t}}}_{1}(\bar{u},v)-\bar{\text{\textgreek{t}}}_{2}(\bar{u},v)\big|}{|v-v_{1}-\bar{u}|+r_{0}}\, dv+\sup_{\bar{v}\in(u_{1}+v_{1},u_{2}+v_{2})}\int_{\{v=\bar{v}\}\cap\mathcal{W}_{u_{1};u_{2}}}\frac{\big|\text{\textgreek{t}}_{1}(u,\bar{v})-\text{\textgreek{t}}_{2}(u,\bar{v})\big|}{|u+v_{1}-v|+r_{0}}\, du\nonumber \\
 & \hphantom{\sup_{\bar{u}\in(u_{1},u_{2})}\int_{\{u=\bar{u}\}\cap\mathcal{W}_{u_{1};u_{2}}}\frac{\big|\bar{\text{\textgreek{t}}}_{1}(\bar{u},v)-\bar{\text{\textgreek{t}}}_{2}(\bar{u},v)\big|}{|v-v_{1}-\bar{u}|+r_{0}}\, dv+\sup_{\bar{v}\in(u_{1}+v_{1},u_{2}+v_{2})}\int_{\{u=\bar{u}\}\cap\mathcal{W}_{u_{1};u_{2}}}blas}\le\exp\big(\exp(C_{1}^{\frac{4}{5}}(1+C_{0})\big)\text{\textgreek{d}}_{u_{1}}.\nonumber 
\end{align}

The differences $\text{\textgreek{r}}_{1}-\text{\textgreek{r}}_{2}$,
$\text{\textgreek{k}}_{1}-\text{\textgreek{k}}_{2}$, $\bar{\text{\textgreek{k}}}_{1}-\bar{\text{\textgreek{k}}}_{2}$,
$\text{\textgreek{t}}_{1}-\text{\textgreek{t}}_{2}$, $\bar{\text{\textgreek{t}}}_{1}-\bar{\text{\textgreek{t}}}_{3}$
satisfy on $\mathcal{W}_{u_{*}}$ the system (\ref{eq:DerivativeInUDirectionKappaRenormalised-1})--(\ref{eq:ConservationTau-1}),
i.\,e.:
\begin{align}
\partial_{u}\big(\log(\text{\textgreek{k}}_{2})-\log(\text{\textgreek{k}}_{1})\big)= & -\sqrt{-\frac{\Lambda}{3}}\Big(F_{1}\big(\tan\text{\textgreek{r}}_{2};\sqrt{-\frac{\Lambda}{3}}\tilde{m}_{2}\big)\bar{\text{\textgreek{k}}}_{2}^{-1}\text{\textgreek{t}}_{2}-F_{1}\big(\tan\text{\textgreek{r}}_{1};\sqrt{-\frac{\Lambda}{3}}\tilde{m}_{1}\big)\bar{\text{\textgreek{k}}}_{1}^{-1}\text{\textgreek{t}}_{1}\Big),\label{eq:DerivativeInUDirectionKappaRenormalisedCauchy}\\
\partial_{v}\big(\log(\bar{\text{\textgreek{k}}}_{2})-\log(\bar{\text{\textgreek{k}}}_{1})\big)= & \sqrt{-\frac{\Lambda}{3}}\Big(F_{1}\big(\tan\text{\textgreek{r}}_{2};\sqrt{-\frac{\Lambda}{3}}\tilde{m}_{2}\big)\bar{\text{\textgreek{k}}}_{2}^{-1}\text{\textgreek{t}}_{2}-F_{1}\big(\tan\text{\textgreek{r}}_{1};\sqrt{-\frac{\Lambda}{3}}\tilde{m}_{1}\big)\bar{\text{\textgreek{k}}}_{1}^{-1}\text{\textgreek{t}}_{1}\Big),\label{eq:DerivativeInVDirectionKappaBarRenormalisedCauchy}\\
\partial_{u}\partial_{v}\big(\text{\textgreek{r}}_{2}-\text{\textgreek{r}}_{1}\big)= & (-\frac{\Lambda}{3})\Big(F_{2}\big(\tan\text{\textgreek{r}}_{2};\sqrt{-\frac{\Lambda}{3}}\tilde{m}_{2}\big)\text{\textgreek{k}}_{2}\bar{\text{\textgreek{k}}}_{2}-F_{2}\big(\tan\text{\textgreek{r}}_{1};\sqrt{-\frac{\Lambda}{3}}\tilde{m}_{1}\big)\text{\textgreek{k}}_{1}\bar{\text{\textgreek{k}}}_{1}\Big),\label{eq:EquationRForRhoRenormalisedCauchy}\\
\partial_{u}\big(\tilde{m}_{2}-\tilde{m}_{1}\big)= & -4\pi\big(\bar{\text{\textgreek{k}}}_{2}^{-1}\text{\textgreek{t}}_{2}-\bar{\text{\textgreek{k}}}_{1}^{-1}\text{\textgreek{t}}_{1}\big),\label{eq:DerivativeInUtildeMRenormalisedCauchy}\\
\partial_{u}\big(\bar{\text{\textgreek{t}}}_{2}-\bar{\text{\textgreek{t}}}_{1}\big)= & 0,\label{eq:ConservationTauBarCauchy}\\
\partial_{v}\big(\text{\textgreek{t}}_{2}-\text{\textgreek{t}}_{1}\big)= & 0.\label{eq:ConservationTauCauchy}
\end{align}

Integrating equations \ref{eq:ConservationTauBarCauchy}--\ref{eq:ConservationTauCauchy}
and using the boundary conditions (\ref{eq:TauBoundary}) for $\text{\textgreek{t}}_{1},\text{\textgreek{t}}_{2}$
and $\bar{\text{\textgreek{t}}}_{1},\bar{\text{\textgreek{t}}}_{2}$,
we infer that, for any $(u,v)\in\mathcal{W}_{u_{1};u_{2}}$: 
\begin{equation}
\big(\text{\textgreek{t}}_{1}-\text{\textgreek{t}}_{2}\big)(u,v)=\big(\bar{\text{\textgreek{t}}}_{1}-\bar{\text{\textgreek{t}}}_{2}\big)(u_{1},u+v_{1})\label{eq:OutgoingDifferenceCauchy}
\end{equation}
and 
\begin{equation}
\big(\bar{\text{\textgreek{t}}}_{1}-\bar{\text{\textgreek{t}}}_{2}\big)(u,v)=\begin{cases}
\big(\bar{\text{\textgreek{t}}}_{1}-\bar{\text{\textgreek{t}}}_{2}\big)(u_{1},v), & v\le u_{1}+v_{2},\\
\big(\bar{\text{\textgreek{t}}}_{1}-\bar{\text{\textgreek{t}}}_{2}\big)(u_{1},v-v_{2}+v_{1}), & v\ge u_{1}+v_{2}.
\end{cases}\label{eq:IngoingDifferenceCauchy}
\end{equation}
In view of (\ref{eq:UpperBoundNonTrappingUseful}) and the initial
bound (\ref{eq:CauchyTransportedInitialDistanceUseful}), from (\ref{eq:OutgoingDifferenceCauchy})--(\ref{eq:IngoingDifferenceCauchy})
we can readily estimate: 
\begin{equation}
\sup_{\bar{u}\in(u_{1},u_{2})}\int_{\{u=\bar{u}\}\cap\mathcal{W}_{u_{1};u_{2}}}\frac{\big|\bar{\text{\textgreek{t}}}_{1}(\bar{u},v)-\bar{\text{\textgreek{t}}}_{2}(\bar{u},v)\big|}{|v-v_{1}-\bar{u}|+r_{0}}\, dv+\sup_{\bar{v}\in(u_{1}+v_{1},u_{2}+v_{2})}\int_{\{v=\bar{v}\}\cap\mathcal{W}_{u_{1};u_{2}}}\frac{\big|\text{\textgreek{t}}_{1}(u,\bar{v})-\text{\textgreek{t}}_{2}(u,\bar{v})\big|}{|u+v_{1}-v|+r_{0}}\, du\le e^{C_{1}^{\frac{1}{2}}(1+C_{0})}\text{\textgreek{d}}_{0}.\label{eq:DoneForTautauBar}
\end{equation}

Integrating equations (\ref{eq:DerivativeInUDirectionKappaRenormalisedCauchy}),
(\ref{eq:EquationRForRhoRenormalisedCauchy}) and (\ref{eq:DerivativeInUtildeMRenormalisedCauchy})
in $u$ and equations (\ref{eq:DerivativeInVDirectionKappaBarRenormalisedCauchy})
and (\ref{eq:EquationRForRhoRenormalisedCauchy}) in $v$, using the
boundary conditions (\ref{eq:RhoBoundary}), (\ref{eq:KappaBoundary})
and (\ref{eq:ConservationTildeMAxis}) on $\text{\textgreek{g}}_{0}$
and $\mathcal{I}$ as well as the bounds (\ref{eq:BoundednessF_1,F_2,F_3})
for $F_{1},F_{2}$ and (\ref{eq:UpperBoundNonTrappingUseful}), (\ref{eq:UpperBoundUsefulTauTauBar})
and (\ref{eq:CauchyTransportedInitialDistanceUseful}) for $\text{\textgreek{k}}_{i},\bar{\text{\textgreek{k}}}_{i},\text{\textgreek{r}}_{i},\tilde{m}_{i},\text{\textgreek{t}}_{i},\bar{\text{\textgreek{t}}}_{i}$,
we readily infer that, for any $(u,v)\in\mathcal{W}_{u_{1};u_{2}}$:
\begin{align}
\big|\log\big(\text{\textgreek{k}}_{2}(u,v)\big)-\log\big(\text{\textgreek{k}}_{1}(u,v)\big)\big| & \le\sqrt{-\Lambda}e^{C_{1}^{\frac{1}{2}}(1+C_{0})}\Big\{\int_{\max\{u_{1},v-v_{2}-u_{1}\}}^{u}\Big(\overline{\mathfrak{M}}\cdot\mathfrak{D}+|\text{\textgreek{t}}_{2}-\text{\textgreek{t}}_{1}|\Big)\Big|_{(\bar{u},v)}\, d\bar{u}+\label{eq:FirstDifferenceForCauchy}\\
 & \hphantom{\le C_{M,\text{\textgreek{r}}_{0}}\sqrt{-\Lambda}\Bigg\{++}+\text{\textgreek{q}}_{v>v_{2}}(v)\int_{v-v_{2}+v_{1}}^{v}\Big(\mathfrak{M}\cdot\mathfrak{D}+|\bar{\text{\textgreek{t}}}_{2}-\bar{\text{\textgreek{t}}}_{1}|\Big)\Big|_{(v-v_{2}-u_{1},\bar{v})}\, d\bar{v}+\nonumber \\
 & \hphantom{\le C_{M,\text{\textgreek{r}}_{0}}\sqrt{-\Lambda}\Bigg\{++}+\text{\textgreek{q}}_{v>v_{2}}(v)\int_{u_{1}}^{v-v_{2}-u_{1}}\Big(\overline{\mathfrak{M}}\cdot\mathfrak{D}+|\text{\textgreek{t}}_{2}-\text{\textgreek{t}}_{1}|\Big)\Big|_{(\bar{u},v_{1}+v-v_{2})}\, d\bar{u}\Big\}+e^{C_{1}^{\frac{1}{2}}(1+C_{0})}\text{\textgreek{d}}_{u_{1}},\nonumber \\
\big|\log\big(\bar{\text{\textgreek{k}}}_{2}(u,v)\big)-\log\big(\bar{\text{\textgreek{k}}}_{1}(u,v)\big)\big| & \le\sqrt{-\Lambda}e^{C_{1}^{\frac{1}{2}}(1+C_{0})}\Big\{\int_{v_{1}+u}^{v}\Big(\mathfrak{M}\cdot\mathfrak{D}+|\bar{\text{\textgreek{t}}}_{2}-\bar{\text{\textgreek{t}}}_{1}|\Big)\Big|_{(u,\bar{v})}\, d\bar{v}+\\
 & \hphantom{\le\sqrt{-\Lambda}e^{C_{1}^{\frac{1}{2}}(1+C_{0})}\Big\{}+\int_{u_{1}}^{u}\Big(\overline{\mathfrak{M}}\cdot\mathfrak{D}+|\text{\textgreek{t}}_{2}-\text{\textgreek{t}}_{1}|\Big)\Big|_{(\bar{u},v_{1}+u)}\, d\bar{u}\Big\}+e^{C_{1}^{\frac{1}{2}}(1+C_{0})}\text{\textgreek{d}}_{u_{1}},\\
\big|\partial_{v}\text{\textgreek{r}}_{2}(u,v)-\partial_{v}\text{\textgreek{r}}_{1}(u,v)\big| & \le(-\Lambda)e^{C_{1}^{\frac{1}{2}}(1+C_{0})}\Big\{\int_{\max\{u_{1},v-v_{2}-u_{1}\}}^{u}\Big(\overline{\mathfrak{M}}\cdot\mathfrak{D}+|\text{\textgreek{t}}_{2}-\text{\textgreek{t}}_{1}|\Big)\Big|_{(\bar{u},v)}\, d\bar{u}+\\
 & \hphantom{\le C_{M,\text{\textgreek{r}}_{0}}\sqrt{-\Lambda}\Bigg\{++}+\text{\textgreek{q}}_{v>v_{2}}(v)\int_{v-v_{2}+v_{1}}^{v}\Big(\mathfrak{M}\cdot\mathfrak{D}+|\bar{\text{\textgreek{t}}}_{2}-\bar{\text{\textgreek{t}}}_{1}|\Big)\Big|_{(v-v_{2}-u_{1},\bar{v})}\, d\bar{v}+\nonumber \\
 & \hphantom{\le C_{M,\text{\textgreek{r}}_{0}}\sqrt{-\Lambda}\Bigg\{++}+\text{\textgreek{q}}_{v>v_{2}}(v)\int_{u_{1}}^{v-v_{2}-u_{1}}\Big(\overline{\mathfrak{M}}\cdot\mathfrak{D}+|\text{\textgreek{t}}_{2}-\text{\textgreek{t}}_{1}|\Big)\Big|_{(\bar{u},v_{1}+v-v_{2})}\, d\bar{u}\Big\}+\sqrt{-\Lambda}e^{C_{1}^{\frac{1}{2}}(1+C_{0})}\text{\textgreek{d}}_{u_{1}},\nonumber \\
\big|\partial_{u}\text{\textgreek{r}}_{2}(u,v)-\partial_{u}\text{\textgreek{r}}_{1}(u,v)\big| & \le(-\Lambda)e^{C_{1}^{\frac{1}{2}}(1+C_{0})}\Big\{\int_{v_{1}+u}^{v}\Big(\mathfrak{M}\cdot\mathfrak{D}+|\bar{\text{\textgreek{t}}}_{2}-\bar{\text{\textgreek{t}}}_{1}|\Big)\Big|_{(u,\bar{v})}\, d\bar{v}+\\
 & \hphantom{\le\sqrt{-\Lambda}e^{C_{1}^{\frac{1}{2}}(1+C_{0})}\Big\{}+\int_{u_{1}}^{u}\Big(\overline{\mathfrak{M}}\cdot\mathfrak{D}+|\text{\textgreek{t}}_{2}-\text{\textgreek{t}}_{1}|\Big)\Big|_{(\bar{u},v_{1}+u)}\, d\bar{u}\Big\}+\sqrt{-\Lambda}e^{C_{1}^{\frac{1}{2}}(1+C_{0})}\text{\textgreek{d}}_{u_{1}},\nonumber \\
\big|\tilde{m}_{2}(u,v)-\tilde{m}_{1}(u,v)\big| & \le e^{C_{1}^{\frac{1}{2}}(1+C_{0})}\Big\{\int_{\max\{u_{1},v-v_{2}-u_{1}\}}^{u}\Big(\overline{\mathfrak{M}}\cdot\mathfrak{D}+|\text{\textgreek{t}}_{2}-\text{\textgreek{t}}_{1}|\Big)\Big|_{(\bar{u},v)}\, d\bar{u}+(-\Lambda)^{-\frac{1}{2}}e^{C_{1}^{\frac{1}{2}}(1+C_{0})}\text{\textgreek{d}}_{u_{1}},\label{eq:LastDifferenceForCauchy}
\end{align}
where 
\begin{equation}
\text{\textgreek{q}}_{v>v_{2}}(v)=\begin{cases}
1, & \mbox{ if }v>v_{2},\\
0, & \mbox{ if }v\le v_{2},
\end{cases}
\end{equation}
\begin{align}
\mathfrak{D}(u,v)\doteq & \big|\log(\text{\textgreek{k}}_{2})-\log(\text{\textgreek{k}}_{1})\big|(u,v)+\big|\log(\bar{\text{\textgreek{k}}}_{2})-\log(\bar{\text{\textgreek{k}}}_{1})\big|(u,v)+\label{eq:DifferenceToEstimate}\\
 & +\big|\text{\textgreek{r}}_{2}-\text{\textgreek{r}}_{1}\big|(u,v)+\sqrt{-\Lambda}\big|\tilde{m}_{2}-\tilde{m}_{1}\big|(u,v)\nonumber 
\end{align}
and the functions $\overline{\mathfrak{M}}(u,v)\ge0$, $\mathfrak{M}(u,v)\ge0$
satisfy the bound: 
\begin{equation}
\sup_{\bar{v}\in(u_{1}+v_{1},u_{2}+v_{2})}\int_{\{v=\bar{v}\}\cap\mathcal{W}_{u_{1};u_{2}}}\overline{\mathfrak{M}}(u,\bar{v})\, du+\sup_{\bar{u}\in(u_{1},u_{2})}\int_{\{u=\bar{u}\}\cap\mathcal{W}_{u_{1};u_{2}}}\mathfrak{M}(\bar{u},v)\, dv\le1.\label{eq:BoundMErrors}
\end{equation}

Defining the function $\mathcal{X}:(2u_{1}+v_{1},2u_{2}+v_{2})\rightarrow[0,+\infty)$
by the relation 
\begin{align}
\mathcal{X}(t)\doteq\sup_{\{u+v\le t\}\cap\mathcal{W}_{u_{1};u_{2}}}\Big\{ & \big|\log(\text{\textgreek{k}}_{2})-\log(\text{\textgreek{k}}_{1})\big|(u,v)+\big|\log(\bar{\text{\textgreek{k}}}_{2})-\log(\bar{\text{\textgreek{k}}}_{1})\big|(u,v)+\label{eq:DifferenceToEstimate-1}\\
 & +(-\Lambda)^{-\frac{1}{2}}|\partial_{v}\text{\textgreek{r}}_{2}-\partial_{v}\text{\textgreek{r}}_{1}|(u,v)+(-\Lambda)^{\frac{1}{2}}|\partial_{u}\text{\textgreek{r}}_{2}-\partial_{u}\text{\textgreek{r}}_{1}|(u,v)+\nonumber \\
 & +\big|\text{\textgreek{r}}_{2}-\text{\textgreek{r}}_{1}\big|(u,v)+\sqrt{-\Lambda}\big|\tilde{m}_{2}-\tilde{m}_{1}\big|(u,v)\Big\},\nonumber 
\end{align}
from (\ref{eq:FirstDifferenceForCauchy})--(\ref{eq:LastDifferenceForCauchy})
(using also (\ref{eq:CauchyTransportedInitialDistanceUseful}), (\ref{eq:OutgoingDifferenceCauchy}),
(\ref{eq:IngoingDifferenceCauchy}) and (\ref{eq:BoundMErrors}))
we readily obtain that, for all $t\in(2u_{1}+v_{1},2u_{2}+v_{2})$:
\begin{equation}
\mathcal{X}(t)\le e^{C_{1}^{2/3}(1+C_{0})}\int_{2u_{1}+v_{1}}^{t}\mathcal{Y}(\bar{t})\mathcal{X}(\bar{t})\, d\bar{t}+e^{C_{1}^{2/3}(1+C_{0})}\text{\textgreek{d}}_{u_{1}},\label{eq:ForGronwalX}
\end{equation}
where the function $\mathcal{Y}\ge0$ satisfies 
\begin{equation}
\int_{2u_{1}+v_{1}}^{2u_{2}+v_{2}}\mathcal{Y}(t)\, dt\le1.
\end{equation}
From (\ref{eq:ForGronwalX}), an application of Gronwall's inequality
readily yields that 
\begin{equation}
\sup_{t\in(2u_{1}+v_{1},2u_{2}+v_{2})}\mathcal{X}(t)\le\exp\big(\exp(C_{1}^{3/4}(1+C_{0}))\big)\text{\textgreek{d}}_{u_{1}}.\label{eq:AlmostDone}
\end{equation}

The bound (\ref{eq:BoundToShowUseful}) follows readily from (\ref{eq:DoneForTautauBar})
and (\ref{eq:AlmostDone}). Therefore, the proof of the Theorem \ref{thm:CauchyStability}
is complete. \qed

\subsection{\label{sub:Proof-of-Cauchy-AdS}Proof of Corollary \ref{cor:CauchyStabilityOfAdS}}

By possibly applying a suitable gauge transformation of the form $(u,v)\rightarrow(U(u),V(v))$,
such that $U(0)=0$, $V(0)=0$, $V=U$ when $v=u$ (i.\,e.~on $\text{\textgreek{g}}_{0}$)
and $V=U+V_{0}$ when $v=u+v_{0}$ (i.\,e.~on $\mathcal{I}$), we
will assume without loss of generality that the initial data satisfy
the gauge condition 
\begin{equation}
\frac{\partial_{v}r_{/}}{1-\frac{2m_{/}}{r_{/}}}=\frac{1}{2}.\label{eq:GaugeFixingInitialDataAdS}
\end{equation}

In view of (\ref{eq:GaugeFixingInitialDataAdS}), equation (\ref{eq:DerivativeInVDirectionKappaBar}),
combined with the boundary condition 
\begin{equation}
-(\partial_{u}r)_{/}(0)=\partial_{v}r_{/}(0),
\end{equation}
 imply that 
\begin{equation}
-2\frac{(\partial_{u}r)_{/}}{1-\frac{2m_{/}}{r_{/}}}(v)=\exp\Big(4\pi\int_{0}^{v}\frac{r_{/}(T_{vv})_{/}}{(\partial_{v}r)_{/}}\, d\bar{v}\Big).
\end{equation}
Therefore, (\ref{eq:RelationHawkingMass}) and (\ref{eq:GaugeFixingInitialDataAdS})
yield 
\begin{equation}
\frac{\text{\textgreek{W}}_{/}^{2}}{1-\frac{1}{3}\Lambda r_{/}^{2}}=\frac{1-\frac{2m_{/}}{r_{/}}}{1-\frac{1}{3}\Lambda r_{/}^{2}}\exp\Big(4\pi\int_{0}^{v}\frac{r_{/}(T_{vv})_{/}}{(\partial_{v}r)_{/}}\, d\bar{v}\Big).\label{eq:OmegaBoundCuchyStability}
\end{equation}

From (\ref{eq:SmallnessForCauchyStability}), (\ref{eq:GaugeFixingInitialDataAdS})
and (\ref{eq:OmegaBoundCuchyStability}) we thus infer that, for some
absolute constant $C>0$: 
\begin{equation}
\sup_{v\in[v_{1},v_{2})}\Bigg\{\Big|\log\big(\frac{\text{\textgreek{W}}_{/}^{2}}{1-\frac{1}{3}\Lambda r_{/}^{2}}\big)\Big|+\Big|\log\Big(\frac{2\partial_{v}r_{/}}{1-\frac{2m_{/}}{r_{/}}}\Big)\Big|+\Big|\log\Big(\frac{1-\frac{2m_{/}}{r_{/}}}{1-\frac{1}{3}\Lambda r_{/}^{2}}\Big)\Big|+\sqrt{-\Lambda}|\tilde{m}_{/}|\Bigg\}(v)+\int_{v_{1}}^{v_{2}}r_{/}(T_{vv})_{/}\, d\bar{v}\le C\text{\textgreek{e}}.\label{eq:UpperBoundInitialData-2}
\end{equation}
Applying Theorem \ref{thm:CauchyStability} for $v_{1}=0$, $v_{2}=v_{0}$,
$(r_{/2},\text{\textgreek{W}}_{/2}^{2},\bar{f}_{in/2},\bar{f}_{out/2})=(r_{/},\text{\textgreek{W}}_{/}^{2},\bar{f}_{in/},\bar{f}_{out/})$
and $(r_{/1},\text{\textgreek{W}}_{/1}^{2},\bar{f}_{in/1},\bar{f}_{out/1})=(r_{AdS},\text{\textgreek{W}}_{AdS}^{2},0,0)$,
where $(r_{AdS},\text{\textgreek{W}}_{AdS}^{2},0,0)$ are the trivial
initial data renormalised by the gauge condition (\ref{eq:GaugeFixingInitialDataAdS}),
and noting that, in this case, $\mathcal{U}_{1}=\{0<u<+\infty\}\cap\{u<v<u+v_{0}\}$
and $C_{0}=0$, we readily obtain (\ref{eq:InclusionInMaximalDomain})
and (\ref{eq:SmallnessCauchyStability}) in view of (\ref{eq:InclusionOtherdomain})
and (\ref{eq:UpperBoundNonTrappingForCauchyStability-1}). \qed

\appendix

\section{\label{sec:Ill-posedness}Ill-posedness of the spherically symmetric
Einstein--null dust system at $r=0$}

The aim of this Section is to establish a general ill-posedness result
for the spherically symmetric Einstein--null dust system in the presence
of a regular axis of symmetry. In order to state this result in its
strongest form, we will first introduce the notion of admissible $C^{0}$
spherically symmetric spacetimes $(\mathcal{M},g)$ in Section \ref{sub:C_0LorentzianSpacetimes}.
These are $C^{0}$ spherically symmetric spacetimes admitting a double-null
foliation. We will then examine the basic properties of the spherically
symmetric Einstein--null dust system on such spacetimes in Section
\ref{sub:Einstein--Null-dust-C0}. Finally, in Section \ref{sub:An-ill-posedness-result},
we will establish that smooth solutions to the spherically symmetric
Einstein--null dust system with a non-trivial axis of symmetry break
down (as admissible $C^{0}$ spherically symmetric solutions of the
Einstein--null dust system) in finite time.

\subsection{\label{sub:C_0LorentzianSpacetimes}The class of admissible $C^{0}$
spherically symmetric spacetimes}

In this section, we will introduce the notion of $C^{0}$ spherically
symmetric spacetimes and state their basic properties. While we will
only restrict to the case of $3+1$ dimensional spacetimes, the definitions
and results of this section can be immediately extended to arbitrary
dimensions. We will also introduce the notion of an admissible $C^{0}$
spherically symmetric spacetime, which are the $C^{0}$ spacetimes
on which the spherically symmetric Einstein--null dust can be rigorously
formulated.

A Lorentzian manifold $(\mathcal{M}^{3+1},g)$ will be called a $C^{0}$
spacetime if $\mathcal{M}$ is a $C^{1}$ manifold and $g$ is a $C^{0}$
Lorentzian metric on $\mathcal{M}$. We will define the notion of
spherical symmetry in the class of $C^{0}$ spacetimes as follows:
\begin{defn}
\label{def:C0-spherical-symmetry}\textgreek{A} $C^{0}$ spacetime
$(\mathcal{M}^{3+1},g)$ will be called $C^{0}$ \emph{spherically
symmetric }if there exists a $C^{1}$ action 
\begin{equation}
\mathcal{A}:SO(3)\times\mathcal{M}\rightarrow\mathcal{M}\label{eq:IsometricAction}
\end{equation}
with the following properties:

\begin{enumerate}

\item For any $a\in SO(3)$, the map 
\begin{equation}
\mathcal{A}(a,\cdot):\mathcal{M}\rightarrow\mathcal{M}
\end{equation}
defined by (\ref{eq:IsometricAction}) is an isometry with respect
to $g$.

\item For any $p\in\mathcal{M}$, the orbit $Orb(p)$ of $p$ under
the action (\ref{eq:IsometricAction}) is either the single point
$\{p\}$ or is a $2$-dimensional surface homeomorphic to $\mathbb{S}^{2}$.

\item For any $p\in\mathcal{M}$, there exists an open $SO(3)$-invariant
neighborhood $\mathcal{V}_{p}$ of $p$ such that:

\begin{enumerate}

\item In the case $Orb(p)\neq\{p\}$, there exists a $C^{1}$-diffeomorphism
\begin{equation}
\mathcal{F}:\mathcal{V}_{p}\rightarrow\mathcal{U}\times\mathbb{S}^{2}
\end{equation}
for some domain $\mathcal{U}$ in $\mathbb{R}^{2}$, such that $\mathcal{F}$
commutes the action (\ref{eq:IsometricAction}) of $SO(3)$ on $\mathcal{V}_{p}$
with the natural action of $SO(3)$ on $\mathcal{U}\times\mathbb{S}^{2}$
by rotations of $\mathbb{S}^{2}$.

\item For any $p\in\mathcal{M}$ such that $Orb(p)=\{p\}$, there
exists a $C^{1}$-diffeomorphism 
\begin{equation}
\mathcal{F}:\mathcal{V}_{p}\rightarrow\mathbb{D}^{3}\times(-1,1),
\end{equation}
where $\mathbb{D}^{3}$ is the unit $3$-disc, such that $\mathcal{F}$
commutes the action (\ref{eq:IsometricAction}) of $SO(3)$ on $\mathcal{V}_{p}$
with the natural action of $SO(3)$ on $\mathbb{D}^{3}\times(0,1)$
by rotations of $\mathbb{D}^{3}$.

\end{enumerate}

\end{enumerate}
\end{defn}
We will also define the axis of a $C^{0}$ spherically symmetric spacetime
$(\mathcal{M},g)$ as follows:
\begin{defn}
The \emph{axis} of the action (\ref{eq:IsometricAction}) is the set
\begin{equation}
\mathcal{Z}\doteq\big\{ p\in\mathcal{M}:\mbox{ }Orb(p)=\{p\}\big\}.
\end{equation}

\end{defn}
Definition \ref{def:C0-spherical-symmetry} implies that $\mathcal{Z}$
(if non-empty) is a $1$-dimensional $C^{1}$ submanifold of $\mathcal{M}$. 
\begin{rem*}
For the rest of this section, we will only work on $C^{0}$ spherically
symmetric spacetimes $(\mathcal{M},g)$ with $\mathcal{Z}\neq\emptyset$.
\end{rem*}
Let us define the continuous function $r:\mathcal{M}\rightarrow[0,+\infty)$
by the relation 
\begin{equation}
r(p)=\sqrt{\frac{Area(Orb(p))}{4\pi}}.\label{eq:DefinitionR-1}
\end{equation}
In view of the properties 3.a and 3.b of $C^{0}$ spherically symmetric
spacetimes (see Definition \ref{def:C0-spherical-symmetry}) and the
fact that $g$ is non-degenerate on $\mathcal{M}$, we infer that
\begin{equation}
r(p)=0\Leftrightarrow p\in\mathcal{Z}.\label{eq:R=00003D0exactlyOnAxis}
\end{equation}

For any $p\in\mathcal{Z}$, let $\mathcal{V}_{p}$ be the open neighborhood
of $p$ in $\mathcal{M}$ appearing in Definition \ref{def:C0-spherical-symmetry}.
According to 3.b in Definition \ref{def:C0-spherical-symmetry}, $\mathcal{V}_{p}$
is identified with $\mathbb{D}^{3}\times(-1,1)$ through a $C^{1}$-diffeomorphism.
In the natural $SO(3)$-invariant coordinate chart $\mathcal{U}_{p}\times\mathbb{S}^{2}$on
$\mathcal{V}_{p}\backslash\mathcal{Z}\simeq(\mathbb{D}^{3}\backslash\{0\})\times(-1,1)$
(where $\mathcal{U}_{p}\subset\mathbb{R}^{2}$ is naturally identified
with a radial slice of $(\mathbb{D}^{3}\backslash\{0\})\times(-1,1)$),
the metric $g$ splits as

\begin{equation}
g=\bar{g}+r^{2}g_{\mathbb{S}^{2}},\label{eq:SplittingOfTheMetric}
\end{equation}
where $\bar{g}$ is a $C^{0}$ Lorentzian metric on $\mathcal{U}_{p}$
\emph{extending continuously on} $\partial\mathcal{U}_{p}$. Note
that the resulting $C^{1}$ projection $\text{\textgreek{p}}:\mathcal{V}_{p}\backslash\mathcal{Z}\rightarrow\mathcal{U}_{p}\times\mathbb{S}^{2}\rightarrow\mathcal{U}_{p}$
admits a $C^{1}$ extension on $\mathcal{Z}$, mapping $\mathcal{Z}$
into $\partial\mathcal{U}_{p}.$ We will denote 
\begin{equation}
\text{\textgreek{g}}_{\mathcal{Z};p}\doteq\text{\textgreek{p}}(\mathcal{Z})\subset\partial\mathcal{U}_{p}\subset\mathbb{R}.\label{eq:ProjectionAxis-1}
\end{equation}

In view of the properties 1 and 3.b of $C^{0}$ spherically symmetric
spacetimes (see Definition \ref{def:C0-spherical-symmetry}) and the
fact that $g$ is non-degenerate, every connected component of $\mathcal{Z}$
is a timelike curve in $\mathcal{M}$. Thus, the curve (\ref{eq:ProjectionAxis-1})
in $\mathbb{R}^{2}$ is timelike with respect to $\bar{g}$. 
\begin{defn}
\label{def:ConeFoliation} A $C^{0}$ spherically symmetric spacetime
$(\mathcal{M},g)$ with non-empty axis $\mathcal{Z}$ will be called
\emph{admissible} if all of the following conditions are satisfied:

\begin{enumerate}

\item{$\mathcal{Z}$ is connected}

\item{$(\mathcal{M},g)$ has a ``simple'' topology in the following
sense: There exists an $SO(3)$-invariant $C^{1}$-diffeomorphism
$\mathcal{F}:\mathcal{M}\backslash\mathcal{Z}\rightarrow\mathcal{U}\times\mathbb{S}^{2}$,
$\mathcal{U}\subset\mathbb{R}^{2}$, with $\text{\textgreek{p}}:\mathcal{M}\backslash\mathcal{Z}\xrightarrow{\mathcal{F}}\mathcal{U}\times\mathbb{S}^{2}\rightarrow\mathcal{U}$
extending as a $C^{1}$ map on $\mathcal{Z}$ with 
\begin{equation}
\text{\textgreek{g}}_{\mathcal{Z}}\doteq\text{\textgreek{p}}(\mathcal{Z})\subset\partial\mathcal{U}.
\end{equation}
}

\item There exists a pair of $C^{0}$ functions $u,v:\mathcal{U}\cup\text{\textgreek{g}}_{\mathcal{Z}}\rightarrow\mathbb{R}$
such that:

\begin{enumerate}

\item For any $u_{0},v_{0}$ in the image of $u,v$, respectively,
the level curves $\{u=u_{0}\}$ and $\{v=v_{0}\}$ are $C^{1}$ curves
in $\mathcal{U}\cup\text{\textgreek{g}}_{\mathcal{Z}}$, either intersecting
transversally or not intersecting at all. In particular, $(u,v)$
constitute a $C^{0}$ coordinate chart on $\mathcal{U}\cup\text{\textgreek{g}}_{\mathcal{Z}}$
and the coordinate vector fields $\partial_{u},\partial_{v}$ are
well defined $C^{0}$ vector fields on $\mathcal{U}\cup\text{\textgreek{g}}_{\mathcal{Z}}$.

\item The vector fields $\partial_{u},\partial_{v}$ satisfy 
\begin{equation}
\bar{g}(\partial_{u},\partial_{u})=\bar{g}(\partial_{v},\partial_{v})=0\label{eq:OpticalFunctions}
\end{equation}
everywhere on $\mathcal{U}\cup\text{\textgreek{g}}_{\mathcal{Z}}$.

\end{enumerate}

\item Any other pair $(\bar{u},\bar{v})$ of $C^{0}$ functions on
$\mathcal{U}\cup\text{\textgreek{g}}_{\mathcal{Z}}$ satisfying the
above property is related to $(u,v)$ by a transformation of the form
$\bar{u}=U(u)$, $\bar{v}=V(v)$, for some unique and strictly monotonic,
locally bi-Lipschitz functions $U$, $V$. 

\end{enumerate}\end{defn}
\begin{rem*}
The pair $(u,v)$ will be called a \emph{double null foliation}. The
existence of a double-null foliation $(u,v)$ locally around each
point on $\mathcal{Z}$ can be readily established in the case when
$\bar{g}$ (see (\ref{eq:SplittingOfTheMetric})) is assumed, in addition,
to be of $C^{0,1}$ regularity. In general, however, when $\bar{g}$
is merely $C^{0}$, the integral curves of a congruence of null vectors
for $\bar{g}$ passing through a given point are not necessarily unique,
and it is not necessary that a continuous foliation of $\mathcal{U}$
by such curves exists (even if one restricts to open subsets of $\mathcal{U}$).

In the case when $\bar{g}$ is assumed to be $C^{0,1}$, condition
4 in Definition \ref{def:ConeFoliation} is also automatically satisfied.
\end{rem*}
Note that, if $(\mathcal{M},g)$ is an admissible $C^{0}$ spherically
symmetric spacetime, for any $p\in\mathcal{Z}$, in a $(u,v)$ coordinate
chart as in Definition \ref{def:ConeFoliation}, the metric $g$ takes
the form 
\begin{equation}
g=-\text{\textgreek{W}}^{2}dudv+r^{2}g_{\mathbb{S}^{2}},\label{eq:DoubleNullExpressionMetric}
\end{equation}
 where $\text{\textgreek{W}}>0$ is a $C^{0}$ function on $\mathcal{U}$,
extending continuously on $\text{\textgreek{g}}_{\mathcal{Z}}$ so
that 
\begin{equation}
\text{\textgreek{W}}|_{\text{\textgreek{g}}_{\mathcal{Z}}}>0.\label{eq:NonVanishingOmega}
\end{equation}

\subsection{\label{sub:Einstein--Null-dust-C0}The spherically symmetric Einstein--null
dust system on admissible $C^{0}$ spherically symmetric spacetimes}

Let $(\mathcal{M}^{3+1},g)$ be an admissible $C^{0}$ spherically
symmetric spacetime as in Section \ref{sub:C_0LorentzianSpacetimes},
with non-empty axis $\mathcal{Z}$. Recall that the metric $g$ is
expressed on $\mathcal{M}\backslash\mathcal{Z}$ as (\ref{eq:DoubleNullExpressionMetric}),
where $\text{\textgreek{W}},r\in C^{0}(\mathcal{U})$ extend continuously
on $\text{\textgreek{g}}_{\mathcal{Z}}$ such that 
\begin{equation}
\text{\textgreek{W}}>0\mbox{ on }\mathcal{U}\cup\text{\textgreek{g}}_{\mathcal{Z}}\label{eq:NonVanishingOmegaOnAxis}
\end{equation}
 and 
\begin{equation}
r|_{\text{\textgreek{g}}_{\mathcal{Z}}}=0,\mbox{ }r|_{\mathcal{U}}>0.\label{eq:ROnAxis}
\end{equation}
By possibly reparametrising the functions $u,v$ as $u\rightarrow U(u)$,
$v\rightarrow V(v)$ for some $C^{1}$ functions $U,V:\mathbb{R}\rightarrow\mathbb{R}$,
we will assume that 
\begin{equation}
u=v\mbox{ on }\text{\textgreek{g}}_{\mathcal{Z}}.\label{eq:ConditionGammaZ}
\end{equation}

The spherically symmetric Einstein--null dust system for $(r,\text{\textgreek{W}}^{2};\text{\textgreek{t}},\bar{\text{\textgreek{t}}})$
on $\mathcal{U}$, where $\text{\textgreek{t}},\bar{\text{\textgreek{t}}}$
are regular Borel measures on $\mathcal{U}$, is the following system:
\begin{align}
\partial_{u}\partial_{v}(r^{2})= & -\frac{1}{2}(1-\Lambda r^{2})\text{\textgreek{W}}^{2},\label{eq:RequationC0}\\
\partial_{u}\partial_{v}\log(\text{\textgreek{W}}^{2})= & \frac{\text{\textgreek{W}}^{2}}{2r^{2}}\big(1+4\text{\textgreek{W}}^{-2}\partial_{u}r\partial_{v}r\big),\label{eq:OmegaEquationC0}\\
\partial_{v}(\text{\textgreek{W}}^{-2}\partial_{v}r)= & -4\pi r^{-1}\text{\textgreek{W}}^{-2}\bar{\text{\textgreek{t}}},\label{eq:ConstraintVC0}\\
\partial_{u}(\text{\textgreek{W}}^{-2}\partial_{u}r)= & -4\pi r^{-1}\text{\textgreek{W}}^{-2}\text{\textgreek{t}},\label{eq:ConstraintUC0}\\
\partial_{u}\bar{\text{\textgreek{t}}}= & 0,\label{eq:IngoingC0}\\
\partial_{v}\text{\textgreek{t}}= & 0,\label{eq:OutgoingC0}
\end{align}
where $\Lambda\in\mathbb{R}$ is fixed. Notice that, for equations
(\ref{eq:RequationC0})--(\ref{eq:OutgoingC0}) to be well defined
in the sense of distributions, it suffices that $\text{\textgreek{W}},r\in C^{0}$,
$\text{\textgreek{W}},r>0$. The system (\ref{eq:RequationC0})--(\ref{eq:OutgoingC0})
is also supplemented by the following boundary conditions on $\text{\textgreek{g}}_{\mathcal{Z}}$:
\begin{equation}
r|_{\text{\textgreek{g}}_{\mathcal{Z}}}=0\label{eq:RBoundaryCondition}
\end{equation}
and: 
\begin{equation}
\text{\textgreek{t}}|_{\text{\textgreek{g}}_{\mathcal{Z}}}=\bar{\text{\textgreek{t}}}|_{\text{\textgreek{g}}_{\mathcal{Z}}}\label{eq:EqualityDustsOnAxis}
\end{equation}
Note that the condition (\ref{eq:EqualityDustsOnAxis}) is well defined
because of (\ref{eq:IngoingC0}) and (\ref{eq:OutgoingC0}). 
\begin{rem*}
The condition (\ref{eq:EqualityDustsOnAxis}) arises naturally by
requiring that the energy momentum tensor $T_{\text{\textgreek{m}\textgreek{n}}}$
on $\mathcal{M}$, defined in the $(u,v,y^{1},y^{2})$ coordinate
chart on $\mathcal{M}\backslash\mathcal{Z}\simeq\mathcal{U}\times\mathbb{S}^{2}$
by 
\begin{equation}
T_{uu}=r^{-2}\text{\textgreek{t}},\mbox{ }T_{vv}=r^{-2}\bar{\text{\textgreek{t}}},\mbox{ }T_{uv}=T_{Au}=T_{Av}=T_{AB}=0,
\end{equation}
satisies (in the weak sense) the conservation-of-energy condition
$\nabla^{\text{\textgreek{m}}}T_{\text{\textgreek{m}\textgreek{n}}}=0$
everywhere on $\mathcal{M}$ (using also the gauge condition (\ref{eq:ConditionGammaZ})).
\end{rem*}
Notice that the system (\ref{eq:RequationC0})--(\ref{eq:OutgoingC0})
is gauge invariant in the following sense: If $(r,\text{\textgreek{W}}^{2},\text{\textgreek{t}},\bar{\text{\textgreek{t}}})$
is a solution to (\ref{eq:RequationC0})--(\ref{eq:OutgoingC0}),
then, for any double null coordinate transformation of the form 
\begin{equation}
\begin{cases}
u'=U(u),\\
v'=V(v),
\end{cases}\label{eq:GaugeTransformation}
\end{equation}
for some strictly monotonic, locally bi-Lipschitz functions $U,V$,
the set of functions 
\begin{equation}
\big(r',(\text{\textgreek{W}}')^{2},\text{\textgreek{t}}',\bar{\text{\textgreek{t}}}'\big)\Big|_{(u',v')}\doteq\big(r,\text{\textgreek{W}}^{2},\frac{\text{\textgreek{t}}}{(dU/du)^{2}},\frac{\bar{\text{\textgreek{t}}}}{(dV/dv)^{2}}\big)\Big|_{(U^{-1}(u'),V^{-1}(v'))}
\end{equation}
is a solution of (\ref{eq:RequationC0})--(\ref{eq:OutgoingC0}) in
the new coordinates. In view of condition 4 in Definition \ref{def:ConeFoliation},
any double null foliation $(u',v')$ on an admissible $C^{0}$ spherically
symmetric spacetime $(\mathcal{M},g)$ is related to $(u,v)$ by a
transformation of the form (\ref{eq:GaugeTransformation}). Furthermore,
if the new coordinates $u',v'$ also satisfy the gauge condition (\ref{eq:ConditionGammaZ}),
then $\text{\textgreek{t}}',\bar{\text{\textgreek{t}}}'$ satisfy
(\ref{eq:EqualityDustsOnAxis}) (provided (\ref{eq:EqualityDustsOnAxis})
is satisfied by $\text{\textgreek{t}},\bar{\text{\textgreek{t}}}$) 

The following regularity result for (\ref{eq:RequationC0})--(\ref{eq:OutgoingC0})
can be readily established:
\begin{lem}
\label{lem:Regularity} Let $(\mathcal{M},g)$ be a $C^{0}$ spherically
symmetric spacetime, and let $\mathcal{U},\text{\textgreek{g}}_{\mathcal{Z}}$,
$(u,v)$ and $r,\text{\textgreek{W}}^{2}\in C^{0}(\mathcal{U}\cup\text{\textgreek{g}}_{\mathcal{Z}})$
be as above. Let also $\text{\textgreek{t}},\bar{\text{\textgreek{t}}}$
be regular (and non-negative) Borel measures on $\mathcal{U}$. Assume
that $(r,\text{\textgreek{W}}^{2};\text{\textgreek{t}},\bar{\text{\textgreek{t}}})$
satisfy (in the weak sense) (\ref{eq:RequationC0})--(\ref{eq:OutgoingC0})
on $\mathcal{U}$. Then, for any $u_{0},v_{0}\in\mathbb{R}$ such
that $\{u=u_{0}\}$ and $\{v=v_{0}\}$ are non-trivial curves on $\mathcal{U}$,
the derivatives $\partial_{v}r$ and $\partial_{u}r$ are defined
almost everywhere on $\{u=u_{0}\}\cap\mathcal{U}$ and $\{v=v_{0}\}\cap\mathcal{U}$,
respectively, with 
\begin{gather}
\partial_{v}r\in L_{loc}^{\infty}\big(\{u=u_{0}\}\cap\mathcal{U}\big),\label{eq:LinftylocV}\\
\partial_{u}r\in L_{loc}^{\infty}\big(\{v=v_{0}\}\cap\mathcal{U}\big).\label{eq:LinftylocU}
\end{gather}
If, in addition, $\text{\textgreek{t}},\bar{\text{\textgreek{t}}}$
are non-negative $L_{loc}^{1}$ functions on $\mathcal{U}$ (and not
merely measures), then 
\begin{equation}
\partial_{u}r,\partial_{v}r\in C^{0}(\mathcal{U}).\label{eq:ContinuityInDurDvr}
\end{equation}
\end{lem}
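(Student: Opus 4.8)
<br>

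The plan is to prove Lemma~\ref{lem:Regularity} by bootstrapping regularity from the wave equation \eqref{eq:RequationC0} and the constraint equations \eqref{eq:ConstraintVC0}--\eqref{eq:ConstraintUC0}, using the fact that the right-hand side of \eqref{eq:RequationC0} is manifestly continuous on $\mathcal{U}$ (since $r,\Omega^2\in C^0(\mathcal{U})$ and $\Omega>0$). First I would fix a point $p_0\in\mathcal{U}$ and a small coordinate rectangle $R=[u_1,u_2]\times[v_1,v_2]\subset\mathcal{U}$ around it on which $r$ is bounded away from $0$ and $\infty$ and $\Omega^2$ is bounded above and below. Integrating \eqref{eq:RequationC0} twice over $R$ in the distributional sense yields the representation
\begin{equation}
r^2(u,v)=r^2(u,v_1)+r^2(u_1,v)-r^2(u_1,v_1)-\tfrac12\int_{u_1}^{u}\!\!\int_{v_1}^{v}(1-\Lambda r^2)\Omega^2\,d\bar v\,d\bar u,
\end{equation}
valid for all $(u,v)\in R$, where the double integral is a genuine (continuous) function of $(u,v)$ with a continuous $\partial_u\partial_v$. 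The issue is that a priori we only know $r^2\in C^0$, so $r^2(u,v_1)$ and $r^2(u_1,v)$ need not be differentiable in $u$ or $v$ respectively — the first step is really to argue that \emph{no worse than this} happens, i.e. that the ``characteristic data'' $r^2(\cdot,v_1)$ and $r^2(u_1,\cdot)$ are automatically Lipschitz.

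To see the Lipschitz regularity of $r$ along null lines, I would use the constraint equations. Fix $v_0$ so that $\{v=v_0\}\cap\mathcal{U}$ is a nontrivial curve. Equation \eqref{eq:ConstraintUC0}, written as $\partial_u(\Omega^{-2}\partial_u r)=-4\pi r^{-1}\Omega^{-2}\tau$, says that the measure $\partial_u(\Omega^{-2}\partial_u r)$ along $\{v=v_0\}$ equals $-4\pi r^{-1}\Omega^{-2}$ times the (pullback of the) measure $\tau$, which by \eqref{eq:OutgoingC0} is genuinely a measure on the $u$-line independent of $v$; since $\tau\ge 0$ and $r^{-1}\Omega^{-2}$ is bounded on $R$, the function $\Omega^{-2}\partial_u r$ is monotone (decreasing) along $\{v=v_0\}$ up to a locally bounded-variation correction, hence of locally bounded variation, hence $\partial_u r\in BV_{loc}\subset L^\infty_{loc}(\{v=v_0\}\cap\mathcal{U})$, which is \eqref{eq:LinftylocU}. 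The symmetric argument with \eqref{eq:ConstraintVC0} and \eqref{eq:IngoingC0} gives \eqref{eq:LinftylocV}. Here one has to be slightly careful that $r\in C^0$ alone guarantees that $\partial_u r$ exists as a distribution on each $u$-line and that the nonlinear coefficient $r^{-1}\Omega^{-2}$ may be paired with the measure $\tau$ — this is fine because $r^{-1}\Omega^{-2}$ is continuous and bounded on $R$, so the product is a well-defined finite measure; I would spell this out but it is routine.

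For the improved statement, assume now $\tau,\bar\tau\in L^1_{loc}(\mathcal{U})$, $\tau,\bar\tau\ge0$. Then in \eqref{eq:ConstraintUC0} the right-hand side, restricted to $\{v=v_0\}$, is an $L^1_{loc}$ function of $u$ (using Fubini together with $\partial_v\tau=0$, so $\tau$ restricted to a $u$-line is the $L^1_{loc}$ density it always was), so $\Omega^{-2}\partial_u r$ is absolutely continuous along each $u$-line with $L^1_{loc}$ derivative; combined with the boundedness of $\Omega^2$ and $\partial_u r\in L^\infty_{loc}$ established above, and with the now-Lipschitz characteristic data, the representation formula for $r^2$ shows $r^2\in C^1(R)$ with $\partial_u(r^2),\partial_v(r^2)$ continuous, hence $\partial_u r,\partial_v r\in C^0(\mathcal{U})$, which is \eqref{eq:ContinuityInDurDvr}. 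The main obstacle, and the part requiring the most care, is the first: making rigorous sense of the one-dimensional restrictions $\partial_u r|_{\{v=v_0\}}$, $\partial_v r|_{\{u=u_0\}}$ as distributions and justifying that the constraint equations \eqref{eq:ConstraintVC0}--\eqref{eq:ConstraintUC0}, which are a priori only known in $\mathcal{D}'(\mathcal{U})$, descend to these one-dimensional slices — this is precisely where the conservation laws \eqref{eq:IngoingC0}--\eqref{eq:OutgoingC0} are essential, since they say $\bar\tau$ (resp.\ $\tau$) is constant in $u$ (resp.\ $v$), so its trace on a characteristic slice is unambiguous. Everything after that reduction is a standard integration/bootstrap argument.
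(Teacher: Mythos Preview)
Your proposal is correct and follows essentially the same approach as the paper: use the conservation laws \eqref{eq:IngoingC0}--\eqref{eq:OutgoingC0} to write $\tau=\tau(u)$, $\bar\tau=\bar\tau(v)$ as one-variable measures, then read off from the constraint equations \eqref{eq:ConstraintVC0}--\eqref{eq:ConstraintUC0} that $\Omega^{-2}\partial_v r$ and $\Omega^{-2}\partial_u r$ are of locally bounded variation (indeed monotone) along the respective null lines, giving \eqref{eq:LinftylocV}--\eqref{eq:LinftylocU}; in the $L^1_{loc}$ case this upgrades to continuity along each null line, after which integrating the wave equation \eqref{eq:RequationC0} (whose right-hand side is continuous) gives joint continuity of $\partial_u r,\partial_v r$. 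The paper's proof is just a terser version of what you wrote, and your additional care about making sense of the one-dimensional restrictions is appropriate where the paper simply says ``follows readily.''
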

\begin{proof}
In view of (\ref{eq:IngoingC0}) and (\ref{eq:OutgoingC0}), we can
write 
\begin{equation}
\text{\textgreek{t}}(u,v)=\text{\textgreek{t}}(u)
\end{equation}
and 
\begin{equation}
\bar{\text{\textgreek{t}}}(u,v)=\bar{\text{\textgreek{t}}}(v)\label{eq:OneVariableMeasure}
\end{equation}
(in the sense of distributions). Theorefore, (\ref{eq:LinftylocV})
and (\ref{eq:LinftylocU}) follow readily from (\ref{eq:ConstraintVC0})
and (\ref{eq:ConstraintUC0}), using the fact that $r,\text{\textgreek{W}}\in C^{0}(\mathcal{U})$
and that $\text{\textgreek{t}},\bar{\text{\textgreek{t}}}$ are non-negative
regular Borel measures. 

In the case when $\text{\textgreek{t}},\bar{\text{\textgreek{t}}}$
are also $L_{loc}^{1}$ functions, the same procedure yields that,
for any $u_{0},v_{0}\in\mathbb{R}$ such that $\{u=u_{0}\}$ and $\{v=v_{0}\}$
are non-trivial curves on $\mathcal{U}$ 
\begin{gather}
\partial_{v}r\in C^{0}\big(\{u=u_{0}\}\cap\mathcal{U}\big),\label{eq:LinftylocV-1}\\
\partial_{u}r\in C^{0}\big(\{v=v_{0}\}\cap\mathcal{U}\big).\label{eq:LinftylocU-1}
\end{gather}
 Integrating equation (\ref{eq:RequationC0}) in $u$ and in $v$,
using the fact that the right hand side of (\ref{eq:RequationC0})
is continuous in $\mathcal{U}$, we obtain (\ref{eq:ContinuityInDurDvr}).
\end{proof}
We will also need to define the notion of a $C^{0}$ \emph{future
extension }of a solution to (\ref{eq:RequationC0})--(\ref{eq:OutgoingC0}):
\begin{defn}
\label{def:FutureExtension} Let $(\mathcal{M},g)$ be an admissible
$C^{0}$ spherically symmetric spacetime with non-empty axis $\mathcal{Z}$,
and let $\mathcal{U},\text{\textgreek{g}}_{\mathcal{Z}}$, $(u,v)$
and $r,\text{\textgreek{W}}^{2}\in C^{0}(\mathcal{U}\cup\text{\textgreek{g}}_{\mathcal{Z}})$
be as above. Let us fix a time orientation on $(\mathcal{M},g)$ by
requiring that the timelike vector field $N=\partial_{u}+\partial_{v}$
in $\mathcal{U}$ is future directed. Let also $\text{\textgreek{t}},\bar{\text{\textgreek{t}}}$
be regular (and non-negative) Borel measures on $\mathcal{U}$ and
assume that $(r,\text{\textgreek{W}}^{2};\text{\textgreek{t}},\bar{\text{\textgreek{t}}})$
satisfy (in the weak sense) (\ref{eq:RequationC0})--(\ref{eq:OutgoingC0})
on $\mathcal{U}$. 

An admissible $C^{0}$ spherically symmetric and time oriented spacetime
$(\widetilde{\mathcal{M}},\tilde{g})$ will be called a \emph{$C^{0}$
spherically symmetric future extension of $(\mathcal{M},g)$ as a
solution of (\ref{eq:RequationC0})--(\ref{eq:OutgoingC0}) }if the
following conditions are satisfied:

\begin{enumerate}

\item There exists a $C^{1}$ embedding $i:\mathcal{M}\rightarrow\widetilde{\mathcal{M}}$
which is an isometry, i.\,e.~$i^{*}\tilde{g}=g$, and preserves
time orientation.

\item There exists a point $p\in\widetilde{\mathcal{M}}\backslash\mathcal{M}$
lying to the future of $i(\mathcal{M})$.

\item There exists a double null foliation $(\tilde{u},\tilde{v})$
on $(\widetilde{\mathcal{M}},\tilde{g})$, not necessarily coinciding
with $(u,v)\circ i^{-1}$ on $\mathcal{M}$ (see Definition \ref{def:ConeFoliation}),
with the following property: Denoting with $\widetilde{\mathcal{Z}}$
the axis of $\widetilde{\mathcal{M}}$, with $\widetilde{\mathcal{U}},\text{\textgreek{g}}_{\widetilde{\mathcal{Z}}}\subset\mathbb{R}^{2}$
the sets related to $(\tilde{u},\tilde{v})$ according to Definition
\ref{def:ConeFoliation} and with $\tilde{r},\widetilde{\text{\textgreek{W}}}^{2}$
the metric components defined for $\tilde{g}$ by (\ref{eq:DoubleNullExpressionMetric}),
assuming also that (\ref{eq:ConditionGammaZ}) holds, there exists
a pair of regular, non-negative Borel measures $\tilde{\text{\textgreek{t}}},\tilde{\bar{\text{\textgreek{t}}}}$
on $\widetilde{\mathcal{U}}$ such that: 

\begin{enumerate}

\item $(\tilde{r},\widetilde{\text{\textgreek{W}}}^{2};\tilde{\text{\textgreek{t}}},\tilde{\bar{\text{\textgreek{t}}}})$
satisfy (in the weak sense) (\ref{eq:RequationC0})--(\ref{eq:OutgoingC0})
on $\widetilde{\mathcal{U}}$ and (\ref{eq:RBoundaryCondition})--(\ref{eq:EqualityDustsOnAxis})
on $\text{\textgreek{g}}_{\widetilde{\mathcal{Z}}}$. 

\item Restricted to $i(\mathcal{M})$, the pair $(\tilde{\text{\textgreek{t}}},\tilde{\bar{\text{\textgreek{t}}}})$
satisfies: 
\begin{equation}
(\tilde{\text{\textgreek{t}}},\tilde{\bar{\text{\textgreek{t}}}})\Big|_{(\tilde{u},\tilde{v})}=\big(\frac{\text{\textgreek{t}}}{(dU/du)^{2}},\frac{\bar{\text{\textgreek{t}}}}{(dV/dv)^{2}}\big)\Big|_{(U^{-1}(\tilde{u}\circ i),V^{-1}(\tilde{v}\circ i))},\label{eq:EqualTauTauBar}
\end{equation}
where the functions $U,V$ are strictly increasing, locally bi-Lipschitz
functions defining the following coordinate transformation between
$(u,v)$ and $(\tilde{u}\circ i,\tilde{v}\circ i)$ on $\mathcal{M}$:
\begin{equation}
\begin{cases}
\tilde{u}\circ i=U(u),\\
\tilde{v}\circ i=V(v),
\end{cases}
\end{equation}
(such $U,V$ exist and are unique according to Definition \ref{def:ConeFoliation}).%
\footnote{The fact that $U,V$ are increasing follows from the fact that $i$
preserves time orientation.%
}

\end{enumerate}

\end{enumerate}
\end{defn}
Finally, in the next section, we will need the notion of a smooth
characteristic initial value problem for (\ref{eq:RequationC0})--(\ref{eq:OutgoingC0}):
\begin{defn}
\label{def:Characteristic-Initial-Value-Prolem} Let $v_{0}>0$. A
\emph{smooth} characteristic initial data set for (\ref{eq:RequationC0})--(\ref{eq:OutgoingC0})
on $u=0$ for $v\in[0,v_{0})$ consists of a set of smooth functions
$(r_{/},\text{\textgreek{W}}_{/}^{2},\text{\textgreek{t}}_{/},\bar{\text{\textgreek{t}}}_{/})$
on $[0,v_{0})$ satisfying the following properties: 

\begin{enumerate}

\item $r_{/}(0)=0$ and $r_{/}|_{(0,v_{0})}>0$,

\item $\text{\textgreek{W}}_{/}>0$,

\item $\bar{\text{\textgreek{t}}}_{/},\text{\textgreek{t}}_{/}\ge0$,

\item $r_{/},\text{\textgreek{W}}_{/},\bar{\text{\textgreek{t}}}_{/}$
satisfy the constraint equation 
\begin{equation}
\partial_{v}(\text{\textgreek{W}}_{/}^{-2}\partial_{v}r_{/})=-4\pi r_{/}^{-1}\text{\textgreek{W}}_{/}^{-2}\bar{\text{\textgreek{t}}}_{/}.\label{eq:ConstraintInitialDataNullDust}
\end{equation}

\item $\text{\textgreek{t}}_{/}$ is constant, i.\,e.~satisfies
\begin{equation}
\partial_{v}\text{\textgreek{t}}_{/}=0.
\end{equation}

\end{enumerate}
\end{defn}

\subsection{\label{sub:An-ill-posedness-result}Break down for the system (\ref{eq:RequationC0})--(\ref{eq:OutgoingC0})}

In this section, we will establish two results: one related to the
well posedness of the system (\ref{eq:RequationC0})--(\ref{eq:OutgoingC0})
up to the first point when the null dust reaches the axis, and one
related to the break down of (\ref{eq:RequationC0})--(\ref{eq:OutgoingC0})
beyond that point.

Our first result is the following:

\begin{figure}[h]
\centering  
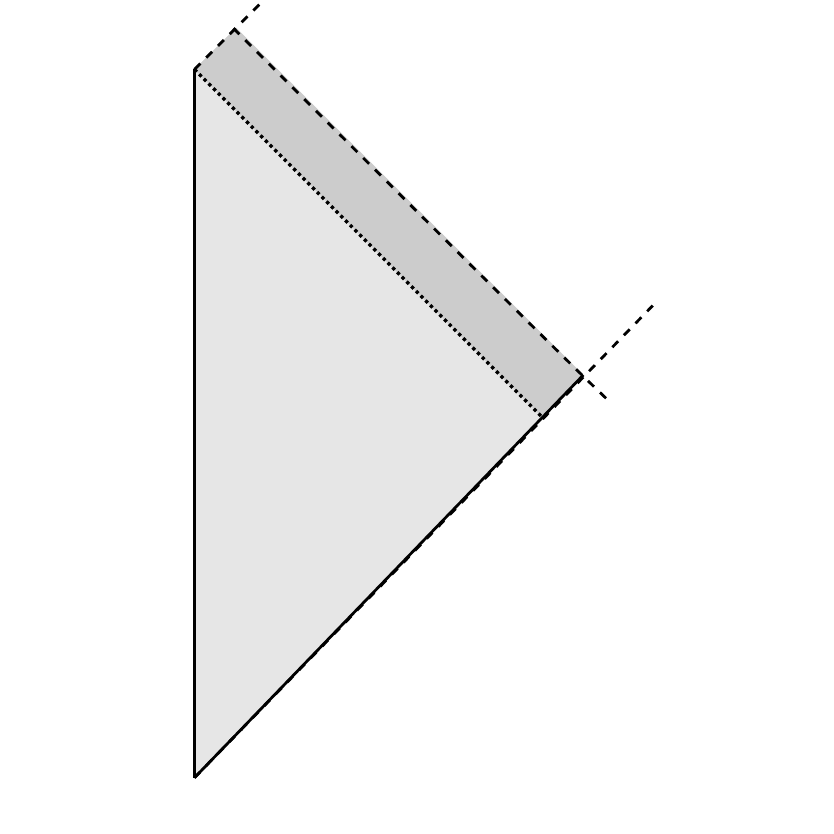
\caption{Schematic depiction of the domain $\mathcal{U}=\{0<u<v_{*}\}\cap\{u<v<v_{*}+\delta\}$ of the development $(r,\Omega^2,\tau,\bar{\tau})$ of $(r_{\slash},\Omega^2_{\slash},\tau_{\slash},\bar{\tau}_{\slash})$ in the statement of Proposition \ref{prop:Well-Posedness-Einstein-nulldust}.}
\end{figure}
\begin{prop}
\label{prop:Well-Posedness-Einstein-nulldust} For any $0<v_{*}<v_{0}$,
let $(r_{/},\text{\textgreek{W}}_{/}^{2},\text{\textgreek{t}}_{/},\bar{\text{\textgreek{t}}}_{/})$
be a smooth characteristic initial data set for (\ref{eq:RequationC0})--(\ref{eq:OutgoingC0})
on $u=0$, according to Definition \ref{def:Characteristic-Initial-Value-Prolem},
satisfying the following properties:

\begin{itemize}

\item{$(r_{/},\text{\textgreek{W}}_{/}^{2},\bar{\text{\textgreek{t}}},\bar{\text{\textgreek{t}}}_{/})$
is purely ingoing, i.\,e.: 
\begin{equation}
\text{\textgreek{t}}_{/}=0.\label{eq:PurelyIngoing}
\end{equation}
}

\item{$(r_{/},\text{\textgreek{W}}_{/}^{2},\bar{\text{\textgreek{t}}},\bar{\text{\textgreek{t}}}_{/})$
is trivial on $[0,v_{*}]$, i.\,e.: 
\begin{equation}
\bar{\text{\textgreek{t}}}_{/}|_{[0,v_{*}]}=0.\label{eq:TrivialInitialData}
\end{equation}
}

\item{ In the case $\Lambda>0$: 
\begin{equation}
\min_{[0,v_{*}]}\big(1-\frac{1}{3}\Lambda r_{/}^{2}\big)>0.\label{eq:NoTrappedInitially}
\end{equation}
}

\end{itemize}

Then, there exists some (possibly small)$0<\text{\textgreek{d}}<v_{0}-v_{*}$
depending on $(r_{/},\text{\textgreek{W}}_{/}^{2},\bar{\text{\textgreek{t}}},\bar{\text{\textgreek{t}}}_{/})$,
such that, setting 
\begin{equation}
\mathcal{U}\doteq\{0<u<v_{*}\}\cap\{u<v<v_{*}+\text{\textgreek{d}}\},\label{eq:TheDomain}
\end{equation}
\begin{equation}
\text{\textgreek{g}}_{\mathcal{Z}}\doteq\{u=v\}\cap\{0<u<v_{*}\}\label{eq:TheAxis}
\end{equation}
and 
\begin{equation}
\mathcal{S}_{v_{*}+\text{\textgreek{d}}}\doteq\{u=0\}\cap\{0<v<v_{*}+\text{\textgreek{d}}\},\label{eq:ForInitialData}
\end{equation}
there exists a unique $C^{\infty}$ quadruple $(r,\text{\textgreek{W}}^{2},\text{\textgreek{t}},\bar{\text{\textgreek{t}}})$
on $\mathcal{U}\cup\text{\textgreek{g}}_{\mathcal{Z}}\cup\mathcal{S}_{v_{*}+\text{\textgreek{d}}}$
solving (\ref{eq:RequationC0})--(\ref{eq:OutgoingC0}) on $\mathcal{U}$
and satisfying the initial data 
\begin{equation}
(r,\text{\textgreek{W}}^{2},\text{\textgreek{t}},\bar{\text{\textgreek{t}}})|_{u=0}=(r_{/},\text{\textgreek{W}}_{/}^{2},\text{\textgreek{t}}_{/},\bar{\text{\textgreek{t}}}_{/}).\label{eq:SolvesInitialData}
\end{equation}
on $\mathcal{S}_{v_{*}+\text{\textgreek{d}}}$ and the boundary conditions
(\ref{eq:ROnAxis}), (\ref{eq:EqualityDustsOnAxis}) on $\text{\textgreek{g}}_{\mathcal{Z}}$.
Furthermore, $(r,\text{\textgreek{W}}^{2},\text{\textgreek{t}},\bar{\text{\textgreek{t}}})$
satisfy the following properties: 

\begin{enumerate}

\item $(r,\text{\textgreek{W}}^{2},\text{\textgreek{t}},\bar{\text{\textgreek{t}}})$
extend as $C^{\infty}$ functions on $\partial\mathcal{U}$, satisfying
moreover 
\begin{equation}
\inf_{\mathcal{U}}\text{\textgreek{W}}>0,
\end{equation}
\begin{equation}
\sup_{\mathcal{U}}\partial_{u}r<0<\inf_{\mathcal{U}}\partial_{v}r\label{eq:NonTrappingInR-1}
\end{equation}
and, for any $k\in\mathbb{N}$
\begin{equation}
\sup_{\mathcal{U}}\frac{m}{r^{k}}<+\infty\label{eq:SmoothUpperBoundMass}
\end{equation}
(where $m$ is defined by (\ref{eq:DefinitionHawkingMass})).

\item The solution is purely ingoing, i.\,e.~the function $\text{\textgreek{t}}$
satisfies
\begin{equation}
\text{\textgreek{t}}\equiv0.\label{eq:PurelyIngoing-1}
\end{equation}

\item The solution is vacuum in the region 
\begin{equation}
\mathcal{D}_{vac}=\{0\le u\le v_{*}\}\cap\{u\le v\le v_{*}\},\label{eq:vacuumRegion}
\end{equation}
i.\,e.: 
\begin{equation}
\bar{\text{\textgreek{t}}}|_{\{0\le u\le v_{*}\}\cap\{u\le v\le v_{*}\}}=0\label{eq:TrivialTriangle}
\end{equation}

\end{enumerate}\end{prop}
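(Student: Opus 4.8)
The statement splits naturally into two regimes along the null line $\{v=v_*\}$: the ``vacuum triangle'' $\mathcal{D}_{vac}=\{0\le u\le v_*\}\cap\{u\le v\le v_*\}$ where $\bar{\text{\textgreek{t}}}_{/}$ vanishes, and the thin strip $\{v_*<v<v_*+\text{\textgreek{d}}\}$ where the ingoing dust is switched on. The plan is to first solve the problem in $\mathcal{D}_{vac}$, exploiting the fact that $\bar{\text{\textgreek{t}}}=0$ there (so that, by Birkhoff-type rigidity in spherical symmetry, the solution is a piece of de Sitter/Minkowski/anti-de Sitter with $\tilde m=0$), and then to treat the strip as a boundary-characteristic perturbation. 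Throughout I would work with the renormalised variables $\rho=\tan^{-1}(\sqrt{-\Lambda/3}\,r)$, $\kappa, \bar\kappa, \text{\textgreek{t}},\bar{\text{\textgreek{t}}}$ of Section~\ref{sub:Local-well-Posedness} (or, if $\Lambda\ge0$, the analogous variables), so that the equations take the quasilinear form \eqref{eq:DerivativeInUDirectionKappaRenormalised}--\eqref{eq:ConservationTau}; the axis $r=0$ now sits at $\rho=0$, which is a regular singular point for the $\rho$-equation.

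\textbf{Step 1: the vacuum triangle.} On $\mathcal{U}_{vac}\doteq\{0<u<v\}\cap\{v<v_*\}$ the data on $\{u=0\}$ are trivial, i.e.\ $\bar{\text{\textgreek{t}}}_{/}=0$, $\text{\textgreek{t}}_{/}=0$ and $(r_{/},\text{\textgreek{W}}_{/}^2)$ is the corresponding vacuum profile (this is forced by the constraint \eqref{eq:ConstraintInitialDataNullDust} with $\bar{\text{\textgreek{t}}}_{/}=0$, up to a gauge choice). By propagation of the conservation laws \eqref{eq:IngoingC0}--\eqref{eq:OutgoingC0} from $\{u=0\}$ and the axis condition \eqref{eq:EqualityDustsOnAxis}, one gets $\text{\textgreek{t}}\equiv\bar{\text{\textgreek{t}}}\equiv0$ on the whole triangle, hence $\tilde m\equiv 0$ there. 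The system then reduces to \eqref{eq:EquationRForRhoRenormalised} with the source depending only on $\rho$ (via $F_2(\tan\rho;0)$) coupled to $\partial_u\log\kappa=\partial_v\log\bar\kappa=0$; with the reflecting gauge $\kappa/\bar\kappa=1$ on the axis this is an explicitly integrable ODE whose unique smooth solution with $\rho|_{\text{\textgreek{g}}_{\mathcal Z}}=0$ is the vacuum (AdS/Minkowski/dS) piece. The smoothness and the bounds $\partial_u r<0<\partial_v r$, $\sup m/r^k<\infty$ are immediate from the explicit form — near $\rho=0$ one has $r\sim c(v-u)$ and $m=O(r^3)$ when $\Lambda\ne0$ (resp.\ $m\equiv0$ when $\Lambda=0$), which already gives \eqref{eq:SmoothUpperBoundMass}. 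The condition \eqref{eq:NoTrappedInitially} in the case $\Lambda>0$ is exactly what is needed to keep $1-\frac13\Lambda r^2>0$, i.e.\ $\rho<\pi/2$, on the triangle so that these variables remain nondegenerate and $\partial_v r>0$.

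\textbf{Step 2: the strip, and existence near the axis.} The triangle construction produces smooth data on the characteristic segment $\{v=v_*\}\cap\{0\le u\le v_*\}$ together with the smooth data on $\{u=0\}\cap\{v_*\le v<v_*+\text{\textgreek{d}}\}$. This is precisely a boundary-double characteristic configuration of the type in Definition~\ref{def:TypeIII} (with $r=0$ playing the role of the mirror, but now a genuine axis rather than $r=r_0>0$), so for $\text{\textgreek{d}}$ small one would like to invoke a local existence argument à la Proposition~\ref{Prop:LocalExistenceTypeIII}. The only subtlety is that that Proposition was stated for $r_0>0$; here $r_0=0$, so the function $F_1(x;y)\sim x^{-1}$ in \eqref{eq:BoundednessF_1,F_2,F_3} is genuinely singular at the axis. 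I would handle this by the usual axis regularisation: introduce $\Psi\doteq\log(\partial_v r/(1-2m/r))=\log\kappa$, etc., note that on the axis the reflecting condition $\kappa=\bar\kappa$ holds and the source of $\partial_u\log\kappa$ is $-\sqrt{-\Lambda/3}\,F_1(\tan\rho;\sqrt{-\Lambda/3}\tilde m)\bar\kappa^{-1}\text{\textgreek{t}}$, with $\text{\textgreek{t}}$ a smooth function of $u$ that \emph{vanishes to infinite order} near the axis here (since $\text{\textgreek{t}}\equiv0$ on the triangle and, by \eqref{eq:OutgoingC0}, $\text{\textgreek{t}}(u,v)=\text{\textgreek{t}}(u)$ is pulled back from the axis value, which is $0$). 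Thus in fact $\text{\textgreek{t}}\equiv0$ on all of $\mathcal U$ — this gives property~2 \eqref{eq:PurelyIngoing-1} for free and, crucially, kills the dangerous $F_1\bar\kappa^{-1}\text{\textgreek{t}}$ term, so that $\kappa$ is constant in $u$ and the $\rho$-equation is again a regular perturbation of the vacuum one. With $\text{\textgreek{t}}\equiv0$ the remaining singular object is $F_1\kappa^{-1}\bar{\text{\textgreek{t}}}$ in $\partial_v\log\bar\kappa$ and in $\partial_v\tilde m$; but $\bar{\text{\textgreek{t}}}$ is smooth and, on the strip, $r$ is bounded below by $c(v_*-u)$ away from the axis, while near the axis we are still inside the triangle where $\bar{\text{\textgreek{t}}}=0$, so $\bar{\text{\textgreek{t}}}/r$ stays bounded. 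Hence the iteration scheme of the proof of Proposition~\ref{Prop:LocalExistenceTypeII}/\ref{Prop:LocalExistenceTypeIII} closes, giving the unique smooth $(r,\text{\textgreek{W}}^2,\text{\textgreek{t}},\bar{\text{\textgreek{t}}})$ on $\mathcal U\cup\text{\textgreek{g}}_{\mathcal Z}\cup\mathcal S_{v_*+\text{\textgreek{d}}}$ with the asserted signs $\partial_u r<0<\partial_v r$ (from integrating \eqref{eq:RequationC0} and \eqref{eq:ConstraintVC0}), and \eqref{eq:SmoothUpperBoundMass} (from $\partial_v\tilde m=2\pi(1-2m/r)r^2T_{vv}/\partial_v r$ together with $\tilde m$ vanishing on the triangle and $r$ smooth, $r\sim c(v-u)$ near the axis, so $m\lesssim r^3$).

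\textbf{Main obstacle.} The technical heart is the matching at the corner $(v_*,v_*)$ where the characteristic segment $\{v=v_*\}$ meets the axis $\text{\textgreek{g}}_{\mathcal Z}$: one must check that the smooth vacuum data coming out of the triangle are compatible, to all orders, with the reflecting boundary condition \eqref{eq:EqualityDustsOnAxis} and with the (switched-on) data on $\{u=0\}$, so that the local solution on the strip is genuinely $C^\infty$ up to that corner — this is where the vanishing to infinite order of $\text{\textgreek{t}}$ and of $\bar{\text{\textgreek{t}}}$ on the triangle is used, and where one has to argue that $\text{\textgreek{d}}$ can be chosen so that the null line emanating from $(v_*,v_*)$ does not re-hit the axis (which is why $\mathcal U$ in \eqref{eq:TheDomain} is a triangle-plus-thin-strip and not a full square). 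Once smoothness at the corner is established, the continuation/uniqueness is routine, and properties~2 and~3 have already been extracted along the way.
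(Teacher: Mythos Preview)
Your overall architecture is exactly the paper's: split $\mathcal{U}$ along $\{v=v_*\}$ into the vacuum triangle $\mathcal{D}_{vac}$ and the strip $\mathcal{U}_{\#}=(0,v_*)\times(v_*,v_*+\delta)$, solve the triangle explicitly via the vacuum ODE, and deduce $\tau\equiv0$ on all of $\mathcal{U}$ from the axis reflection $\tau|_{\gamma_{\mathcal{Z}}}=\bar\tau|_{\gamma_{\mathcal{Z}}}$ together with $\bar\tau_/|_{[0,v_*]}=0$ and the transport equation $\partial_v\tau=0$. The paper does precisely this, phrasing the reduction in terms of the system \eqref{eq:DerivativeInUDirectionKappa-1}--\eqref{eq:ConservationT_uu-1} for $(r,m,\tau,\bar\tau)$.

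The one place you diverge is in how you solve on the strip. You propose to run the full iteration of Propositions~\ref{Prop:LocalExistenceTypeII}/\ref{Prop:LocalExistenceTypeIII}, arguing that the dangerous $r^{-1}\tau$ source is killed by $\tau\equiv0$ and that $r^{-1}\bar\tau$ is tame because the support of $\bar\tau$ stays away from the axis in $\mathcal{U}_{\#}$. This would work, but the paper is considerably more direct: once $\tau\equiv0$, the equations $\partial_u\tilde m=0$, $\partial_u\bar\tau=0$ and $\partial_u\big(\partial_v r/(1-\tfrac{2m}{r})\big)=0$ show that $\tilde m(u,v)=\tilde m_/(v)$, $\bar\tau(u,v)=\bar\tau_/(v)$, and $\kappa(u,v)=\kappa_/(v)$ are all \emph{explicit} functions of $v$ alone. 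So the only unknown on $\mathcal{U}_{\#}$ is $r$, which for each fixed $u$ solves the first-order ODE $\partial_v r=\kappa_/(v)\big(1-\tfrac{2\tilde m_/(v)}{r}-\tfrac13\Lambda r^2\big)$ with initial condition $r(u,v_*)=r_{vac}(u,v_*)$. A short bootstrap in $\delta$ (using \eqref{eq:PositiveDvRInitially}--\eqref{eq:SmallnessDeltaRegion}) closes this, with no iteration required. Your observation that $\kappa$ is constant in $u$ is the first half of this; pushing one step further to $\tilde m(u,v)=\tilde m_/(v)$ collapses the whole problem to an ODE.

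Regarding your ``main obstacle'' at the corner $(v_*,v_*)$: in the paper this is not a separate compatibility argument but a direct consequence of the formulas $\tilde m=\tilde m_/(v)$, $\bar\tau=\bar\tau_/(v)$ together with the infinite-order vanishing of $\bar\tau_/$ (and hence of $\tilde m_/$) at $v=v_*$. This gives \eqref{eq:SmoothUpperBoundMass} and smoothness up to $\partial\mathcal{U}$ immediately, since near the corner $r$ is bounded below by a positive multiple of $(v-u)$.
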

\begin{rem*}
Let $(\mathcal{M},g)$ be the smooth, spherically symmetric spacetime
with boundary, obtained by equiping $(clos(\mathcal{U})\backslash\text{\textgreek{g}}_{\mathcal{Z}}\big)\times\mathbb{S}^{2}$
with the metric 
\begin{equation}
g=-\text{\textgreek{W}}^{2}dudv+r^{2}g_{\mathbb{S}^{2}}
\end{equation}
and then attaching an axis $\mathcal{Z}$ corresponding to $\text{\textgreek{g}}_{\mathcal{Z}}$.
Then, it follows from Proposition \ref{prop:Well-Posedness-Einstein-nulldust}
that $(\mathcal{M},g)$ is $C^{\infty}$ extendible as a Lorentzian
manifold beyond $\partial\mathcal{M}\simeq\big(\partial\mathcal{U}\backslash\text{\textgreek{g}}_{\mathcal{Z}}\big)\times\mathbb{S}^{2}$.
\end{rem*}

\noindent \emph{Proof.} Let us define the Hawking mass $m$ and the
renormalised Hawking mass $\tilde{m}$ by the relations (\ref{eq:DefinitionHawkingMass})
and (\ref{eq:RenormalisedHawkingMass}), respectively. The construction
of $(r,\text{\textgreek{W}}^{2},\text{\textgreek{t}},\bar{\text{\textgreek{t}}})$
will proceed by solving, instead of (\ref{eq:RequationC0})--(\ref{eq:OutgoingC0}),
the following (equivalent) system for $r,m,\text{\textgreek{t}},\bar{\text{\textgreek{t}}}$
on $\mathcal{U}$: 
\begin{align}
\partial_{u}\log\big(\frac{\partial_{v}r}{1-\frac{2m}{r}}\big)= & -4\pi r^{-1}\frac{\text{\textgreek{t}}}{-\partial_{u}r},\label{eq:DerivativeInUDirectionKappa-1}\\
\partial_{v}\log\big(\frac{-\partial_{u}r}{1-\frac{2m}{r}}\big)= & 4\pi r^{-1}\frac{\bar{\text{\textgreek{t}}}}{\partial_{v}r},\label{eq:DerivativeInVDirectionKappaBar-1}\\
\partial_{u}\tilde{m}= & -2\pi\frac{\big(1-\frac{2m}{r}\big)}{-\partial_{u}r}\text{\textgreek{t}},\label{eq:DerivativeTildeUMass-1}\\
\partial_{v}\tilde{m}= & 2\pi\frac{\big(1-\frac{2m}{r}\big)}{\partial_{v}r}\bar{\text{\textgreek{t}}},\label{eq:DerivativeTildeVMass-1}\\
\partial_{u}\bar{\text{\textgreek{t}}}= & 0,\label{eq:ConservationT_vv-1}\\
\partial_{v}\text{\textgreek{t}}= & 0.\label{eq:ConservationT_uu-1}
\end{align}
The proof of Proposition \ref{prop:Well-Posedness-Einstein-nulldust}
will conclude by showing that there exists a $0<\text{\textgreek{d}}<v_{0}-v_{*}$
and a unique smooth quadruple $(r,m,\text{\textgreek{t}},\bar{\text{\textgreek{t}}})$
on $\mathcal{U}\cup\text{\textgreek{g}}_{\mathcal{Z}}\cup\mathcal{S}_{v_{*}+\text{\textgreek{d}}}$,
solving (\ref{eq:DerivativeInUDirectionKappa-1})--(\ref{eq:ConservationT_uu-1})
on $\mathcal{U}$ with the boundary conditions (\ref{eq:ROnAxis})
and (\ref{eq:ConstraintInitialDataNullDust}) on $\text{\textgreek{g}}_{\mathcal{Z}}$
and the initial conditions 
\begin{equation}
(r,m,\bar{\text{\textgreek{t}}})|_{\mathcal{S}_{v_{*}+\text{\textgreek{d}}}}=(r_{/},m_{/},\bar{\text{\textgreek{t}}}_{/})\label{eq:InitialDataForRMT}
\end{equation}
on $\mathcal{S}_{v_{*}+\text{\textgreek{d}}}$ and such that, moreover: 

\begin{enumerate}

\item $(r,m,\text{\textgreek{t}},\bar{\text{\textgreek{t}}})$ extend
smoothly on $\partial\mathcal{U}$, satisfying (\ref{eq:NonTrappingInR-1}),
\begin{equation}
\inf_{\mathcal{U}}\big(1-\frac{2m}{r}\big)>0,\label{eq:AprioriNoTrapped}
\end{equation}
and (\ref{eq:SmoothUpperBoundMass}) (for any $k\in\mathbb{N}$). 

\item The purely ingoing condition (\ref{eq:PurelyIngoing-1}) holds.

\item The relation (\ref{eq:TrivialTriangle}) holds. 

\end{enumerate}
\begin{rem*}
In view of (\ref{eq:DefinitionHawkingMass}), the smooth extension
of $\text{\textgreek{W}}$ on $\partial\mathcal{U}$ follows from
the smooth extension of $r,m$ on $\partial\mathcal{U}$, (\ref{eq:AprioriNoTrapped})
and (\ref{eq:SmoothUpperBoundMass}).
\end{rem*}
Since $\bar{\text{\textgreek{t}}}_{/}\in C^{\infty}([0,v_{0}))$ satisfies
(\ref{eq:TrivialInitialData}), we can bound for any $k\in\mathbb{N}$:
\begin{equation}
\sup_{[0,v_{0})}\frac{\bar{\text{\textgreek{t}}}_{/}}{|v-v_{*}|^{k}}<+\infty.\label{eq:VanishingToInfiniteOrder}
\end{equation}
Since $r_{/},\text{\textgreek{W}}_{/}\in C^{\infty}([0,v_{0}))$ satisfy
the constraint equation (\ref{eq:ConstraintInitialDataNullDust})
with $r_{/}(0)=0$, $r_{/}|_{(0,v_{0})}>0$ and $\text{\textgreek{W}}_{/}>0$,
there exists some $0<\text{\textgreek{d}}<v_{0}-v_{*}$ such that
\begin{equation}
\min_{[0,v_{*}+\text{\textgreek{d}}]}\partial_{v}r_{/}>\text{\textgreek{e}}_{\Lambda},\label{eq:PositiveDvRInitially}
\end{equation}
where $0<\text{\textgreek{e}}_{\Lambda}\ll1$ is a fixed small parameter
depending on $\Lambda$ and (\ref{eq:NoTrappedInitially}). Furthermore,
in view of (\ref{eq:NoTrappedInitially}) and (\ref{eq:TrivialInitialData}),
we can assume that $\text{\textgreek{d}}>0$ is small enough so that
\begin{equation}
\inf_{[0,v_{*}+\text{\textgreek{d}}]}\Big(1-\frac{2m_{/}}{r_{/}}\Big)>\text{\textgreek{e}}_{\Lambda}\label{eq:QuantitativeNoTrappedInitially}
\end{equation}
where the initial Hawking mass $m_{/}$ is defined by the relation
\begin{equation}
\partial_{v}(m_{/}-\frac{1}{6}\Lambda r_{/}^{3})=2\pi\frac{\big(1-\frac{2m_{/}}{r_{/}}\big)}{\partial_{v}r_{/}}\bar{\text{\textgreek{t}}}_{/}
\end{equation}
under the condition $m_{/}(0)=0$; note that, equivalently, $m_{/}$
can be defined by 
\begin{equation}
m_{/}(v)\doteq\frac{1}{2}\Big(r_{/}(v)-\text{\textgreek{W}}_{/}^{-2}(v)\partial_{v}r_{/}(v)\int_{0}^{v}(1-\Lambda r_{/}^{2}(\bar{v}))\text{\textgreek{W}}_{/}^{2}(\bar{v})\, d\bar{v}\Big).
\end{equation}
 Finally, in view of (\ref{eq:TrivialInitialData}), we can assume
that $\text{\textgreek{d}}$ is small enough so that
\begin{equation}
\frac{\max_{[0,v_{*}+\text{\textgreek{d}}]}\partial_{v}\tilde{m}_{/}}{\min_{[0,v_{*}+\text{\textgreek{d}}]}\partial_{v}r_{/}}+\max_{[0,v_{*}+\text{\textgreek{d}}]}\frac{\partial_{v}\tilde{m}_{/}}{|v-v_{*}|}<\text{\textgreek{e}}_{\Lambda}.\label{eq:SmallnessDeltaRegion}
\end{equation}
From now on, we will assume that $\text{\textgreek{d}}$ has been
fixed as above.

A priori, if a smooth smooth solution $(r,m,\text{\textgreek{t}},\bar{\text{\textgreek{t}}})$
to (\ref{eq:DerivativeInUDirectionKappa-1})--(\ref{eq:ConservationT_uu-1})
on $\mathcal{U}$ satisfying (\ref{eq:ROnAxis}), (\ref{eq:ConstraintInitialDataNullDust})
and (\ref{eq:InitialDataForRMT}) exists, then (\ref{eq:TrivialTriangle})
follows immediately from (\ref{eq:TrivialInitialData}) and (\ref{eq:ConservationT_vv-1}).
Moreover, (\ref{eq:PurelyIngoing-1}) follows readily from the initial
condition (\ref{eq:TrivialInitialData}) for $\bar{\text{\textgreek{t}}}$,
the conservation equations (\ref{eq:ConservationT_vv-1})--(\ref{eq:ConservationT_uu-1})
and the form of the domain $\mathcal{U}$. Thus, the existence and
uniqueness of a smooth solution $(r,m,\text{\textgreek{t}},\bar{\text{\textgreek{t}}})$
to (\ref{eq:DerivativeInUDirectionKappa-1})--(\ref{eq:ConservationT_uu-1})
on $\mathcal{U}$ satisfying (\ref{eq:ROnAxis}), (\ref{eq:ConstraintInitialDataNullDust})
and (\ref{eq:InitialDataForRMT}) is reduced to the existence and
uniqueness of a smooth solution $(r,m,\bar{\text{\textgreek{t}}})$
of the system 
\begin{align}
\partial_{u}\big(\frac{\partial_{v}r}{1-\frac{2m}{r}}\big)= & 0,\label{eq:DerivativeInUDirectionKappaReduced}\\
\partial_{v}\log\big(\frac{-\partial_{u}r}{1-\frac{2m}{r}}\big)= & 4\pi r^{-1}\frac{\bar{\text{\textgreek{t}}}}{\partial_{v}r},\label{eq:DerivativeInVDirectionKappaBarReduced}\\
\partial_{u}\tilde{m}= & 0,\label{eq:DerivativeTildeUMassReduced}\\
\partial_{v}\tilde{m}= & 2\pi\frac{\big(1-\frac{2m}{r}\big)}{\partial_{v}r}\bar{\text{\textgreek{t}}},\label{eq:DerivativeTildeVMassReduced}\\
\partial_{u}\bar{\text{\textgreek{t}}}= & 0\label{eq:ConservationT_vvReduced}
\end{align}
on the domain 
\begin{equation}
\mathcal{U}_{\#}\doteq(0,v_{*})\times(v_{*},v_{*}+\text{\textgreek{d}})\label{eq:ReducedDomain}
\end{equation}
 satisfying the characteristic initial conditions 
\begin{equation}
(r,m,\bar{\text{\textgreek{t}}})|_{\{0\}\times[v_{*},v_{*}+\text{\textgreek{d}})}=(r_{/},m_{/},\bar{\text{\textgreek{t}}}_{/})|_{[v_{*},v_{*}+\text{\textgreek{d}})}\label{eq:InitialDataForRMTRight}
\end{equation}
and
\begin{equation}
r|_{[0,v_{*})\times\{v_{*}\}}=r_{vac},|_{[0,v_{*})\times\{v_{*}\}},\label{eq:InitialDataForRMTleft}
\end{equation}
where the function $r_{vac}$ in the vacuum region (\ref{eq:vacuumRegion})
is determined by solving 
\begin{equation}
\begin{cases}
\partial_{u}\big(\frac{\partial_{v}r_{vac}}{1-\frac{1}{3}\Lambda r_{vac}^{2}}\big)= & 0,\mbox{ on }clos(\mathcal{D}_{vac})\\
r_{vac}=r_{/} & \mbox{ on }\{0\}\times[0,v_{*}]\\
r_{vac}|_{u=v}=0
\end{cases}\label{eq:RVacuum}
\end{equation}
(note that (\ref{eq:NoTrappedInitially}) is necessary for (\ref{eq:RVacuum})
to admit a smooth solution on the whole of $clos(\mathcal{D}_{vac})$).
Given a smooth solution $(r,m,\bar{\text{\textgreek{t}}})$ to (\ref{eq:DerivativeInUDirectionKappaReduced})--(\ref{eq:ConservationT_vvReduced})
on $\mathcal{U}_{\sharp}$ satisfying (\ref{eq:InitialDataForRMTRight})--(\ref{eq:InitialDataForRMTleft}),
we can then obtain a smooth solution $(r,m,\text{\textgreek{t}},\bar{\text{\textgreek{t}}})$
of (\ref{eq:DerivativeInUDirectionKappa-1})--(\ref{eq:ConservationT_uu-1})
on $\mathcal{U}$ satisfying (\ref{eq:ROnAxis}), (\ref{eq:ConstraintInitialDataNullDust})
and (\ref{eq:InitialDataForRMT}) by extending $r,m,\bar{\text{\textgreek{t}}}$
smoothly $\mathcal{D}_{vac}$ by the relations 
\begin{equation}
r|_{\mathcal{D}_{vac}}=r_{vac},\label{eq:RrestrictedVacuum}
\end{equation}
\begin{equation}
\tilde{m}|_{\mathcal{D}_{vac}}=0\label{eq:Mrestrictedacuum}
\end{equation}
and 
\[
\bar{\text{\textgreek{t}}}|_{\mathcal{D}_{vac}}=0
\]
and setting $\text{\textgreek{t}}\equiv0$.

The existence and uniqueness of a smooth solution $(r,m,\bar{\text{\textgreek{t}}})$
to (\ref{eq:DerivativeInUDirectionKappaReduced})--(\ref{eq:ConservationT_vvReduced})
on $\mathcal{U}_{\sharp}$ satisfying (\ref{eq:InitialDataForRMTRight})--(\ref{eq:InitialDataForRMTleft})
follows readily: Equations (\ref{eq:DerivativeTildeUMassReduced})
and (\ref{eq:ConservationT_vvReduced}) imply that 
\begin{equation}
\tilde{m}(u,v)=\tilde{m}_{/}(v)\label{eq:FormulaMTilde}
\end{equation}
and 
\begin{equation}
\bar{\text{\textgreek{t}}}(u,v)=\bar{\text{\textgreek{t}}}_{/}(v)\label{eq:FormulaTauBar}
\end{equation}
and, thus, equation (\ref{eq:DerivativeInUDirectionKappaReduced})
is equivalent to 
\begin{equation}
\partial_{u}\big(\frac{\partial_{v}r}{1-\frac{2\tilde{m}}{r}-\frac{1}{3}\Lambda r^{2}}\big)=0.\label{eq:RequationReduced}
\end{equation}
Note that, in view of the initial bounds (\ref{eq:PositiveDvRInitially})--(\ref{eq:SmallnessDeltaRegion}),
applying a standard bootstrap argument for equation (\ref{eq:RequationReduced})
(the details of which will be omitted), we infer that there exists
a unique smooth function $r$ on $\mathcal{U}_{\sharp}$ solving (\ref{eq:RequationReduced})
and satisfying (\ref{eq:InitialDataForRMTRight})--(\ref{eq:InitialDataForRMTleft}),
such that, in addition: 
\begin{equation}
\inf_{\mathcal{U}_{\sharp}}\Big(1-\frac{2m}{r}\Big)>\frac{1}{2}\text{\textgreek{e}}_{\Lambda}\label{eq:BoundTrappingOnUSharp}
\end{equation}
and 
\begin{equation}
\inf_{\mathcal{U}_{\sharp}}\partial_{v}r>\frac{1}{2}\text{\textgreek{e}}_{\Lambda}.
\end{equation}

The bound (\ref{eq:AprioriNoTrapped}) on $\mathcal{U}$ follows readily
from (\ref{eq:BoundTrappingOnUSharp}), (\ref{eq:RrestrictedVacuum})
and (\ref{eq:Mrestrictedacuum}). From (\ref{eq:VanishingToInfiniteOrder})
and the transport equations (\ref{eq:DerivativeInUDirectionKappaReduced}),
(\ref{eq:DerivativeTildeUMassReduced}) and (\ref{eq:ConservationT_vvReduced}),
we readily infer (\ref{eq:SmoothUpperBoundMass}).

Finally, in view of the fact that $(r,m,\bar{\text{\textgreek{t}}})$
satisfy equations (\ref{eq:DerivativeInUDirectionKappaReduced})--(\ref{eq:ConservationT_vvReduced})
everywhere on $\mathcal{U}$, from (\ref{eq:AprioriNoTrapped}) and
(\ref{eq:SmoothUpperBoundMass}) it readily follows that $(r,m,\text{\textgreek{t}},\bar{\text{\textgreek{t}}})$
extend smoothly on $\partial\mathcal{U}$.

\qed

Our second result is the following ill-posedess theorem:
\begin{thm}
\label{thm:IllPosedness} Let $0<v_{*}<v_{0}$, $(r_{/},\text{\textgreek{W}}_{/}^{2},\text{\textgreek{t}}_{/},\bar{\text{\textgreek{t}}}_{/})$,
$\text{\textgreek{d}}$, $\mathcal{U}$ and $\text{\textgreek{g}}_{\mathcal{Z}}$
be as in Proposition \ref{prop:Well-Posedness-Einstein-nulldust}.
Let also $(\mathcal{M},g)$ be the $C^{\infty}$ spacetime with boundary,
obtained by equiping $(clos(\mathcal{U})\backslash\text{\textgreek{g}}_{\mathcal{Z}}\big)\times\mathbb{S}^{2}$
with the metric 
\begin{equation}
g=-\text{\textgreek{W}}^{2}dudv+r^{2}g_{\mathbb{S}^{2}}
\end{equation}
and then attaching an axis $\mathcal{Z}$ corresponding to $\text{\textgreek{g}}_{\mathcal{Z}}$
(see the remark below Proposition \ref{prop:Well-Posedness-Einstein-nulldust}).
Assume also that there exists a $0<\text{\textgreek{d}}_{1}\le\text{\textgreek{d}}$
such that $\bar{\text{\textgreek{t}}}_{/}$ satisfies 
\begin{equation}
\bar{\text{\textgreek{t}}}_{/}|_{(v_{*},v_{*}+\text{\textgreek{d}}_{1})}>0.\label{eq:NonTrivialSupportTauBar}
\end{equation}
 Then, there exists \underline{no} globally hyperbolic, admissible
$C^{0}$ spherically symmetric spacetime $(\widetilde{\mathcal{M}},\tilde{g})$
(see Section (\ref{sub:C_0LorentzianSpacetimes}) for the relavant
definitions) which is a $C^{0}$ spherically symmetric future extension
of $(\mathcal{M},g)$ as a solution of (\ref{eq:RequationC0})--(\ref{eq:OutgoingC0})
according to Definition \ref{def:FutureExtension}.\end{thm}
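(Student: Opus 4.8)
The plan is to argue by contradiction. Suppose $(\widetilde{\mathcal M},\tilde g)$ is a globally hyperbolic, admissible $C^0$ spherically symmetric future extension of $(\mathcal M,g)$. Since $\tilde r$ vanishes exactly on the axis $\widetilde{\mathcal Z}$, and since $r$ extends continuously to $clos(\mathcal U)$ with $r\equiv 0$ on $\gamma_{\mathcal Z}$, the image $q_0\doteq i(v_*,v_*)$ of the corner of $\mathcal U$ satisfies $\tilde r(q_0)=0$, hence $q_0\in\widetilde{\mathcal Z}$; as $\widetilde{\mathcal Z}$ is connected and timelike, it extends $i(\gamma_{\mathcal Z})$ to the future of $q_0$. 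The idea is then to follow one of the ingoing null dust shells in $\widetilde{\mathcal M}$ until it meets this newly created piece of the axis, and to show that along that shell the conformal factor $\widetilde\Omega$ of $\tilde g$ is forced to zero at the axis point, contradicting the admissibility requirement $\widetilde\Omega|_{\gamma_{\widetilde{\mathcal Z}}}>0$ (see \eqref{eq:NonVanishingOmega}). This is precisely the double-null-gauge manifestation of the shell-focusing breakdown of Vaidya spacetimes along the first ingoing shell that reaches a regular centre carrying positive mass.

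\textbf{Setup.} Fix a double null foliation $(\tilde u,\tilde v)$ on $\widetilde{\mathcal M}$ with $\tilde u=\tilde v$ on $\widetilde{\mathcal Z}$. By condition~4 of Definition~\ref{def:ConeFoliation}, on $i(\mathcal M)$ it is related to $(u,v)$ by strictly increasing, locally bi-Lipschitz reparametrisations $\tilde u=U(u)$, $\tilde v=V(v)$; evaluating on $\gamma_{\mathcal Z}$ gives $U=V$ there, hence $U(v_*)=V(v_*)\doteq\tilde u_0$, the $(\tilde u,\tilde v)$-value of $q_0$. Consequently the original future null boundary $\{u=v_*\}$ of $\mathcal U$ is mapped to the null ray $\{\tilde u=\tilde u_0,\ \tilde v>\tilde u_0\}$ (along which $\tilde r=r\big(v_*,V^{-1}(\tilde v)\big)>0$), and an ingoing ray $\{v=v_0',\ 0<u<v_*\}$ of $\mathcal U$ is mapped, for $v_0'\in(v_*,v_*+\delta)$, to $\{\tilde v=\tilde v_0\}$ with $\tilde v_0\doteq V(v_0')>\tilde u_0$. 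Using that $(\widetilde{\mathcal M},\tilde g)$ is an admissible $C^0$ spacetime genuinely extending $i(\mathcal M)$ past $q_0$, one checks that $\widetilde{\mathcal U}\cup\gamma_{\widetilde{\mathcal Z}}$ contains a one-sided neighbourhood of the segment $\{\tilde u=\tilde v,\ \tilde u_0<\tilde u<\tilde u_0+\eta\}$ for some $\eta>0$, so that for $v_0'$ close to $v_*$ the shell $\{\tilde v=\tilde v_0\}$ lies in $\widetilde{\mathcal U}$ from the original region ($\tilde u<\tilde u_0$) up to its intersection point $(\tilde v_0,\tilde v_0)\in\gamma_{\widetilde{\mathcal Z}}$. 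Finally, since $\tilde\tau$ and $\tilde{\bar\tau}$ depend only on $\tilde u$ and $\tilde v$, satisfy the reflection condition $\tilde\tau=\tilde{\bar\tau}$ on $\gamma_{\widetilde{\mathcal Z}}$ (see \eqref{eq:EqualityDustsOnAxis}), and $\tilde{\bar\tau}$ along these shells equals, up to a positive Jacobian factor, the continuous datum $\bar\tau_{/}\circ V^{-1}$, assumption \eqref{eq:NonTrivialSupportTauBar} lets us choose $v_0'\in(v_*,v_*+\delta_1)$ so that on the component of $\{\tilde u=\mathrm{const}\}\cap\widetilde{\mathcal U}$ adjacent to $\gamma_{\widetilde{\mathcal Z}}$ (hence along $\{\tilde v=\tilde v_0\}$ for $\tilde u$ near $\tilde v_0$) one has $\tilde\tau=\tilde{\bar\tau}\ge c>0$.

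\textbf{The key estimate.} Set $\kappa\doteq 2\partial_{\tilde v}\tilde r/(1-2m/\tilde r)$, well defined and positive along the shell near the axis (this uses a local near-axis analysis giving $\partial_{\tilde u}\tilde r<0<\partial_{\tilde v}\tilde r$, hence $1-2m/\tilde r=4(-\partial_{\tilde u}\tilde r)\partial_{\tilde v}\tilde r\,\widetilde\Omega^{-2}>0$). Integrating the constraint/conservation identity \eqref{eq:DerivativeInUDirectionKappa-1}, i.e.
\begin{equation*}
\partial_{\tilde u}\log\kappa=-\frac{4\pi\,\tilde\tau}{\tilde r\,(-\partial_{\tilde u}\tilde r)},
\end{equation*}
along $\{\tilde v=\tilde v_0\}$ from $\tilde u=\tilde u_0$ — where, since $\tilde\tau\equiv 0$ on $i(\mathcal M)$, $\kappa=\kappa_{/}(v_0')$ is finite and positive — up to $\tilde u\to\tilde v_0^-$, one uses $\tilde\tau\ge c>0$, the boundedness of $-\partial_{\tilde u}\tilde r$ near the axis, and $\tilde r(\tilde u,\tilde v_0)=\int_{\tilde u}^{\tilde v_0}(-\partial_{\tilde u}\tilde r)\,d\tilde u'\le C(\tilde v_0-\tilde u)$ to conclude that the right-hand side is non-integrable, so $\log\kappa\to-\infty$, i.e. $\kappa\to 0$. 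But then, by the rearrangement $\widetilde\Omega^{2}=4(-\partial_{\tilde u}\tilde r)\partial_{\tilde v}\tilde r/(1-2m/\tilde r)=2(-\partial_{\tilde u}\tilde r)\,\kappa$ of \eqref{eq:RelationHawkingMass} and the boundedness of $-\partial_{\tilde u}\tilde r$, we get $\widetilde\Omega^{2}(\tilde u,\tilde v_0)\to 0$ as $\tilde u\to\tilde v_0^-$. Since $\widetilde\Omega$ is continuous on $\widetilde{\mathcal U}\cup\gamma_{\widetilde{\mathcal Z}}$, this yields $\widetilde\Omega(\tilde v_0,\tilde v_0)=0$, contradicting \eqref{eq:NonVanishingOmega}.

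\textbf{Main obstacles.} The genuinely delicate points are two. First, the topological bookkeeping in the Setup: one must show, using only global hyperbolicity and admissibility of $(\widetilde{\mathcal M},\tilde g)$ and the existence of a future point, that the extension actually fills in a ``triangle'' just to the past of the new axis segment, so that some ingoing dust shell can be followed all the way to the axis \emph{within} $\widetilde{\mathcal U}$ — in particular ruling out that the extension inserts a Cauchy horizon or a new boundary component before the shell reaches the centre. Second, carrying out the near-axis analysis in the low-regularity $C^0$ setting: one needs $-\partial_{\tilde u}\tilde r$ bounded (equivalently, that the axis in the extension is not too singular), together with the sign conditions $\partial_{\tilde u}\tilde r<0<\partial_{\tilde v}\tilde r$ and $1-2m/\tilde r>0$ along the shell near the axis. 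I expect these to follow from Lemma~\ref{lem:Regularity}, from \eqref{eq:RequationC0} (which gives $\partial_{\tilde u}(\tilde r^2),\partial_{\tilde v}(\tilde r^2)\in C^0$ up to $\gamma_{\widetilde{\mathcal Z}}$), and from $\widetilde\Omega>0$, combined with the fact that on $i(\mathcal M)$ the solution is smooth and satisfies these bounds up to the corner (Proposition~\ref{prop:Well-Posedness-Einstein-nulldust}); in the residual scenario in which $\partial_{\tilde u}\tilde r$ blows up at the axis, one instead derives the degeneration of $\widetilde\Omega$ directly from \eqref{eq:OmegaEquationC0}, whose source $\tfrac{m\widetilde\Omega^2}{\tilde r^3}$ is again non-integrable as one approaches the axis along the dust shell. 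A careful continuity argument across $\{\tilde u=\tilde u_0\}$, matching the smooth interior solution to the $C^0$ extension, is the technical heart of making all of this rigorous.
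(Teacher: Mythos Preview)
Your overall strategy and setup match the paper's: both argue by contradiction, normalise coordinates so that the extension contains a triangle $\{0<u'<v_*+\delta_2\}\cap\{u'<v'<v_*+\delta_2\}$, use the axis reflection condition to get $\tau'(u',v')=\bar\tau'(0,u')>0$ for $u'$ just above $v_*$, and integrate the rewritten constraint $\partial_{u'}\log\big((\Omega')^2/(-\partial_{u'}r')\big)=-4\pi\tau'/\big(r'(-\partial_{u'}r')\big)$ along an ingoing ray towards the axis.

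There is, however, a genuine gap in your key estimate. Your conclusion $\widetilde\Omega^2=2(-\partial_{\tilde u}\tilde r)\kappa\to 0$ rests on the assumption that $-\partial_{\tilde u}\tilde r$ stays \emph{bounded} as $\tilde u\to\tilde v_0^-$. You expect this to follow from Lemma~\ref{lem:Regularity} and the $C^0$ framework, but it does not: that lemma only gives $\partial_{u'}r'\in L^\infty_{loc}$ on the \emph{open} set, with no control up to $\gamma_{\widetilde{\mathcal Z}}$. In fact the paper proves the opposite, namely $\limsup_{u'\to\bar v^-}(-\partial_{u'}r')=+\infty$ along every such ray (this is \eqref{eq:BoundForContradiction}), and your own computation, reinterpreted, is exactly that proof: $(\Omega')^2/(-\partial_{u'}r')\to 0$ together with $\Omega'$ bounded below forces $-\partial_{u'}r'\to+\infty$, not $\Omega'\to 0$. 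So your ``residual scenario'' is in reality the only scenario, and your proposed treatment of it via non-integrability of $m\widetilde\Omega^2/\tilde r^3$ in \eqref{eq:OmegaEquationC0} is not substantiated (you would need lower bounds on $m/\tilde r^3$ and a double-integration argument that is not available at $C^0$ regularity).

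The paper closes the argument differently, and this is the piece you are missing. Having established $-\partial_{u'}r'\to+\infty$, it proves a second limit, $r'\partial_{u'}r'\to 0$ along the same ray (again by contradiction from the constraint, this time bounding the integrand from below). Feeding that into the wave equation \eqref{eq:RequationC0} for $(r')^2$, which now has vanishing Cauchy data on the axis, yields the representation formula giving $r'(u',\bar v)\le C(\bar v-u')$. Since $(\Omega')^{-2}(-\partial_{u'}r')$ is monotone in $u'$ (from \eqref{eq:ConstraintUC0}) and $(\Omega')^{-2}$ is continuous and positive up to the axis, one gets $r'(u',\bar v)\ge c\,(-\partial_{u'}r')(u',\bar v)\,(\bar v-u')$; combined with the upper bound this forces $-\partial_{u'}r'$ bounded, contradicting the first limit. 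The contradiction is thus internal (bounded versus unbounded $-\partial_{u'}r'$), not a degeneration of $\widetilde\Omega$.
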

\begin{proof}
We will argue by contradiction, assuming that there exists an admissible
$C^{0}$ spherically symmetric spacetime $(\widetilde{\mathcal{M}},\tilde{g})$
which is globally hyperbolic and at the same time is a $C^{0}$ spherically
symmetric future extension of $(\mathcal{M},g)$ as a solution of
(\ref{eq:RequationC0})--(\ref{eq:OutgoingC0}). 

Let $i:\mathcal{M}\rightarrow\widetilde{\mathcal{M}}$, $\widetilde{\mathcal{Z}}$,
$\widetilde{\mathcal{U}}$, $\text{\textgreek{g}}_{\widetilde{\mathcal{Z}}}$,
$(\tilde{u},\tilde{v})$, $(\tilde{r},\widetilde{\text{\textgreek{W}}}^{2};\tilde{\text{\textgreek{t}}},\tilde{\bar{\text{\textgreek{t}}}})$
and $U,V\in C^{0}$ be as in definition \ref{def:FutureExtension}.
Recall that the double null foliations $(u,v)$ and $(\tilde{u},\tilde{v})\circ i$
on $\mathcal{M}$ are related by 
\begin{equation}
\begin{cases}
\tilde{u}\circ i=U(u),\\
\tilde{v}\circ i=V(v),
\end{cases}\label{eq:RelationBetweenGauges}
\end{equation}
where $U,V$ are strictly increasing, locally bi-Lipschitz functions.
Furthermore, $\tilde{u},\tilde{v}$ are assumed to satisfy 
\begin{equation}
\tilde{u}=\tilde{v}\mbox{ on }\text{\textgreek{g}}_{\widetilde{\mathcal{Z}}}.\label{eq:EqualNewCoordinatesGammaZ}
\end{equation}

In view of \ref{eq:TheDomain} and (\ref{eq:RelationBetweenGauges}),
the projection (under spherical symmetry) of $i(\mathcal{M})$ on
$\widetilde{\mathcal{U}}\cup\text{\textgreek{g}}_{\widetilde{\mathcal{Z}}}$
is of the form 
\begin{equation}
\tilde{\text{\textgreek{p}}}\circ i(\mathcal{M}\backslash\mathcal{Z})=\{U(0)<\tilde{u}<V(v_{*})\}\cap\{V\circ U^{-1}(\tilde{u})<\tilde{v}<V(v_{*}+\text{\textgreek{d}})\}.\label{eq:NewDomain}
\end{equation}
Since $i(\mathcal{Z})\subseteq\widetilde{\mathcal{Z}}$, it is necessary,
in view of (\ref{eq:EqualNewCoordinatesGammaZ}), that 
\begin{equation}
V\circ U^{-1}(\tilde{u})=\tilde{u}
\end{equation}
in (\ref{eq:NewDomain}). Hence, by passing to a new double null coordinate
chart $(u',v')$ on $(\widetilde{\mathcal{M}},\tilde{g})$ through
a gauge transformation of the form 
\begin{equation}
\begin{cases}
u'=\overline{U}(\tilde{u}),\\
v'=\overline{V}(\tilde{v})
\end{cases}\label{eq:NewGauge}
\end{equation}
for some strictly increasing, locally bi-Lipschitz functions $\overline{U},\overline{V}$
so that the condition 
\begin{equation}
u'=v'\mbox{ on }\text{\textgreek{g}}_{\widetilde{\mathcal{Z}}}\label{eq:EqualNewCoordinatesGammaZ-1}
\end{equation}
hods, we will assume without loss of generality that (\ref{eq:NewDomain})
is of the form 
\begin{equation}
\tilde{\text{\textgreek{p}}}\circ i(\mathcal{M}\backslash\mathcal{Z})=\{0<u'<v_{*}\}\cap\{u'<v'<v_{*}+\text{\textgreek{d}}\}.\label{eq:NewDomain-1}
\end{equation}

\medskip{}

\noindent \emph{Remark.} Note that $(u',v')$ do not necessarily coincide
with $(u,v)$ on $\tilde{\text{\textgreek{p}}}\circ i(\mathcal{M}\backslash\mathcal{Z})$.

\medskip{}

\noindent The components $r',(\text{\textgreek{W}}')^{2}$ of the
metric $\tilde{g}$ (according to the splitting (\ref{eq:DoubleNullExpressionMetric}))
satisfy 
\begin{equation}
r'|_{\text{\textgreek{g}}_{\mathcal{Z}}}=0,\mbox{ }r'|_{\widetilde{\mathcal{U}}}>0\label{eq:ROnAxis-1}
\end{equation}
and
\begin{equation}
\text{\textgreek{W}}'>0\mbox{ on }\widetilde{\mathcal{U}}\cup\text{\textgreek{g}}_{\mathcal{Z}}\label{eq:NonVanishingOmegaOnAxis-1}
\end{equation}
 (see the properties (\ref{eq:R=00003D0exactlyOnAxis}) and (\ref{eq:NonVanishingOmega})
asociated to a double null foliation of a general admissible $C^{0}$
spherically symmetric spacetime). We will also denote with $\text{\textgreek{t}}',\bar{\text{\textgreek{t}}}'$
the transformed quantities 
\begin{equation}
\big(\text{\textgreek{t}}',\bar{\text{\textgreek{t}}}'\big)\Big|_{(u',v')}\doteq\big(\frac{\tilde{\text{\textgreek{t}}}}{(d\overline{U}/d\tilde{u})^{2}},\frac{\tilde{\bar{\text{\textgreek{t}}}}}{(d\overline{V}/d\tilde{v})^{2}}\big)\Big|_{(\overline{U}^{-1}(u'),\overline{V}^{-1}(v'))}.\label{eq:TransformedTauTauBar}
\end{equation}
Recall that $\big(r',(\text{\textgreek{W}}')^{2},\text{\textgreek{t}}',\bar{\text{\textgreek{t}}}'\big)$
satisfy (\ref{eq:RequationC0})--(\ref{eq:OutgoingC0}) and (\ref{eq:EqualityDustsOnAxis})
on $\text{\textgreek{g}}_{\widetilde{\mathcal{Z}}}$. 

Since $(\widetilde{\mathcal{M}},\tilde{g})$ is globally hyperbolic
and contains a point $p\in\widetilde{\mathcal{M}}\backslash\mathcal{M}$
lying in the future of $i(\mathcal{M})$ (see Definition \ref{def:FutureExtension}),
$\widetilde{\mathcal{U}}\cup\text{\textgreek{g}}_{\widetilde{\mathcal{Z}}}\backslash\mathcal{U}\cup\text{\textgreek{g}}_{\mathcal{Z}}$
contains a point $q$ lying in the future of $\mathcal{U}\cup\text{\textgreek{g}}_{\mathcal{Z}}$.
Therefore, in view of (\ref{eq:EqualNewCoordinatesGammaZ-1}) and
the form (\ref{eq:NewDomain-1}) of $\tilde{\text{\textgreek{p}}}\circ i(\mathcal{M}\backslash\mathcal{Z})$,
we infer that $\widetilde{\mathcal{U}}$ contains a set of the form
\begin{equation}
\mathcal{V}=\{0<u'<v_{*}+\text{\textgreek{d}}_{2}\}\cap\{u'<v'<v_{*}+\text{\textgreek{d}}_{2}\}\label{eq:Vdomain}
\end{equation}
for some fixed $\text{\textgreek{d}}$ satisfying 
\begin{equation}
0<\text{\textgreek{d}}_{2}<\frac{1}{2}\min\{\text{\textgreek{d}},\text{\textgreek{d}}_{1}\}\label{eq:BoundDelta2}
\end{equation}
and moreover: 
\begin{equation}
\text{\textgreek{g}}_{\widetilde{\mathcal{Z}}}^{\prime}\doteq\{0<u'<v_{*}+\text{\textgreek{d}}_{2}\}\cap\{u'=v'\}\subset\text{\textgreek{g}}_{\widetilde{\mathcal{Z}}}.\label{eq:NeqAxis}
\end{equation}
For the rest of the proof, we will restric to $\mathcal{V}\cup\text{\textgreek{g}}_{\mathcal{Z}}^{\prime}$.

In view of (\ref{eq:EqualTauTauBar}) and (\ref{eq:TransformedTauTauBar}),
from the fact that $\text{\textgreek{t}},\bar{\text{\textgreek{t}}}$
are smooth functions of $(u,v)$ on $\mathcal{U}$ and the coordinates
$u'$,$v'$ are strictly increasing, locally bi-Lipschitz functions
of $u$, $v$, respectively, we infer that 
\begin{equation}
\text{\textgreek{t}}',\bar{\text{\textgreek{t}}}'\in L_{loc}^{1}(\mathcal{V}).
\end{equation}
Therefore, as a consequence of Lemma \ref{lem:Regularity}: 
\begin{equation}
\partial_{v'}r',\partial_{u'}r'\in C^{0}(\mathcal{V}).\label{eq:ContinuityDvR'DuR'}
\end{equation}

In view of the bound (\ref{eq:NonTrappingInR-1}) of Proposition \ref{prop:Well-Posedness-Einstein-nulldust}
and the fact that, on $\tilde{\text{\textgreek{p}}}\circ i(\mathcal{M})$,
the coordinates $u'$,$v'$ are strictly increasing, locally bi-Lipschitz
functions of $u$, $v$, respectively, we infer that, on $\tilde{\text{\textgreek{p}}}\circ i(\mathcal{M}\backslash\mathcal{Z})$:
\begin{equation}
\inf_{\tilde{\text{\textgreek{p}}}\circ i(\mathcal{M}\backslash\mathcal{Z})}\partial_{v'}r'>0\label{eq:IncreasingRFromPrevious}
\end{equation}
and
\begin{equation}
\sup_{\tilde{\text{\textgreek{p}}}\circ i(\mathcal{M}\backslash\mathcal{Z})}\partial_{u'}r'<0.\label{eq:DecreasingRFromPrevious}
\end{equation}
In view of the constraint equation (\ref{eq:ConstraintUC0}), we have
(in the sense of distributions)
\begin{equation}
\partial_{u'}\big((\text{\textgreek{W}}')^{-2}\partial_{u'}r'\big)\le0.\label{eq:NegativeConstraint}
\end{equation}
Thus, from (\ref{eq:DecreasingRFromPrevious}) and (\ref{eq:NegativeConstraint}),
using also the fact that $\text{\textgreek{W}}'$ is continuous on
$\mathcal{V}\cup\text{\textgreek{g}}_{\mathcal{Z}'}$ and satisfies
(\ref{eq:NonVanishingOmegaOnAxis-1}), we infer that:
\begin{equation}
\sup_{\bar{v}\in[v_{*},v_{*}+\frac{1}{2}\text{\textgreek{d}}_{2}]}\sup_{\{v'=\bar{v}\}\cap\mathcal{V}}\partial_{u'}r'<0.\label{eq:LowerBoundDur}
\end{equation}

In view of the transport equations (\ref{eq:IngoingC0}) and (\ref{eq:OutgoingC0})
for $\bar{\text{\textgreek{t}}}',\text{\textgreek{t}}'$ and the boundary
condition (\ref{eq:EqualityDustsOnAxis}), we obtain, for any $(u',v')\in\mathcal{V}$:
\begin{equation}
\text{\textgreek{t}}'(u',v')=\bar{\text{\textgreek{t}}}'(0,v'-u').\label{eq:RepresentationFormulaForTau'}
\end{equation}
Hence, it follows from the assumption (\ref{eq:NonTrivialSupportTauBar})
on $\bar{\text{\textgreek{t}}}_{/}$ and the bound (\ref{eq:BoundDelta2})
on $\text{\textgreek{d}}_{2}$ that 
\begin{equation}
c_{0}\doteq\mbox{ess}\inf_{\{v_{*}+\frac{1}{4}\text{\textgreek{d}}_{2}\le u\le v_{*}+\frac{1}{2}\text{\textgreek{d}}_{2}\}\cap\mathcal{V}}\text{\textgreek{t}}'>0.\label{eq:LowerBoundTau}
\end{equation}

The following relations hold for all $\bar{v}\in[v_{*}+\frac{3}{8}\text{\textgreek{d}}_{2},v_{*}+\frac{1}{2}\text{\textgreek{d}}_{2}]$:
\begin{equation}
\limsup_{u'\rightarrow\bar{v}^{-}}(-\partial_{u'}r')|_{(u',\bar{v})}=+\infty\label{eq:BoundForContradiction}
\end{equation}
and 
\begin{equation}
\lim_{u'\rightarrow\bar{v}^{-}}(-r'\partial_{u'}r')|_{(u',\bar{v})}=0.\label{eq:BoundForContradiction2}
\end{equation}
We will establish (\ref{eq:BoundForContradiction}) and (\ref{eq:BoundForContradiction2})
later. Assuming, for now, that (\ref{eq:BoundForContradiction2})
holds, we will finish the proof of Theorem \ref{thm:IllPosedness}
by reaching a contradiction with (\ref{eq:BoundForContradiction}).

Let us define the function 
\begin{equation}
y\doteq(r')^{2}\label{eq:PsiDefinition}
\end{equation}
 on 
\[
\mathcal{D}_{v_{*}}\doteq[v_{*}+\frac{3}{8}\text{\textgreek{d}}_{2},v_{*}+\frac{1}{2}\text{\textgreek{d}}_{2}]\times[v_{*}+\frac{3}{8}\text{\textgreek{d}}_{2},v_{*}+\frac{1}{2}\text{\textgreek{d}}_{2}]\cap\mathcal{V}.
\]
Since $r'\in C^{0}(\widetilde{\mathcal{U}}\cup\text{\textgreek{g}}_{\widetilde{\mathcal{Z}}})$,
we infer that $y\in C^{0}(\mathcal{V}\cup\text{\textgreek{g}}_{\mathcal{Z}'})$. 

In view of (\ref{eq:RequationC0}), (\ref{eq:ROnAxis}) and (\ref{eq:BoundForContradiction2}),
$y$ satisfies the following initial value problem on $\mathcal{D}_{v_{*}}$
with initial data on $\text{\textgreek{g}}_{\mathcal{Z}'}\cap clos(\mathcal{D}_{v_{*}})$:
\begin{equation}
\begin{cases}
\partial_{u}\partial_{v}y=-\frac{1}{2}(1-\Lambda(r')^{2})(\text{\textgreek{W}}')^{2} & \mbox{ on }\mathcal{D}_{v_{*}},\\
\lim_{u'\rightarrow\bar{v}^{-}}y|_{(u',\bar{v})}=0 & \mbox{ for all }\bar{v}\in[v_{*}+\frac{3}{8}\text{\textgreek{d}}_{2},v_{*}+\frac{1}{2}\text{\textgreek{d}}_{2}],\\
\lim_{u'\rightarrow\bar{v}^{-}}\partial_{u'}y|_{(u',\bar{v})}=0 & \mbox{ for all }\bar{v}\in[v_{*}+\frac{3}{8}\text{\textgreek{d}}_{2},v_{*}+\frac{1}{2}\text{\textgreek{d}}_{2}].
\end{cases}\label{eq:RequationPsi}
\end{equation}
Therefore, for all $(u',v')\in\mathcal{D}_{v_{*}}$, $y|_{(u',v')}$
can be uniquely represented as 
\begin{equation}
y|_{(u',v')}=-\frac{1}{2}\int_{\{u'\le\bar{u}\}\cap\{v'\ge\bar{v}\}\cap\mathcal{D}_{v_{*}}}(1-\Lambda(r')^{2})(\text{\textgreek{W}}')^{2}\big|_{(\bar{u},\bar{v})}\, d\bar{u}d\bar{v}.\label{eq:RepresentationFormulaPsi}
\end{equation}
From (\ref{eq:RepresentationFormulaPsi}), we infer that there exists
some $C>0$ depending only on $||r'||_{C^{0}(\mathcal{V})}$ and $||\text{\textgreek{W}}'||_{C^{0}(\mathcal{V})}$,
such that, for all $(u',v')\in\mathcal{D}_{v_{*}}$ 
\begin{equation}
\big|y|_{(u',v')}\big|\le C^{2}\cdot(v'-u')^{2}
\end{equation}
or, equvalently, in view of the definition (\ref{eq:PsiDefinition})
of $y$: 
\begin{equation}
r'|_{(u',v')}\le C\cdot(v'-u').\label{eq:AlostThere}
\end{equation}
The bound (\ref{eq:AlostThere}), combined with (\ref{eq:ROnAxis-1}),
(\ref{eq:LowerBoundDur}) and the fact that $(\text{\textgreek{W}}')^{-2}\in C^{0}(\mathcal{V}\cup\text{\textgreek{g}}_{\mathcal{Z}'})$
and $(\text{\textgreek{W}}')^{-2}\partial_{u'}r'$ is decreasing in
$u'$ (in view of (\ref{eq:ConstraintUC0})), yields that 
\begin{equation}
\limsup_{u'\rightarrow\bar{v}^{-}}(-\partial_{u'}r')|_{(u',\bar{v})}<+\infty,
\end{equation}
which is a contradiction in view of (\ref{eq:BoundForContradiction}).

In order to complete the proof of Theorem \ref{thm:IllPosedness},
it only remains to establish (\ref{eq:BoundForContradiction}) and
(\ref{eq:BoundForContradiction2}).

\medskip{}

\noindent \emph{Proof of (\ref{eq:BoundForContradiction}) and (\ref{eq:BoundForContradiction2}).
}Let us set: 
\begin{equation}
M_{0}\doteq\max_{v'\in[v_{*}+\frac{3}{8}\text{\textgreek{d}}_{2},v_{*}+\frac{1}{2}\text{\textgreek{d}}_{2}]}\frac{(\text{\textgreek{W}}')^{2}}{-\partial_{u'}r'}\Big|_{(0,v')},\label{eq:M_0}
\end{equation}
\begin{equation}
\text{\textgreek{m}}_{0}\doteq\min_{v'\in[v_{*}+\frac{3}{8}\text{\textgreek{d}}_{2},v_{*}+\frac{1}{2}\text{\textgreek{d}}_{2}]}\frac{(\text{\textgreek{W}}')^{2}}{-\partial_{u'}r'}\Big|_{(0,v')},\label{eq:mu_0}
\end{equation}
\begin{equation}
\text{\textgreek{r}}_{0}\doteq\min_{v'\in[v_{*}+\frac{3}{8}\text{\textgreek{d}}_{2},v_{*}+\frac{1}{2}\text{\textgreek{d}}_{2}]}r'|_{(v_{*}+\frac{1}{4}\text{\textgreek{d}}_{2},v')}\label{eq:R_0}
\end{equation}
and 
\begin{equation}
\text{\textgreek{t}}_{0}\doteq\sup_{v'\in[v_{*},v_{*}+\frac{1}{2}\text{\textgreek{d}}_{2}]}\bar{\text{\textgreek{t}}}'|_{(0,v')}\label{eq:Tau_0}
\end{equation}
 Note that, in view of (\ref{eq:ROnAxis-1}), (\ref{eq:NonVanishingOmegaOnAxis-1}),
(\ref{eq:LowerBoundDur}), (\ref{eq:EqualTauTauBar}), (\ref{eq:TransformedTauTauBar}),
(\ref{eq:NonTrivialSupportTauBar}) and the fact that $\text{\textgreek{W}}',r',\partial_{u'}r'\in C^{0}(\mathcal{U})$,
we have $0<M_{0}<+\infty$, $0<\text{\textgreek{m}}_{0}<+\infty$,
$0<\text{\textgreek{r}}_{0}<+\infty$ and $0<\text{\textgreek{t}}_{0}<+\infty$.

We will first establish (\ref{eq:BoundForContradiction}). We will
argue by contradiction, assuming that there exists a $\bar{v}\in[v_{*}+\frac{3}{8}\text{\textgreek{d}}_{2},v_{*}+\frac{1}{2}\text{\textgreek{d}}_{2}]$
such that 
\begin{equation}
\limsup_{u'\rightarrow\bar{v}^{-}}(-\partial_{u'}r')|_{(u',\bar{v})}<+\infty.\label{eq:BoundForContradiction-1}
\end{equation}
In view of (\ref{eq:LowerBoundDur}) and (\ref{eq:BoundForContradiction-1}),
the quantity 
\begin{equation}
M_{1}\doteq\sup_{u'\in[0,\bar{v})}(-\partial_{u'}r')\Big|_{(u',\bar{v})}\label{eq:UpperBoundDuR}
\end{equation}
satisfies 
\begin{equation}
0<M_{1}<+\infty.
\end{equation}

The constraint equation (\ref{eq:ConstraintUC0}) can be rewritten
as: 
\begin{equation}
\partial_{u'}\log\Big(\frac{(\text{\textgreek{W}}')^{2}}{-\partial_{u'}r'}\Big)=-\frac{4\pi}{r'}\frac{\text{\textgreek{t}}'}{(-\partial_{u'}r')}.\label{eq:RewrittenConstraint}
\end{equation}
Therefore, integrating (\ref{eq:RewrittenConstraint}) in $u'$ for
$v'=\bar{v}$ and using (\ref{eq:M_0}), (\ref{eq:R_0}), (\ref{eq:LowerBoundDur}),
(\ref{eq:LowerBoundTau}) and (\ref{eq:UpperBoundDuR}), we obtain
for any $0\le\bar{u}<\bar{v}$: 
\begin{align}
\log\Big(\frac{(\text{\textgreek{W}}')^{2}}{-\partial_{u'}r'}\Big)\Big|_{(\bar{u},\bar{v})} & =\log\Big(\frac{(\text{\textgreek{W}}')^{2}}{-\partial_{u'}r'}\Big)\Big|_{(0,\bar{v})}-\int_{0}^{\bar{u}}\frac{4\pi}{r'}\frac{\text{\textgreek{t}}'}{(-\partial_{u'}r')}\Big|_{(u,\bar{v})}\, du=\label{eq:TheImportantbound}\\
 & =\log\Big(\frac{(\text{\textgreek{W}}')^{2}}{-\partial_{u'}r'}\Big)\Big|_{(0,\bar{v})}-\int_{0}^{\bar{u}}\frac{4\pi}{r'}\frac{\text{\textgreek{t}}'}{(-\partial_{u'}r')^{2}}(-\partial_{u'}r')\Big|_{(u,\bar{v})}\, du\le\nonumber \\
 & \le\log M_{0}-4\pi M_{1}^{-2}c_{0}\int_{v_{*}+\frac{1}{4}\text{\textgreek{d}}_{2}}^{\bar{u}}\frac{1}{r'}(-\partial_{u'}r')\Big|_{(u,\bar{v})}\, du=\nonumber \\
 & \le\log M_{0}-4\pi M_{1}^{-2}c_{0}\log\Big(\frac{\text{\textgreek{r}}_{0}}{r(\bar{u},\bar{v})}\Big).\nonumber 
\end{align}
Notice that the right and side of (\ref{eq:TheImportantbound}) goes
to $-\infty$ as $\bar{u}\rightarrow\bar{v}^{-}$, which is a contradiction
in view of (\ref{eq:BoundForContradiction-1}). Therefore, (\ref{eq:BoundForContradiction})
holds.

We now proceed to establish (\ref{eq:BoundForContradiction2}). We
will argue again by contradiction. To this end, since $r',\text{\textgreek{W}}'\in C^{0}(\mathcal{V}\cup\text{\textgreek{g}}_{\mathcal{Z}'})$
and $(\text{\textgreek{W}}')^{-2}\partial_{u'}r'$ is decreasing in
$u'$ (in view of (\ref{eq:ConstraintUC0})), it suffices to reach
a contradiction based on the assumption that there exists a $\bar{v}\in[v_{*}+\frac{3}{8}\text{\textgreek{d}}_{2},v_{*}+\frac{1}{2}\text{\textgreek{d}}_{2}]$
such that 
\begin{equation}
\liminf_{u'\rightarrow\bar{v}^{-}}(-r'\partial_{u'}r')|_{(u',\bar{v})}\neq0.\label{eq:BoundForContradiction2-1}
\end{equation}
In view of (\ref{eq:LowerBoundDur}), (\ref{eq:BoundForContradiction2-1})
is equivalent to 
\begin{equation}
c_{1}\doteq\inf_{u'\in[0,\bar{v})}(-r'\partial_{u'}r')|_{(u',\bar{v})}>0.\label{eq:AnotherBoundForContradiction}
\end{equation}

Integrating (\ref{eq:RewrittenConstraint}) in $u'$ for $v'=\bar{v}$
and using (\ref{eq:RepresentationFormulaForTau'}), (\ref{eq:mu_0}),
(\ref{eq:Tau_0}), (\ref{eq:LowerBoundTau}) and (\ref{eq:AnotherBoundForContradiction}),
we obtain for any $0\le\bar{u}<\bar{v}$: 
\begin{align}
\log\Big(\frac{(\text{\textgreek{W}}')^{2}}{-\partial_{u'}r'}\Big)\Big|_{(\bar{u},\bar{v})} & =\log\Big(\frac{(\text{\textgreek{W}}')^{2}}{-\partial_{u'}r'}\Big)\Big|_{(0,\bar{v})}-\int_{0}^{\bar{u}}\frac{4\pi}{r'}\frac{\text{\textgreek{t}}'}{(-\partial_{u'}r')}\Big|_{(u,\bar{v})}\, du\ge\label{eq:TheImportantbound-1}\\
 & \ge\log\text{\textgreek{m}}_{0}-\frac{4\pi\text{\textgreek{t}}_{0}}{c_{1}}\int_{v_{*}+\frac{1}{4}\text{\textgreek{d}}_{2}}^{\bar{u}}\, du.\nonumber 
\end{align}
Thus, 
\begin{equation}
\liminf_{\bar{u}\rightarrow\bar{v}^{-}}\log\Big(\frac{(\text{\textgreek{W}}')^{2}}{-\partial_{u'}r'}\Big)\Big|_{(\bar{u},\bar{v})}>-\infty,
\end{equation}
which is a contradiction, in view of (\ref{eq:BoundForContradiction}).
Thus, (\ref{eq:BoundForContradiction2}) holds, and the proof of Theorem
\ref{thm:IllPosedness} is complete.
\end{proof}
\bibliographystyle{plain}
\bibliography{DatabaseExample}

\end{document}